\definecolor{linkblue}{HTML}{001487}
\newtheorem{theorem}{Theorem}[section]
\newtheorem*{theorem*}{Theorem}
\newtheorem{proposition}[theorem]{Proposition}
\newtheorem{lemma}[theorem]{Lemma}
\newtheorem{corollary}[theorem]{Corollary}
\theoremstyle{remark}
\newtheorem{remark}[theorem]{Remark}
\theoremstyle{definition}
\newtheorem{definition}[theorem]{Definition}
\numberwithin{equation}{section}
\newcommand\numberthis{\addtocounter{equation}{1}\tag{\theequation}}
\newcommand{\ket}[1]{|#1\rangle}
\newcommand{\bra}[1]{\langle#1|}
\newcommand{\proj}[1]{\ket{#1}\!\bra{#1}}
\DeclarePairedDelimiterX\braket[2]{\langle}{\rangle}{#1 \delimsize\vert #2}
\newcommand{\tr}[1]{\mbox{\rm Tr}\!\left[ #1 \right]}
\newcommand{\pr}[1]{{\rm Pr}\!\left[ #1 \right]}
\newcommand{\1}{\mathds{1}}
\newcommand{\C}{\ensuremath{\mathbb{C}}}
\newcommand{\N}{\ensuremath{\mathbb{N}}}
\newcommand{\R}{\ensuremath{\mathbb{R}}}
\let\H\relax
\newcommand{\H}{\mathcal{H}}
\newcommand{\ot}{\ensuremath{\otimes}}
\newcommand{\deq}{\coloneqq}
\newcommand{\mA}{\ensuremath{\mathcal{A}}}
\newcommand{\mG}{\ensuremath{\mathcal{G}}}
\newcommand{\mD}{\ensuremath{\mathcal{D}}}
\newcommand{\mF}{\ensuremath{\mathcal{F}}}
\newcommand{\mL}{\ensuremath{\mathcal{L}}}
\newcommand{\mU}{\ensuremath{\mathcal{U}}}
\newcommand{\mX}{\ensuremath{\mathcal{X}}}
\newcommand{\mY}{\ensuremath{\mathcal{Y}}}
\newcommand{\kf}{\ensuremath{\mathcal{K}_{\mathcal{F}}}}
\newcommand{\kg}{\ensuremath{\mathcal{K}_{\mathcal{G}}}}
\newcommand{\setft}[1]{\mathrm{#1}}
\newcommand{\pos}{\setft{Pos}}
\newcommand{\Herm}{\setft{Herm}}
\DeclareMathOperator{\poly}{poly}
\DeclareMathOperator{\negl}{negl}
\newcommand{\bits}{\ensuremath{\{0, 1\}}}
\newcommand{\supp}{\textsc{Supp}}
\newcommand{\Gen}{\textsc{Gen}}
\newcommand{\aux}{\textsc{\small{Aux}}}
\newcommand{\eps}{\varepsilon}
\newcommand{\Samp}{\textsc{Samp}}
\newcommand{\Chk}{\ensuremath{\textsc{Chk}}}
\newcommand{\abs}[1]{\left\vert {#1} \right\vert}
\newcommand{\norm}[1]{\left\| {#1} \right\|}
\newcommand{\tand}{\;{\rm~and~}\;}
\let\eps\varepsilon
\newcommand{\capprox}{\stackrel{c}{\approx}}
\newcommand{\bqp}{\textsf{BQP}}
\newcommand{\bpp}{\textsf{BPP}}
\newcommand{\p}{{\rm R}}
\renewcommand{\paragraph}{%
  \@startsection{paragraph}{4}%
  {\z@}{2.25ex \@plus 1ex \@minus .2ex}{-1em}%
  {\normalfont\normalsize\bfseries}%
}
\newcommand{\changed}[1]{#1} 
\begin{document}

\title{Self-testing of a single quantum device under computational assumptions}

\author{Tony Metger}
\affiliation{Institute for Theoretical Physics, ETH Z{\"u}rich, 8093 Z{\"u}rich, Switzerland}
\email{tmetger@ethz.ch}

\author{Thomas Vidick}
\affiliation{Department of Computing and Mathematical Sciences, California Institute of Technology, CA 91125, United States}
\email{vidick@caltech.edu}

\thanks{Note: a short version of this work has appeared in the \href{http://doi.org/10.4230/LIPIcs.ITCS.2021.19}{Proceedings of the 12th Innovations in Theoretical Computer Science Conference (ITCS 2021).}}

\date{}

\maketitle

\begin{abstract}
\noindent
Self-testing is a method to characterise an arbitrary quantum system based only on its classical input-output correlations, and plays an important role in device-independent quantum information processing as well as quantum complexity theory.
Prior works on self-testing require the assumption that the system's state is shared among multiple parties that only perform local measurements and cannot communicate.
Here, we replace the setting of \emph{multiple non-communicating} parties, which is difficult to enforce in practice, by a \emph{single computationally bounded} party. 
Specifically, we construct a protocol that allows a classical verifier to robustly certify that a single computationally bounded quantum device must have prepared a Bell pair and performed single-qubit measurements on it, up to a change of basis applied to both the device's state and measurements.
This means that under computational assumptions, the verifier is able to certify the presence of entanglement, a property usually closely associated with two separated subsystems, inside a single quantum device.
To achieve this, we build on techniques first introduced by Brakerski et al.~(2018) and Mahadev~(2018) which allow a classical verifier to constrain the actions of a quantum device assuming the device does not break post-quantum cryptography.
\end{abstract}

\newpage

\setcounter{tocdepth}{2}

{
\thispagestyle{empty}
\hypersetup{linkcolor=black}
\tableofcontents
}

\newpage

\section{Introduction}

The \emph{device-independent} approach to quantum information processing treats quantum devices as black boxes which we can interact with classically to observe their input-output correlations. Based solely on these correlations and the assumption that quantum mechanics is correct, the goal is to prove statements about the devices, e.g., to show that they can be used for secure quantum key distribution (see e.g. \cite{mayers_yao}) or delegated quantum computation (see e.g. \cite{ruv}). At a fundamental level, this provides a theory-of-computation approach to the study of classical signatures of quantum mechanics and their use as a ``leash'' to control and characterize quantum devices. 

\emph{Self-testing} is arguably the most effective method in device-independent quantum information processing. The goal in self-testing is to characterise the quantum state and measurements of multiple black-box quantum devices using only their classical input-output correlations.
In analogy to the setting of interactive proof systems, the classical party observing the input-output correlations is sometimes called the \emph{verifier}, and the black-box quantum devices are called \emph{provers}. 
More specifically, the verifier can interact with multiple quantum provers by sending (classical) questions as inputs and receiving (classical) answers as outputs. 
The provers can share any (finite-dimensional) entangled quantum state at the start of the interaction and are computationally unbounded; however, it is assumed that after having received the verifier's questions, the provers can no longer communicate. 
Based on the question-answer correlations, the verifier would like to deduce that the provers must have shared a certain initial state and performed certain measurements on it, up to a local change of basis on each prover's Hilbert space. 
We will describe this scenario in more detail in Section \ref{sec:self_testing_intro}. We emphasize that self-testing is a uniquely quantum phenomenon: for classical devices, there is simply a function that is implemented by the device, and it is not meaningful to ask \emph{how} the function is implemented ``on the inside''. In contrast, for quantum devices, in certain cases knowledge of the function (the observed input-output behaviour) implies an essentially unique realization in terms of a quantum state and measurements on it.  

The term \emph{self-testing} was introduced by Mayers and Yao in \cite{mayers_yao} in the context of proofs of security for quantum key distribution, but the notion was already present in earlier works \cite{sw87, pr92}.
For a review covering a large number of different self-testing protocols, as well as applications such as randomness expansion and delegated quantum computation, see \cite{supic_review}.
In addition to more practical applications, self-testing has also proved to be a powerful tool in quantum complexity theory for the study of multi-prover interactive proof systems in the quantum setting and is at the heart of the recent characterisation of the complexity class $\textsf{MIP}^*$ \cite{mipstar}.

\medskip

The starting point for our work is the observation that, while the model of non-communicating quantum provers used in existing self-testing results is appealing in theory, it is difficult to enforce this non-communication assumption in practice. Motivated by the many applications of self-testing in quantum cryptography (e.g. device-independent quantum key distribution) and complexity theory, we are compelled to search for protocols that allow for a self-testing-like certification of a \emph{single} untrusted quantum device.

Self-testing protocols in the multi-prover setting are typically based on the violation of Bell inequalities \cite{Bell1964}, for which the non-communication assumption is necessary.\footnote{Another approach is to base the self-testing statement on non-contextuality inequalities \cite{bharti2019, self_testing_contextuality}. The violation of non-contextuality inequalities is a uniquely quantum phenomenon that is similar to the violation of Bell inequalities, with the advantage that it only requires a single quantum device and therefore no non-communication assumption. The downside of this approach is that it places additional assumptions, such as memory constraints and compatibility relations between measurements, on the quantum device, limiting its suitability for practical cryptographic applications.}
Hence, different techniques or additional assumptions are necessary when considering the single-device scenario. What could a ``computational Bell inequality'' look like? 

In this paper we give an answer to this question by constructing a self-testing protocol for a \emph{single computationally bounded} quantum device. 
Specifically, the only assumptions required are the correctness of quantum mechanics and that the prover does not have the ability to break the Learning with Errors (LWE) assumption \cite{lwe}, a common assumption in post-quantum cryptography, \emph{during} the protocol execution (whereas breaking the LWE assumption \emph{after} the end of the protocol is allowed).
Our protocol is a three-round interaction between a classical verifier and a quantum prover, at the end of which the verifier decides to either ``accept'' or ``reject'' the prover. 
Informally, the guarantee provided by the protocol is the following:
\begin{theorem*}[Informal]
A prover's strategy in the protocol is described by a quantum state and the measurements that the prover makes on the state to obtain the (classical) answers received by the verifier.
If a computationally bounded prover is accepted by the verifier with probability $1 - \eps$, then there exists an isometry $V$ such that for a universal constant $c > 0$ and under the isometry $V$:
\begin{enumerate}
\item the prover's state is $O(\eps^c)$-close (in trace distance) to a Bell pair,
\item (a subset of) the prover's measurements are $O(\eps^c)$-close to single-qubit measurements in the computational or Hadamard basis, where the measurement bases are chosen by the verifier. Here, ``closeness'' is measured in a distance measure suitable for measurements acting on a state.
\end{enumerate} 
\end{theorem*}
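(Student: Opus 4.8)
The plan is to follow the general strategy pioneered by Brakerski et al. and Mahadev for constraining a single quantum prover via a claw-free (or trapdoor injective) function family, and to combine it with a rigidity argument in the spirit of multi-prover self-testing. I would structure the protocol around an extended trapdoor claw-free function family: in the first round the prover commits to a state by returning an image $y$ under a two-to-one function $f_k$, which (by the claw-free property) forces the prover's residual state to be close to a uniform superposition $\tfrac{1}{\sqrt2}(\ket{x_0}\ket{0} + \ket{x_1}\ket{1})$ over the two preimages, entangled with a "which-preimage" qubit. The key conceptual point is that this committed state plays the role of \emph{one half of a Bell pair held "in superposition"} inside the single device: the preimage register and the which-preimage qubit are the two "subsystems." In the second round the verifier issues a random challenge $c \in \{\textsc{test}, \textsc{Hadamard}\}$; in the \textsc{test} case the prover is asked to return a preimage, and in the \textsc{Hadamard} case to measure the preimage register in the Hadamard basis and report the outcome together with a claimed measurement of the which-preimage qubit. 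The arithmetic relation between Hadamard-basis outcomes on the two preimages (the "adaptive hardcore bit" property of the LWE-based construction) is what substitutes for a Bell inequality: a prover passing both branches with high probability cannot be using a classical (diagonal) state, exactly analogously to how passing both bases of a CHSH test rules out local hidden variables.

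The core technical work, and what I expect to be the main obstacle, is the \textbf{rigidity / round-by-round state characterisation}: upgrading "the prover passes with probability $1-\eps$" to "the prover's state and measurements are $O(\eps^c)$-close, under an isometry, to the honest Bell-pair strategy." I would proceed in stages. First, use soundness of the commitment (claw-free property, $O(\eps^c)$ in the noise parameter of LWE) to show the post-first-round state is close to the ideal superposition state; this is essentially a purified, robust version of the Brakerski et al. analysis. Second, define binary observables from the prover's \textsc{test} and \textsc{Hadamard} measurements on the preimage register — call them $Z$-like and $X$-like observables — and use the adaptive hardcore bit guarantee together with the acceptance condition to derive approximate algebraic relations among them, most importantly an approximate anticommutation relation $\{X, Z\} \approx 0$ on the relevant subspace. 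Third, invoke the standard "Pauli rigidity from approximate anticommutation" lemma (Jordan's lemma plus a swap-isometry argument, as in the self-testing literature) to build the isometry $V$ and conclude that, up to $V$, the observables act as genuine $\sigma_X, \sigma_Z$ on a qubit and the state is a Bell pair. Propagating the approximation errors through each stage — in particular tracking how the commitment error, the hardcore-bit error, and the anticommutation error compose and how the final self-testing bound degrades (typically introducing the exponent $c < 1$ via a square-root loss in the swap isometry step) — is the delicate bookkeeping that forms the bulk of the proof.

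A few supporting steps round this out. I would need a lemma stating that the honest prover is accepted with probability $1$ (or $1 - \negl$), which is a direct calculation from correctness of the TCF family. I would need to handle the fact that the prover's \textsc{Hadamard}-round answer includes a \emph{self-reported} measurement of the which-preimage qubit: one shows that conditioned on the reported equation being consistent, this report is forced to be an honest computational-basis measurement of that qubit, again using the hardcore bit property — this is where the "entanglement between the two registers" gets pinned down, and it is the step that makes the single device simulate a two-prover scenario. Finally, because the theorem asserts closeness of measurements "in a distance measure suitable for measurements acting on a state," I would fix that distance measure early (e.g. the state-dependent distance $\| (M - N) \ket{\psi} \|$ or the associated fidelity-based quantity) and verify it behaves well under composition with isometries and under the triangle inequality, so that the error terms from the three stages can legitimately be added.
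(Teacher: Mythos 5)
There is a genuine gap, and it is at the protocol/conceptual level rather than in the bookkeeping. Your construction uses a \emph{single} claw-free evaluation and declares the preimage register and the which-preimage qubit to be the two ``subsystems'' of a Bell pair. But the adaptive hardcore bit property only lets the verifier check one bit, namely the consistency of the reported equation $d$ with the reported bit via $d\cdot(x_0\oplus x_1)$; this is exactly the guarantee of Brakerski et al.\ (certified randomness) and Gheorghiu--Vidick (remote state preparation), and it pins down at most a \emph{single} qubit. Moreover, after the prover measures the preimage register in the Hadamard basis, the residual state is the single unentangled qubit $\ket{(-)^{d\cdot(x_0\oplus x_1)}}$ --- at the point where the theorem requires the prover to hold a Bell pair and measure \emph{each} qubit in a verifier-chosen computational/Hadamard basis, your protocol has only one qubit and only one checkable constraint. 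There is no mechanism in your protocol for the verifier to issue two independent basis questions $(q_1,q_2)$, no second pair of anticommuting observables, and no way to certify the tensor-product structure between two subsystems inside the single device, which is precisely the main technical difficulty here. The paper's protocol instead runs two key/commitment instances in parallel and has the honest prover apply an entangling $CZ$ gate to the two resulting qubits; the Bell pair is created by that operation in the case where both keys are claw-free, and the ``test'' basis choices (one injective key, one claw-free key) are what let the verifier check each qubit's answer individually and enforce approximate commutation between the two qubits' observables.

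A secondary issue: your rounding step relies on ``Jordan's lemma plus a swap isometry.'' In this single-device setting only \emph{approximate} commutation relations between the two qubits' observables can be certified (they are enforced statistically through the protocol, not structurally through spatial separation), and the paper explicitly notes that this, together with having two qubits rather than one, rules out the standard Jordan's-lemma route. The paper instead performs an iterative rounding: it first characterises the first qubit's observables under the swap isometry, uses that to characterise the state in the test cases, uses the state characterisation to round the second qubit's observables, then shows the mixed-basis (``tilde'') observables agree with the same-basis ones on the state, and finally rounds \emph{products} of observables so that the Bell case can be analysed. Even if you repaired the protocol to produce two qubits, your outline would still need this kind of bootstrapping (or some substitute for exact commutation) to get statement (ii) of the theorem for measurements acting on both qubits.
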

We emphasize that the theorem not only guarantees the preparation of an entangled state by the prover, but also the implementation of specific measurements on it. As such, it provides a complete analogue of foundational self-testing results for the CHSH inequality~\cite{sw87,scarani-singlet}. 

\medskip

The proof of our main result builds on techniques introduced in recent works~\cite{mahadev, randomness, rsp} to allow a classical verifier to leverage post-quantum cryptography to control a computationally bounded quantum prover. Because they are relevant for understanding the proof of our results, we now give a brief overview of these works and explain their relation to self-testing.

In \cite{mahadev}, Mahadev gives the first protocol to classically verify a delegated quantum computation with a single untrusted quantum prover. 
The central ingredient in Mahadev's verification protocol is a ``measurement protocol'' that allows the verifier to force the prover to report classical outcomes obtained by performing certain measurements on a quantum state that the prover has ``committed to'' using classical information. 
The main guarantee of the measurement protocol is this: if the prover is accepted in the protocol, \emph{there exists} a quantum state such that the distribution over the prover's answers could have been produced by performing the requested measurements on this state.
In other words, all of the prover's answers must be self-consistent in the sense that they could have originated from performing different measurements on (copies of) the same quantum state.

To verify a quantum computation, the statement that the prover's answers are \emph{consistent} with measurements on a quantum state is sufficient, as the \emph{existence} of a quantum state with the right properties can certify the outcome of the quantum computation (this is due to Kitaev's ``circuit-to-Hamiltonian'' construction, which we do not explain here). However, in this work we seek to make a stronger statement: we want to certify that the prover \emph{actually constructed} the desired quantum state \emph{and performed the desired measurements} on it (up to an isometry). 
While the honest prover in Mahadev's protocol does indeed construct the desired quantum state, the protocol does not guarantee that an arbitrary prover must do, too. 
Hence, our self-testing protocol is stronger in the sense that it allows for a more stringent characterisation of the prover's actions, namely its actual states and measurements.\footnote{This comes at the cost that we are only able to certify Bell pairs, while Mahadev's measurement protocol works for measurements on any state.} To emphasize the difference, we note that the guarantee of Mahadev's protocol \emph{does not} directly imply that a successful prover must have performed any quantum computation; the guarantee is only that, if the correct state preparation and measurements were to be performed, the outcome would be as claimed by the prover. 

Another closely related work is that of Brakerski et al. \cite{randomness}, who give a protocol between a classical verifier and a quantum prover that allows the verifier to generate certified information-theoretic randomness, again assuming that the prover does not break the LWE assumption; in other words, their protocol generates information-theoretic randomness from a computational assumption. 
For this, the authors show that two of the prover's measurements must be maximally incompatible, as defined by a quantity that they call the ``overlap''. 
Informally, one can think of two maximally incompatible measurements as being close to a computational and Hadamard basis measurement, up to some global change of basis. 
Hence, this result already resembles self-testing in the sense that the verifier can make a statement about the actual measurements used by the prover. 
In particular, it does serve as a ``test of quantumness'' for the prover. 

Building on \cite{randomness} and using techniques from \cite{mahadev}, Gheorghiu and Vidick construct a protocol for a task that they call verifiable remote state preparation (RSP) \cite{rsp}. 
They consider a set of single-qubit pure states $\{\ket{\psi_1}, \dots, \ket{\psi_n}\}$.\footnote{The protocol in \cite{rsp} is designed for a specific set of ten pure states that are useful for delegated quantum computation, but for the purposes of this overview it is not important which specific states these are.}
Under the same LWE assumption as before, the protocol enables the verifier to certify that the prover has prepared one of these states, up to a global change of basis (i.e., some isometry $V$ that is applied to all $\ket{\psi_i}$). 
More precisely, the verifier cannot decide beforehand on a particular $\ket{\psi_i}$, but after executing the protocol, the verifier knows which $\ket{\psi_i}$ the prover has prepared, and the distribution over $i$ can be made uniform. 
The prover, on the other hand, does not know which $\ket{\psi_i}$ he has prepared. 

This result resembles a self-testing statement even more than that of \cite{randomness} because it explicitly characterises a family of single-qubit quantum states, one of which is certified to be present in the prover's space. 
However, it differs from a standard self-testing statement in that it is defined for a family of states, not an individual state: because the prover's isometry $V$ is arbitrary, any individual state $\ket{\psi_i}$ can be mapped to another arbitrary state. 
Hence, what is certified in RSP is not any individual state, but the relationships (e.g., orthogonality) between different states in some family.
Alternatively, one can also take the view that RSP characterises the relationships between the prover's states and measurements. We return to this issue in more detail in Section \ref{sec:self_testing_intro}.
The idea of certifying a family of states has also been considered by Cojocaru et al. \cite{qfactory}, who call this notion ``blind self-testing''. They analyze a different protocol under a restricted adversarial model and conjecture that their protocol yields similar guarantees as~\cite{rsp} for single-qubit states and tensor products of single-qubit states.

\medskip

This lengthy overview of previous works makes explicit a progression towards the task that we tackle here, that of genuine self-testing of a single quantum device. We note that this presentation clearly benefits from hindsight, and that none of the cited works mentions any relation to self-testing; indeed, the results are too weak to be used in this setting. In particular, none of the previous works provides a sufficiently strong guarantee on the measurements performed by the quantum device and goes beyond the setting of a single qubit, which is arguably the main technical challenge. 
Indeed, moving from a single-qubit state to an entangled two-qubit state means that the verifier has to enforce a tensor product structure on the prover's space, which is one of the main difficulties in our soundness proof (Section \ref{sec:soundness}). On a technical level, it requires the certification of compatibility relations between different measurements meant to act on different qubits. 
Additionally, having two qubits instead of one prevents us from using Jordan's lemma, a standard tool in self-testing also used in \cite{rsp}, to characterise the prover's measurements; in Section \ref{sec:rounding}, we show how to characterise the prover's measurements using a different method starting with a \emph{partial} characterisation of the prover's measurements, using that to partially characterise the prover's states, which in turn is used for a stronger partial characterisation of the measurements, etc., until we reach the full statement that shows that the prover makes single-qubit measurements on a Bell pair.

\subsection{Self-testing in the multi- and single-prover settings} \label{sec:self_testing_intro}
In this section, we give a brief overview of the standard multi-prover self-testing scenario, and explain how it can be extended to a single prover. For more details on the multi-prover scenario, see \cite{supic_review} or \cite[Chapter 7]{scarani_book}.
For simplicity, let us consider the case of two provers $A$ and $B$, with Hilbert spaces $\H_A$ and $\H_B$, respectively. 
Hence, the total Hilbert space is $\H_A \ot \H_B$. 
The verifier interacts with $A$ and $B$ by sending questions and receiving answers. 
The question-answer correlations can be described by a family of probability distributions $\{p(a, b | x, y)\}_{x, y}$, where for each choice of questions $x$ and $y$ sent to $A$ and $B$, respectively, $p(a, b | x, y)$ is a probability distribution over their answers $a$ and $b$.
We say that a quantum state $\ket{\psi}_{AB} \in \H_A \ot \H_B$ is \emph{compatible} with the correlations $p(a, b|x, y)$ if there are local measurements $\{P_x^{(a)}\}_a$ on $\H_A$ for every input $x$, and $\{Q_y^{(b)}\}_b$ on $\H_B$ for every input $y$, that realise the correlations $p(a, b|x, y)$, i.e., $p(a,b|x,y) = \bra{\psi} P_x^{(a)}\otimes Q_y^{(b)} \ket{\psi}_{AB}$ for all $x, y, a, b$.

\begin{definition}[Self-testing of states, informal] \label{def:self_testing_states}
The correlations $p(a, b | x, y)$ self-test a state $\ket{\phi}_{AB}$ if for any state $\ket{\psi}_{AB}$ compatible with these correlations, there exists a \emph{local} isometry $V = V_A \ot V_B$ (with $V_A$ only acting on $\H_A$, and $V_B$ only acting on $\H_B$) such that $V \ket{\psi}_{AB} = \ket{\phi}_{AB} \ket{\aux}$ for some ancillary state $\ket{\aux}$.
\end{definition}

A more operational view of this statement is that it must be possible to ``extract'' the state $\ket{\phi}_{AB}$ from $\ket{\psi}_{AB}$ only by performing local operations.
The condition that the isometry must be local is crucial: if we would allow a global isometry, we could map any state $\ket{\psi}_{AB}$ to the desired state $\ket{\phi}_{AB}$.
In the two-prover case, the notion of a \emph{local} isometry is natural, since the separation between the two provers induces a tensor product structure $\H = \H_A \ot \H_B$ on the global Hilbert space $\H$.
In contrast, for a single prover no such tensor product structure exists and we cannot define \emph{local} isometries in a meaningful way.

In Definition \ref{def:self_testing_states}, we only dealt with the provers' state, not his measurements. A stronger notion of self-testing is to characterise both the provers' state and measurements. This is the version of self-testing originally considered by Mayers and Yao \cite{mayers_yao}, and we will see that it can be meaningfully extended to the single-prover setting.

\begin{definition}[Self-testing of states and measurements, informal]
The correlations $p(a, b | x, y)$ self-test a state $\ket{\phi}_{AB}$ and measurements $\{M_x^{(a)}\}, \{N_y^{(b)}\}$ if for any state $\ket{\psi}_{AB}$ and  measurements $\{P_x^{(a)}\}, \{Q_y^{(b)}\}$ that realise the correlations $p(a, b | x, y)$, there exists a local isometry $V = V_A \ot V_B$ such that
\begin{enumerate}
\item $V \ket{\psi}_{AB} = \ket{\phi}_{AB} \ket{\aux}$,
\item $V (P_x^{(a)} \ot Q_y^{(b)}) \ket{\psi}_{AB} = \left( (M_x^{(a)} \ot N_y^{(b)})\ket{\phi}_{AB}  \right) \ket{\aux}$, for some ancillary state $\ket{\aux}$.
\end{enumerate}
\end{definition}
The first condition is the same as in Definition \ref{def:self_testing_states}. 
The second condition roughly says that the ``physical'' measurements $\{P_x^{(a)}\}$ and $\{Q_y^{(b)}\}$ used by $A$ and $B$, respectively, act on the state $\ket{\psi}_{AB}$ in the same way that the desired measurements $\{M_x^{(a)}\}$ and $\{N_y^{(b)}\}$ act on the desired state $\ket{\phi}_{AB}$.

Self-testing of states and measurements still has meaning in the single-prover setting. In this setting, one can imagine that the verifier sends both questions $x$ and $y$ to the same prover, and the prover replies with two answers $a$ and $b$. 
To compute his answers, the prover prepares a quantum state $\ket{\psi}$ and, on inputs $x, y$, performs a measurement $\{P_{x, y}^{(a, b)}\}_{a, b}$ to obtain answers $a, b$.

\begin{definition}[Self-testing for a single prover, informal] \label{def:self_test_single_prover}
The correlations $p(a, b | x, y)$ self-test a state $\ket{\phi}$ and measurements $\{K_{x, y}^{(a, b)}\}_{a, b}$ if for any state $\ket{\psi}$ and  measurements $\{P_{x, y}^{(a, b)}\}_{a, b}$ that realise the correlations $p(a, b | x, y)$, there exists an isometry $V$ such that
\begin{enumerate}
\item $V \ket{\psi} = \ket{\phi} \ket{\aux}$,
\item $V P_{x, y}^{(a, b)} \ket{\psi} = \left( K_{x, y}^{(a, b)} \ket{\phi}  \right) \ket{\aux}$, for some ancillary state $\ket{\aux} \in \H'$.
\end{enumerate}
\end{definition}

This definition is rather informal because whenever the number of possible  questions and answers is fixed and independent of the security parameter (as is the case in this paper), single-round question-answer correlations $p(a, b | x, y)$ alone cannot be sufficient: a prover can always succeed in the protocol simply by answering the verifier's questions according to a look-up table;
such a prover is classical and does not actually perform any computation.
Therefore, our protocol will have multiple rounds of interaction between the verifier and the prover: the questions and answers in the initial ``setup rounds'' will involve a public key that scales with the security parameter; then, in the last round, the verifier observes question-answer correlations $p(a, b | x, y)$ similar to standard self-testing, i.e., with a fixed question and answer length.
Instead of using multi-round interaction, one could also try to build a single-round protocol with questions that depend on the security parameter (e.g., the question would include a public key). A number of recent works have shown that under the (quantum) random oracle assumption, the protocol for certifying the quantumness of a prover from \cite{randomness} and the verification protocol from \cite{mahadev} can be adapted to this single-round setting~\cite{alagic2019noninteractive, chia2019classical, brakerski2020simpler}. We leave it for future work to investigate whether the interaction in our protocol can also be removed with the random oracle assumption.

To obtain a statement that is more similar to the two-prover scenario, we consider the stronger constraint that the desired measurements have a tensor product form $K_{x, y}^{(a, b)} = M_x^{(a)} \ot N_y^{(b)}$.
In particular, this means that answer $a$ only depends on question $x$ and $b$ only depends on $y$, and it enforces a natural tensor product structure on the prover's space.\footnote{In quantum foundations, it has been argued that the emergence of a tensor product structure on a Hilbert space should \emph{always} be viewed as being induced by measurements. More precisely, one considers different operationally accessible measurements, some of which are compatible with one another (i.e., one measurement does not affect the outcome of the other measurement), and it is these compatibility relations that induce a tensor product structure \cite{zanardi}.}
Specifically, we define Hilbert spaces $\H_A, \H_B$ and $\H'$ and deduce the existence of an isometry $V$ from the prover's physical space $\H$ to $\H_A \ot \H_B \ot \H'$ such that under the isometry, the measurements operators $P_{x, y}^{(a, b)}$ act on $\ket{\psi}$ in the same way that tensor product measurement operators of the form $M_x^{(a)} \ot N_y^{(b)}$ act on $\ket{\phi}_{AB}$, where $M_x^{(a)}$ acts only on $\H_A$, $N_y^{(b)}$ acts only on $\H_B$, and $\ket{\phi}_{AB}$ is the state that we are self-testing for (e.g., a Bell state).

\begin{definition}[Self-testing of tensor product strategies for a single prover, informal] \label{def:self_test_single_prover_tp}
The correlations $p(a, b | x, y)$ self-test a state $\ket{\phi}_{AB}$ and measurements $\{M_x^{(a)}\}$ on system $A$ and $\{N_y^{(b)}\}$ on system $B$ if for any state $\ket{\psi} \in \H$ and  measurements $\{P_{x, y}^{(a, b)}\}_{a, b}$ on $\H$ that realise the correlations $p(a, b | x, y)$, there exists an isometry $V: \H \to \H_A \ot \H_B \ot \H'$ such that
\begin{enumerate}
\item $V \ket{\psi} = \ket{\phi}_{AB} \ket{\aux}$,
\item $V P_{x, y}^{(a, b)} \ket{\psi} = \left( (M_x^{(a)} \ot N_y^{(b)})\ket{\phi}_{AB}  \right) \ket{\aux}$, for some ancillary state $\ket{\aux} \in \H'$.
\end{enumerate}
\end{definition}

Again, this definition is informal for the same reason as for Definition \ref{def:self_test_single_prover}. A formal statement of such a single-prover self-testing result with a tensor product structure is given in Theorem \ref{thm:soundness}, the main result of this paper.

\subsection{Cryptographic primitives} \label{sec:intro_crypto}
The main cryptographic primitive underlying our self-testing protocol is a so-called extended noisy trapdoor claw-free function family (ENTCF family). ENTCF families were introduced by Mahadev in \cite{mahadev}, building on the construction of noisy trapdoor claw-free function families by Brakerski et al. in \cite{randomness}.
Here, we only give a brief informal description of the main properties of an ENTCF family (see Section \ref{sec:entcf} for references and details).

An ENTCF family consists of two families $\mF$ and $\mG$ of function pairs. A function pair $(f_{k, 0}, f_{k, 1}) \in \mF$ is called a \emph{claw-free pair} and is indexed by a public key $k$. Similarly, an \emph{injective pair} is a pair of functions $(f_{k, 0}, f_{k, 1}) \in \mG$, also indexed by a public key $k$. Informally, the most important properties are the following:
\begin{enumerate}
\item For fixed $k \in \kf$, $f_{k, 0}$ and $f_{k, 1}$ are bijections with the same image, i.e., for every $y$ in their image there exists a unique pair $(x_0, x_1)$, called a \emph{claw}, such that $f_{k, 0}(x_0) = f_{k, 1}(x_1) = y$.
\item Given a key $k \in \kf$ for a claw-free pair, it is quantum-computationally intractable (without access to trapdoor information) to compute both a preimage $x_i$ and a single generalised bit of $x_0 \oplus x_1$ (i.e., $d \cdot (x_0 \oplus x_1)$ for any non-trivial bit string $d$), where $(x_0, x_1)$ forms a valid claw. This is called the \emph{adaptive hardcore bit property}.
\item For fixed $k \in \kg$, $f_{k, 0}$ and $f_{k, 1}$ are injective functions with disjoint images.
\item Given a key $k \in \kf \cup \kg$, it is quantum-computationally hard (without access to trapdoor information) to determine the ``function type'', i.e., to decide whether $k$ is a key for a claw-free or an injective pair. This is called \emph{injective invariance}.
\item For every key $k \in \kf \cup \kg$, there exists a trapdoor $t_k$, which can be sampled together with $k$ and with which (ii) and (iv) are computationally easy.
\end{enumerate}

\subsection{Our self-testing protocol} \label{sec:intro_protocol}
We now give an informal description of our self-testing protocol with the honest prover behaviour and provide some intuition for its soundness. A full description of the protocol is given in Figure \ref{fig:protocl}, and for a more detailed overview of the soundness proof, see the introduction to Section \ref{sec:soundness}. 

On a very high level, one can view the protocol as first executing the RSP protocol from~\cite{rsp} twice in parallel to prepare two qubits in the provers space. 
Then, the prover is asked to perform an entangling operation on these two qubits. 
Because the prover does not know which states the qubits are in, and the entangling operation acts differently on different states, to pass the checks in the protocol the prover has to apply the entangling operation honestly.

In more detail, the protocol begins with the verifier sampling two uniformly random bits $\theta_1, \theta_2$, each bit denoting a \emph{basis choice} (either the computational or the Hadamard basis). 
The case where both bits denote the Hadamard basis will be the one where the prover prepares a Bell pair, whereas the other basis choices serve as tests that prevent the prover from cheating. 
Depending on these basis choices, the verifier then samples two key-trapdoor pairs $(k_1, t_{k_1})$ and $(k_2, t_{k_2})$ from the ENTCF family: for the computational basis, it samples an injective pair, and for the Hadamard basis a claw-free pair. 
The verifier sends the keys to the prover and keeps the trapdoors private.

The honest prover treats the two keys separately. For each key $k_i$, he prepares the state
\begin{equation}
\ket{\psi_i} = \frac{1}{\sqrt{2 |\mX|}} \sum_{x \in \mX, \, b \in \bits} \ket{b} \ket{x} \ket{f_{k_i, b}(x)} \,.
\end{equation}
Here, $\mX$ is the domain of the ENTCF family. Note that even though the prover does not know which kind of function (claw-free or injective) he is dealing with, the definition of ENTCF families still allows him to construct this state. The prover now measures both image registers (i.e., the registers storing ``$f_{k_i, b}(x)$''), obtains images $y_1, y_2$, and sends these to the verifier. (In the terminology of \cite{mahadev}, this is called a ``commitment''.)
Depending on the choice of function family by the verifier, the prover's post-measurement state has one of two forms: if the verifier sampled the key $k_i$ from the injective family, the post-measurement state is a computational basis state:
\begin{equation}
\ket{\psi_i'} = \ket{b} \ket{x_b} \,,
\end{equation}
where $x_b$ is the unique preimage of $y_i$. If the key $k_i$ belongs to a claw-free family, the post-measurement state is a superposition over a claw:
\begin{equation}
\ket{\psi_i'} = \frac{1}{\sqrt{2}} (\ket{0} \ket{x_0} + \ket{1} \ket{x_1}) \,,
\end{equation}
where $(x_0, x_1)$ form a claw, i.e., $f_{k, 0}(x_0) = f_{k, 1}(x_1) = y$.

At this point, the verifier selects a round type, either a ``preimage round'' or a ``Hadamard round'', uniformly at random and sends the round type to the prover. 
For a preimage round, the honest prover measures his entire state in the computational basis and returns the result; the verifier checks that the prover has indeed returned correct preimages for the submitted $y_1, y_2$.
The preimage round is an additional test that is required for us to leverage the adaptive hardcore bit property, but we do not discuss this further in this overview.

For a Hadamard round, the honest prover measures both of his preimage registers (i.e., the registers containing ``$x_b$'') in the Hadamard basis, obtains two bit strings $d_1, d_2$, and sends these to the verifier. This results in the following states (using the notation from above): 
\begin{equation}
\ket{\psi_i''} = 
\begin{cases}
\ket{b} & \text{if $k_i$ belongs to an injective family,} \\
\frac{1}{\sqrt{2}}(\ket{0} + (-1)^{d_i \cdot (x_0 \oplus x_1)} \ket{1}) & \text{if $k_i$ belongs to a claw-free family.}
\end{cases}
\end{equation}
Note that the phase in the second case is exactly the adaptive hardcore bit from the definition of ENTCF families. At this point, the verifier selects two additional bases $q_1, q_2$ uniformly at random (again from either the computational or Hadamard basis), and sends these to the prover. In analogy to self-testing, we call these bases ``questions''.
The honest prover now applies a $CZ$ gate (an entangling two-qubit gate that applies a $\sigma_Z$ operation to the second qubit if the first qubit is in state $\ket{1}$) to its state $\ket{\psi''_1}\ket{\psi''_2}$. 
In the case where both $\theta_1$ and $\theta_2$ specify the Hadamard basis, this results in a Bell state (rotated by a single-qubit Hadamard gate). The prover measures the individual qubits of the resulting state in the bases specified by $q_1, q_2$. The outcomes $v_1, v_2$ are returned to the verifier.

The verifier can use the prover's answers $y_1, y_2, d_1, d_2$ and her trapdoor information $t_{k_1}, t_{k_2}$ to determine which state $CZ \ket{\psi''_1}\ket{\psi''_2}$ the prover should have prepared. 
The verifier accepts the prover if his answers $v_1, v_2$ are consistent with making the measurements specified by $q_1, q_2$ on the honest prover's state $CZ \ket{\psi''_1}\ket{\psi''_2}$.

\subsection{Soundness proof}

We now give a brief intuition for the soundness of the protocol; the actual soundness proof is given in Section \ref{sec:soundness}. 
Let us first consider a version of the protocol where the prover is not supposed to perform a $CZ$ operation. As noted before, this would be (a simplified version of) the RSP protocol \cite{rsp}, executed twice in parallel. 
For the purposes of this overview, let us assume that the only way for the prover to pass these two parallel executions of the RSP protocol is to treat each execution separately, i.e., use a tensor product Hilbert space $\H_1 \ot \H_2$ and execute each instance of the RSP protocol on a different part of the space (enforcing such a tensor product structure is reminiscent of the classic question of parallel repetition \cite{pr_raz} and is actually one of the main difficulties in our soundness proof, but we leave the details of this for Section \ref{sec:soundness}).
It now follows from the security of the RSP protocol that the prover must have prepared one of $\{\ket{0}, \ket{1}, \ket{+}, \ket{-}\}$ in each part of his space (up to a ``local'' change of basis for each space), but he does not know which one.

Now consider how a $CZ$ operation acts on these different states: if both states are Hadamard basis states (e.g., $\ket{+}\ket{-}$), the $CZ$ operation will entangle them and produce a Bell state (rotated by a single-qubit Hadamard gate); in contrast, if at least one of the states is a computational basis state (e.g., $\ket{1} \ket{-}$), the resulting state will still be a product state of computational and Hadamard basis states (albeit a different one). This means that in the latter case, the $CZ$ operation essentially only relabels the states. Therefore, if the verifier adapts her checks to account for the relabelling, in the latter case the guarantees from the RSP protocol still hold. Because the prover does not know which bases the verifier has selected, we can extend these guarantees to the case of two Hadamard basis states, too. 

We stress that this only provides a rough intuition, and that the actual proof proceeds quite differently from this because we cannot just assume the existence of a tensor product structure on the prover's Hilbert space.
Deducing this tensor product structure poses technical difficulties. 
In two-prover self-testing proofs, the first step is to show that the measurement operators used by each prover approximately satisfy certain relations, e.g. anti-commutation. 
Because the measurement operators of different provers act on different Hilbert spaces, they exactly commute.
Combining the approximate relations from the first step with the exact commutation relations, one can show that the prover's measurement operators must be close to some desired operators, e.g. the Pauli operators.
This last ``rounding step'' typically uses Jordan's lemma or a stability theorem for approximate group representations \cite{gowers_hatami}.
In our case, we cannot show exact commutation relations between operators --- commutation can only be enforced via the protocol, which tolerates a small failure probability.
Hence, we are only able to show approximate commutation relations, which prevents us from applying Jordan's lemma or the result of \cite{gowers_hatami}.
We therefore develop an alternative approach to ``rounding'' the prover's operators that only requires approximate commutation and leverages the cryptographic assumptions.
This method might also be useful for other applications that require a very tight ``cryptographic leash'' on a quantum prover.

\subsection{Discussion}
Self-testing has developed into a versatile tool for quantum information processing and quantum complexity theory and presents one of the strongest possible black-box certification techniques of quantum devices. 
The standard self-testing setting involves multiple non-communicating quantum provers, which is difficult to enforce in practice. 
The main contribution of this paper is the construction of a self-testing protocol that allows a classical verifier to certify that a single computationally bounded quantum prover has prepared a Bell state and measured the individual qubits of the state in the computational or Hadamard basis, up to a global change of basis applied to both the state and measurements. 
This means that we are able to certify the existence of entanglement in a single quantum device.\footnote{The freedom of applying a global change of basis means that the entangled Bell state can be mapped to a product state. However, then the prover's measurements are mapped to entangling measurements, so entanglement is still present.}

Due to the interactive nature of our protocol, this certification remains valid even if it turned out that any quantum computation is classically simulable, i.e., $\bqp = \bpp$.\footnote{Note that the LWE assumption is independent of whether $\bqp = \bpp$ or not, since LWE is assumed to be hard for both quantum and classical computers.}
It therefore constitutes a ``test of quantumness'' in the sense of \cite{randomness} and differs from proposals for testing quantum supremacy such as \cite{random_circuit}, which only certify the ``quantumness'' of a device under the assumption that $\bqp \neq \bpp$.\footnote{Intuitively, the reason for this is the following:
in our protocol and in \cite{randomness}, the quantum prover has to be able to compute either a preimage or a pair $(u, d)$ such that $u = d \cdot (x_0 \oplus x_1)$, where $(x_0, x_1)$ forms a claw.
If a classical prover was able to correctly compute a preimage or a pair $(u, d)$, it could be rewound to compute both at the same time, contradicting the adaptive hardcore bit property. 
In a quantum prover, the collapsing nature of quantum measurements prevents us from rewinding the prover.}

Existing multi-prover self-testing protocols are typically based on non-local games, e.g., the CHSH game \cite{scarani-singlet}. Our self-testing protocol follows a more ``custom'' approach guided by the available cryptographic primitives. While this enables us to construct a single-prover self-test for single-qubit measurements on a Bell state, arguably the most important quantum state for many applications, it does not allow us to extend the result to other states for which multi-prover self-tests are known \cite{Coladangelo2017}.
To better make use of the extensive existing self-testing literature, it would be desirable to construct a procedure that allows for the ``translation'' of multi-prover non-local games to single-prover games with computational assumptions.
In classical cryptography, similar attempts have been made to construct single-prover argument systems from multi-prover proof systems using fully homomorphic encryption \cite{mip_collapse, krr, spooky}. 

Another approach to constructing single-prover self-tests for a larger class of states might be to strengthen Mahadev's measurement protocol \cite{mahadev} from guaranteeing the existence of a state compatible with the measurement results to certifying that the prover actually has prepared this state. 
As a step in this direction, the second author and Zhang recently showed that Mahadev's protocol is a classical proof of quantum knowledge~\cite{vidick2020classical}.
The concept of a proof of quantum knowledge, first introduced in \cite{poqk2, poqk1} for the setting of a quantum verifier and extended to the setting of a classical verifier in~\cite{vidick2020classical}, is still less stringent than a self-test and in particular lacks the strong characterisation of the prover's measurements that we obtain in self-testing. 

Beyond the conceptual appeal of gaining more fine-grained control over untrusted quantum devices, our self-testing protocol presents a first step towards translating multi-prover protocols for applications such as delegated computation \cite{ruv, leash}, randomness expansion \cite{colbeck2009quantum, randomness_vv, miller_shi}, or secure multi-party quantum computation \cite{mpqc02, mpqc06} to a single-prover setting.
There are already computationally secure single-prover protocols for delegated quantum computation \cite{mahadev} and randomness expansion \cite{randomness}; 
however, establishing a more general link between self-testing-based multi-prover protocols and computationally secure single-prover protocols is still desirable: 
it might lead to conceptually simpler single-prover protocols and will be useful for constructing single-prover protocols for other applications without resorting to a low-level cryptographic analysis.

For example, using our self-testing theorem in a black-box way, the first author and others have recently constructed a protocol for device-independent quantum key distribution (DIQKD)~\cite{computational_qkd}.
In contrast to previous DIQKD protocols, which rely on a non-communication similar to the one in standard self-testing, this new DIQKD protocol requires no non-communication assumption and more closely models how DIQKD protocols are expected to be implemented experimentally.
Crucially, the security analysis of this DIQKD protocol can be reduced to our self-testing theorem without any intricate cryptographic analysis involving computational hardness assumptions.

We believe that, in a similar vein, our protocol will also serve as a useful building block for other future protocols for computationally bounded quantum devices, in the same way that self-testing for EPR pairs in the multi-prover scenario has proved to be a versatile tool in physics, cryptography, and complexity theory.

\paragraph{Organisation.} The paper is organised as follows. In Section \ref{sec:prelims}, we give preliminary definitions and technical lemmas, most importantly involving the state-dependent distance between operators. In Section \ref{sec:protocol}, we describe our self-testing protocol and show that it has completeness negligibly close to 1, i.e., that there exists an honest prover that is accepted with all but negligible probability. In Section \ref{sec:soundness}, we show that our protocol is sound, meaning that any prover that is accepted with high probability must use states and measurements close to the desired ones. The main result that formalises this statement is Theorem \ref{thm:soundness}.

\paragraph{Acknowledgements.} We thank Andrea Coladangelo, Andru Gheorghiu, Anand Natarajan, and Tina Zhang for helpful discussions; Andrea Coladangelo, Andru Gheorghiu, Urmila Mahadev, Akihiro Mizutani, and the Quantum referees for valuable comments on the manuscript; and L\'idia del Rio for pointing out the reference \cite{self_testing_contextuality}. 
Tony Metger acknowledges support from ETH Z\"urich and the ETH Foundation through the \textit{Excellence Scholarship \& Opportunity Programme}, from the IQIM, an NSF Physics Frontiers Center (NSF Grant PHY-1125565) with support of the Gordon and Betty Moore Foundation (GBMF-12500028), and from the National Centres of Competence in Research (NCCRs) QSIT and SwissMAP.
Thomas Vidick is supported by NSF CAREER Grant CCF-1553477, AFOSR YIP award number FA9550-16-1-0495, a CIFAR Azrieli Global Scholar award, MURI Grant FA9550-18-1-0161, and the IQIM, an NSF Physics Frontiers Center (NSF Grant PHY-1125565) with support of the Gordon and Betty Moore Foundation (GBMF-12500028).
Most of this work was carried out while Tony Metger was a visiting student researcher at the Department of Computing and Mathematical Sciences at Caltech.

\section{Preliminaries} \label{sec:prelims}
This section establishes a number of definitions and technical lemmas that we will use in the soundness proof in Section \ref{sec:soundness}. We assume basic familiarity with quantum mechanics and start with a description of the notation in this paper.
\changed{On a first reading of this paper, we recommend skipping most of the preliminary section and only referring back to the relevant results when they are referenced in the soundness proof. 
The most relevant parts for a first reading are Section \ref{sec:entcf}, Definition \ref{def:state_dep_inner_product}, Definition \ref{def:approx_dist}, Lemma \ref{lem:replace_in_trace}, and Lemma \ref{lem:lifting}.}

\subsection{Notation}
For a bit $b \in \bits$, $\overline{b}$ denotes the flipped bit, i.e., $\overline{b} = b \oplus 1$.

A function $n: \N \to \R_+$ is called \emph{negligible} if $\lim_{\lambda \to \infty} n(\lambda) p(\lambda) = 0$ for any polynomial $p$. We use $\negl(\lambda)$ to denote an arbitrary negligible function.

We use $\H$ to denote an arbitrary finite-dimensional Hilbert space, and use indices to differentiate between distinct spaces. The set of linear operators on a Hilbert space $\H$ is denoted $\mL(\H)$, and the set of unitary operators is $\mU(\H)$. The map $\rm{Tr} : \mL(\H) \to \C$ denotes the trace, and ${\rm Tr}_B: \mL(\H_A \ot \H_B) \to \mL(\H_A)$ is the partial trace over subsystem $B$. $\pos(\H)$ denotes the set of positive semidefinite operators on $\H$, and $\mD(\H) = \{A \in \pos(\H)\,|\; \tr{A} = 1\}$ is the set of density matrices on $\H$. 

For $A \in \mL(\H)$ and $p \in \N$, the Schatten $p$-norm is $\norm{A}_p = \tr{\abs{A}^p}^{1/p}$ with $\abs{A} = \sqrt{A^\dagger A}$, and $\norm{A}_{\infty}$ is the operator norm (largest singular value). 
For $A, B \in \mL(\H)$, we use the commutator $[A, B] = A B - B A$ and the anti-commutator $\{A, B\} = A B + B A$. The single qubit Pauli operators are $\sigma_X = \left( \begin{smallmatrix} 0 & 1 \\ 1 & 0 \end{smallmatrix}  \right)$ and $\sigma_Z = \left( \begin{smallmatrix} 1 & 0 \\ 0 & -1 \end{smallmatrix}  \right)$. The Hadamard basis states are written as $\ket{(-)^b} = \frac{1}{\sqrt{2}} ( \ket{0} + (-1)^b \ket{1})$.

An \emph{observable} on $\H$ is a Hermitian linear operator on $\H$. A \emph{binary observable} is an observable that only has eigenvalues $\in \{-1, 1\}$. For a binary observable $O$ and $b \in \bits$, we denote by $O^{(b)}$ the projector onto the $(-1)^b$-eigenspace of $O$.
For any procedure which takes a quantum state as input and produces a bit (or more generally an integer) as output, e.g., by measuring the input state, we denote the probability distribution over outputs $b$ on input state $\psi$ by $\pr{b \, | \, \psi}$.

The self-testing protocol that we will introduce in Section \ref{sec:protocol} has a \emph{security parameter} $\lambda$. The quantities in the rest of the paper are typically families indexed by this security parameters, but we leave this implicit most of the time. 

\subsection{Extended trapdoor claw-free functions} \label{sec:entcf}
As mentioned in Section \ref{sec:intro_crypto}, we rely on a cryptographic primitive called extended noisy trapdoor claw-free function families (ENTCF families) \cite{randomness, mahadev}, a brief description of which was also given in Section \ref{sec:intro_crypto}.
We refer the reader to \cite[section 4]{mahadev} for the formal definition of ENTCF families, and we will use the notation therein throughout the rest of this paper.\footnote{The only exception to this is that Mahadev uses an additional bijection $J$ between the domain $\mX$ of the ENTCF family and bit strings $\bits^d$. We leave this bijection implicit and directly identify any $x \in \mX$ with its associated bit string. Also note that the informal description in Section \ref{sec:intro_crypto} described the functions as outputting a number, whereas in the formal description, each function outputs a probability distribution.}
For the construction of ENTCF families from the Learning with Errors problem \cite{lwe}, see \cite[section 4]{randomness} and \cite[Section 9]{mahadev}. 
\changed{These works also give the conditions which parameters in the construction need to satisfy. Security of our protocol holds under the same conditions. Since the conditions are quite involved, we do not reproduce them here.}
For convenience, we define the following maps that ``decode'' the output of an ENTCF.
\begin{definition}[Decoding maps] \label{def:decoding_maps}
~
\begin{enumerate}
\item \changed{For a key $k \in \kg \cup \kf$, an image $y \in \mY$, a bit $b \in \bits$, and a preimage $x \in \mX$, we define $\Chk(k, y, b, x)$ to return 1 if $y \in \supp(f_{k,b}(x))$, and 0 otherwise. (This definition is as in~\cite[Definition 4.1 and 4.2]{mahadev}).}
\item For a key $k \in \kg$ and a $y \in \mY$, we define $\hat{b}(k, y)$ by the condition $y \in \cup_x \supp(f_{k, \hat{b}(k, y)}(x))$. (This is well-defined because $f_{k, 1}$ and $f_{k, 2}$ form an injective pair.)
\item For a key $k \in \kg \cup \kf$ and a $y \in \mY$, we define $\hat{x}_b(k, y)$ by the condition $y \in \supp(f_{k, b}(\hat{x}_b(k, y)))$, and $\hat{x}_b(k, y) = \perp$ if $y \notin \cup_x \supp (f_{k, b}(x))$. For $k \in \kg$, we also use the shorthand  $\hat{x}(k, y)=\hat{x}_{\hat{b}(k, y)}(k, y)$.
\item For a key $k \in \kf$, a $y \in \mY$, and a $d \in \bits^w$, we define $\hat{u}(k, y, d)$ by the condition $d \cdot \left( \hat{x}_0(k, y) \oplus \hat{x}_1(k, y) \right) = \hat{u}(k, y, d)$.
\end{enumerate}
\end{definition}

\subsection{Efficiency and computational indistinguishability}
In this section, we define what it means for actions performed by a quantum device, e.g., unitaries or measurements, to be efficient. We also define the notion of computational indistinguishability for quantum states. In these definitions, we make the dependence on the security parameter $\lambda$ explicit for the sake of clarity.

\begin{definition}[Efficient unitaries, isometries, measurements, and observables] \label{def:eff_everything}
~
\begin{enumerate}
\item We call a family of unitaries $\{U_{\lambda} \in \mU(\H_\lambda) \}_{\lambda \in \N}$ efficient if there exists a (classical) polynomial-time Turing machine $M$ that, on input $1^\lambda$, outputs a description of a circuit (with a fixed gate set) that implements the unitary.
\item We call a family of isometries $\{V_{\lambda}: \H_{A_\lambda} \to \H_{B_\lambda} \}_{\lambda \in \N}$ efficient if there exists an efficient family of unitaries $\{U_\lambda \in \mU(\H_{B_\lambda})\}_{\lambda \in \N}$ such that $V_\lambda = U_\lambda (\1_{A_\lambda} \ot \ket{0_{k(\lambda)}})$,
where $k(\lambda) = \dim(\H_{B_\lambda}) - \dim(\H_{A_\lambda})$.
\item We call a family of binary observables $\{Z_\lambda \in \Herm(\H_{A_\lambda}) \}_{\lambda \in \N}$ efficient if there exists a family of Hilbert spaces $\H_{B_\lambda}$ with $\dim \H_{B_\lambda} = \poly(\lambda)$, 
and a family of efficient unitaries $\{U_{\lambda} \in \mU(\H_{A_{\lambda}} \ot \H_{B_{\lambda}})\}_{\lambda \in \N}$ such that for any $\ket{\psi}_A \in \H_A$:
\begin{equation}
U^\dagger_\lambda (\sigma_Z \ot \1) U_\lambda (\ket{\psi}_A \ket{0}_B) = (Z_\lambda \ket{\psi}_A) \ot \ket{0}_B \,.
\end{equation}
\item We call a family of measurements $\{M_\lambda = \{M^{(i)}_\lambda \in \mL(\H^{A_\lambda})\}_{i \in \mA}\}_{\lambda \in \N}$ efficient if the isometry 
\begin{equation}
\ket{\psi} \mapsto \sum_{i \in \mA}  \ket{i} \ot M^{(i)} \ket{\psi}
\end{equation}
is efficient.
\end{enumerate}
\end{definition}

\begin{lemma} \label{lem:meas_to_obs}
For any efficient two-outcome measurement $\{M^{(0)}, M^{(1)}\}$, $O = M^{(0)} - M^{(1)}$ is an efficient binary observable.
\end{lemma}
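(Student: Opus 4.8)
The plan is to unwind the definitions and produce the unitary implementing $\sigma_Z$-conjugation directly from the unitary that implements the measurement isometry. First I would recall what needs to be shown: for an efficient two-outcome measurement $\{M^{(0)},M^{(1)}\}$ on $\H_A$, the operator $O=M^{(0)}-M^{(1)}$ must (a) be a binary observable, i.e. Hermitian with eigenvalues in $\{-1,1\}$, and (b) satisfy the efficiency condition of Definition~\ref{def:eff_everything}(iii), namely there is an ancilla space $\H_B$ of polynomial dimension and an efficient unitary $U$ on $\H_A\ot\H_B$ with $U^\dagger(\sigma_Z\ot\1)U(\ket{\psi}_A\ket{0}_B)=(O\ket{\psi}_A)\ot\ket{0}_B$ for all $\ket\psi$.

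For part (a), note that a two-outcome measurement is given by positive operators $M^{(0)},M^{(1)}$ with $M^{(0)}+M^{(1)}=\1$; since the measurement is efficient, the induced isometry $\ket\psi\mapsto\ket0\ot M^{(0)}\ket\psi+\ket1\ot M^{(1)}\ket\psi$ is an isometry, which forces $(M^{(0)})^\dagger M^{(0)}+(M^{(1)})^\dagger M^{(1)}=\1$; combined with $M^{(0)}+M^{(1)}=\1$ and positivity, a short calculation (expanding $(M^{(0)})^2+(M^{(1)})^2=\1=(M^{(0)}+M^{(1)})^2$ gives $M^{(0)}M^{(1)}+M^{(1)}M^{(0)}=0$, and sandwiching with the positive operators shows $M^{(0)}M^{(1)}=0$) shows $M^{(0)},M^{(1)}$ are orthogonal projectors summing to $\1$. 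Hence $O=M^{(0)}-M^{(1)}=2M^{(0)}-\1$ is Hermitian with spectrum $\{-1,+1\}$, so it is a binary observable; moreover $O^{(b)}=M^{(b)}$ is exactly the projector onto the $(-1)^b$-eigenspace.

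For part (b), let $W:\ket\psi\mapsto\sum_{i\in\{0,1\}}\ket i\ot M^{(i)}\ket\psi$ be the efficient isometry from Definition~\ref{def:eff_everything}(iv); by Definition~\ref{def:eff_everything}(ii) there is an efficient unitary $U'$ on $\H_A\ot\H_{C}$ (with $\H_C$ of polynomial dimension, $\ket0_C$ a fixed ancilla accommodating the one output qubit plus padding) with $W=U'(\1_A\ot\ket{0}_C)$. Take $\H_B=\H_C$ and $U=U'$. Then for any $\ket\psi_A$, $U(\ket\psi_A\ket0_B)=\ket0\ot M^{(0)}\ket\psi+\ket1\ot M^{(1)}\ket\psi$ (where the leading qubit is a designated register inside $\H_B$, on which $\sigma_Z$ acts), so applying $\sigma_Z$ to that register and then $U^\dagger$ gives $U^\dagger(\sigma_Z\ot\1)U(\ket\psi_A\ket0_B)=U^\dagger(\ket0\ot M^{(0)}\ket\psi-\ket1\ot M^{(1)}\ket\psi)=U^\dagger W(M^{(0)}-M^{(1)})\ket\psi$... more carefully: since $W$ is an isometry with orthogonal projector components, $U^\dagger(\ket0\ot M^{(0)}\ket\psi-\ket1\ot M^{(1)}\ket\psi)=U^\dagger W\big((M^{(0)}-M^{(1)})\ket\psi\big)$ by linearity and $W\phi=\ket0\ot M^{(0)}\phi+\ket1\ot M^{(1)}\phi$ applied to $\phi=(M^{(0)}-M^{(1)})\ket\psi$ using $M^{(0)}M^{(1)}=0$ and $(M^{(i)})^2=M^{(i)}$; this equals $(\1_A\ot\ket0_B)(M^{(0)}-M^{(1)})\ket\psi=(O\ket\psi_A)\ot\ket0_B$, as required. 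Since $U=U'$ is efficient, $O$ is an efficient binary observable.

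The only mildly delicate point — and the one I would write out most carefully — is the bookkeeping in the last displayed computation: one must track that the single-qubit output register of the measurement isometry is the register on which $\sigma_Z$ acts in Definition~\ref{def:eff_everything}(iii), and verify $W^\dagger(\sigma_Z\ot\1)W=M^{(0)}-M^{(1)}$ as operators on $\H_A$ using the projector identities from part (a); after that, conjugating by $U'$ and restoring the ancilla to $\ket0$ is immediate. There is no real obstacle here, only careful matching of the two efficiency definitions.
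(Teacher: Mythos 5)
Your proposal is correct and follows essentially the same route as the paper: the paper likewise takes $U$ to be the unitary extension of the measurement isometry $\ket{0}\ot M^{(0)}+\ket{1}\ot M^{(1)}$ and verifies $U^\dagger(\sigma_Z\ot\1)U\,\ket{0}\ket{\psi}=\ket{0}\ot O\ket{\psi}$ by direct calculation. The only difference is cosmetic: the paper implicitly treats $M^{(0)},M^{(1)}$ as orthogonal projectors (its standing convention via Naimark), whereas you derive this from positivity, completeness, and the isometry condition before running the same computation.
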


\begin{proof}
We have $O^\dagger = O$ and $O^2 = M^{(0)} + M^{(1)} = \1$, so $O$ is a binary observable. To see that $O$ is efficient, take $U$ to be the unitary extension of the isometry $\ket{0} \ot M^{(0)} + \ket{1} \ot M^{(1)}$. 
Because $\{M^{(0)}, M^{(1)}\}$ is an efficient measurement, $U$ is an efficient unitary.
A direct calculation shows that 
\begin{equation}
U^\dagger (\sigma_Z \ot \1) U \; \ket{0} \ket{\psi} = \ket{0} \ot O \ket{\psi}
\end{equation}
for any $\ket{\psi}$. 
\end{proof}

\begin{lemma} \label{lem:obs_to_meas}
Let $A$ be an efficient binary observable. Then, the isometry 
\begin{equation}
\ket{\psi} \mapsto \ket{0} \ot A^{(0)}\ket{\psi} + \ket{1} \ot A^{(1)}\ket{\psi}
\end{equation}
is efficient.
\end{lemma}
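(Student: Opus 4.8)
The plan is to reduce Lemma~\ref{lem:obs_to_meas} to Definition~\ref{def:eff_everything}(iii), the definition of an efficient binary observable, by explicitly constructing the circuit that implements the claimed isometry out of the circuit guaranteed by that definition. Since $A$ is an efficient binary observable, there is a polynomial-size ancilla space $\H_B$ and an efficient unitary $U$ on $\H_A \ot \H_B$ such that $U^\dagger(\sigma_Z \ot \1)U(\ket{\psi}_A\ket{0}_B) = (A\ket{\psi}_A)\ket{0}_B$ for all $\ket{\psi}_A$. The key observation is that the target isometry $\ket{\psi} \mapsto \ket{0}\ot A^{(0)}\ket{\psi} + \ket{1}\ot A^{(1)}\ket{\psi}$ is, up to the extra ancilla $\H_B$, nothing but ``coherently read off the $\sigma_Z$-value in the $U$-rotated frame into a fresh output qubit''.

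Concretely, I would proceed as follows. First, recall that $A^{(b)} = \tfrac12(\1 + (-1)^b A)$ is the projector onto the $(-1)^b$-eigenspace of $A$, and note that in the rotated frame these projectors become the standard computational-basis projectors on the $\sigma_Z$ qubit: $U(A^{(b)}\ket{\psi}_A)\ket{0}_B = (\proj{b}\ot\1)U(\ket{\psi}_A\ket{0}_B)$, which follows directly from the defining property of $U$ (apply $U$ to both sides of $A^{(b)} = \tfrac12(\1 + (-1)^b A)$ acting on $\ket{\psi}_A\ket 0_B$). Second, introduce a fresh output qubit register $\H_C$ initialised to $\ket{0}_C$, and consider the circuit that (i) appends $\ket{0}_B\ket{0}_C$, (ii) applies $U$ to $\H_A\ot\H_B$, (iii) applies a CNOT from the $\sigma_Z$-carrying qubit onto $\H_C$, (iv) applies $U^\dagger$ to $\H_A\ot\H_B$. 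Tracking a computational-basis resolution $\ket{\psi}_A = \sum_b A^{(b)}\ket{\psi}_A$ through this circuit and using the identity from step one shows that the overall map sends $\ket{\psi}_A \mapsto \sum_b \big(A^{(b)}\ket{\psi}_A\big)_A\ket{0}_B\ket{b}_C$, which after the harmless relabelling putting $\H_C$ first and discarding $\ket{0}_B$ is exactly the claimed isometry (with $\ket{0}_B$ absorbed into the ancilla, as permitted by Definition~\ref{def:eff_everything}(ii)). Third, observe that $U$ and $U^\dagger$ are efficient by hypothesis, CNOT is a fixed gate in the gate set, and $\H_B$, $\H_C$ have polynomial dimension, so the whole circuit has a polynomial-time classical description; hence the isometry is efficient in the sense of Definition~\ref{def:eff_everything}(ii) and (iv).

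I do not anticipate a serious obstacle here; this is essentially a bookkeeping exercise of the same flavour as the proof of Lemma~\ref{lem:meas_to_obs}. The only point requiring a little care is matching conventions for efficient isometries (Definition~\ref{def:eff_everything}(ii)): the constructed unitary acts on $\H_A\ot\H_B\ot\H_C$ whereas the isometry in the statement has output space $\H_C\ot\H_A$, so one should note that the leftover $\ket{0}_B$ register is allowed as part of the padding ancilla and that permuting tensor factors is a fixed (efficient) operation. A second minor subtlety is that the statement as written does not explicitly demand efficiency, but given the placement of the lemma and the phrasing ``is efficient'', the intended content is the efficiency claim; I would state and prove exactly that.
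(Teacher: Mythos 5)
Your proposal is correct and follows essentially the same route as the paper's proof: apply the unitary $U$ from the definition of an efficient binary observable, CNOT the $\sigma_Z$-carrying qubit onto a fresh ancilla, then apply $U^\dagger$, using the identity $A^{(b)} = U^\dagger(\proj{b}\ot\1)U$ (on states with the $\ket{0}$ ancilla). Your version is just slightly more explicit about the ancilla register $\H_B$ and the bookkeeping for Definition \ref{def:eff_everything}(ii), which the paper leaves implicit.
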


\begin{proof}
Let $U$ be the unitary associated with $A$. We can construct the desired isometry as follows: first, we apply $U$ to $\ket{\psi}$. Then, we apply a CNOT gate with the first qubit of $U \ket{\psi}$ being the control, and an ancillary qubit in state $\ket{0}$ being the target. Finally, we apply $U^\dagger$. To see that this indeed implements the correct isometry, note that $A^{(b)} = U^\dagger (\proj{b} \ot \1) U$ for $b \in \bits$ and that the CNOT gate can be written as $\1_2 \ot \proj{0} + \sigma_X \ot \proj{1}$
\end{proof}

\begin{lemma} \label{lem:eff_comm_observables}
Let $A_1$ and $A_2$ be efficient commuting binary observables. Then $A_1 A_2$ is also an efficient binary observable. 
\end{lemma}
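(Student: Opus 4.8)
The plan is to build an efficient unitary $U$ that implements the observable $A_1 A_2$ in the sense of Definition \ref{def:eff_everything}(iii), namely such that $U^\dagger(\sigma_Z \ot \1)U (\ket{\psi}_A \ket{0}_B) = ((A_1 A_2)\ket{\psi}_A)\ot\ket{0}_B$. First I would record the algebraic preliminaries: since $A_1$ and $A_2$ are binary observables, each is Hermitian and squares to $\1$; since they commute, $[A_1,A_2]=0$, so $(A_1A_2)^\dagger = A_2^\dagger A_1^\dagger = A_2 A_1 = A_1 A_2$ and $(A_1A_2)^2 = A_1 A_2 A_1 A_2 = A_1^2 A_2^2 = \1$. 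Hence $A_1 A_2$ is indeed a binary observable, and it only remains to exhibit an efficient implementing unitary.

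For efficiency, I would use the ancilla-plus-CNOT trick already used in the proof of Lemma \ref{lem:obs_to_meas}. Let $U_1, U_2 \in \mU(\H_A \ot \H_{B})$ be the efficient unitaries associated with $A_1, A_2$ respectively (after padding their ancilla registers to a common space $\H_B$ of polynomial dimension). The idea is to first coherently compute the $\pm1$ value of $A_2$ into a fresh ancilla qubit, then the value of $A_1$ into a second fresh ancilla qubit, combine them with a classical XOR/controlled operation so that one qubit holds the $(-1)$-parity of the two eigenvalue bits, which is exactly the eigenvalue of $A_1 A_2$ on the relevant joint eigenspace (here commutativity is essential: $A_1$ and $A_2$ are simultaneously diagonalizable, so $\ket{\psi}$ decomposes into joint eigenvectors and the eigenvalue of $A_1A_2$ is the product of the two eigenvalues), and finally uncompute. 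Concretely: apply $U_2$, CNOT from its first ($\sigma_Z$-)qubit onto ancilla $\ket{0}_{C}$, apply $U_2^\dagger$; then apply $U_1$, CNOT from its first qubit onto ancilla $\ket{0}_{C'}$, apply $U_1^\dagger$; then CNOT from $C$ onto $C'$ so that $C'$ holds the parity bit; then uncompute $C$ by reversing the first block on a second copy — more simply, keep $C$ and $C'$ and note the extra ancilla $C$ just sits in a product state $\ket{b_2}$, which we can treat as part of a larger fixed ancilla, or uncompute it by running the $U_2$-block backwards after copying into $C'$. The composition of efficient unitaries with a constant number of CNOTs is efficient, so the resulting $U$ is efficient.

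The step I expect to require the most care is the bookkeeping around the ancilla register $C$ that stores the intermediate value $A_2$-bit: to match Definition \ref{def:eff_everything}(iii) exactly, every ancilla must end in $\ket{0}$, so I would need to uncompute it. This is done cleanly by the standard "compute–copy–uncompute" pattern: after the CNOT that writes the $A_2$-parity into $C$, run a controlled version of the $U_1$-block (controlled on $C$) — or, alternatively, first compute $A_2$ into $C$, then compute $A_1$ into $C'$, then apply CNOT$_{C\to C'}$, then uncompute $C$ by re-running the $A_2$-block in reverse; since the $A_2$-block acts as the identity on $C'$ and only touches $\H_A\ot\H_B\ot C$, the uncomputation restores $C$ to $\ket{0}$ and leaves $C'$ holding the product eigenvalue. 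A short direct calculation on a joint eigenbasis of $A_1,A_2$ then verifies $U^\dagger(\sigma_Z\ot\1)U(\ket{\psi}_A\ket{0}_{B,C,C'}) = ((A_1A_2)\ket{\psi}_A)\ot\ket{0}_{B,C,C'}$, taking $C'$ as the designated $\sigma_Z$-qubit. This completes the argument.
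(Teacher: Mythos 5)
Your construction is correct and rests on the same core idea as the paper's proof: coherently extract the $\pm 1$ eigenvalue bits of $A_1$ and $A_2$ with CNOTs conjugated by their implementing unitaries, and use commutativity (equivalently $[A_1^{(b_1)},A_2^{(b_2)}]=0$, i.e.\ simultaneous diagonalizability) to conclude that the XOR of the two bits is the eigenvalue bit of $A_1A_2$. The difference is in the circuit and the verification: the paper targets a \emph{single} fresh ancilla with both conjugated CNOTs, so the XOR accumulates in place and no uncomputation is needed, and it verifies the required identity by operator algebra, writing the circuit as $(\1_2\ot A_1^{(0)}+\sigma_X\ot A_1^{(1)})(\1_2\ot A_2^{(0)}+\sigma_X\ot A_2^{(1)})(\sqrt{\sigma_Z}\ot\1)$ and using orthogonality of the projectors together with $[A_1^{(b_1)},A_2^{(b_2)}]=0$; you instead use two ancillas, a copy CNOT, an uncompute step, and a joint-eigenbasis calculation. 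Both work and both are efficient (constant overhead over $U_1,U_2$); the paper's circuit is just leaner. One caution: the parenthetical alternative in your middle paragraph --- keeping $C$ and ``treating it as part of a larger fixed ancilla'' --- would not work, since the bit stored in $C$ depends on $\ket{\psi}$ (for $\ket{\psi}$ superposed across $A_2$-eigenspaces, $C$ ends up entangled with the system), so the ancilla does not return to $\ket{0}$ as Definition \ref{def:eff_everything}(iii) requires and coherence is destroyed. Your final compute--copy--uncompute version avoids this; note, though, that the reason you state (``the $A_2$-block only touches $\H_A\ot\H_B\ot C$'') is not by itself sufficient, because the copy CNOT does act on $C$ as a control --- what saves you is that it preserves the computational-basis value of $C$ and the system's $A_2$-eigenvalue, which is exactly what the joint-eigenbasis calculation you invoke establishes.
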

\begin{proof}
Let $U_1, U_2$ be the efficient unitaries such that $A_i = U_i^\dagger (\sigma_Z \ot \1) U_i$. We define the unitary $U$ by the following circuit:
\begin{center}
\begin{quantikz}
& \qw & \targ{} & \qw  & \qw & \qw & \targ{} &  \gate{\sqrt{\sigma_Z}} & \qw \\
 & \gate[wires=2]{U_1} & \ctrl{-1} & \gate[wires=2]{U_1^\dagger} & \qw & \gate[wires=2]{U_2} & \ctrl{-1} & \gate[wires=2]{U_2^\dagger} & \qw  \\
& \qwbundle[alternate]{} & \qwbundle[alternate]{} & \qwbundle[alternate]{} & \qwbundle[alternate]{} & \qwbundle[alternate]{} & \qwbundle[alternate]{} & \qwbundle[alternate]{} & \qwbundle[alternate]{}
\end{quantikz}
\end{center}
We claim that $U^\dagger (\sigma_Z \ot \1) U = \1_2 \ot A_1 A_2$. Using $U_i^\dagger (\proj{b} \ot \1) U_i = A_i^{(b)}$ for $b \in \bits$, we can write $U$ as 
\begin{equation}
U = (\1_2 \ot A_1^{(0)} + \sigma_X \ot A_1^{(1)}) (\1_2 \ot A_2^{(0)} + \sigma_X \ot A_2^{(1)}) (\sqrt{\sigma_Z} \ot \1) \,.
\end{equation}
Since we can write the binary observables $A_i$ as $A_i = (-1)^{b} (2 A_i^{(b)} - \1)$, it follows from $[A_1, A_2] = 0$ that $[A_1^{(b_1)}, A_2^{(b_2)}] = 0$ for any $b_1, b_2 \in \bits$. Using this, the orthogonality of the projectors, and the anti-commutation of $\sigma_Z$ and $\sigma_X$, the lemma follows by a direct calculation.
\end{proof}

\begin{lemma} \label{lem:sum_difference_efficient}
Let $U_1, U_2$ be efficient unitaries on $\H$. Then, $(U_1 + U_2)^\dagger (U_1 + U_2)$ and $(U_1 - U_2)^\dagger (U_1 - U_2)$ are observables and there exists an efficient procedure that, given a state $\psi \in \mD(\H)$, outputs a bit $b$ with 
\begin{equation}
\pr{b = 0 | \psi} = \frac{1}{4} \tr{ ( U_1 + U_2 )^\dagger ( U_1 + U_2 )\psi }, 
\qquad
\pr{b = 1 | \psi} = \frac{1}{4} \tr{ ( U_1 - U_2 )^\dagger ( U_1 - U_2 )\psi } \,.
\end{equation}
(Note that $(U_1 + U_2)^\dagger (U_1 + U_2) + (U_1 - U_2)^\dagger (U_1 - U_2) = 4 \cdot \1$.)
\end{lemma}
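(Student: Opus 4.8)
The plan is to construct the efficient procedure explicitly by noting that the two quantities $(U_1 \pm U_2)^\dagger (U_1 \pm U_2)$ are the POVM elements of a genuine two-outcome measurement, obtained from an efficient isometry built out of $U_1$ and $U_2$. First I would verify the claimed identity $(U_1+U_2)^\dagger(U_1+U_2) + (U_1-U_2)^\dagger(U_1-U_2) = 2U_1^\dagger U_1 + 2 U_2^\dagger U_2 = 4\cdot\1$, using unitarity of $U_1$ and $U_2$; this shows that $\{\tfrac14(U_1+U_2)^\dagger(U_1+U_2), \tfrac14(U_1-U_2)^\dagger(U_1-U_2)\}$ is a valid POVM. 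Each of these operators is manifestly positive semidefinite and Hermitian, hence an observable, which settles the first sentence of the statement. It remains to exhibit an \emph{efficient} implementation realising the stated outcome probabilities.

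The key idea for efficiency is a Hadamard-test-style circuit: introduce an ancilla qubit in state $\ket{+}$, apply the controlled unitary $\proj{0}\ot U_1 + \proj{1}\ot U_2$ (or rather $\proj{0}\ot U_1 + \proj{1} \ot U_2$ with a suitable relabelling), and then measure the ancilla in the Hadamard basis. Concretely, on input $\ket{\psi}$, prepare $\tfrac{1}{\sqrt2}(\ket{0}\ot U_1\ket{\psi} + \ket{1}\ot U_2\ket{\psi})$, apply a Hadamard to the control qubit to get $\tfrac12\ket{0}\ot(U_1+U_2)\ket{\psi} + \tfrac12\ket{1}\ot(U_1-U_2)\ket{\psi}$, and measure the control qubit in the computational basis, outputting the result $b$. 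Then $\pr{b=0|\psi} = \tfrac14\norm{(U_1+U_2)\ket{\psi}}^2 = \tfrac14\tr{(U_1+U_2)^\dagger(U_1+U_2)\psi}$ and similarly for $b=1$, which is exactly what is claimed (first for pure $\psi$, then extending to mixed $\psi$ by linearity). For the efficiency of the controlled unitary, one notes that $U_1, U_2$ being efficient means each has a poly-size circuit; a controlled version of an efficient unitary is efficient (controlling each gate, or using standard gadgets), so $\proj{0}\ot U_1 + \proj{1}\ot U_2 = (\proj{0}\ot\1)(\1\ot U_1) + (\proj{1}\ot\1)(\1\ot U_2)$ can be implemented by first applying $U_1$ controlled on the ancilla being $\ket{1}$ after an $X$, etc.—in any case it reduces to controlled-$U_1$ and controlled-$U_2$, both efficient, together with the efficient single-qubit Hadamard.

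I expect the only mild subtlety—hardly an obstacle—to be making precise that a controlled version of an efficient unitary is again efficient within the paper's formal definition of efficiency (Definition \ref{def:eff_everything}): one must argue that from a classical poly-time Turing machine outputting a circuit for $U_i$, one can classically and in poly-time produce a circuit for controlled-$U_i$ over a fixed gate set. This is standard (replace each gate $g$ by controlled-$g$, decompose into the fixed gate set at constant overhead per gate), so I would simply cite it or spell it out in one line. The rest is the routine linear-algebra computation tracking the amplitudes through the Hadamard test, which I would present compactly for pure states and then invoke linearity of the trace to cover $\psi \in \mD(\H)$.
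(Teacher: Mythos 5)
Your proposal is correct and takes essentially the same approach as the paper: a Hadamard-test-style circuit with an ancilla control qubit, controlled-$U_1$/controlled-$U_2$, and a Hadamard-basis readout of the control. The only cosmetic differences are that the paper handles mixed states via a purification rather than by linearity, and folds the final Hadamard into the measurement basis; both are equivalent to what you wrote.
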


\begin{proof} 
The fact that both operators are observables, i.e., Hermitian, is immediate. 
We construct the following efficient procedure: given $\psi$, we (efficiently) prepare $\ket{\psi} \in \H \ot \H'$, a purification of $\psi$ (this is only to simplify the calculation). Because $U_1$ and $U_2$ are efficient unitaries, so are controlled versions of $U_1$ and $U_2$. Therefore, using an ancilla in the state $\frac{\ket{0} + \ket{1}}{\sqrt{2}}$ as the control qubit, we can efficiently prepare the state 
\begin{align}
&\frac{1}{\sqrt{2}} \big(  U_1 \ot \1_{\H'}  \ket{\psi} \big) \ket{0} 
+ \frac{1}{\sqrt{2}} \big(  U_2 \ot \1_{\H'}   \ket{\psi} \big) \ket{1} 
\\=
&\frac{1}{2} \big( ( U_1 + U_2 ) \ot \1_{\H'}  \ket{\psi} \big) \ket{+} 
+ \frac{1}{2} \big( ( U_1 - U_2 ) \ot \1_{\H'}  \ket{\psi} \big) \ket{-} \,.
\end{align} 
Measuring the last qubit in the Hadamard basis produces the desired distribution:
\begin{align}
\pr{+} &= \frac{1}{4} \bra{\psi} ( U_1 + U_2 )^\dagger ( U_1 + U_2 ) \ot \1_{H'} \ket{\psi} = \frac{1}{4} \tr{ ( U_1 + U_2 )^\dagger ( U_1 + U_2 )\psi } \,, \\
\pr{-} &= \frac{1}{4} \bra{\psi} ( U_1 - U_2 )^\dagger ( U_1 - U_2 ) \ot \1_{H'} \ket{\psi} = \frac{1}{4} \tr{ ( U_1 - U_2 )^\dagger ( U_1 - U_2 ) \psi } \,.
\end{align}
\end{proof}

\begin{corollary} \label{lem:commutator_efficient}
Let $C, D$ be efficient binary observables on $\H$ and $\psi$ a state on $\H$. Then, $\{C, D\}^\dagger \{C, D\}$ and $[C, D]^\dagger [C, D]$ are observables and there exists an efficient procedure that, given a state $\psi \in \mD(\H)$, outputs a bit $b$ with 
\begin{equation}
\pr{b = 1 | \psi} = \frac{1}{4} \tr{ \{C, D\}^\dagger \{C, D\} \psi }, 
\qquad
\pr{b = 0 | \psi} = \frac{1}{4} \tr{ [C, D]^\dagger [C, D] \psi } \,.
\end{equation}
\end{corollary}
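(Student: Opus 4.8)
The plan is to reduce directly to Lemma~\ref{lem:sum_difference_efficient}, taking $U_1 = CD$ and $U_2 = DC$. First I would note that since $C$ and $D$ are binary observables they satisfy $C^\dagger = C$, $C^2 = \1$, and likewise for $D$, so both $CD$ and $DC$ are unitaries. Moreover, from an efficient binary observable $A$ with associated efficient unitary $U_A$ (as in Definition~\ref{def:eff_everything}(iii)), the operator $A$ is itself implemented, up to ancilla registers returned to $\ket{0}$, by the efficient unitary $U_A^\dagger(\sigma_Z \ot \1)U_A$; hence $C$ and $D$ are efficient unitaries, and so are the compositions $CD$ and $DC$.

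Next I would apply Lemma~\ref{lem:sum_difference_efficient} to $U_1 = CD$, $U_2 = DC$. Since $\{C, D\} = CD + DC$ and $[C, D] = CD - DC$, the two operators produced by the lemma are exactly $(U_1 + U_2)^\dagger(U_1 + U_2) = \{C, D\}^\dagger \{C, D\}$ and $(U_1 - U_2)^\dagger(U_1 - U_2) = [C, D]^\dagger [C, D]$; in particular both are observables, which gives the first part of the statement. The lemma also supplies an efficient procedure that, on input $\psi$, outputs a bit $b'$ with $\pr{b' = 0 \,|\, \psi} = \frac{1}{4}\tr{\{C, D\}^\dagger \{C, D\} \psi}$ and $\pr{b' = 1 \,|\, \psi} = \frac{1}{4}\tr{[C, D]^\dagger [C, D] \psi}$.

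Finally, I would output $b = \overline{b'} = b' \oplus 1$. This swaps the roles of the two outcomes and yields precisely the distribution claimed. I do not expect any genuine obstacle here: the only point requiring a little care is tracking the poly-size ancilla registers of $C$, $D$ and their products so that they are initialised and returned in the $\ket{0}$ state and controlled versions stay efficient, but this is routine given Definition~\ref{def:eff_everything} and already handled implicitly in the proof of Lemma~\ref{lem:sum_difference_efficient}.
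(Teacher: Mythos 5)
Your proposal is correct and takes essentially the same route as the paper: the paper's proof likewise notes that efficient binary observables are efficient unitaries and applies Lemma~\ref{lem:sum_difference_efficient} with $U_1 = CD$ and $U_2 = DC$. The only difference is that you explicitly track the relabeling of the output bit (the paper leaves this implicit), which is a trivial adjustment.
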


\begin{proof}
Since $C$ and $D$ are efficient binary observables, they are also efficient unitaries by definition. Hence, the result follows from Lemma \ref{lem:sum_difference_efficient} with $U_1 = C D$ and $U_2 = D C$.
\end{proof}

\begin{definition}\label{def:comp_indist}
We call two (families of) states $\psi, \psi' \in \mD(\H)$ 
\emph{computationally indistinguishable up to $O(\delta)$} if for any efficient procedure (called a \emph{distinguisher}) that takes as input $\psi$ or $\psi'$ and produces an output bit $b$, we have 
\begin{equation}
\pr{b = 0 | \psi} \approx_{\delta} \pr{b = 0 | \psi' } \,.
\end{equation}
We use the notation 
\begin{equation}
\psi \capprox_{\delta} \psi' \,.
\end{equation}
\end{definition}

The following lemma states the simple fact that for an efficient measurement, the post-measurement states of two indistinguishable states must also be indistinguishable.
\begin{lemma}\label{lem:comp_indist_preserved}
Let $\psi, \psi' \in \mD(\H)$ such that $\psi \capprox_{\delta} \psi'$ for some $\delta$. If $\{M^{(a)}\}_{a \in \mA}$ is an efficient measurement on $\H$, then 
\begin{equation}
\sum_{a \in \mA} M^{(a)} \psi M^{(a)} \capprox_{\delta} \sum_{a \in \mA} M^{(a)} \psi' M^{(a)} \,.
\end{equation}
\end{lemma}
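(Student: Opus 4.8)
The statement to prove is Lemma~\ref{lem:comp_indist_preserved}: if $\psi \capprox_\delta \psi'$ and $\{M^{(a)}\}_{a\in\mA}$ is an efficient measurement, then the post-measurement ensembles are again computationally indistinguishable up to $O(\delta)$. The natural approach is a reduction: suppose for contradiction (or rather, suppose) that some efficient distinguisher $\mathcal{D}$ tells apart $\sum_a M^{(a)}\psi M^{(a)}$ from $\sum_a M^{(a)}\psi' M^{(a)}$ with advantage $\gamma$; I will build an efficient distinguisher $\mathcal{D}'$ for $\psi$ vs.\ $\psi'$ with the same advantage, forcing $\gamma = O(\delta)$ by Definition~\ref{def:comp_indist}.

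The construction of $\mathcal{D}'$ is: on input $\rho$ (which is either $\psi$ or $\psi'$), first apply the measurement $\{M^{(a)}\}$ but \emph{discard} the outcome $a$ — equivalently, apply the efficient isometry $\ket{\phi}\mapsto \sum_a \ket{a}\ot M^{(a)}\ket{\phi}$ from Definition~\ref{def:eff_everything}(iv), then trace out the outcome register — and then run $\mathcal{D}$ on the result. The key observation is that this quantum channel applied to $\rho$ produces exactly $\sum_a M^{(a)} \rho M^{(a)}$ (the non-selective post-measurement state, using $\tr_{\text{outcome}} \sum_{a,a'} \ket{a}\!\bra{a'} \ot M^{(a)}\rho M^{(a')\dagger} = \sum_a M^{(a)}\rho M^{(a)\dagger} = \sum_a M^{(a)}\rho M^{(a)}$ since the $M^{(a)}$ may be taken Hermitian, or in any case $M^{(a)\dagger}$). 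Hence $\mathcal{D}'$ is an efficient procedure — efficiency of the isometry is given by hypothesis, tracing out is free, and $\mathcal{D}$ is efficient by assumption — and by construction
\begin{equation}
\pr{b = 0 \,\big|\, \text{$\mathcal{D}'$ on } \psi} = \pr{b=0 \,\big|\, \text{$\mathcal{D}$ on } \textstyle\sum_a M^{(a)}\psi M^{(a)}},
\end{equation}
and likewise for $\psi'$.

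Combining these two equalities with the hypothesis $\psi \capprox_\delta \psi'$, which by Definition~\ref{def:comp_indist} says $\pr{b=0\mid \mathcal{D}' \text{ on }\psi} \approx_\delta \pr{b=0\mid \mathcal{D}' \text{ on }\psi'}$ for the particular efficient distinguisher $\mathcal{D}'$, immediately gives $\pr{b=0\mid \mathcal{D} \text{ on }\sum_a M^{(a)}\psi M^{(a)}} \approx_\delta \pr{b=0\mid \mathcal{D} \text{ on }\sum_a M^{(a)}\psi' M^{(a)}}$. Since $\mathcal{D}$ was an arbitrary efficient distinguisher, this is precisely the claimed $\sum_a M^{(a)}\psi M^{(a)} \capprox_\delta \sum_a M^{(a)}\psi' M^{(a)}$.

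I do not anticipate a genuine obstacle here; this is a routine reduction. The only point requiring a little care is bookkeeping on the efficiency side: one must check that "apply the efficient measurement isometry and trace out the outcome register" is itself an efficient procedure in the sense used implicitly by Definition~\ref{def:comp_indist}, which follows because composing an efficient isometry (from Definition~\ref{def:eff_everything}) with the efficient distinguisher circuit $\mathcal{D}$ and ignoring some registers yields another polynomial-size circuit. A secondary minor point is whether the $M^{(a)}$ should be conjugated by $M^{(a)\dagger}$ rather than $M^{(a)}$ — the lemma as stated writes $M^{(a)}\psi M^{(a)}$, so one should either note the $M^{(a)}$ are taken to be the (Hermitian) POVM square roots or simply carry $M^{(a)}\psi M^{(a)\dagger}$ through, which matches the trace-out computation verbatim.
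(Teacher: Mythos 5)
Your proposal is correct and is essentially identical to the paper's own proof: both argue by reduction, constructing a distinguisher for $\psi$ versus $\psi'$ that applies the efficient isometry associated with the measurement $\{M^{(a)}\}_{a \in \mA}$, traces out the outcome register to obtain $\sum_{a} M^{(a)} \psi M^{(a)}$ (resp.\ $\sum_{a} M^{(a)} \psi' M^{(a)}$), and then runs the assumed distinguisher for the post-measurement states. Your remark on $M^{(a)}$ versus $M^{(a)\dagger}$ is a reasonable piece of bookkeeping but not a substantive deviation.
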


\begin{proof}
The proof is a simple reduction: given an efficient distinguisher $D$ that distinguishes $\sum_{a \in \mA} M^{(a)} \psi M^{(a)}$ and $\sum_{a \in \mA} M^{(a)} \psi' M^{(a)}$ with advantage $\delta$, the following distinguisher $D'$ is efficient and distinguishes $\psi$ and $\psi'$ with advantage $\delta$: given $\psi$ or $\psi'$, $D'$ applies the isometry associated with the measurement $\{M^{(a)}\}_{a \in \mA}$, traces out the pointer register to create $\sum_{a \in \mA} M^{(a)} \psi M^{(a)}$ or $\sum_{a \in \mA} M^{(a)} \psi' M^{(a)}$, and runs the distinguisher $D$ on this state.
\end{proof}

\subsection{Distance measures}
In self-testing, the verifier wants to make statements about the states and measurements used by quantum provers. The verifier can never make an ``absolute'' statement about any of the prover's measurements (i.e., one that only depends on the prover's measurement operators, not the state), since the only information available to the verifier is the prover's classical output, which he generates by applying his measurement operators to his state. Therefore, to make statements about the prover's operators, it is helpful to define a \emph{state-dependent} distance between operators. Informally, if the state-dependent distance between two operators is small, this means that the two operators act on the state in the same way. A more detailed motivation of the state-dependent distance can be found in \cite[section 4.1]{vidick_thesis}, and a useful collection of many of its properties is given in \cite[section 4.5]{neexp}.

\begin{definition}[State-dependent inner product and norm]\label{def:state_dep_inner_product}
Let $\H$ be a finite-dimensional Hilbert space and $A, B \in L(\H)$ be linear operators on $\H$. Let $\psi \in \pos(\H)$. We define the state-dependent (semi) inner product of $A$ and $B$ w.r.t $\psi$ as 
\begin{equation}
\langle A, B \rangle_\psi = \tr{ A^\dagger B \psi } \,.
\end{equation} 
This induces the state-dependent (semi) norm 
\begin{equation}
\norm{A}_\psi^2 = \langle A, A \rangle_\psi = \tr{ A^\dagger A \psi } \,.
\end{equation}
\end{definition}

\begin{remark} \label{rem:schatten_norm}
The state dependent (semi) norm can also be expressed as a Schatten 2-norm (commonly called the Hilbert-Schmidt norm): 
\begin{equation}
\norm{A}_\psi = \norm{A \psi^{1/2}}_2 \,.
\end{equation}
\end{remark}

\begin{lemma}
The state-dependent semi inner product satisfies the properties of a semi inner product.
\end{lemma}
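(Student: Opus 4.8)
The plan is to verify the three defining properties of a (complex) semi inner product directly from the definition $\langle A, B\rangle_\psi = \tr{A^\dagger B \psi}$, using only linearity and cyclicity of the trace together with the fact that $\psi \in \pos(\H)$ is Hermitian and admits a positive semidefinite square root $\psi^{1/2}$.

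First I would check \emph{linearity in the second argument}: for scalars $\alpha, \beta \in \C$ and operators $A, B, C \in \mL(\H)$, linearity of the trace gives $\langle A, \alpha B + \beta C\rangle_\psi = \tr{A^\dagger(\alpha B + \beta C)\psi} = \alpha\tr{A^\dagger B\psi} + \beta\tr{A^\dagger C\psi} = \alpha\langle A,B\rangle_\psi + \beta\langle A,C\rangle_\psi$. Next, \emph{conjugate symmetry}: since $\overline{\tr{M}} = \tr{M^\dagger}$ for any $M$, and since $\psi^\dagger = \psi$, we get $\overline{\langle A,B\rangle_\psi} = \tr{(A^\dagger B\psi)^\dagger} = \tr{\psi B^\dagger A} = \tr{B^\dagger A\psi} = \langle B,A\rangle_\psi$, where the second-to-last equality uses cyclicity of the trace. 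Finally, \emph{positive semidefiniteness}: writing $\psi = \psi^{1/2}\psi^{1/2}$ and using cyclicity, $\langle A,A\rangle_\psi = \tr{A^\dagger A\psi^{1/2}\psi^{1/2}} = \tr{(A\psi^{1/2})^\dagger(A\psi^{1/2})} = \norm{A\psi^{1/2}}_2^2 \ge 0$, which also reproves Remark~\ref{rem:schatten_norm} in passing.

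There is essentially no obstacle here — the statement is a routine consequence of elementary trace identities — so the only thing worth emphasizing is \emph{why} it is only a \emph{semi} inner product and not an inner product: the computation above shows $\langle A,A\rangle_\psi = 0$ iff $A\psi^{1/2} = 0$, i.e. iff $A$ vanishes on the support of $\psi$, which need not force $A = 0$ when $\psi$ is not full rank. I would therefore conclude by noting this point explicitly, so that the reader understands the induced $\norm{\cdot}_\psi$ is genuinely only a seminorm in general.
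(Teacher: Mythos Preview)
Your proof is correct and follows essentially the same approach as the paper: both verify the three axioms directly from elementary trace identities. The only cosmetic difference is in the positivity step—the paper observes that $\tr{A^\dagger A \psi} = \tr{A\psi A^\dagger} \ge 0$ since $A\psi A^\dagger$ is positive, whereas you use the square-root decomposition to write it as $\norm{A\psi^{1/2}}_2^2$; your version has the small bonus of recovering Remark~\ref{rem:schatten_norm} along the way.
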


\begin{proof}
We check the required properties.
\begin{enumerate}
\item Symmetry:
\begin{align}
\langle A, B \rangle_\psi = \tr{ A^\dagger B \psi } = \tr{ (A^\dagger B \psi)^\dagger }^* = \tr{ \psi^\dagger B^\dagger A  }^* = \tr{ B^\dagger A \psi }^* = \langle B, A \rangle_\psi^* \,.
\end{align}
\item Linearity in the second argument: follows directly from the linearity of the trace.
\item Positive semi-definite: Because $\psi$ is positive, so is $A \psi A^\dagger$. Therefore,
\begin{align}
\langle A, A \rangle_\psi = \tr{ A \psi A^\dagger} \geq 0 \,.
\end{align} 
\end{enumerate}
\end{proof}

\begin{remark} \label{rem:cauchy-schwarz}
The Cauchy-Schwarz inequality holds for semi inner products, so we have 
\begin{equation}
\abs{\langle X, Y \rangle_\psi} \leq \norm{X}_\psi \cdot \norm{Y}_\psi \,.
\end{equation}
\end{remark}

We will frequently make statements about two quantities (e.g., two linear operators) being approximately equal. The following definition introduces a short-hand notation for making such statements more compactly.
\begin{definition}[Approximate equality] \label{def:approx_dist}
We overload the symbol ``$\approx$'' in the following ways (leaving the dependence on the security parameter implicit in the quantities on the left):
\begin{enumerate}
\item {\bf Complex numbers:} For $a, b \in \C$ we define: 
\begin{equation}
a \approx_\eps b \iff \abs{a - b} = O(\eps) + \negl(\lambda) \,.
\end{equation}
\item {\bf Operators:} For $A, B \in \mL(\H)$, we define: 
\begin{equation}
A \approx_\eps B \iff \norm{A - B}_1^2 = O(\eps) + \negl(\lambda) \,.
\end{equation}
(We will most frequently use this for (possibly subnormalised) quantum states $A, B \in \pos(\H)$.)
\item {\bf Operators on a state:} For $A, B \in \mL(\H)$ and $\psi \in \pos(\H)$, we define: 
\begin{equation}
A \approx_{\eps, \psi} B \iff \norm{A - B}_{\psi}^2 = O(\eps) + \negl(\lambda) \,.
\end{equation}
\end{enumerate}
If we write $\approx_{0}$, we mean that the quantities are negligibly close. All asymptotic statements are understood to be in the limits $\eps \to 0$ and $\lambda \to \infty$.
\end{definition}

\begin{remark}
Note that we use a mixed convention, where the difference for states and operators is squared, but the difference for complex numbers is not. This is so that we have 
\begin{equation}
A \approx_{\eps, \psi} B \iff \tr{(A - B)^\dagger (A - B) \psi} \approx_{\eps} 0 \,
\end{equation}
with the same index on both sides.
\end{remark}

\subsubsection{Properties of the state-dependent distance}

The following lemma will be useful for showing that two operators are close in the state-dependent distance, up to an isometry.

\begin{lemma} \label{lem:state_dep_distance_expanded}
Let $\H_1, \H_2$ be Hilbert spaces with $\dim(\H_1) \leq \dim(\H_2)$ and $V: \H_1 \to \H_2$ an isometry. Let $A$ and $B$ be binary observables on $\H_1$ and $\H_2$, respectively, $\psi_1 \in \pos(\H_1)$, $\psi_2 \in \pos(\H_2)$, and $\eps \geq 0$. Then:
\begin{align}
\tr{V^\dagger B V A \psi_1} \approx_{\eps} \tr{\psi_1}  \implies V^\dagger B V  \approx_{\eps, \psi_1} A \,,\\
\tr{V A V^\dagger B \psi_2} \approx_{\eps} \tr{\psi_2} \implies V A V^\dagger  \approx_{\eps, \psi_2} B \,.
\end{align} 
\end{lemma}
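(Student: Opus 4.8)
The plan is to prove both implications by a direct expansion of the relevant state-dependent norm, using only that $A$ and $B$ square to the identity and that $V$ is an isometry (so $V^\dagger V = \1$, while $VV^\dagger$ is merely a projector). I will carry out the first implication; the second is completely symmetric.

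First I would set $C \deq V^\dagger B V$ and record two elementary facts: $C^\dagger = C$ (since $B^\dagger = B$), and the key operator inequality
\begin{equation}
C^\dagger C = V^\dagger B (VV^\dagger) B V \leq V^\dagger B^2 V = V^\dagger V = \1_{\H_1} \,,
\end{equation}
where the inequality uses $VV^\dagger \leq \1_{\H_2}$ and the last steps use that $B$ is a binary observable. In particular $\tr{C^\dagger C \psi_1} \leq \tr{\psi_1}$. Next I would expand the state-dependent norm using $A^\dagger A = A^2 = \1_{\H_1}$ and that the two cross terms are complex conjugates of each other (by Hermiticity of $C$ and $A$ together with cyclicity of the trace):
\begin{equation}
\norm{C - A}_{\psi_1}^2 = \tr{C^\dagger C\,\psi_1} + \tr{\psi_1} - 2\,\mathrm{Re}\tr{V^\dagger B V A\,\psi_1} \leq 2\tr{\psi_1} - 2\,\mathrm{Re}\tr{V^\dagger B V A\,\psi_1} \,.
\end{equation}
Finally, since $\tr{\psi_1} \in \R_{\ge 0}$, the hypothesis $\tr{V^\dagger B V A\,\psi_1} \approx_\eps \tr{\psi_1}$ forces $\mathrm{Re}\tr{V^\dagger B V A\,\psi_1} \geq \tr{\psi_1} - O(\eps) - \negl(\lambda)$; substituting this gives $\norm{C - A}_{\psi_1}^2 = O(\eps) + \negl(\lambda)$, which is exactly $V^\dagger B V \approx_{\eps,\psi_1} A$.

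For the second implication I would instead take $C' \deq V A V^\dagger$, observe $C'^\dagger = C'$ and $C'^\dagger C' = V A (V^\dagger V) A V^\dagger = V A^2 V^\dagger = V V^\dagger \leq \1_{\H_2}$, use $B^2 = \1_{\H_2}$, and run the identical computation with $\psi_2$ in place of $\psi_1$. I do not expect a genuine obstacle here; the only points requiring slight care are (i) that $VV^\dagger$ is a projector rather than the identity, so one of the two ``squared'' terms only yields an inequality rather than equality — which is harmless precisely because the \emph{other} operator in each statement is assumed to be a true binary observable, pinning its square to $\1$ — and (ii) correctly combining the two non-Hermitian cross terms into twice a real part before invoking the scalar hypothesis.
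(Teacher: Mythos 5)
Your proposal is correct and follows essentially the same route as the paper's own proof: expand the state-dependent norm, bound the $\tr{V^\dagger B V V^\dagger B V\psi_1}$ (resp.\ $\tr{VV^\dagger\psi_2}$) term using that $VV^\dagger$ is a projector, use $A^2=\1$ (resp.\ $B^2=\1$) for the other square, combine the cross terms into a real part, and conclude from the scalar hypothesis. No gaps.
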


\begin{proof}
We first show the first relation. By the definition of the state-dependent distance, we need to show that 
\begin{equation}
\tr{ \left( V^\dagger B V - A  \right)^\dagger \left( V^\dagger B V - A  \right) \psi_1} \approx_{\eps} 0 \,.
\end{equation}
Expanding the left hand side yields:
\begin{align}
\tr{V^\dagger B V V^\dagger B V \psi_1} + \tr{A^2 \psi_1} - \tr{V^\dagger B V A \psi_1} - \tr{A V^\dagger B V \psi_1} \,.
\end{align}
For the first term, note that $\left( V V^\dagger \right)^2 = V V^\dagger$, so $V V^\dagger$ is a projector and in particular less than or equal to $\1$. Therefore, we have 
\begin{align}
\tr{V^\dagger B V V^\dagger B V \psi_1} 
&= \tr{ \left( \psi_1^{1/2} V^\dagger B  \right) V V^\dagger \left( \psi_1^{1/2} V^\dagger B  \right)^\dagger } \\
&\leq \tr{ \left( \psi_1^{1/2} V^\dagger B  \right)  \left( \psi_1^{1/2} V^\dagger B  \right)^\dagger } \\
&= \tr{\psi_1} \,,
\end{align}
where we used $B^2 = \1$ and $V^\dagger V = \1$ in the last line. Note that since 
\begin{equation}
\tr{ \left( V^\dagger B V - A  \right)^\dagger \left( V^\dagger B V - A  \right) \psi_1} \geq 0 \,,
\end{equation} 
this also upper-bounds the absolute value.

The second term equals $\tr{\psi_1}$ because $A^2 = \1$. For the third and fourth terms, we can rewrite 
\begin{align}
\tr{A V^\dagger B V \psi_1} 
&= \tr{ \left( A V^\dagger B V \psi_1 \right)^\dagger}^* \\
&= \tr{V^\dagger B V A \psi_1}^* \,.
\end{align}
Therefore, we can combine the third and fourth term and have 
\begin{equation}
\tr{ \left( V^\dagger B V - A  \right)^\dagger \left( V^\dagger B V - A  \right) \psi_1} \leq 2 \, \tr{\psi_1} - 2 \, \Re \, \tr{V^\dagger B V A \psi_1} \,.
\end{equation}
Since 
\begin{align}
\abs{\tr{V^\dagger B V A \psi_1} - \tr{\psi_1}}^2 \geq \left( \Re \, \tr{V^\dagger B V A \psi_1} - \tr{\psi_1} \right)^2 \,,
\end{align}
$\tr{V^\dagger B V A \psi_1} \approx_{\eps} \tr{\psi_1}$ implies $\Re \, \tr{V^\dagger B V A \psi_1} \approx_{\eps} \tr{\psi_1}$, which completes the proof of the first relation.

The second relation follows analogously from the expansion 
\begin{align}
&\tr{(V A V^\dagger - B)^\dagger (V A V^\dagger - B) \psi_2} \\
&= \tr{V V^\dagger \psi_2} + \tr{\psi_2} - 2 \, \Re \, \tr{V A V^\dagger B \psi_2} \\
&\leq 2 \, \tr{\psi_2} -  2 \, \Re \, \tr{V A V^\dagger B \psi_2} \,.
\end{align}
\end{proof}

\begin{lemma}[Relation between state-dependent and operator norms] \label{lem:bounded_by_op_norm}
Let $\psi \in \pos(\H)$ with $\tr{\psi} \leq 1$ and $C \in \mL(\H)$ a linear operator. Then we have: 
\begin{equation}
\norm{C}_{\psi} \leq \sqrt{\norm{C^\dagger C}_{\infty}} \leq \norm{C}_{\infty} \,.
\end{equation}
\end{lemma}

\begin{proof}
Consider the spectral decomposition $\psi = \sum_i \lambda_i \proj{\psi_i}$. Then
\begin{align}
\norm{C}_{\psi}^2 = \tr{C^\dagger C \psi} = \sum_i \lambda_i \bra{\psi_i} C^\dagger C \ket{\psi_i} \,.
\end{align}
It is a standard result from linear algebra that for any \emph{Hermitian} linear operator $A$: 
\begin{equation} \label{lem:op_norm_hermitian}
\norm{A}_\infty = \max_{\ket{\phi} \in \H, \braket{\phi}{\phi} = 1} \bra{\phi} A \ket{\phi} \,.
\end{equation}
Since $C^\dagger C$ is Hermitian and $\ket{\psi_i}$ normalised, this implies $\bra{\psi_i} C^\dagger C \ket{\psi_i} \leq \norm{C^\dagger C}_\infty$. Therefore, using $\sum_i \lambda_i = \tr{\psi} \leq 1$, we have 
\begin{align}
\norm{C}_{\psi}^2 = \sum_i \lambda_i \bra{\psi_i} C^\dagger C \ket{\psi_i} \leq \sum_i \lambda_i \norm{C^\dagger C}_{\infty} \leq \norm{C^\dagger C}_{\infty}\,.
\end{align}
The second inequality, $\sqrt{\norm{C^\dagger C}_{\infty}} \leq \norm{C}_{\infty}$, follows immediately from the standard properties $\norm{A B}_{\infty} \leq \norm{A}_{\infty} \norm{B}_{\infty}$ and $\norm{A^\dagger}_{\infty} = \norm{A}_{\infty}$ for any linear operators $A, B \in \mL(\H)$.
\end{proof}

We will require two further miscellaneous properties of the state-dependent distance.
\begin{lemma}\label{lem:state_dep_distance_facts} 
~
\begin{enumerate}
\item Let $\psi \in \pos(\H)$, and $A, B \in \mL(\H)$. For $C \in \mL(\H)$ such that $C^\dagger C \leq \1$ we have 
\begin{equation}
A \approx_{\eps, \psi} B \implies C A \approx_{\eps, \psi} C B \,.
\end{equation}
\item Let $\psi_i \in \pos(\H)$ for $i \in \{1, \dots, n\}$ with constant $n$, and $A, B \in \mL(\H)$. Define $\psi = \sum_i \psi_i$. Then:
\begin{equation}
\forall i \in \{1, \dots, n\}: \; A \approx_{\eps, \psi_i} B \quad \iff \quad A \approx_{\eps, \psi} B
\end{equation}
\end{enumerate}
\end{lemma}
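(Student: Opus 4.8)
The plan is to reduce both statements to elementary manipulations of the state-dependent norm $\norm{\cdot}_\psi$ from Definition \ref{def:state_dep_inner_product}, using the Hilbert--Schmidt reformulation $\norm{A}_\psi = \norm{A\psi^{1/2}}_2$ of Remark \ref{rem:schatten_norm} together with submultiplicativity of the Schatten norms.

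For part (i), set $D \deq A - B$ and expand $\norm{CD}_\psi^2 = \tr{D^\dagger C^\dagger C D \psi}$. Since $\psi \in \pos(\H)$ we may write $\psi = \psi^{1/2}\psi^{1/2}$ and cycle the trace to get $\norm{CD}_\psi^2 = \norm{CD\psi^{1/2}}_2^2$. Submultiplicativity gives $\norm{CD\psi^{1/2}}_2 \leq \norm{C}_\infty \norm{D\psi^{1/2}}_2 = \norm{C}_\infty \norm{D}_\psi$. The hypothesis $C^\dagger C \leq \1$ says precisely that $\norm{C^\dagger C}_\infty \leq 1$, so $\norm{C}_\infty = \sqrt{\norm{C^\dagger C}_\infty} \leq 1$ (as in the last step of the proof of Lemma \ref{lem:bounded_by_op_norm}), and hence $\norm{CA - CB}_\psi^2 = \norm{CD}_\psi^2 \leq \norm{D}_\psi^2 = \norm{A-B}_\psi^2$. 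Since $A \approx_{\eps,\psi} B$ means $\norm{A-B}_\psi^2 = O(\eps) + \negl(\lambda)$, the same bound holds for $\norm{CA - CB}_\psi^2$, which is the claim.

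For part (ii), the key observation is that the state-dependent squared norm is additive in the ``state argument'': by linearity of the trace and $\psi = \sum_{i=1}^n \psi_i$, we have $\norm{A-B}_\psi^2 = \tr{(A-B)^\dagger(A-B)\psi} = \sum_{i=1}^n \tr{(A-B)^\dagger(A-B)\psi_i} = \sum_{i=1}^n \norm{A-B}_{\psi_i}^2$. The ``$\Leftarrow$'' direction is then immediate, since each summand is nonnegative and so $\norm{A-B}_{\psi_i}^2 \leq \norm{A-B}_\psi^2 = O(\eps)+\negl(\lambda)$ for every $i$. For ``$\Rightarrow$'', summing the $n$ individual bounds $\norm{A-B}_{\psi_i}^2 = O(\eps)+\negl(\lambda)$ and using that $n$ is a fixed constant (so a constant-size sum of $O(\eps)$ terms is $O(\eps)$, and a constant-size sum of negligible functions is negligible) yields $\norm{A-B}_\psi^2 = O(\eps)+\negl(\lambda)$, i.e. $A \approx_{\eps,\psi} B$.

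I do not expect a real obstacle here: both parts are routine once the Hilbert--Schmidt picture is in place. The only points needing a little care are justifying the ``legal moves'' for $\norm{\cdot}_\psi$ (splitting $\psi^{1/2}$, which uses positivity of $\psi$, and the equivalence $C^\dagger C \leq \1 \iff \norm{C^\dagger C}_\infty \leq 1$), and, in part (ii), being explicit that $n$ is constant so that the asymptotic bookkeeping built into Definition \ref{def:approx_dist} survives the finite sum.
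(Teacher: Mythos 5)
Your proposal is correct and follows essentially the same route as the paper: part (i) rests on writing $\psi = \psi^{1/2}\psi^{1/2}$ and exploiting $C^\dagger C \leq \1$ (the paper bounds $\tr{\psi^{1/2}(A-B)^\dagger C^\dagger C (A-B)\psi^{1/2}}$ directly, which is the same inequality you obtain via $\norm{C X}_2 \leq \norm{C}_\infty \norm{X}_2$), and part (ii) is the identical additivity-plus-nonnegativity argument with the constant-$n$ bookkeeping. No gaps.
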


\begin{proof}
~
\begin{enumerate}
\item Since $\psi$ is positive, we have $\psi = \psi^{1/2} \psi^{1/2}$. Therefore, we can use $C^\dagger C \leq \1$ in the following bound:
\begin{align}
\tr{(C A - C B)^\dagger (C A - C B) \psi} 
&= \tr{\psi^{1/2} (A - B)^\dagger C^\dagger C (A - B) \psi^{1/2}} \\
&\leq \tr{(A - B)^\dagger (A - B) \psi} \,.
\end{align}
\item Inserting the definition of $\psi$:
\begin{align}
\sum_i \tr{(A - B)^\dagger (A - B) \psi_i} = \tr{(A - B)^\dagger (A - B) \psi} \,.
\end{align}
The implication from left to right in the lemma follows because each term in the sum is $O(\eps)$ by assumption and there are constantly many terms. The implication from right to left follows because each $\tr{(A - B)^\dagger (A - B) \psi_i}$ is positive.
\end{enumerate}
\end{proof}

The following two lemmas state that if the outcome of measuring a binary observable on a state is almost certain, then the observable is close to identity on the state. Informally, this can be viewed as a variant of the gentle-measurement lemma (see e.g. \cite[lemma 9.4.1]{wilde}) expressed in the formalism of the state-dependent distance.
\begin{lemma} \label{lem:observable_approx_one}
Let $\psi \in \pos(\H)$, $\{M^{(a)}\}_{a \in \mA}$ a projective measurement with index set $\mA$, and $O$ a binary observable
\begin{equation}
O = \sum_a (-1)^{s_a} M^{(a)} \,,
\end{equation}
where $s_a \in \bits$. Suppose there exists an $a' \in \mA$ for which 
\begin{equation}
\tr{M^{(a')} \psi} \approx_{\eps} \tr{\psi} \,.
\end{equation}
Then 
\begin{equation}
O \approx_{\eps, \, \psi} (-1)^{s_{a'}} \1
\end{equation}
\end{lemma}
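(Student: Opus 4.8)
The plan is to expand $\norm{O - (-1)^{s_{a'}}\1}_\psi^2$ directly and show it is $O(\eps) + \negl(\lambda)$. Since $O = \sum_a (-1)^{s_a} M^{(a)}$ with $\{M^{(a)}\}$ projective, we have $O^2 = \1$ (the $M^{(a)}$ are orthogonal projectors summing to $\1$), so
\begin{equation}
\norm{O - (-1)^{s_{a'}}\1}_\psi^2 = \tr{(O - (-1)^{s_{a'}}\1)^\dagger (O - (-1)^{s_{a'}}\1)\,\psi} = 2\tr{\psi} - 2(-1)^{s_{a'}} \Re\tr{O\psi}\,,
\end{equation}
using $O^\dagger = O$ and $O^2 = \1 = \1^\dagger\1$. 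So it suffices to show $(-1)^{s_{a'}}\tr{O\psi} \approx_\eps \tr{\psi}$, and in fact since $\tr{O\psi}$ is real (both $O$ and $\psi$ Hermitian) we only need the real part.

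Next I would split $O\psi$ according to the measurement: $(-1)^{s_{a'}}\tr{O\psi} = \sum_a (-1)^{s_{a'} + s_a}\tr{M^{(a)}\psi}$. The $a = a'$ term contributes $\tr{M^{(a')}\psi}$, which by hypothesis is $\approx_\eps \tr{\psi}$. For the remaining terms, the idea is that the hypothesis forces almost all of the weight of $\psi$ onto $M^{(a')}$: since $\sum_a \tr{M^{(a)}\psi} = \tr{\psi}$ and $\tr{M^{(a)}\psi} \geq 0$ for each $a$, we get $\sum_{a \neq a'}\tr{M^{(a)}\psi} = \tr{\psi} - \tr{M^{(a')}\psi} \approx_\eps 0$. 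Hence
\begin{equation}
\abs[\Big]{\sum_{a \neq a'} (-1)^{s_{a'}+s_a}\tr{M^{(a)}\psi}} \leq \sum_{a \neq a'}\tr{M^{(a)}\psi} \approx_\eps 0\,,
\end{equation}
and combining the two pieces gives $(-1)^{s_{a'}}\tr{O\psi} \approx_\eps \tr{\psi}$, which plugs back into the displayed identity to yield $\norm{O - (-1)^{s_{a'}}\1}_\psi^2 \approx_\eps 0$, i.e. $O \approx_{\eps,\psi} (-1)^{s_{a'}}\1$ as desired.

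I don't anticipate a serious obstacle here; the statement is essentially a direct computation. The only minor points to be careful about are: tracking that $\tr{\psi}$ need not equal $1$ (so the bound should be stated relative to $\tr{\psi}$, which it is, since $\psi \in \pos(\H)$ is possibly subnormalised), keeping the $\negl(\lambda)$ terms consistent with the conventions of Definition~\ref{def:approx_dist} (the $\approx_\eps$ on complex numbers is not squared, the $\approx_{\eps,\psi}$ on operators is, and the remark following the definition records exactly that $A \approx_{\eps,\psi} B \iff \tr{(A-B)^\dagger(A-B)\psi} \approx_\eps 0$, which is what the computation produces), and noting that positivity of each $\tr{M^{(a)}\psi}$ — which is what lets the triangle inequality collapse the off-diagonal sum — relies only on $M^{(a)}$ being a positive operator and $\psi$ being positive, so the argument in fact goes through for any projective (indeed any POVM) measurement.
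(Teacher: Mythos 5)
Your proof is correct and follows essentially the same route as the paper's: expand $\norm{O-(-1)^{s_{a'}}\1}_\psi^2 = 2\tr{\psi}-2(-1)^{s_{a'}}\tr{O\psi}$ using $O^2=\1$, then use positivity of each $\tr{M^{(a)}\psi}$ together with $\sum_a M^{(a)}=\1$ to control the off-diagonal terms via the hypothesis $\tr{M^{(a')}\psi}\approx_\eps\tr{\psi}$. The only (cosmetic) difference is that the paper lower-bounds $(-1)^{s_{a'}}\tr{O\psi}\geq 2\tr{M^{(a')}\psi}-\tr{\psi}$ directly while you bound the off-diagonal sum by the triangle inequality; these are the same estimate.
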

\begin{proof}
Using the fact that $O$ is a binary observable, we can expand
\begin{equation}
\tr{\left( O - (-1)^{s_{a'}} \1 \right)^\dagger \left( O - (-1)^{s_{a'}} \1 \right) \psi} = 2 \, \tr{\psi} - 2 \, (-1)^{s_{a'}} \, \tr{O \psi} \,.
\label{eqn:O_expanded}
\end{equation}
Inserting the definition of $O$ and using $\tr{ M^{(a)} \psi} \geq 0$ for all $a$ for the inequality, as well $\sum_a M^{(a)} = \1$ for the last equality, we get 
\begin{align}
(-1)^{s_{a'}} \tr{O \psi} 
&= \sum_a (-1)^{s_{a'} + s_a} \tr{M^{(a)} \psi} \\
&= \tr{M^{(a')} \psi} + \sum_{a \neq a'} (-1)^{s_{a'} + s_a} \tr{ M^{(a)} \psi} \\
&\geq \tr{M^{(a')} \psi} - \sum_{a \neq a'} \tr{ M^{(a)} \psi} \\
&= 2 \, \tr{M^{(a')} \psi} - \tr{\psi} \,.
\end{align}
Inserting this and using the assumption $\tr{M^{(a')} \psi} \approx_{\eps} \tr{\psi}$:
\begin{align}
\tr{\left( O - (-1)^{s_{a'}} \1 \right)^\dagger \left( O - (-1)^{s_{a'}} \1 \right) \psi} \leq 4 \, \tr{\psi} - 4 \, \tr{M^{(a')} \psi} \approx_{\eps} 0 \,.
\end{align}
\end{proof}

We will often use the previous lemma together with the following simple statement: 
\begin{lemma} \label{lem:projectors_one_zero}
Let $O$ be a binary observable on $\H$ and $\psi \in \pos(\H)$. Then: 
\begin{equation}
O \approx_{\eps, \psi} (-1)^b \1 \quad \implies \quad O^{(b)} \approx_{\eps, \psi} \1 \tand O^{(\overline{b})} \approx_{\eps, \psi} 0 \,.
\end{equation}
\end{lemma}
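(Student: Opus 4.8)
The plan is to reduce everything to a single algebraic identity relating the binary observable $O$ to its eigenprojectors $O^{(b)}$ and $O^{(\overline b)}$, and then to use that the state-dependent norm scales linearly so that multiplicative constants are absorbed into the $O(\eps)$ notation of Definition \ref{def:approx_dist}.

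Concretely, I would first recall that, since $O$ is a binary observable with $O^{(b)}$ the projector onto its $(-1)^b$-eigenspace, we have the two identities $\1 = O^{(b)} + O^{(\overline b)}$ and $O = (-1)^b O^{(b)} + (-1)^{\overline b} O^{(\overline b)} = (-1)^b \bigl(O^{(b)} - O^{(\overline b)}\bigr)$. Subtracting these gives $(-1)^b \1 - O = 2(-1)^b O^{(\overline b)}$, equivalently $O^{(\overline b)} = \tfrac{1}{2}\bigl(\1 - (-1)^b O\bigr)$ and hence also $\1 - O^{(b)} = O^{(\overline b)} = \tfrac12\bigl((-1)^b\1 - O\bigr)(-1)^b$ as operators on $\H$.

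From here the conclusion is immediate: by the hypothesis $O \approx_{\eps,\psi} (-1)^b\1$ we have $\norm{O - (-1)^b\1}_\psi^2 = O(\eps) + \negl(\lambda)$, so
\begin{equation}
\norm{O^{(\overline b)} - 0}_\psi^2 = \tfrac14 \norm{(-1)^b\1 - O}_\psi^2 = O(\eps) + \negl(\lambda)\,,
\end{equation}
which is exactly $O^{(\overline b)} \approx_{\eps,\psi} 0$; and since $\1 - O^{(b)}$ equals $O^{(\overline b)}$ up to the global phase $(-1)^b$, which does not affect the state-dependent norm, $\norm{\1 - O^{(b)}}_\psi^2 = \norm{O^{(\overline b)}}_\psi^2 = O(\eps) + \negl(\lambda)$, i.e. $O^{(b)} \approx_{\eps,\psi} \1$.

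I do not expect any real obstacle here: the statement is an essentially one-line consequence of the spectral decomposition of a binary observable together with the homogeneity of $\norm{\cdot}_\psi$, and the only point that needs a sentence of care is checking that the factor of $\tfrac14$ (and the unit-modulus phase $(-1)^b$) are harmless under the convention of Definition \ref{def:approx_dist}, where $O(\eps)$ already absorbs constant factors.
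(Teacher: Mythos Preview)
Your proof is correct and essentially identical to the paper's: both reduce to the single identity $O - (-1)^b\1 = 2(-1)^b(O^{(b)} - \1) = -2(-1)^b O^{(\overline b)}$, from which the conclusion follows after dividing by the constant factor. One tiny wording slip: $\1 - O^{(b)}$ equals $O^{(\overline b)}$ \emph{exactly}, not ``up to a global phase''; the phase $(-1)^b$ appears only when you express $O^{(\overline b)}$ in terms of $(-1)^b\1 - O$, but this does not affect your argument.
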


\begin{proof}
This follows immediately from the fact that since $O$ is a binary observable, we have 
\begin{equation*}
O = (-1)^{b} (2 \, O^{(b)} - \1) \,.
\end{equation*}
\end{proof}

The main feature of the state-dependent distance is that if two operators are close in the state-dependent distance, we can replace one operator by the other \emph{acting on either side of the state}. The following two lemmas formalise this replacement step. In addition to replacing operators with one another, we will also need to replace states, as shown in Lemma \ref{lem:replace_in_trace}(ii).

\begin{lemma}[Replacement lemma] \label{lem:replace_in_trace}
~
\begin{enumerate}
\item Let $\psi \in \pos(\H)$, and $A, B, C \in \mL(\H)$. If $A \approx_{\eps, \psi} B$ and $\norm{C}_\infty = O(1)$, then
\begin{align}
\tr{C A \psi} \approx_{\eps^{1/2}} \tr{C B \psi} \,, \\
\tr{A C \psi} \approx_{\eps^{1/2}} \tr{B C \psi} \,.
\end{align}
\item Let $\psi, \psi' \in \pos(\H)$, and $A \in \mL(\H)$. If $\psi \approx_{\eps} \psi'$ and $\norm{A}_\infty = O(1)$, then
\begin{equation}
\tr{A \psi} \approx_{\eps^{1/2}} \tr{A \psi'} \,.
\end{equation}
\end{enumerate}
\end{lemma}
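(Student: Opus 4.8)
The plan is to reduce both parts to two elementary inequalities: Cauchy--Schwarz for the state-dependent (semi) inner product (Remark~\ref{rem:cauchy-schwarz}) and Hölder's inequality $\abs{\tr{MN}} \le \norm{M}_\infty\norm{N}_1$ for Schatten norms. The supporting facts are the bound $\norm{C}_\psi \le \norm{C}_\infty$ from Lemma~\ref{lem:bounded_by_op_norm} and the elementary observation that the square root of a negligible function is again negligible; the latter is what produces the loss of a square root between hypothesis and conclusion, since a hypothesis of the form $\norm{X}_\psi^2 = O(\eps) + \negl(\lambda)$ then gives $\norm{X}_\psi = O(\eps^{1/2}) + \negl(\lambda)$ (using also $\sqrt{a+b} \le \sqrt a + \sqrt b$ for $a, b \ge 0$).

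For the first relation of part~(i), I would rewrite the difference as a state-dependent inner product,
\begin{equation}
\tr{CA\psi} - \tr{CB\psi} = \tr{C(A-B)\psi} = \langle C^\dagger, A-B\rangle_\psi \,,
\end{equation}
and apply Cauchy--Schwarz to bound its modulus by $\norm{C^\dagger}_\psi\norm{A-B}_\psi$. Here $\norm{C^\dagger}_\psi \le \norm{C^\dagger}_\infty = \norm{C}_\infty = O(1)$ by Lemma~\ref{lem:bounded_by_op_norm} (for sub-normalised $\psi$), while $\norm{A-B}_\psi = O(\eps^{1/2}) + \negl(\lambda)$ by the hypothesis $A\approx_{\eps,\psi}B$ and the observation above; hence the modulus is $O(\eps^{1/2}) + \negl(\lambda)$, which is exactly the assertion $\tr{CA\psi}\approx_{\eps^{1/2}}\tr{CB\psi}$ in the sense of Definition~\ref{def:approx_dist}(i). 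For the second relation I would use cyclicity of the trace together with an adjoint, $\tr{AC\psi} = \overline{\tr{C^\dagger A^\dagger\psi}}$, so that $\abs{\tr{AC\psi} - \tr{BC\psi}} = \abs{\langle C, A^\dagger - B^\dagger\rangle_\psi} \le \norm{C}_\psi\norm{(A-B)^\dagger}_\psi$, and then conclude as before; note that in every application of this lemma in Section~\ref{sec:soundness} the operators $A,B$ are Hermitian, so $\norm{(A-B)^\dagger}_\psi = \norm{A-B}_\psi$.

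For part~(ii), I would directly estimate, via Hölder's inequality,
\begin{equation}
\abs{\tr{A\psi} - \tr{A\psi'}} = \abs{\tr{A(\psi-\psi')}} \le \norm{A}_\infty\,\norm{\psi-\psi'}_1 \,,
\end{equation}
and then observe that $\psi\approx_\eps\psi'$, i.e.\ $\norm{\psi-\psi'}_1^2 = O(\eps)+\negl(\lambda)$, gives $\norm{\psi-\psi'}_1 = O(\eps^{1/2})+\negl(\lambda)$ as above, so the right-hand side is $O(1)\cdot\big(O(\eps^{1/2})+\negl(\lambda)\big) = O(\eps^{1/2})+\negl(\lambda)$, which is $\tr{A\psi}\approx_{\eps^{1/2}}\tr{A\psi'}$.

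I do not expect a serious obstacle: the content is just the choice of the right bilinear-form / Hölder estimate, and the only points needing care are the arithmetic of the $O(\eps)\to O(\eps^{1/2})$ passage (which is precisely why the conclusions lose a square root relative to the hypotheses) and, in the second relation of part~(i), the fact that $\norm{(A-B)^\dagger}_\psi$ rather than $\norm{A-B}_\psi$ appears --- which one either handles by restricting to Hermitian $A,B$, as in all our applications, or, in full generality, by additionally assuming $A^\dagger\approx_{\eps,\psi}B^\dagger$.
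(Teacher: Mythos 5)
Your proof is correct and takes essentially the same route as the paper's: part (i) is precisely the Cauchy--Schwarz bound $\abs{\langle C^\dagger, A-B\rangle_\psi} \le \norm{C^\dagger}_\psi \cdot \norm{A-B}_\psi$ combined with Lemma~\ref{lem:bounded_by_op_norm}, and part (ii) is precisely the H\"older estimate $\abs{\tr{A(\psi-\psi')}} \le \norm{A}_\infty \norm{\psi-\psi'}_1$. The only difference is that the paper dismisses the second relation of (i) as ``analogous'', whereas you spell it out and correctly observe that the analogous argument yields $\norm{(A-B)^\dagger}_\psi$ rather than $\norm{A-B}_\psi$, so that in full generality one needs $A-B$ Hermitian (or the adjoint hypothesis $A^\dagger \approx_{\eps,\psi} B^\dagger$) --- a legitimate caveat that is satisfied in all of the paper's applications.
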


\begin{proof}
~
\begin{enumerate}
\item We show the first relation, the second one is analogous. We rewrite the expression as an inner product and apply the Cauchy-Schwarz inequality (Remark \ref{rem:cauchy-schwarz}):
\begin{align}
\abs{\tr{C (A - B) \psi} }
&= \abs{\langle C^\dagger, A - B \rangle_\psi} \\
&\leq \norm{C^\dagger}_\psi \cdot \norm{A - B}_\psi \\
&= O(\eps^{1/2}) \,.
\end{align}
In the last line, we used $\norm{C^\dagger}_\psi \leq \norm{C^\dagger}_\infty = \norm{C}_\infty$ from Lemma \ref{lem:bounded_by_op_norm} and $\norm{C}_\infty = O(1)$ by assumption.
\item By H\"older's inequality:
\begin{align}
\abs{\tr{A (\psi - \psi')}}
&\leq \norm{A}_\infty \cdot \norm{\psi - \psi'}_1 \,.
\end{align}
The result follows since $\norm{A}_\infty = O(1)$ and $\norm{\psi - \psi'}_1 = O(\sqrt{\eps})$ by assumption.
\end{enumerate}
\end{proof}

\begin{lemma} \label{lem:replace_on_state}
Let $A, B \in \mL(\H)$ be linear operators, $C \in \mL(\H)$ a linear operator with constant operator norm, and $\psi \in \pos(\H)$ with $\tr{\psi} \leq 1$. Then, the following holds: 
\begin{equation}
A \approx_{\eps, \psi} B \implies A \, \psi \, C \approx_{\eps} B \, \psi \, C \tand C \, \psi \, A^\dagger \approx_{\eps} C \, \psi \, B^\dagger \,.
\end{equation}
\end{lemma}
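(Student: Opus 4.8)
The plan is to reduce both claims to the factorisation $\psi = \psi^{1/2}\psi^{1/2}$ combined with the H\"older-type inequality $\norm{XY}_1 \leq \norm{X}_2 \norm{Y}_2$ for Schatten norms. First I would set $D \deq A - B$ and record that, by hypothesis and Remark~\ref{rem:schatten_norm}, $\norm{D\psi^{1/2}}_2^2 = \norm{D}_\psi^2 = O(\eps) + \negl(\lambda)$, so that $\norm{D\psi^{1/2}}_2 = O(\eps^{1/2}) + \negl(\lambda)$.

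For the first relation I would write $A\psi C - B\psi C = D\psi C = (D\psi^{1/2})(\psi^{1/2}C)$ and apply H\"older to get $\norm{A\psi C - B\psi C}_1 \leq \norm{D\psi^{1/2}}_2 \norm{\psi^{1/2}C}_2$. The second factor is $O(1)$: indeed $\norm{\psi^{1/2}C}_2^2 = \tr{C^\dagger \psi C} \leq \norm{C}_\infty^2 \tr{\psi} \leq \norm{C}_\infty^2$, using $\tr{\psi} \leq 1$ and that $C$ has constant operator norm. Combining the two bounds gives $\norm{A\psi C - B\psi C}_1 = O(\eps^{1/2}) + \negl(\lambda)$, and squaring yields exactly $A\psi C \approx_\eps B\psi C$ in the sense of Definition~\ref{def:approx_dist}.

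For the second relation I would use $C\psi A^\dagger - C\psi B^\dagger = C\psi D^\dagger = (D\psi C^\dagger)^\dagger$ together with the fact that the trace norm is invariant under taking adjoints, so $\norm{C\psi A^\dagger - C\psi B^\dagger}_1 = \norm{D\psi C^\dagger}_1$. Since $\norm{C^\dagger}_\infty = \norm{C}_\infty = O(1)$, the argument from the previous paragraph applied with $C^\dagger$ in place of $C$ gives $\norm{D\psi C^\dagger}_1 = O(\eps^{1/2}) + \negl(\lambda)$, hence $C\psi A^\dagger \approx_\eps C\psi B^\dagger$.

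I do not expect a genuine obstacle: the statement is essentially an application of H\"older's inequality for Schatten norms. The only points requiring a little care are passing from the squared estimate $\norm{D}_\psi^2 = O(\eps) + \negl(\lambda)$ to $\norm{D\psi^{1/2}}_2 = O(\eps^{1/2}) + \negl(\lambda)$ (using $\sqrt{a+b} \leq \sqrt{a} + \sqrt{b}$ and that the square root of a negligible function is negligible), and consistently tracking the mixed squared/unsquared conventions of Definition~\ref{def:approx_dist}.
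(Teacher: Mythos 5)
Your proof is correct and follows essentially the same route as the paper: factor $\psi = \psi^{1/2}\psi^{1/2}$, apply H\"older's inequality $\norm{(A-B)\psi C}_1 \leq \norm{(A-B)\psi^{1/2}}_2\,\norm{\psi^{1/2}C}_2$, bound the second factor via $\norm{C}_\infty$ and $\tr{\psi}\leq 1$, and handle the second relation by the symmetric (adjoint) argument the paper leaves as ``analogous''.
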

\begin{proof}
We show the first relation, the second one is analogous. Suppose $A \approx_{\eps, \psi} B$.
By H\"older's inequality, we have 
\begin{align}
\norm{(A - B) \psi C}_1^2
&\leq \norm{(A - B) \psi^{1/2}}_2^2 \cdot \norm{\psi^{1/2} C}_2^2 \\
&= \norm{A - B}_\psi^2 \cdot \norm{C}_\psi^2 && \text{by Remark \ref{rem:schatten_norm}} \\
&\leq \norm{A - B}_\psi^2 \cdot \norm{C}_\infty^2 && \text{by Lemma \ref{lem:bounded_by_op_norm}} \\
&= O(\eps)
\end{align}
\end{proof}

A self-testing statement always involves showing the existence of an isometry $V$ from the prover's Hilbert space into some larger Hilbert space (see Section \ref{sec:self_testing_intro}). The main technical difficulty that arises from this is that the application of $V^\dagger$, i.e., the mapping from the larger space to the smaller space, cannot be inverted in general: $V V^\dagger \neq \1$. The following two lemmas deal with how the state-dependent distance behaves under the application of an isometry. 

\begin{lemma} \label{lem:approx_distance_with_isometries}
Let $\H_1, \H_2$ be Hilbert spaces with $\dim(\H_1) \leq \dim(\H_2)$, $V: \H_1 \to \H_2$ an isometry, and $A$ and $B$ binary observables on $\H_1$ and $\H_2$, respectively. Then, the following holds for any $\psi \in \pos(\H_1)$: 
\begin{align}
V A V^\dagger \approx_{\eps, V \psi V^\dagger} B &\implies A \approx_{\eps, \psi} V^\dagger B V \,. \\
A \approx_{\eps, \psi} V^\dagger B V &\implies V A V^\dagger \approx_{\eps^{1/2}, V \psi V^\dagger} B \,.
\end{align}
\end{lemma}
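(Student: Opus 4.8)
The statement is two implications, which I would handle separately since the first is lossless in $\eps$ while the second is not. Throughout I would use that for binary observables $A^2 = B^2 = \1$, that $V^\dagger V = \1_{\H_1}$ (so $VV^\dagger$ is only a projection on $\H_2$), and that $(V\psi V^\dagger)^{1/2} = V\psi^{1/2}V^\dagger$, which one checks by squaring.

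\emph{First implication.} Since $V^\dagger V = \1$, there is the exact identity $A - V^\dagger B V = V^\dagger\big(VAV^\dagger - B\big)V$. Writing the state-dependent norm as a Hilbert--Schmidt norm (Remark \ref{rem:schatten_norm}) and inserting $V^\dagger V = \1$ once more, $\norm{A - V^\dagger BV}_\psi = \norm{V^\dagger\big[(VAV^\dagger - B)\,V\psi^{1/2}V^\dagger\big]V}_2$. Now I would apply the submultiplicativity $\norm{XYZ}_2 \le \norm{X}_\infty\norm{Y}_2\norm{Z}_\infty$ together with $\norm{V}_\infty = \norm{V^\dagger}_\infty = 1$ and $(V\psi V^\dagger)^{1/2} = V\psi^{1/2}V^\dagger$ to conclude $\norm{A - V^\dagger BV}_\psi \le \norm{(VAV^\dagger - B)(V\psi V^\dagger)^{1/2}}_2 = \norm{VAV^\dagger - B}_{V\psi V^\dagger}$. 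Squaring gives $\norm{A - V^\dagger BV}_\psi^2 = O(\eps) + \negl(\lambda)$ with no degradation of the error.

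\emph{Second implication.} Here the trick above does not reverse since $VV^\dagger \ne \1$; instead I would expand both squared distances explicitly. Using $A^2 = B^2 = \1$, $V^\dagger V = \1$, cyclicity, and Hermiticity, a direct computation gives $\norm{VAV^\dagger - B}_{V\psi V^\dagger}^2 = 2\tr{\psi} - 2\,\Re\tr{AV^\dagger BV\psi}$ and $\norm{A - V^\dagger BV}_\psi^2 = \tr{\psi} - 2\,\Re\tr{AV^\dagger BV\psi} + \norm{V^\dagger BV}_\psi^2$, hence $\norm{VAV^\dagger - B}_{V\psi V^\dagger}^2 = \norm{A - V^\dagger BV}_\psi^2 + \big(\tr{\psi} - \norm{V^\dagger BV}_\psi^2\big)$. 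The first term is $O(\eps) + \negl(\lambda)$ by hypothesis, so it remains to control the ``defect'' $\tr{\psi} - \norm{V^\dagger BV}_\psi^2$. For this I would use $\tr{\psi} = \norm{A}_\psi^2$ (since $A^2 = \1$) to write it as $\norm{A}_\psi^2 - \norm{V^\dagger BV}_\psi^2 = \big(\norm{A}_\psi - \norm{V^\dagger BV}_\psi\big)\big(\norm{A}_\psi + \norm{V^\dagger BV}_\psi\big)$; the first factor is at most $\norm{A - V^\dagger BV}_\psi = O(\eps^{1/2}) + \negl(\lambda)$ by the triangle inequality for the seminorm, and the second factor is $O(1)$ because both $\norm{A}_\psi$ and $\norm{V^\dagger BV}_\psi$ are bounded by $\sqrt{\tr{\psi}}$ (the latter using $VV^\dagger \le \1$ and $B^2 = \1$, exactly as in the proof of Lemma \ref{lem:state_dep_distance_expanded}). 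This makes the defect $O(\eps^{1/2}) + \negl(\lambda)$, giving $\norm{VAV^\dagger - B}_{V\psi V^\dagger}^2 = O(\eps^{1/2}) + \negl(\lambda)$ as claimed.

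\emph{Main obstacle.} Everything except the defect bound is routine manipulation of the definition of the state-dependent norm. The real content — and the source of the loss from $\eps$ to $\eps^{1/2}$ — is the term $\tr{\psi} - \norm{V^\dagger BV}_\psi^2$ in the second implication: it quantifies how far $VV^\dagger$ is from the identity along $BV\psi V^\dagger B$, i.e. the irreversibility of pulling back along the isometry, and the only handle on it is the already-established closeness of $A$ to $V^\dagger BV$, which forces a square root when passing from the squared seminorm to the seminorm.
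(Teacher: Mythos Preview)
Your proof is correct. The first implication is essentially the paper's argument recast in Schatten-norm language: where you invoke $\norm{V^\dagger}_\infty = \norm{V}_\infty = 1$ and submultiplicativity, the paper writes the trace out and uses $VV^\dagger \le \1$ directly, but these are the same estimate.

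For the second implication you take a somewhat different route. The paper does not expand both squared norms and isolate a defect term; instead it appeals to Lemma~\ref{lem:state_dep_distance_expanded} (which reduces the conclusion to showing $\tr{AV^\dagger BV\psi} \approx_{\eps^{1/2}} \tr{\psi}$) and then applies the replacement lemma (Lemma~\ref{lem:replace_in_trace}(i)) with $C = A$ to swap $V^\dagger BV$ for $A$, using $A^2 = \1$. Your direct computation is more self-contained and makes the source of the $\eps^{1/2}$ loss explicit as the ``defect'' $\tr{\psi} - \norm{V^\dagger BV}_\psi^2$, while the paper's version is shorter because it reuses machinery already in place. Both approaches tacitly use $\tr{\psi} = O(1)$ (you need it for the factor $\norm{A}_\psi + \norm{V^\dagger BV}_\psi$ to be $O(1)$; the paper needs it inside the replacement lemma), which is the standing convention for the states appearing in the paper.
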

\begin{proof} 
We prove each relation in turn.

\paragraph{Proof of the first relation.}
~\\
Using $V^\dagger V = \1$:
\begin{align}
\tr{ \left( A - V^\dagger B V \right)^\dagger \left( A - V^\dagger B V \right) \psi } 
&= \tr{ V^\dagger \left( V A V^\dagger - B \right) V V^\dagger \left(  V A V^\dagger  - B \right) V \psi } 
\intertext{Since $\psi$ is positive, we have $\psi = \psi^{1/2} \psi^{1/2}$:}
&=  \tr{ \psi^{1/2} V^\dagger \left( V A V^\dagger - B \right) V V^\dagger \left(  V A V^\dagger  - B \right) V \psi^{1/2} } 
\intertext{We have $\left( V V^\dagger \right)^2 = V V^\dagger$, so $V V^\dagger$ is a projector and in particular less than or equal to $\1$. Since the expression has the form $\tr{M V V^\dagger M^\dagger}$, we can bound it as:}
&\leq  \tr{ \psi^{1/2} V^\dagger \left( V A V^\dagger - B \right) \left(  V A V^\dagger  - B \right) V \psi^{1/2} } \\
&=  \tr{ \left( V A V^\dagger - B \right) \left(  V A V^\dagger  - B \right) V \psi V^\dagger }
\intertext{Since we are assuming $V A V^\dagger \approx_{\eps, V \psi V^\dagger} B$:}
& \approx_{\eps} 0 \,.
\end{align}

\paragraph{Proof of the second relation.} 
~\\
By Lemma \ref{lem:state_dep_distance_expanded}, we only need to show 
\begin{equation}
\tr{V A V^\dagger B V \psi V^\dagger} = \tr{A V^\dagger B V \psi} \approx_{\eps^{1/2}} 1 \,.
\end{equation}
This follows immediately from the replacement (Lemma \ref{lem:replace_in_trace}(i)), the assumption $V^\dagger B V \approx_{\eps, \psi} A$, and the fact that $A^2 = \1$.
\end{proof}

\begin{lemma} \label{lem:split_into_projectors}
Let $\H_1, \H_2$ be Hilbert spaces with $\dim(\H_1) \leq \dim(\H_2)$ and $V: \H_1 \to \H_2$ an isometry. Let $A$ and $B$ be binary observables on $\H_1$ and $\H_2$, respectively, $\psi \in \pos(\H_1)$, and $\eps \geq 0$. Then for any $b \in \bits$:
\begin{align}
V^\dagger B V \approx_{\eps, \psi} A 
&\implies 
V^\dagger B^{(b)} V \approx_{\eps, \psi} A^{(b)} \,, 
\\
B \approx_{\eps, V \psi V^\dagger } V A V^\dagger  
&\implies 
B^{(b)} \approx_{\eps, V \psi V^\dagger } V A^{(b)} V^\dagger \,.
\end{align}
\end{lemma}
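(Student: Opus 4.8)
The plan is to reduce both implications to the hypotheses about the binary observables themselves, exploiting the elementary identity that the spectral projector onto the $(-1)^b$-eigenspace of a binary observable $O$ can be written as $O^{(b)} = \tfrac{1}{2}(\1 + (-1)^b O)$ (this is the identity already used, in the opposite direction, in the proof of Lemma \ref{lem:projectors_one_zero}). Everything then follows from the linearity of the maps $X \mapsto V^\dagger X V$ and $X \mapsto V X V^\dagger$, together with $V^\dagger V = \1$ and the behaviour of the ``defect'' $\1 - V V^\dagger$.

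For the first implication I would apply this identity to both $A$ and $B$ and use $V^\dagger V = \1$ to obtain
\begin{equation*}
V^\dagger B^{(b)} V - A^{(b)} = \tfrac{1}{2}\left( V^\dagger V - \1 \right) + \tfrac{(-1)^b}{2}\left( V^\dagger B V - A \right) = \tfrac{(-1)^b}{2}\left( V^\dagger B V - A \right),
\end{equation*}
so that $\norm{V^\dagger B^{(b)} V - A^{(b)}}_\psi = \tfrac{1}{2}\norm{V^\dagger B V - A}_\psi$; the conclusion is then immediate from the hypothesis $V^\dagger B V \approx_{\eps, \psi} A$ (in fact with an improved constant).

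For the second implication the same manipulation gives
\begin{equation*}
B^{(b)} - V A^{(b)} V^\dagger = \tfrac{1}{2}\left( \1 - V V^\dagger \right) + \tfrac{(-1)^b}{2}\left( B - V A V^\dagger \right).
\end{equation*}
The one point needing care is that $V V^\dagger \neq \1$ in general, so the first term is a nonzero operator. I would handle it exactly as in the proofs of Lemmas \ref{lem:state_dep_distance_expanded} and \ref{lem:approx_distance_with_isometries}: since $\1 - V V^\dagger$ is a projector that annihilates the range of $V$, we have $\left( \1 - V V^\dagger \right) V \psi^{1/2} = 0$ and hence $\norm{\1 - V V^\dagger}_{V \psi V^\dagger} = 0$. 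The triangle inequality for the seminorm $\norm{\cdot}_{V \psi V^\dagger}$ then yields $\norm{B^{(b)} - V A^{(b)} V^\dagger}_{V \psi V^\dagger} \leq \tfrac{1}{2}\norm{B - V A V^\dagger}_{V \psi V^\dagger}$, and squaring gives the claim from $B \approx_{\eps, V \psi V^\dagger} V A V^\dagger$.

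I do not expect a genuine obstacle here; the only mild subtlety is bookkeeping the $O(\cdot)$ and $\negl(\lambda)$ terms when passing between a seminorm and its square, which is harmless since the square root of a negligible function is negligible and $\left( O(\eps^{1/2}) + \negl(\lambda) \right)^2 = O(\eps) + \negl(\lambda)$.
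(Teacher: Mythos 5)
Your proposal is correct and follows essentially the same route as the paper: rewrite projectors and observables via $O^{(b)} = \tfrac12(\1 + (-1)^b O)$, use $V^\dagger V = \1$ to kill the defect term in the first implication, and handle the $\1 - V V^\dagger$ term in the second by observing that it annihilates $V \psi V^\dagger$ (the paper notes $(\1 - V V^\dagger) V \psi V^\dagger = 0$, which is the same fact you express as $\norm{\1 - V V^\dagger}_{V\psi V^\dagger} = 0$). The constant-factor bookkeeping is exactly as you describe and poses no difficulty.
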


\begin{proof}
~

\paragraph{Proof of the first relation.} For any binary observable $O$, we can write $O = (-1)^{b} \left( 2 \, O^{(b)} - \1 \right)$. Therefore, 
\begin{align}
V^\dagger B V - A 
&= (-1)^{b} V^\dagger \left( 2 \, B^{(b)} - \1 \right) V - (-1)^{b} \left( 2 A^{(b)} - \1 \right) \\
&= (-1)^{b} \cdot 2 \cdot \left( V^\dagger B^{(b)} V - A^{(b)} \right) \,.
\end{align}
This means that 
\begin{align}
\tr{ \left( V^\dagger B^{(b)} V - A^{(b)} \right)^\dagger \left( V^\dagger B^{(b)} V - A^{(b)} \right) \psi } 
= \frac{1}{4} \tr{ \left( V^\dagger B V - A \right)^\dagger \left( V^\dagger B V - A \right) \psi } 
\approx_{\eps} 0 \,.
\end{align}

\paragraph{Proof of the second relation.} Similarly to the first case, we have 
\begin{align}
B - V A V^\dagger 
&= (-1)^{b} \cdot 2 \cdot \left( B^{(b)} -V A^{(b)} V^\dagger  \right) - (-1)^b \left( \1 - V V^\dagger \right)\,.
\end{align}
The result then follows from $\left( \1 - V V^\dagger \right) V \psi V^\dagger = 0$.
\end{proof}

\subsection{Lifting state-dependent operator relations using computational indistinguishability}
The following lemma collects a number of statements that allow us to replace computationally indistinguishable states with one another in the state-dependent distance. 
This means that if two states are computationally indistinguishable and a state-dependent operator relation holds for one of the states, we can ``lift'' this relation to the other state, provided the operators are efficient.
We will make use of this many times throughout the rest of the paper.
\begin{lemma}[Lifting lemma] \label{lem:lifting}
Let $\psi, \psi' \in \mD(\H)$ such that $\psi \capprox_{\delta} \psi'$.
\begin{enumerate}
\item Let $A$ be an efficient binary observable on $\H$. Then:
\begin{equation}
\tr{A \psi} \approx_{\delta} \tr{A \psi'} \,.
\end{equation}
\item  Let $A, B$ be efficient binary observables on $\H$. Then:
\begin{equation}
A \approx_{\eps, \psi} B \implies A \approx_{\delta + \eps, \psi'} B \,.
\end{equation}
\item  Let $A, B$ be efficient binary observables on $\H$. Then:
\begin{equation}
[A, B] \approx_{\eps, \psi} 0 \implies [A, B] \approx_{\delta + \eps, \, \psi'} 0 \,.
\end{equation}
\item  Let $A, B$ be efficient binary observables on $\H$. Then:
\begin{equation}
\{A, B\} \approx_{\eps, \psi} 0 \implies \{A, B\} \approx_{\delta + \eps, \,  \psi'} 0 \,.
\end{equation}
\item Let $\H'$ be another Hilbert space with $\dim(\H') \geq \dim(\H)$, $A$ an efficient binary observable on $\H$, $B$ an efficient binary observable on $\H'$, and $V: \H \to \H'$ an efficient isometry. Then: 
\begin{equation}
A \approx_{\eps, \, \psi} V^\dagger B V \implies A \approx_{\eps^{1/2} + \delta, \psi'} V^\dagger B V \,.
\end{equation} 
\item Let $\H'$ be another Hilbert space with $\dim(\H') \geq \dim(\H)$. For this case, let $\psi, \psi' \in \mD(\H')$ such that $\psi \capprox_{\delta} \psi'$. Let $A$ be an efficient binary observable on $\H$, $B$ an efficient binary observable on $\H'$, and $V: \H \to \H'$ an efficient isometry. Then: 
\begin{equation}
V A V^\dagger \approx_{\eps, \, \psi} B  \implies V A V^\dagger \approx_{\eps^{1/4} + \delta, \psi'} B \,.
\end{equation}
\end{enumerate}
\end{lemma}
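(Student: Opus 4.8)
The common idea is that the hypothesis $\psi \capprox_\delta \psi'$ (Definition~\ref{def:comp_indist}) lets us transfer to $\psi'$ any quantity that can be written as (a bound involving) the acceptance probability of an \emph{efficient} procedure applied to the state; so for each part the plan is to identify the right efficient procedure. For (i), the measurement $\{A^{(0)},A^{(1)}\}$ is efficient by Lemma~\ref{lem:obs_to_meas} and $\tr{A\psi}=2\,\pr{0\,|\,\psi}-1$, so Definition~\ref{def:comp_indist} applied to this measurement gives $\tr{A\psi}\approx_\delta\tr{A\psi'}$. For (ii), $A$ and $B$ are efficient binary observables, hence efficient unitaries, so Lemma~\ref{lem:sum_difference_efficient} with $U_1=A,\,U_2=B$ produces an efficient procedure whose output bit satisfies $\pr{b=1\,|\,\phi}=\tfrac14\norm{A-B}_\phi^2$ for every $\phi$; therefore $\norm{A-B}_{\psi'}^2=4\,\pr{b=1\,|\,\psi'}\le 4\,\pr{b=1\,|\,\psi}+O(\delta)+\negl=\norm{A-B}_\psi^2+O(\delta)+\negl=O(\eps+\delta)+\negl$. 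Parts (iii) and (iv) are identical, using instead the efficient procedure of Corollary~\ref{lem:commutator_efficient}, for which $\pr{b=0\,|\,\phi}=\tfrac14\norm{[A,B]}_\phi^2$ and $\pr{b=1\,|\,\phi}=\tfrac14\norm{\{A,B\}}_\phi^2$ respectively.

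\noindent Parts (v) and (vi) are where the work lies, since $V^\dagger BV$ (resp.\ $VAV^\dagger$) is \emph{not} a binary observable because $VV^\dagger\neq\1$, so the arguments above do not apply directly. The device I would use is the operator
\[
  \tilde A \;=\; VAV^\dagger + (\1 - VV^\dagger)\,,
\]
which is an efficient binary observable on $\H'$ (write it as $U_V\,(\text{controlled-}A)\,U_V^\dagger$, or note that $\{VA^{(0)}V^\dagger+\1-VV^\dagger,\ VA^{(1)}V^\dagger\}$ is an efficient projective measurement and invoke Lemma~\ref{lem:meas_to_obs}), satisfies $V^\dagger\tilde A V=A$, and differs from $VAV^\dagger$ only by $\1-VV^\dagger$, which annihilates any state supported on $\img{V}$. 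For (v): from $A\approx_{\eps,\psi}V^\dagger BV$, Lemma~\ref{lem:approx_distance_with_isometries} gives $VAV^\dagger\approx_{\eps^{1/2},V\psi V^\dagger}B$, hence $\tilde A\approx_{\eps^{1/2},V\psi V^\dagger}B$ (the substitution is free on $V\psi V^\dagger$); since applying the efficient isometry $V$ preserves indistinguishability (a distinguisher for $V\psi V^\dagger$ vs $V\psi'V^\dagger$ yields one for $\psi$ vs $\psi'$ by first applying $V$), we have $V\psi V^\dagger\capprox_\delta V\psi'V^\dagger$, so part (ii) gives $\tilde A\approx_{\eps^{1/2}+\delta,V\psi'V^\dagger}B$; substituting $\tilde A$ back by $VAV^\dagger$ (again free on $V\psi'V^\dagger$) and applying Lemma~\ref{lem:approx_distance_with_isometries} in the opposite direction yields $A\approx_{\eps^{1/2}+\delta,\psi'}V^\dagger BV$, as claimed.

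\noindent For (vi) the extra subtlety is that $\psi,\psi'$ live on $\H'$ and are \emph{not} of the form $V(\cdot)V^\dagger$, so the $\tilde A$-substitution is no longer free; I would first \emph{derive} from the hypothesis that $\psi$ is almost supported on $\img{V}$. Expanding $\norm{VAV^\dagger-B}_\psi^2=\tr{VV^\dagger\psi}-2\,\Re\langle VAV^\dagger,B\rangle_\psi+1$ and bounding $|\langle VAV^\dagger,B\rangle_\psi|\le\norm{VAV^\dagger}_\psi\norm{B}_\psi=\sqrt{\tr{VV^\dagger\psi}}$ by Cauchy--Schwarz (Remark~\ref{rem:cauchy-schwarz}) gives $\big(1-\sqrt{\tr{VV^\dagger\psi}}\big)^2\le O(\eps)+\negl$, i.e.\ $(\1-VV^\dagger)\approx_{\eps^{1/2},\psi}0$. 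By the triangle inequality this upgrades $VAV^\dagger\approx_{\eps,\psi}B$ to $\tilde A\approx_{\eps^{1/2},\psi}B$; part (ii) then gives $\tilde A\approx_{\eps^{1/2}+\delta,\psi'}B$; and since $\{VV^\dagger,\1-VV^\dagger\}$ is an efficient measurement, part (i) applied to the observable $\1-2VV^\dagger$ transfers $\tr{(\1-VV^\dagger)\psi}=O(\eps^{1/2})+\negl$ to $\tr{(\1-VV^\dagger)\psi'}=O(\eps^{1/2}+\delta)+\negl$. A final triangle inequality $\norm{VAV^\dagger-B}_{\psi'}\le\norm{\tilde A-B}_{\psi'}+\norm{\1-VV^\dagger}_{\psi'}$ then yields $VAV^\dagger\approx_{\eps^{1/2}+\delta,\psi'}B$, which in particular implies the stated $\eps^{1/4}$ bound.

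\noindent\textbf{Main obstacle.} Parts (i)--(iv) are essentially bookkeeping on top of the efficient procedures already built earlier in this section. The crux is the non-invertibility of the isometry in (v)--(vi): one cannot feed $V^\dagger BV$ or $VAV^\dagger$ into those procedures. The auxiliary binary observable $\tilde A$ fixes this, and for (vi) the additional ingredient is the Cauchy--Schwarz argument showing that the approximate-equality hypothesis \emph{already forces} $\psi$ onto $\img{V}$ --- which is precisely what is needed to make the $\tilde A$-substitution harmless and to keep track of the error on both sides of the exchange $\psi\leftrightarrow\psi'$.
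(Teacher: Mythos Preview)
Your argument is correct. Parts (i)--(iv) coincide with the paper's proof essentially verbatim (the same efficient procedures from Lemma~\ref{lem:sum_difference_efficient} and Corollary~\ref{lem:commutator_efficient}).

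For (v) and (vi) the underlying idea is the same as the paper's --- extend $VAV^\dagger$ to a genuine efficient binary observable on $\H'$, invoke part~(ii), then relate back --- but the execution differs. The paper works with the unitary extension $U$ of $V$ and the observable $U(A\ot\1_k)U^\dagger$, computing traces directly on $\psi\ot\proj{0_k}$; you instead use $\tilde A=VAV^\dagger+(\1-VV^\dagger)$ and route through Lemma~\ref{lem:approx_distance_with_isometries}. Both extensions agree with $VAV^\dagger$ on $\img{V}$, which is all that matters. Your packaging for (v) is arguably cleaner: the substitution $\tilde A\leftrightarrow VAV^\dagger$ is exact on states of the form $V(\cdot)V^\dagger$, so part~(ii) applies without any side calculation. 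For (vi), your Cauchy--Schwarz bound $(1-\sqrt{\tr{VV^\dagger\psi}})^2\le\norm{VAV^\dagger-B}_\psi^2$ is a more direct way to obtain $\1-VV^\dagger\approx_{\eps^{1/2},\psi}0$ than the paper's replacement-lemma detour, and your triangle-inequality finish avoids a second replacement step; this is why you land at $\eps^{1/2}+\delta$ rather than the paper's stated $\eps^{1/4}+\delta$. You also make explicit the transfer of $\tr{(\1-VV^\dagger)\psi}$ to $\psi'$ via part~(i), a step the paper's write-up leaves implicit. In short: same strategy, tighter and more self-contained bookkeeping on your side.
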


\begin{proof}
~
\begin{enumerate}
\item Since $A$ is efficient, the procedure that makes the measurement $\{A^{(0)}, A^{(1)}\}$ and outputs the result is efficient. The probability of outputting 0 given state $\rho$ is $\tr{A^{(0)} \rho}$. Therefore, by the definition of computational indistinguishability (Definition \ref{def:comp_indist}), we have 
\begin{equation}
\tr{A \psi} = 2 \, \tr{A^{(0)} \psi} - 1 \approx_{\delta} 2 \, \tr{A^{(0)} \psi'} - 1 = \tr{A \psi'} \,.
\end{equation}
\item By assumption, 
\begin{equation}
\tr{(A - B)^\dagger (A - B) \psi} \approx_{\eps} 0 \,.
\end{equation}
By Lemma \ref{lem:sum_difference_efficient} and Definition \ref{def:comp_indist}, we also have 
\begin{equation}
\tr{(A - B)^\dagger (A - B) \psi} \approx_{\delta } \tr{(A - B)^\dagger (A - B) \psi'} \,.
\end{equation}
The result follows by the triangle inequality.
\item As above, using Corollary \ref{lem:commutator_efficient}.
\item As above, using Corollary \ref{lem:commutator_efficient}.
\item We first remark that the result of (ii) does not directly apply here, since $V^\dagger B V$ is in general not a binary observable. Let $U \in \mU(\H')$ be an efficient unitary such that $V = U (\1 \ot \ket{0_k})$, where $k = \dim(\H') / \dim(\H)$ (assuming that $\dim(\H)$ divides $\dim(\H')$, which is without loss of generality since we can add extra dimensions to $\H'$ if necessary). Then by Lemma \ref{lem:sum_difference_efficient} and because we can efficiently prepare the state $\proj{0_k}$, there exists an efficient procedure that outputs a bit $b$ with 
\begin{equation}
4 \cdot \pr{b = 0 | \psi'} = \tr{(A \ot \1_k - U^\dagger B U)^2 (\psi' \ot \proj{0_k})} \,.
\end{equation}
By the assumption $\psi \capprox_{\delta} \psi'$, we therefore have 
\begin{equation}
\tr{(A \ot \1_k - U^\dagger B U)^2 (\psi' \ot \proj{0_k})} \approx_{\delta} \tr{(A \ot \1_k - U^\dagger B U)^2 (\psi \ot \proj{0_k})} \,.
\end{equation}
Since we are assuming $A \approx_{\eps, \psi} V^\dagger B V$, it now suffices to show 
\begin{equation}
\tr{(A \ot \1_k - U^\dagger B U)^2 (\psi \ot \proj{0_k})} \approx_{\eps^{1/2}} \tr{(A - V^\dagger B V)^2 \psi} \,. \label{eqn:lift_proof_v}
\end{equation}
Multiplying out the expression on the left hand side and using that we can move $\ket{0_k}$ and $\bra{0_k}$ past $A \ot \1_k$, we get
\begin{align}
\tr{(A \ot \1_k - U^\dagger B U)^2 (\psi \ot \proj{0_k})} = 1 + 1 - \tr{A V^\dagger B V \psi} - \tr{V^\dagger B V A \psi} \,.
\end{align}
Expanding the right hand side of Equation \eqref{eqn:lift_proof_v}, one gets the same terms, except for $\tr{V^\dagger B V \psi V^\dagger B V}$ instead of 1. However, using the assumption $A \approx_{\eps, \psi} V^\dagger B V$ and the replacement lemma (Lemma \ref{lem:replace_in_trace}) twice: 
\begin{equation}
\tr{V^\dagger B V \psi V^\dagger B V} \approx_{\eps^{1/2}} \tr{A \psi A} = 1 \,.
\end{equation}
The last equality is true because $A$ squares to identity and $\psi$ is normalised.
\item Let $U$ be as in (v), again assuming without loss of generality that $\dim(\H)$ divides $\dim(\H')$. By the same reasoning as in (v), we have 
\begin{equation}
\tr{(U (A \ot \1_k) U^\dagger - B)^2 \psi'} \approx_{\delta} \tr{(U (A \ot \1_k) U^\dagger - B)^2 \psi} \,,
\end{equation}
so it suffices to show 
\begin{equation}
\tr{(U (A \ot \1_k) U^\dagger - B)^2 \psi} \approx_{\eps^{1/4}} \tr{(V A V^\dagger - B)^2 \psi} \,. \label{eqn:eff_conj_proof1}
\end{equation}
As a first step, we show $V V^\dagger \approx_{\eps^{1/2}, \psi} \1$. For this, observe that since $V^\dagger V = \1$:
\begin{align}
\tr{(V V^\dagger - \1)^2 \psi} &= \tr{(V V^\dagger V V^\dagger + \1 - 2 \, V V^\dagger) \psi} \\
&= 1 - \tr{V V^\dagger \psi} 
\intertext{Using that $B^2 = \1$, $V^\dagger V = \1$, and $\psi$ is normalised:}
&= 1 - \left( \tr{(V A V^\dagger - B)^2 \psi} - 1 + \tr{V A V^\dagger B \psi} + \tr{B V A V^\dagger \psi} \right)
\intertext{Since by assumption $V A V^\dagger \approx_{\eps, \psi} B$, the first trace is $O(\eps)$. For the other two traces, we can use the replacement lemma (Lemma \ref{lem:replace_in_trace}) to replace $V A V^\dagger$ with $B$:}
&\approx_{\eps^{1/2}} 2 - 2 \, \tr{B^2 \, \psi} \\
&= 0 \,.
\end{align}
This allows us to show Equation \eqref{eqn:eff_conj_proof1} as follows. On the one hand we have from expanding as in Lemma \ref{lem:state_dep_distance_expanded}: 
\begin{align}
\tr{(U (A \ot \1_k) U^\dagger - B)^2 \psi} 
&= 2 - 2 \, \Re \, \tr{B U (A \ot \1_k) U^\dagger\psi} \label{eqn:lift_proof_vi}
\intertext{Using $V V^\dagger \approx_{\eps^{1/2}} \1$ and the replacement lemma (Lemma \ref{lem:replace_in_trace}):}
&\approx_{\eps^{1/4}} 2 - 2 \, \Re \, \tr{B U (A \ot \1_k) U^\dagger V V^\dagger \psi}
\intertext{Rewriting $V = U(\1 \ot \ket{0_k})$, using $U^\dagger U = \1$, and evaluating $(A \ot \1_k) (\1 \ot \ket{0_k}) = (\1 \ot \ket{0_k}) A$:}
&= 2 - 2 \, \Re \, \tr{B U (\1 \ot \ket{0_k}) A V^\dagger \psi} \\
&= 2 - 2 \, \Re \, \tr{B V A V^\dagger \psi} \,.
\end{align}
On the other hand, expanding in the same manner as in Equation \eqref{eqn:lift_proof_vi}:
\begin{align}
\tr{(V A V^\dagger - B)^2 \psi} &= 1 + \tr{V V^\dagger \psi} - 2 \, \Re \, \tr{B V A V^\dagger \psi} 
\intertext{Using $V V^\dagger \approx_{\eps^{1/2}, \psi} \1$:}
&\approx_{\eps^{1/2}} 2 - 2 \, \Re \, \tr{B V A V^\dagger \psi} \,.
\end{align}
The result follows from the triangle inequality.
\end{enumerate}
\end{proof}

\section{Self-testing protocol \label{sec:protocol}}
In this section, we introduce the self-testing protocol and describe the behaviour of an honest prover that succeeds with probability negligibly close to 1. 
The protocol is described in detail in Figure \ref{fig:protocl}, an informal description was already given in the introduction (Section \ref{sec:intro_protocol}).

\begin{figure}[htbp]
\rule[1ex]{16.5cm}{0.5pt}\\
Let $\lambda$ be a security parameter and $(\mF, \mG)$ an ENTCF family.
\begin{enumerate}[label=\arabic*.]
\item The verifier selects bases $\theta_1, \theta_2 \in_R \bits$, where 0 corresponds to the computational and 1 to the Hadamard basis. We call the basis choices (0, 1) and (1, 0) the \emph{test case}, and the basis choice (1, 1) the \emph{Bell case}.
\item The verifier samples keys and trapdoors $(k_1, t_{k_1}; \; k_2, t_{k_2})$ using 
\begin{equation*}
\begin{cases}
(k_i, t_{k_i}) \leftarrow \Gen_{\kg}(1^\lambda) & \text{if \,$\theta_i = 0$} \,, \\
(k_i, t_{k_i}) \leftarrow \Gen_{\kf}(1^\lambda) & \text{if \,$\theta_i = 1$} \,.
\end{cases}
\end{equation*}
The verifier sends $(k_1, k_2)$ to the prover (but keeps the trapdoors $t_{k_i}$ private).
\item The verifier receives $y_1, y_2 \in \mY$ from the prover.
\item The verifier selects a round type $\in \{$preimage round, Hadamard round$\}$ uniformly at random and sends the round type to the prover. 
\begin{enumerate}
\item For a \emph{preimage round}: The verifier receives $(b_1, x_1; \; b_2, x_2)$ from the prover, with $b_i \in \bits$ and $x_i \in \mX$. The verifier sets $\mathtt{flag} \leftarrow \mathtt{fail_{Pre}}$ if $\Chk(k_i, y_i, b_i, x_i) = 0$ for $i = 1$ or $i = 2$.
\item For a \emph{Hadamard round}: The verifier receives $d_1, d_2 \in \bits^w$ from the prover (for some $w$ depending on the security parameter). The verifier selects questions $q_1, q_2 \in_R \bits$, sends these to the prover, and receives answers $v_1, v_2 \in \bits$. Depending on the basis choice, the verifier performs the following checks:

\begin{tabular}{p{4cm} p{8cm}}
Basis choice $(\theta_1, \theta_2)$ & Verifier's check \\
\hline
(0, 0) & None \\[5pt]
(0, 1) & Set $\mathtt{flag} \leftarrow \mathtt{fail_{Test}}$ if one of the following is true: 
\newline $~\qquad~$ $q_1 = 0$ and $\hat{b}(k_1, y_1) \neq v_1$.
\newline $~\qquad~$ $q_2 = 1$ and $\hat{u}(k_2, y_2, d_2) \neq v_2 \oplus \hat{b}(k_1, y_1)$. \\[5pt]
(1, 0) & Set $\mathtt{flag} \leftarrow \mathtt{fail_{Test}}$ if one of the following is true: 
\newline $~\qquad~$ $q_1 = 1$ and $\hat{u}(k_1, y_1, d_1) \neq v_1 \oplus \hat{b}(k_2, y_2)$. 
\newline $~\qquad~$ $q_2 = 0$ and $\hat{b}(k_2, y_2) \neq v_2$.\\[5pt]
(1, 1) & Set $\mathtt{flag} \leftarrow \mathtt{fail_{Bell}}$ if one of the following is true: 
\newline $~\qquad~$ $(q_1, q_2) = (0, 1)$ and $\hat{u}(k_2, y_2, d_2) \neq v_1 \oplus v_2$
\newline $~\qquad~$ $(q_1, q_2) = (1, 0)$ and $\hat{u}(k_1, y_1, d_1) \neq v_1 \oplus v_2$
\end{tabular}
\end{enumerate}
\end{enumerate}
\rule[1ex]{16.5cm}{0.5pt}
\caption{The self-testing protocol. Some of the verifier's checks, such as that for $(\theta_1, \theta_2) = (1, 0)$, $\hat{u}(k_1, y_1, d_1)$ must equal $v_1 \oplus \hat{b}(k_2, y_2)$, not $v_1$, might look counter-intuitive. They are defined this way because the verifier must effectively ``decode'' the $CZ$ gate that the honest prover applies. \changed{For the honest prover behaviour, see the proof of Proposition \ref{prop:completeness}.}}
\label{fig:protocl}
\end{figure}

\subsection{Completeness of self-testing protocol}
\begin{proposition} \label{prop:completeness}
There is an efficient quantum prover that is accepted in the self-testing protocol with probability negligibly close to 1 (as a function of the security parameter).
\end{proposition}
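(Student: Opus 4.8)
The plan is to exhibit the honest prover informally described in Section~\ref{sec:intro_protocol}, verify that it is efficient, and then check each of the verifier's tests case by case, the only loss coming from the negligible imperfections built into the definition of an ENTCF family.

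First I would make the honest strategy precise. On receiving keys $k_1,k_2$ the prover prepares, for each $i\in\{1,2\}$, the uniform superposition $\ket{\psi_i}$ over preimage--image pairs from Section~\ref{sec:intro_protocol}; this is efficient because an ENTCF family admits efficient evaluation of $f_{k_i,b}$ in superposition together with efficient preparation of the associated range superpositions. The prover measures the image registers, obtains $y_1,y_2$, and returns them; by the defining properties of ENTCF families the resulting state on the $\ket{b}\ket{x}$ registers is, up to negligible trace distance, $\ket{\hat b(k_i,y_i)}\ket{\hat x(k_i,y_i)}$ if $k_i\in\kg$, and $\tfrac{1}{\sqrt 2}\big(\ket{0}\ket{\hat x_0(k_i,y_i)}+\ket{1}\ket{\hat x_1(k_i,y_i)}\big)$ if $k_i\in\kf$, with $\hat b,\hat x,\hat x_0,\hat x_1$ the decoding maps of Definition~\ref{def:decoding_maps}. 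On a preimage round the prover measures all registers in the computational basis and returns the outcomes; in both the injective and the claw-free case the returned pair $(b_i,x_i)$ satisfies $\Chk(k_i,y_i,b_i,x_i)=1$ by construction (up to the negligible error just mentioned). On a Hadamard round the prover measures the preimage registers in the Hadamard basis, obtaining strings $d_1,d_2$, which collapses the $\ket{b}$-register of instance $i$ to $\ket{\beta_i}$ with $\beta_i=\hat b(k_i,y_i)$ when $\theta_i=0$, and to $\ket{(-)^{\mu_i}}$ with $\mu_i=\hat u(k_i,y_i,d_i)=d_i\cdot\big(\hat x_0(k_i,y_i)\oplus\hat x_1(k_i,y_i)\big)$ when $\theta_i=1$; the prover returns $d_1,d_2$, receives $q_1,q_2$, applies a $CZ$ gate to the two qubits, measures qubit $i$ in the computational basis if $q_i=0$ and in the Hadamard basis if $q_i=1$, and returns the outcomes $v_1,v_2$. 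Each step is a fixed-size quantum circuit, so the strategy is efficient.

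The heart of the argument is then a finite case analysis of the Hadamard round over $(\theta_1,\theta_2)$; the preimage round and the basis choice $(0,0)$ are immediate (the verifier performs no check in the latter). I would use that $CZ$ is symmetric and that $CZ\,\big(\ket{\beta}\otimes\ket{(-)^\mu}\big)=\ket{\beta}\otimes\ket{(-)^{\mu\oplus\beta}}$, since $\sigma_Z\ket{(-)^\mu}=\ket{(-)^{\mu\oplus1}}$. For $(\theta_1,\theta_2)=(0,1)$ the honest state before $CZ$ is $\ket{\beta_1}\otimes\ket{(-)^{\mu_2}}$, hence after $CZ$ it is $\ket{\beta_1}\otimes\ket{(-)^{\mu_2\oplus\beta_1}}$; so if $q_1=0$ the prover outputs $v_1=\beta_1=\hat b(k_1,y_1)$ and if $q_2=1$ it outputs $v_2=\mu_2\oplus\beta_1$, i.e.\ $\hat u(k_2,y_2,d_2)=v_2\oplus\hat b(k_1,y_1)$, which are exactly the equalities the verifier checks. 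The case $(1,0)$ is symmetric. For the Bell case $(1,1)$ the state before $CZ$ is $\ket{(-)^{\mu_1}}\otimes\ket{(-)^{\mu_2}}$ and after $CZ$ it equals $\tfrac1{\sqrt2}\big(\ket{0}\otimes\ket{(-)^{\mu_2}}+(-1)^{\mu_1}\ket{1}\otimes\ket{(-)^{\mu_2\oplus1}}\big)$; when $(q_1,q_2)=(0,1)$ a computational-basis measurement of qubit $1$ followed by a Hadamard-basis measurement of qubit $2$ gives $v_1\oplus v_2=\mu_2=\hat u(k_2,y_2,d_2)$, and by the symmetric rewriting of the same state one gets $v_1\oplus v_2=\mu_1=\hat u(k_1,y_1,d_1)$ when $(q_1,q_2)=(1,0)$; the remaining values of $(q_1,q_2)$ are not checked. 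Thus, whenever the ENTCF family behaves ideally, every check passes with certainty.

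Finally I would account for the error. The only deviations from the idealized analysis are the negligible-probability events in the ENTCF definition: the prepared state differing from $\ket{\psi_i}$, a committed $y_i$ falling outside the range or not admitting a valid claw, or $\Chk$ rejecting a correct preimage. Each occurs with probability $\negl(\lambda)$, and a union bound over the constantly many such events shows the prover is accepted with probability $1-\negl(\lambda)$. I expect the only real difficulty to be bookkeeping: lining up the verifier's deliberately ``decoded'' checks --- which fold the action of $CZ$ into the conditions on $\hat b$ and $\hat u$ --- with the honest outcomes in each $(\theta_1,\theta_2,q_1,q_2)$ branch, and making sure that the completeness error is attributable entirely to the imperfections of the ENTCF primitive rather than to anything structural in the protocol.
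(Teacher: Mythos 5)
Your proposal is correct and follows essentially the same route as the paper: exhibit the honest prover (superposition preparation via the ENTCF sampling procedure, image measurement, Hadamard measurement of the preimage registers, $CZ$, then single-qubit measurements dictated by $q_1,q_2$), and verify the checks by a case analysis over $(\theta_1,\theta_2)$, with only negligible error from the ENTCF imperfections. The sole cosmetic difference is that you verify the Bell case by writing out the post-$CZ$ state explicitly, whereas the paper invokes the eigenvalue relations of the Hadamard-rotated Bell states, which amounts to the same calculation.
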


\begin{proof}
We describe the honest strategy. Given keys $k_1, k_2$, the prover initially treats each key separately (i.e., it prepares a product state). For each $k_i$, the prover prepares the state 
\begin{equation}
\frac{1}{\sqrt{2 \cdot \abs{\mX}}} \sum_{b \in \bits} \sum_{x \in \mX, \, y \in \mY} \sqrt{f_{k_i, b}(x)(y)} \ket{b} \ket{x} \ket{y} \,.
\end{equation}

Preparing this state can be efficiently done (up to negligible error) using the $\Samp$ procedure from the definition of ENTCF families (\cite[definition 3.1]{randomness} and \cite[definition 4.2]{mahadev}). The prover then measures the two ``image registers'' (i.e., the ones where $y$ is stored) to obtain images $y_1, y_2 \in \mY$ and sends these back to the verifier. The post-measurement for each $i \in \{1, 2\}$ is 
\begin{align}
\begin{cases}
\ket{\hat{b}(k_i, y_i)} \ket{\hat{x}(k_i, y_i)} & \text{if $k_i \in \kg$}\,, \\
\frac{1}{\sqrt{2}} \left( \ket{0} \ket{\hat{x}_0(k_i, y_i)} + \ket{1} \ket{\hat{x}_1(k_i, y_i)}  \right) & \text{if $k_i \in \kf$}\,.
\end{cases}
\label{eqn:hones_post_y_states}
\end{align}

If the verifier selects a preimage round, the prover measures both registers in the computational basis and returns the result. From the states in Equation \eqref{eqn:hones_post_y_states} it is clear that the prover succeeds with probability negligibly close to 1 in the preimage round.

If the verifier selects a Hadamard round, the prover measures both ``$x$-registers'' in the Hadamard basis to obtain strings $d_1, d_2$ and returns these to the verifier. 
We introduce the shorthand $b_i = \hat{b}(k_i, y_i)$ and $x_{b, i} = \hat{x}_b(k_i, y_i)$.
At this point, the prover's state for each $i \in \{1, 2\}$ is (up to a global phase)
\begin{equation}
\begin{cases}
\ket{b_i}  & \text{if $k_i \in \kg$} \,, \\
\ket{(-)^{d_i \cdot (x_{0, i} \oplus x_{1, i}) }}  & \text{if $k_i \in \kf$} \,.
\end{cases}
\end{equation}

Now the prover applies a controlled-$Z$ gate ($CZ$) between the two qubits (with $i = 1$ being the control and $i = 2$ being the target qubit). This results in the state (again up to global phases)
\begin{align}
\begin{cases}
\ket{b_1}\ket{b_2}  & \text{if $k_1, k_2 \in \kg$} \,; 
\\
\ket{b_1} \ket{(-)^{b_1 \oplus d_2 \cdot (x_{0, 2} \oplus x_{1, 2})}}  & \text{if $k_1 \in \kg$, $k_2 \in \kf$} \,;
\\
\ket{(-)^{b_2 \oplus d_1 \cdot (x_{0, 1} \oplus x_{1, 1})}} \ket{b_2}  & \text{if $k_1 \in \kf$, $k_2 \in \kg$} \,; 
\\
\ket{\phi_H^{(d_2 \cdot (x_{0, 2} \oplus x_{1, 2}), \, d_1 \cdot (x_{0, 1} \oplus x_{1, 1}))}} & \text{if $k_1, k_2 \in \kf$} \,,
\end{cases} \label{eqn:honest_prover_pure_states}
\end{align}
with the (Hadamard-rotated) Bell states 
\begin{equation}
\ket{\phi_H^{(a, b)}} = (\sigma_X^{a} \ot \sigma_X^{b}) (\ket{00} + \ket{01} + \ket{10} - \ket{11} ) \,.
\end{equation}

When the prover receives questions $q_1, q_2 \in \bits$ from the verifier, he measures each qubit individually in the computational (if $q_i = 0$) or Hadamard (if $q_i = 1$) basis and returns the outcomes $v_1, v_2$. For the first three cases in Equation \eqref{eqn:honest_prover_pure_states}, it is easy to see that the prover will be accepted. For the last case, this follows from 
\begin{align}
(\sigma_Z \ot \sigma_X) \ket{\phi^{(a, b)}_H} = (-1)^a \ket{\phi^{(a, b)}_H} \,, \\
(\sigma_X \ot \sigma_Z) \ket{\phi^{(a, b)}_H} = (-1)^b \ket{\phi^{(a, b)}_H} \,.
\end{align}
\end{proof}

\section{Soundness of self-testing protocol} \label{sec:soundness}
The goal of this section is to prove the soundness of the self-testing protocol in Figure \ref{fig:protocl}, i.e., to show that any computationally bounded prover that succeeds in the protocol must have prepared a Bell state and measured the individual qubits in the computational or Hadamard basis, up to a global change of basis. This statement is made formal in Theorem \ref{thm:soundness}. All statements in this section are under the assumption that no efficient quantum device can break the LWE assumption \cite{lwe} \changed{(for the same parameters as those used in \cite{randomness} and \cite{mahadev})}.  Informally, the main steps of the soundness proof are the following: 

\begin{enumerate}[label=\arabic*.]
\item We first formalise the actions of a quantum prover as a device (Section \ref{sec:devices}). A device is essentially a collection of states and measurements used by the prover to compute his answers to the verifier. These states and measurements are the ones that the verifier can characterise with the self-testing protocol. We then express the success probability of a device in terms of its states and measurements in section~\ref{sec:succ_prob}. 
\item In the self-testing protocol, the verifier chooses which type of function (claw-free or injective) to use. We show that because it is computationally hard to determine the function type given only the key, different states prepared by the prover for different key choices are computationally indistinguishable. This will allow us to use different key choices to characterise different aspects of the prover's behaviour, and ``lift'' these characterisations to another key choice using the lifting lemma (Lemma \ref{lem:lifting}). 
\item We show that different observables used by the prover either anti-commute (Section \ref{sec:anticomm}) or commute (Section \ref{sec:commutation}) on the prover's state.\footnote{This step is also common for proving self-testing results in the multi-prover model (see e.g. \cite[theorem 13]{linearity}) and provides the basis for showing that the prover's observables are approximately equal to Pauli operators (under some isometry).}
\item In the self-testing protocol, the prover gets two questions indicating the measurement bases for the first and second ``qubit'' (though at this point in the proof, we do not yet have a characterisation of the prover's states in terms of qubits). Depending on whether the bases for both ``qubits'' are the same or different, the prover's measurements are described by different observables, which we call ``non-tilde observables'' if the questions are the same, and ``tilde observables'' if they are different. To fully characterise the prover's measurements, we need to characterise both tilde and non-tilde observables. In particular, to analyse the ``Bell case'' in the protocol, the tilde observables are required. For technical reasons, characterising non-tilde observables is easier. Hence, the next step is to characterise the non-tilde observables as follows:
\begin{enumerate}
\item We define an isometry $V_S$ (Definition \ref{def:swap_isometry}), which is a single-prover version of the two-prover ``swap isometry'' in \cite{scarani-singlet}. 
The isometry is defined in terms of the prover's non-tilde observables.
This is the isometry for which we will show that it maps the prover's states and observables to the desired Bell states and two-qubit Pauli observables.
\item We show that under this isometry, the prover's observables on the first ``qubit'' are close to Pauli observables (Equation \eqref{eqn:V_z1} and Lemma \ref{lem:rounded_x_on_first_qubit}).
\item We use this characterisation of the observables to obtain a characterisation of the prover's first qubit in the ``test case'' of the self-testing protocol (Lemmas \ref{lem:first_qubit_factorises} and \ref{lem:first_qubit_same_alpha}).
\item We use the characterisation of the prover's first qubit to show that the prover's observables on the second qubit are also approximately equal to Pauli observables (Lemma \ref{lem:qubit2_operator_rounding}).\footnote{The main difficulty in dealing with observables on the second ``qubit'' is the following: the isometry $V_S$ is defined in terms of the prover's non-tilde observables $Z_1, X_1, Z_2, X_2$. In the isometry, the observables $Z_1$ and $X_1$ are applied first, followed by the observables $Z_2$ and $X_2$ (in addition to other operations involving ancilla qubits). This means that the observables $Z_2$ and $X_2$ do not act directly on the prover's state $\ket{\psi}$, but on a state of the form $X_1 Z_1 \ket{\psi}$. This prevents us from using the commutation and anti-commutation relations derived in step 3 directly, since they are in the state-dependent distance with respect to $\ket{\psi}$. Already having a characterisation of $Z_1, X_1$ and the prover's ``first qubit'' allows us to extend these commutation and anti-commutation relations to a state of the form $X_1 Z_1 \ket{\psi}$.}
\end{enumerate}
\item We show that non-tilde observables and tilde observables are approximately equal on the state (Lemma \ref{lem:tilde_non_tilde_equal}), and hence tilde observables are also close to Pauli observables under the isometry $V_S$ (Corollary \ref{lem:tilde_observable_factorization}).
\item We use the characterisation of the prover's observables to also characterise the prover's second ``qubit'' in the ``test case'' (Lemma \ref{lem:both_qubits_same_alpha}). The computational indistinguishability of the verifier's basis choices allows us to extend this characterisation to the Bell case (Corollary \ref{lem:sigma_approx_one_alpha}).
\item The characterisation of both of the prover's ``qubits'' allows us to show that products of the prover's observables are close to tensor products of Pauli observables (Lemma \ref{lem:products_rounded}).
\item Using this characterisation of products of observables, we can show that in the Bell case, the prover must have produced a Bell pair and measured its individual qubits in the computational or Hadamard basis (Theorem \ref{thm:soundness}).
\end{enumerate}

\subsection{Devices} \label{sec:devices}

We model the actions of a general prover by a ``device''. This formalises all possible actions that can be taken by the prover to compute his answers $y_1, y_2, d_1, d_2$, and $v_1, v_2$ to the verifier. By Naimark's theorem, up to adding dimensions to the prover's Hilbert space, we can assume without loss of generality that the prover only performs projective measurements (instead of more general POVMs).

\begin{definition}[Devices] \label{def:devices}
A device $D = (S, \Pi, M, P)$ is specified by the following:
\begin{enumerate}
\item A set $S = \{ \psi^{(\theta_1, \theta_2)} \}_{\theta_1, \theta_2 \in \bits}$ of states $\psi^{(\theta_1, \theta_2)} \in \mD(\H_D \ot \H_Y)$, where $\dim(\H_Y) = |\mY|^2$ and the states are classical on $\H_Y$:
\begin{equation}
\psi^{(\theta_1, \theta_2)} = \sum_{y_1, y_2 \in \mY} \psi^{(\theta_1, \theta_2)}_{y_1, y_2} \ot \proj{y_1, y_2}_Y \,.
\end{equation}
In the context of the self-testing protocol, $\psi^{(\theta_1, \theta_2)}$ is the prover's state after returning $y_1, y_2$ for the case where the verifier makes basis choices $\theta_1, \theta_2$. Each $\psi^{(\theta_1, \theta_2)}$ also implicitly depends on the specific keys chosen by the verifier (not just the key type); all the statements we make hold on average over key choices.
\item A projective measurement $\Pi$ on $\H_D \ot \H_Y$:
\begin{equation}
\Pi = \left\{ \Pi^{(b_1, x_1; \; b_2, x_2)} = \sum_{y_1, y_2} \Pi^{(b_1, x_1; \; b_2, x_2)}_{y_1, y_2} \ot \proj{y_1, y_2}_{Y} \right\}_{b_1, b_2 \in \bits; \; x_1, x_2 \in \mX} \,.
\end{equation}
This is the measurement used by the prover to compute his answer $(b_1, x_1; \; b_2, x_2)$ in the preimage challenge.
\item A projective measurement $M$ on $\H_D \ot \H_Y$ 
\begin{equation}
M = \left\{ M^{(d_1, d_2)} = \sum_{y_1, y_2} M^{(d_1, d_2)}_{y_1, y_2} \ot \proj{y_1, y_2}_{Y} \right\}_{d_1, d_2 \in \bits^w} \,.
\end{equation}
This is the measurement used by the prover to compute his answer $(d_1, d_2)$ in the Hadamard challenge.
We use an additional Hilbert spaces $\H_R$ to record the outcomes of measuring $M$ and write the post-measurement state after applying $M$ to $\psi^{(\theta_1, \theta_2)}$ as 
\begin{equation}
\sigma^{(\theta_1, \theta_2)} \deq \sum_{y_1, y_2, d_1, d_2} M^{(d_1, d_2)}_{y_1, y_2} \psi^{(\theta_1, \theta_2)}_{y_1, y_2} M^{(d_1, d_2)}_{y_1, y_2} \ot \proj{y_1, y_2; \; d_1, d_2}_{YR} \,. \label{eqn:def_sigma}
\end{equation}
\item A set $P = \{P_{0, 0}, P_{0, 1}, P_{1, 0}, P_{1, 1}\}$, where for each $q_1, q_2 \in \bits$, $P_{q_1, q_2}$ is a projective measurement on $\H_D \ot \H_Y \ot \H_R$:
\begin{equation}
P_{q_1, q_2} = \left\{P^{(v_1, v_2)}_{q_1, q_2} = \sum_{y_1, y_2, d_1, d_2} P^{(v_1, v_2)}_{q_1, q_2; \; y_1, y_2; \; d_1, d_2} \ot \proj{y_1, y_2; \; d_1, d_2}_{YR} \right\}_{v_1, v_2 \in \bits}\,.
\end{equation}
In the context of the self-testing protocol, given questions $q_1, q_2$, the prover will measure $\{P_{q_1, q_2}\}$ and return the outcomes $v_1, v_2$ as his answer.
\end{enumerate}
\end{definition}

\begin{remark}
Mirroring the above notation, for an operator $A$ on $\H_D \ot \H_Y \ot \H_R$ that is classical on $\H_Y$ and $\H_R$, we will write $A_{y_1, y_2, d_1, d_2}$ for the operators defined by 
\begin{equation}
A = \sum_{y_1, y_2, d_1, d_2} A_{y_1, y_2, d_1, d_2} \ot \proj{y_1, y_2;\, d_1, d_2} \,,
\end{equation}
and similarly for operators that are just defined on $\H_D \ot \H_Y$ or $\H_D \ot \H_R$.
\end{remark}

\begin{definition}[Efficient devices]
A device is called \emph{efficient} if the states $\psi^{(\theta_1, \theta_2)}$ can be prepared efficiently and the measurements $\Pi$, $M$, and $P_{q_1, q_2}$ can be performed efficiently (as defined in Definition \ref{def:eff_everything})
\end{definition}

\subsubsection{Marginal measurements}
In the standard self-testing scenario for a single Bell pair (as in e.g. \cite{scarani-singlet}), each prover returns a single bit. Therefore, for a fixed question, the measurement performed by each prover can be described by a binary observable. 
In contrast, in the single-prover setting, the prover sees both questions at once. 
Hence, for fixed questions $q_1, q_2$, its measurements are described by the 4-outcome measurements $\{P^{(v_1, v_2)}_{q_1, q_2}\}_{v_1, v_2}$. 
We relate the single-prover scenario to the two-prover scenario by defining marginal observables. 
These intuitively correspond to the observables used by each prover in the two-prover setting. 
However, in the single-prover setting, the observable used to obtain the first answer bit $v_1$ can also depend on the second question bit $q_2$. 
Therefore, there are two different sets of marginal observables: the ``non-tilde observables'', which result from marginalising over projectors with $q_1 = q_2$ ; and the ``tilde observables'', which result from marginalising over projectors with $q_1 \neq q_2$. 
A formal definition follows.
\begin{definition}[Marginal observables]\label{def:marginal_meas}
For a device $D = (S, \Pi, M, P)$, we define the following binary observables:
\begin{align*}
Z_{1} \deq \sum_{i, j} (-1)^i \, P^{(i, j)}_{0, 0} \,, \qquad 
Z_{2} \deq \sum_{i, j} (-1)^j \, P^{(i, j)}_{0, 0} \,, \qquad 
X_{1} \deq \sum_{i, j} (-1)^i \, P^{(i, j)}_{1, 1} \,, \qquad
X_{2} \deq \sum_{i, j} (-1)^j \, P^{(i, j)}_{1, 1} \,,
\\
\tilde{Z}_{1} \deq \sum_{i, j} (-1)^i \, P^{(i, j)}_{0, 1} \,, \qquad
\tilde{Z}_{2} \deq \sum_{i, j} (-1)^j \, P^{(i, j)}_{1, 0} \,, \qquad
\tilde{X}_{1} \deq \sum_{i, j} (-1)^i \, P^{(i, j)}_{1, 0} \,, \qquad
\tilde{X}_{2} \deq \sum_{i, j} (-1)^j \, P^{(i, j)}_{0, 1} \,.
\end{align*}
\end{definition}

\begin{remark}
By the same reasoning as in Lemma \ref{lem:meas_to_obs}, all of the above are efficient binary observables.
\end{remark}

\subsubsection{Partial post-measurement states}
In Equation \eqref{eqn:def_sigma}, we defined the prover's post-measurement state as
\begin{equation}
\sigma^{(\theta_1, \theta_2)} = \sum_{y_1, y_2, d_1, d_2} \underbrace{M^{(d_1, d_2)}_{y_1, y_2} \psi^{(\theta_1, \theta_2)}_{y_1, y_2} M^{(d_1, d_2)}_{y_1, y_2}}_{\deq \sigma^{(\theta_1, \theta_2)}_{y_1, y_2; \; d_1, d_2}} \ot \proj{y_1, y_2; \; d_1, d_2}_{R} \,.
\end{equation} 
Depending on the values of $y_1, y_2, d_1, d_2$, the prover has to give different answers $v_1, v_2$ to the verifier in order to be accepted in the self-testing protocol. For the analysis, it will be useful to split the state $\sigma^{(\theta_1, \theta_2)}$ according to these correct answers. A formal definition follows.

\begin{definition}[Notation for partial post-measurement states]\label{def:notation_intermediate_states}
We define the following states:
\begin{align}
\sigma^{(0, v_1; \; 0, v_2)} &= \sum_{d_1, d_2} \; \sum_{\substack{y_1 :~ \hat{b}(k_1, y_1) = v_1 \\ y_1 :~ \hat{b}(k_2, y_2) = v_2}} \sigma^{(0, 0)}_{y_1, y_2; \; d_1, d_2} \ot \proj{y_1, y_2, d_1, d_2} \label{eqn:def_sigmav_00}
\\
\sigma^{(0, v_1; \; 1, v_2)} &= \sum_{d_1, y_2} \; \sum_{\substack{y_1 :~ \hat{b}(k_1, y_1) = v_1 \\ d_2 :~ \hat{u}(k_2, y_2, d_2) = v_2 \oplus \hat{b}(k_1, y_1)}} \sigma^{(0, 1)}_{y_1, y_2; \; d_1, d_2} \ot \proj{y_1, y_2, d_1, d_2} \label{eqn:def_sigmav_01}
\\
\sigma^{(1, v_1; \; 0, v_2)} &= \sum_{y_1, d_2} \; \sum_{\substack{d_1 :~ \hat{u}(k_1, y_1, d_1) = v_1 \oplus \hat{b}(k_2, y_2) \\ y_2 :~ \hat{b}(k_2, y_2) = v_2}} \sigma^{(1, 0)}_{y_1, y_2; \; d_1, d_2} \ot \proj{y_1, y_2, d_1, d_2} \label{eqn:def_sigmav_10}
\\
\sigma^{(1, s_1; \; 1, s_2)} &= \sum_{y_1, y_2} \; \sum_{\substack{d_1 :~ \hat{u}(k_1, y_1, d_1) = s_2 \\ d_2 :~ \hat{u}(k_2, y_2, d_2) = s_1}} \sigma^{(1, 1)}_{y_1, y_2; \; d_1, d_2} \ot \proj{y_1, y_2, d_1, d_2} \label{eqn:def_sigmav_11}
\end{align}
Note that for any $\theta_1, \theta_2 \in \bits$:
\begin{equation}
\sigma^{(\theta_1, \theta_2)} = \sum_{v_1, v_2} \sigma^{(\theta_1, v_1; \; \theta_2, v_2)} \,.
\end{equation}
\end{definition}

\begin{remark}
This indexing scheme has slightly different interpretations for the test case and the Bell case. In the test case, i.e., for $(\theta_1, \theta_2) = (0, 1)$ or $(\theta_1, \theta_2) = (1, 0)$, $v_1$ and $v_2$ are the bits that the prover has to return to pass both of the verifier's checks (provided the questions ask for a measurement in the basis chosen at the start, i.e., $q_i = \theta_i$).
In the Bell case, only the sum $v_1 \oplus v_2$ of the bits returned by the prover is checked. Here, $\sigma^{(1, s_1; \; 1, s_2)}$ is that part of the state for which $s_1$ is the accepted sum $v_1 \oplus v_2$ on question $(0, 1)$, and $s_2$ is the accepted sum on question $(1, 0)$.

In particular, for the \emph{honest} prover, we have (after tracing out the classical registers $Y$ and $R$):
\begin{align*}
{\rm Tr}_{YR} \left[ \sigma^{(0, v_1; \; 0, v_2)} \right] &= \frac{1}{4} \proj{v_1, v_2} \,,
\\
{\rm Tr}_{YR} \left[ \sigma^{(0, v_1; \; 1, v_2)} \right] &= \frac{1}{4} \proj{v_1, (-)^{v_2}} \,,
\\
{\rm Tr}_{YR} \left[ \sigma^{(1, v_1; \; 0, v_2)} \right] &= \frac{1}{4} \proj{(-)^{v_1}, v_2} \,,
\\
{\rm Tr}_{YR} \left[ \sigma^{(1, s_1; \; 1, s_2)} \right] &= \frac{1}{4} \proj{\phi_H^{(s_1, s_2)}} \,, \qquad \phi_H^{(s_1, s_2)} = (\sigma_X^{s_1} \ot \sigma_X^{s_2}) (\ket{00} + \ket{01} + \ket{10} - \ket{11} ) \,.
\end{align*}
\end{remark}

We now prove a simple technical lemma which we will frequently use in the soundness proof.

\begin{lemma} \label{lem:individual_succ_prob_for_v}
Let $D = (S, \Pi, M, P)$ be an efficient device. For any binary observable $O$, bits $\theta_1, \theta_2 \in \bits$, $i \in \{1, 2\}$, and $\eps \geq 0$: 
\begin{equation}
\sum_{v_1, v_2} \tr{ O^{(v_i)} \sigma^{(\theta_1, v_1; \; \theta_2, v_2)}} \approx_{\eps} 1
\implies
\forall \, v_1, v_2 \in \bits: \; \tr{ O^{(v_i)} \sigma^{(\theta_1, v_1; \; \theta_2, v_2)}} \approx_{\eps} \tr{\sigma^{(\theta_1, v_1; \; \theta_2, v_2)}} \,.
\end{equation}
\end{lemma}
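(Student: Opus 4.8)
The plan is to exploit two structural facts recorded in Definition~\ref{def:notation_intermediate_states}: first, that $\sigma^{(\theta_1, \theta_2)} = \sum_{v_1, v_2} \sigma^{(\theta_1, v_1; \; \theta_2, v_2)}$, and since $\sigma^{(\theta_1,\theta_2)}$ is obtained from the normalized state $\psi^{(\theta_1,\theta_2)}$ by applying the projective measurement $M$ (cf.\ Equation~\eqref{eqn:def_sigma}), it is itself a density operator, so $\sum_{v_1, v_2} \tr{\sigma^{(\theta_1, v_1; \; \theta_2, v_2)}} = 1$; and second, that each partial post-measurement state $\sigma^{(\theta_1, v_1; \; \theta_2, v_2)}$ is a sum of terms of the form $M^{(d_1,d_2)}_{y_1,y_2}\psi^{(\theta_1,\theta_2)}_{y_1,y_2}M^{(d_1,d_2)}_{y_1,y_2}\ot\proj{\cdots}$ and is therefore positive semidefinite. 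No properties of efficient devices are needed for this statement; it is purely a consequence of positivity.

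First I would record the per-term upper bound: since $O^{(v_i)}$ is a projector, $\1 - O^{(v_i)} \geq 0$, and hence for every $v_1, v_2 \in \bits$ we have $0 \leq \tr{O^{(v_i)} \sigma^{(\theta_1, v_1; \; \theta_2, v_2)}} \leq \tr{\sigma^{(\theta_1, v_1; \; \theta_2, v_2)}}$. Define the (non-negative) defect $\delta_{v_1, v_2} \deq \tr{\sigma^{(\theta_1, v_1; \; \theta_2, v_2)}} - \tr{O^{(v_i)} \sigma^{(\theta_1, v_1; \; \theta_2, v_2)}}$. Summing over $v_1, v_2$ and using the two facts above, $\sum_{v_1, v_2} \delta_{v_1, v_2} = 1 - \sum_{v_1, v_2} \tr{O^{(v_i)} \sigma^{(\theta_1, v_1; \; \theta_2, v_2)}}$, which by the hypothesis $\sum_{v_1,v_2}\tr{O^{(v_i)}\sigma^{(\theta_1,v_1;\theta_2,v_2)}} \approx_{\eps} 1$ (together with the one-sided bound $\le 1$ just established, unwinding Definition~\ref{def:approx_dist}) is $O(\eps) + \negl(\lambda)$. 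Since there are only four summands and each is non-negative, each $\delta_{v_1, v_2}$ is individually $O(\eps) + \negl(\lambda)$, which is precisely the claimed approximate equality $\tr{O^{(v_i)} \sigma^{(\theta_1, v_1; \; \theta_2, v_2)}} \approx_{\eps} \tr{\sigma^{(\theta_1, v_1; \; \theta_2, v_2)}}$.

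There is essentially no obstacle here; the only things to handle carefully are the bookkeeping conventions for the $\approx_{\eps}$ notation — that $a \approx_{\eps} 1$ with $a \leq 1$ forces $1 - a = O(\eps) + \negl(\lambda)$, and that a bound of $O(\eps)+\negl(\lambda)$ on the sum of constantly many non-negative quantities transfers to each quantity individually — both of which are immediate from Definition~\ref{def:approx_dist}.
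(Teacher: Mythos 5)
Your argument is correct and is essentially identical to the paper's own proof: both define the non-negative defects $\delta_{v_1,v_2} = \tr{\sigma^{(\theta_1, v_1;\; \theta_2, v_2)}} - \tr{O^{(v_i)}\sigma^{(\theta_1, v_1;\; \theta_2, v_2)}}$, use normalisation of $\sigma^{(\theta_1,\theta_2)}$ to bound their sum by $O(\eps)+\negl(\lambda)$, and conclude term by term from positivity. No issues.
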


\begin{proof}
Fix $\theta_1, \theta_2$, and $i$. For any $v_1, v_2 \in \bits$, $O^{(v_i)}$ is a projector and therefore less than or equal to $\1$, and $\sigma^{(\theta_1, v_1; \; \theta_2, v_2)}$ is positive for any $v_i$. Hence, there exist $\delta_{v_1, v_2} \geq 0$ such that
\begin{equation}
\tr{O^{(v_i)} \sigma^{(\theta_1, v_1; \; \theta_2, v_2)}} = \tr{\sigma^{(\theta_1, v_1; \; \theta_2, v_2)}} - \delta_{v_1, v_2} \,.
\end{equation}
Summing over $v_1, v_2$ on both sides and using that $\sigma^{(\theta_1, \theta_2)} = \sum_{v_1, v_2}\sigma^{(\theta_1, v_1; \; \theta_2, v_2)}$ is normalised, we get 
\begin{align}
\sum_{v_1, v_2} \tr{O^{(v_i)} \sigma^{(\theta_1, v_1; \; \theta_2, v_2)}}
&= 1 - \sum_{v_1, v_2} \delta_{v_1, v_2} \,.
\end{align}
By assumption, the left hand side is $\approx_{\eps} 1$, so $\sum_{v_1, v_2} \delta_{v_1, v_2} \approx_{\eps} 0$. Since all $\delta_{v_1, v_2}$ are positive, this implies $\delta_{v_1, v_2} \approx_{\eps} 0$ for all $v_1, v_2$, which completes the proof.
\end{proof}

\begin{corollary} \label{lem:approx_one_on_individual_v}
Let $D = (S, \Pi, M, P)$ be an efficient device. For any binary observable $O$, bits $\theta_1, \theta_2 \in \bits$, $i \in \{1, 2\}$, and $\eps \geq 0$: 
\begin{equation}
\sum_{v_1, v_2} \tr{ O^{(v_i)} \sigma^{(\theta_1, v_1; \; \theta_2, v_2)}} \approx_{\eps} 1
\implies
\forall \, v_1, v_2 \in \bits: \; 
O \approx_{\eps, \, \sigma^{(\theta_1, v_1; \; \theta_2, v_2)}} (-1)^{v_i} \1 \,.
\end{equation}
\end{corollary}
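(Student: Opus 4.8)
The plan is to obtain the conclusion by combining the two preceding results: Lemma~\ref{lem:individual_succ_prob_for_v}, which turns the summed success bound into a per-$(v_1,v_2)$ bound, and Lemma~\ref{lem:observable_approx_one}, which turns an ``almost-certain outcome'' statement into a state-dependent distance bound on the associated observable.

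First I would invoke Lemma~\ref{lem:individual_succ_prob_for_v}: the hypothesis $\sum_{v_1,v_2}\tr{O^{(v_i)}\sigma^{(\theta_1,v_1;\,\theta_2,v_2)}}\approx_\eps 1$ gives, for every fixed $v_1,v_2\in\bits$,
\begin{equation}
\tr{O^{(v_i)}\sigma^{(\theta_1,v_1;\,\theta_2,v_2)}} \approx_\eps \tr{\sigma^{(\theta_1,v_1;\,\theta_2,v_2)}} \,.
\end{equation}
Then I would fix $v_1,v_2$ and apply Lemma~\ref{lem:observable_approx_one} with $\psi=\sigma^{(\theta_1,v_1;\,\theta_2,v_2)}$, the two-outcome projective measurement $\{O^{(0)},O^{(1)}\}$ (which is projective since $O$ is a binary observable, so $O = O^{(0)} - O^{(1)} = \sum_{a\in\bits}(-1)^a O^{(a)}$), index set $\mA=\bits$, sign pattern $s_a = a$, and distinguished outcome $a' = v_i$. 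The displayed near-equality above is exactly the hypothesis $\tr{M^{(a')}\psi}\approx_\eps\tr{\psi}$ of that lemma, so its conclusion yields $O \approx_{\eps,\,\sigma^{(\theta_1,v_1;\,\theta_2,v_2)}} (-1)^{s_{a'}}\1 = (-1)^{v_i}\1$, as desired. Since $v_1,v_2$ were arbitrary, this holds for all $v_1,v_2\in\bits$.

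There is essentially no obstacle here: the corollary is a direct chaining of the two lemmas, and the only thing to get right is the bookkeeping of which measurement outcome plays the role of $a'$ (namely $v_i$, the bit read off the $i$-th register) and that the sign convention $s_a=a$ makes $(-1)^{s_{a'}}=(-1)^{v_i}$. One minor point worth stating explicitly is that $\{O^{(0)},O^{(1)}\}$ is indeed a \emph{projective} measurement, which is immediate from $O$ being a binary observable, so Lemma~\ref{lem:observable_approx_one} applies verbatim.
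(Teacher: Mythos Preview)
Your proposal is correct and matches the paper's own proof, which simply states that the corollary follows immediately by combining Lemma~\ref{lem:observable_approx_one} and Lemma~\ref{lem:individual_succ_prob_for_v}. You have spelled out the bookkeeping in slightly more detail than the paper does, but the argument is identical.
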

\begin{proof}
This follows immediately by combining Lemmas \ref{lem:observable_approx_one} and \ref{lem:individual_succ_prob_for_v}.
\end{proof}

\subsection{Success probabilities of a device} \label{sec:succ_prob}
During the self-testing protocol, the verifier applies certain checks to the answers given by the prover. 
If the prover fails these checks, the verifier sets $\mathtt{flag}$ to $\mathtt{fail_{Pre}}$, $\mathtt{fail_{Test}}$, or $\mathtt{fail_{Bell}}$. 
Here, we relate the probabilities that the prover passes these checks to the states and measurements used in the definition of devices (Definition \ref{def:devices}).

\begin{lemma}[Success probability]\label{lem:success_prob}
Let $D = (S, \Pi, M, P)$ be a device. 
\begin{enumerate}
\item {\bf Preimage check:}  We define
\begin{align}
1 - \gamma_P(D) = \min P \,, \label{eqn:def_gamma_p}
\end{align}
with
\begin{align*}
P = \bigg(
&\sum_{y_1, y_2, b_2, x_2} \tr{\Pi_{y_1, y_2}^{(\hat{b}(k_1, y_1), \hat{x}(k_1, y_1); \; b_2, x_2)} \psi^{(0, \theta)}_{y_1, y_2}} \,, 
\sum_{y_1, y_2, b_1, x_1} \tr{\Pi_{y_1, y_2}^{(b_1, x_1; \; \hat{b}(k_2, y_2), \hat{x}(k_2, y_2))} \psi^{(\theta, 0)}_{y_1, y_2}} \,, \\
&\sum_{y_1, y_2, b_1, x_1, b} \tr{\Pi_{y_1, y_2}^{(b_1, x_1; \; b, \hat{x}_b(k_2, y_2))} \psi^{(\theta, 1)}_{y_1, y_2}} \,, 
\sum_{y_1, y_2, b_2, x_2, b} \tr{\Pi_{y_1, y_2}^{(b, \hat{x}_b(k_1, y_1); \; b_2, x_2)} \psi^{(1, \theta)}_{y_1, y_2}} 
\bigg)_{\theta \in \bits} \,.
\end{align*}
Then, $\gamma_P(D)$ is bounded by the prover's success probability as follows: 
\begin{equation}
\gamma_P(D) \leq 8 \cdot \pr{\mathtt{flag} = \mathtt{fail_{Pre}}} \,.
\end{equation}
Here, $k_1$ and $k_2$ are the keys chosen by the verifier. (As always, we take an implicit expectation value over the keys chosen by the verifier.)
\item {\bf Test case:} We define
\begin{align}
1 - \gamma_T(D) = \min T \,, \label{eqn:def_gamma_t}
\end{align}
with
\begin{align*}
T = \bigg( &
\tr{\sum_{v_1, v_2} Z_1^{(v_1)} \sigma^{(0, v_1; \; 1, v_2)}}, 
\tr{\sum_{v_1, v_2} \tilde{Z}_1^{(v_1)} \sigma^{(0, v_1; \; 1, v_2)}}, 
\\&
\tr{\sum_{v_1, v_2} X_1^{(v_1)} \sigma^{(1, v_1; \; 0, v_2)}}, 
\tr{\sum_{v_1, v_2} \tilde{X}_1^{(v_1)} \sigma^{(1, v_1; \; 0, v_2)}},
\\&
\tr{\sum_{v_1, v_2} Z_2^{(v_2)} \sigma^{(1, v_1; \; 0, v_2)}},
\tr{\sum_{v_1, v_2} \tilde{Z}_2^{(v_2)} \sigma^{(1, v_1; \; 0, v_2)}},
\\ &
\tr{\sum_{v_1, v_2} X_2^{(v_2)} \sigma^{(0, v_1; \; 1, v_2)}},
\tr{\sum_{v_1, v_2} \tilde{X}_2^{(v_2)} \sigma^{(0, v_1; \; 1, v_2)}}
\bigg) \numberthis
\end{align*}
Then, $\gamma_T(D)$ is bounded by the prover's success probability as follows: 
\begin{equation}
\gamma_T(D) \leq 8 \cdot \pr{\mathtt{flag} = \mathtt{fail_{Test}}} \,.
\end{equation}
\item {\bf Bell case:} We define
\begin{align}
1 - \gamma_B(D) = \min \bigg(
\sum_{s_1, s_2} \tr{(\tilde{Z}_1 \tilde{X}_2)^{(s_1)} \sigma^{(1, s_1; 1, s_2)}}, 
\sum_{s_1, s_2} \tr{(\tilde{X}_1 \tilde{Z}_2)^{(s_2)} \sigma^{(1, s_1; 1, s_2)}}
\bigg) \,. \numberthis \label{eqn:def_gamma_b}
\end{align}
Then, $\gamma_B(D)$ is bounded by the prover's success probability as follows: 
\begin{equation}
\gamma_B(D) \leq 2 \cdot \pr{\mathtt{flag} = \mathtt{fail_{Bell}}} \,.
\end{equation}
\end{enumerate}
\end{lemma}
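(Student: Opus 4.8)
The plan is to prove all three parts by one and the same union bound over the verifier's random choices. By construction, each coordinate inside the $\min$ that defines $1-\gamma_P(D)$, $1-\gamma_T(D)$ or $1-\gamma_B(D)$ is exactly the probability --- in expectation over the verifier's choice of keys --- that one particular check of the protocol of Figure~\ref{fig:protocl} is passed, \emph{conditioned} on the verifier having reached the branch (basis bits $\theta_1,\theta_2$; round type; and, in a Hadamard round, the question pair $q_1,q_2$) in which that check is applied. Hence $\gamma_\bullet(D)=1-\min(\cdots)$ is the largest such conditional failure probability, attained on some branch which the verifier enters with a fixed probability $p>0$ determined solely by which of its independent uniform coins are pinned down in that branch; and the corresponding value of $\mathtt{flag}$ is set exactly when that branch is entered and the check fails. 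Therefore $\pr{\mathtt{flag}=\mathtt{fail_{Pre}}}\ge p\cdot\gamma_P(D)$, and likewise for the test and Bell cases, which rearranges to the claimed inequalities; only $\gamma_\bullet(D)=O(\text{failure probability})$ is used in the sequel, so the precise value of $1/p$ is not essential.

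Concretely, for part (i): using Definition~\ref{def:decoding_maps}, $\Chk(k_i,y_i,b_i,x_i)=1$ iff $(b_i,x_i)=(\hat b(k_i,y_i),\hat x(k_i,y_i))$ for an injective key $k_i\in\kg$, and iff $x_i=\hat x_{b_i}(k_i,y_i)$ for a claw-free key $k_i\in\kf$. For a fixed basis choice $(\theta_1,\theta_2)$ the preimage check passes iff the position-$1$ \emph{and} the position-$2$ sub-checks both pass, so its passing probability is at most that of the position-$i$ sub-check alone for either $i$, and each of these is one of the eight coordinates of $P$ (the one with the matching $\theta$ and the matching injective/claw-free form). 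Taking $(\theta_1,\theta_2)$ to be the basis attaining $\gamma_P(D)=\max_j(1-P_j)$, the failure probability conditioned on this basis and a preimage round is $\ge\gamma_P(D)$; since such a branch is reached with probability $\tfrac{1}{2}\cdot\tfrac{1}{4}=\tfrac{1}{8}$ and $\mathtt{flag}=\mathtt{fail_{Pre}}$ occurs exactly when a preimage round is chosen and the check fails, we get $\pr{\mathtt{flag}=\mathtt{fail_{Pre}}}\ge\tfrac{1}{8}\gamma_P(D)$.

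Parts (ii) and (iii) follow the same template but require more unwinding. For the test case I would expand the Hadamard-round checks for $(\theta_1,\theta_2)\in\{(0,1),(1,0)\}$ using the decoding maps $\hat b,\hat u$ of Definition~\ref{def:decoding_maps}, and then, for each question pair $(q_1,q_2)$ on which a check is applied, recognise the probability the prover passes that check as a trace $\sum_{v_1,v_2}\tr{O^{(v_i)}\sigma^{(\theta_1,v_1;\,\theta_2,v_2)}}$, where $O\in\{Z_i,X_i,\tilde{Z}_i,\tilde{X}_i\}$ is the marginal observable of Definition~\ref{def:marginal_meas} attached to $(q_1,q_2)$ and the ``$v_i$ is correct'' constraints (e.g.\ $\hat b(k_1,y_1)=v_1$) are precisely those encoded into the partial post-measurement states of Definition~\ref{def:notation_intermediate_states}; here one uses identities such as $Z_1^{(v_1)}=\sum_j P^{(v_1,j)}_{0,0}$ and the fact that summing out the \emph{un}checked bit $v_j$ relaxes the corresponding decoding constraint in $\sigma^{(\theta_1,v_1;\,\theta_2,v_2)}$ back to a sum over all of its preimages. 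This matches each coordinate of $T$ with a concrete (basis, Hadamard round, question) branch, and an analogous count of the verifier's coins yields the stated bound. For the Bell case one argues likewise after observing that $\tilde{Z}_1,\tilde{X}_2$ (and, separately, $\tilde{X}_1,\tilde{Z}_2$) are both functions of a single projective measurement --- $\{P^{(i,j)}_{0,1}\}$, respectively $\{P^{(i,j)}_{1,0}\}$ --- and hence commute, so that $(\tilde{Z}_1\tilde{X}_2)^{(s_1)}=\sum_{i\oplus j=s_1}P^{(i,j)}_{0,1}$ is well defined and is the projector onto the event ``the reported parity $v_1\oplus v_2$ equals $s_1$''; combined with the $\hat u(k_2,y_2,d_2)=s_1$ constraint encoded in $\sigma^{(1,s_1;\,1,s_2)}$, the first Bell coordinate is exactly the probability of passing the Bell check on question $(q_1,q_2)=(0,1)$ and the second on $(1,0)$ --- and these are the only two questions on which the Bell check is active.

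The argument has no conceptual difficulty, and the only place errors can enter is the bookkeeping in parts (ii)--(iii): translating each verifier check of Figure~\ref{fig:protocl} into an expression in the device's projectors $P^{(v_1,v_2)}_{q_1,q_2}$, $\Pi^{(\cdots)}$ and $M^{(d_1,d_2)}$ while correctly tracking the $\oplus\,\hat b(k_1,y_1)$-type offsets that the honest prover's $CZ$ gate forces into those checks, and verifying that, after summing out the appropriate $v_j$, the ``correct-answer slices'' $\sigma^{(\theta_1,v_1;\,\theta_2,v_2)}$ align with the marginal-observable projectors. The one genuinely non-clerical point is the commutativity that makes the joint-parity projectors $(\tilde{Z}_1\tilde{X}_2)^{(s_1)}$ and $(\tilde{X}_1\tilde{Z}_2)^{(s_2)}$ meaningful in the Bell case, which holds precisely because each of those pairs of marginal observables is obtained from one common four-outcome measurement.
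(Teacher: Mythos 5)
Your proposal is correct and follows essentially the same route as the paper: each coordinate of $P$, $T$, and the Bell tuple is read as the probability of passing one specific check of the protocol in Figure~\ref{fig:protocl} on one specific branch of the verifier's uniform coin tosses, and the bound is then a direct counting argument (the paper phrases it as ``one minus the average of the coordinates'' and bounds the remaining coordinates by $1$, which is equivalent to your worst-branch conditioning). One bookkeeping caveat: in the test and Bell cases the worst branch (basis, Hadamard round, question) is reached with probability $\tfrac14\cdot\tfrac12\cdot\tfrac14=\tfrac1{32}$, so your count as written gives $\gamma_T(D)\leq 32\cdot\pr{\mathtt{flag}=\mathtt{fail_{Test}}}$ and $\gamma_B(D)\leq 32\cdot\pr{\mathtt{flag}=\mathtt{fail_{Bell}}}$ rather than the stated factors $8$ and $2$ (which implicitly correspond to conditioning on the relevant round type, basis, and, for Bell, a checked question); the paper's own two-line argument is equally loose here, and since only $\gamma_\bullet(D)=O(\text{failure probability})$ is used downstream this changes nothing, but you should not assert that the ``analogous count yields the stated bound'' without either making that conditioning explicit or adjusting the constant.
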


\begin{proof}
For a tuple $P$, we denote its elements by $P_j$.
\begin{enumerate}
\item From the self-testing protocol, the definition of the $\Chk$-procedure (Definition \ref{def:decoding_maps}), and the fact that the verifier chooses bases $\theta_1, \theta_2$ and questions $q_1, q_2$ uniformly at random, it is clear that
\begin{align}
\pr{\mathtt{flag} \geq \mathtt{fail_{Pre}}} = 1 - \frac{1}{8} \sum_{j = 1}^8 P_j \,.
\end{align}
Because each element of $P$ is upper-bounded by 1, this implies
\begin{equation}
\pr{\mathtt{flag} = \mathtt{fail_{Pre}}} \geq 1 - \frac{7}{8} - \frac{1}{8} \, \min P  = \frac{\gamma_P(D)}{8} \,.
\end{equation}
\item From the self-testing protocol and the definition of $\sigma^{(\theta_1, v_1; \; \theta_2, v_2)}$, it is clear that because the verifier chooses bases $\theta_1, \theta_2$ and questions $q_1, q_2$ uniformly at random, and $\mathtt{flag} = \mathtt{fail_{Test}}$ occurs if the verifier's check fails on at least one answer $v_i$, we have
\begin{equation}
\pr{\mathtt{flag} = \mathtt{fail_{Test}}} \geq 1 - \frac{1}{8} \, \sum_{j = 1}^8 T_{j} \,.
\end{equation}
The result now follows as in (i).
\item In the Bell case, the verifier checks one of 
\begin{equation}
\tilde{Z}_1 \tilde{X}_2 = P^{(0, 0)}_{0, 1} - P^{(0, 1)}_{0, 1} - P^{(1, 0)}_{0, 1} + P^{(1, 1)}_{0, 1} \,, \qquad \tilde{X}_1 \tilde{Z}_2 = P^{(0, 0)}_{1, 0} - P^{(0, 1)}_{1, 0} - P^{(1, 0)}_{1, 0} + P^{(1, 1)}_{1, 0}
\end{equation}
uniformly at random. Considering the self-testing protocol and the definition of $\sigma^{(1, s_1; \; 1, s_2)}$, the result now follows as in (i).
\end{enumerate}
\end{proof}

\begin{remark}
Throughout the entire soundness proof, we will implicitly assume that for the device $D$ under consideration, the quantities $\gamma_P(D)$, $\gamma_T(D)$, and $\gamma_B(D)$ are bounded away from 1 by a non-negligible amount. Indeed,  in the case where one of these quantities is $1 - \negl(\lambda)$, our soundness result (Theorem \ref{thm:soundness}) trivially holds. To see this, note that in this case, for any constant $c \geq 0$, 
\begin{equation*}
\gamma_P(D)^c + \gamma_T(D)^c + \gamma_B(D)^c \geq 1 - \negl(\lambda) \,.
\end{equation*} 
Therefore, recalling that the definition of approximate equality (Definition \ref{def:approx_dist}) always includes a term $\negl(\lambda)$, we see that
\begin{equation*}
\rho \approx_{\gamma_P(D)^c + \gamma_T(D)^c + \gamma_B(D)^c} \sigma
\end{equation*}
holds for any two (potentially subnormalised) quantum states $\rho$ and $\sigma$, since the trace distance $\lVert \rho - \sigma \rVert_1$ is always at most $2 = O(1)$.
Hence, the statements of Theorem \ref{thm:soundness} are trivially satisfied.
\label{rem:negl_success}
\end{remark}

\subsubsection{Reduction to perfect device} \label{sec:reduction_to_perfect}
The purpose of this section is to show that for the rest of the soundness proof, we can restrict ourselves to devices that pass the preimage round of the protocol with probability $1 - \negl(\lambda)$. This is primarily a technical convenience that simplifies the arguments in later parts of the proof.

\begin{definition}[Perfect device]
We call a device $D$ \emph{perfect} if $\gamma_P(D) = \negl(\lambda)$.
\end{definition}

The following lemma says that for any \emph{efficient} device $D$, there exists another  \emph{efficient perfect} device $D'$, which uses the same measurements as $D$, and whose initial state is close to the initial state of $D$. 
Since we are ultimately interested in characterising the states and measurements of a device, this will allow us to first replace the arbitrary device by a perfect one, characterise the states and measurements of the perfect device, and finally argue that this characterisation also applies (up to some error) to the arbitrary device. 

\begin{lemma} \label{lem:reduction_to_perfect}
Let $D = (S, \Pi, M, P)$ be an efficient device with $\gamma_P(D) < 1$, where $S = \left\{ \psi^{(\theta_1, \theta_2)} \right\}$. Then there exists an efficient perfect device $D' = (S', \Pi, M, P)$, which uses the same measurements $\Pi, M, P$ and whose states $S' = \left\{ \psi'^{(\theta_1, \theta_2)} \right\}$ satisfy for any $\theta_1, \theta_2 \in \bits$:
\begin{equation}
\norm{\psi'^{(\theta_1, \theta_2)} - \psi^{(\theta_1, \theta_2)}}_1 \approx_{\gamma_P(D)^{1/2}} 0 \,.
\end{equation}
\end{lemma}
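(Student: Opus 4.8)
The plan is to construct $D'$ explicitly by "conditioning" the device $D$ on passing the preimage check, and then verify that (a) the resulting device is still efficient, (b) it is perfect, and (c) its states are close to those of $D$. The key observation is that the preimage check is a projective measurement applied to the state $\psi^{(\theta_1,\theta_2)}$: namely, for each basis choice $(\theta_1,\theta_2)$, there is a projector $\Lambda^{(\theta_1,\theta_2)}$ (built out of the $\Pi$-measurement operators appearing in the definition of $\gamma_P$ in Lemma~\ref{lem:success_prob}(i)) whose acceptance probability on $\psi^{(\theta_1,\theta_2)}$ is one of the eight terms $P_j$ of the tuple $P$. Since $D$ is efficient, the measurement $\Pi$ is efficient and hence this projector can be applied efficiently.

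\medskip

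\noindent\textbf{Construction.} For each $\theta_1,\theta_2$, let $p_{\theta_1,\theta_2} = \tr{\Lambda^{(\theta_1,\theta_2)} \psi^{(\theta_1,\theta_2)}}$ be the probability that $D$ passes the corresponding preimage check; note $p_{\theta_1,\theta_2} \geq 1 - \gamma_P(D) > 0$ by assumption. Define the new states by post-selecting on success:
\begin{equation}
\psi'^{(\theta_1,\theta_2)} \deq \frac{1}{p_{\theta_1,\theta_2}} \Lambda^{(\theta_1,\theta_2)} \, \psi^{(\theta_1,\theta_2)} \, \Lambda^{(\theta_1,\theta_2)} \,.
\end{equation}
This is a valid normalised state, and the device $D' = (S', \Pi, M, P)$ uses the same measurements as $D$. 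To see that $D'$ is efficient: preparing $\psi'^{(\theta_1,\theta_2)}$ can be done by preparing $\psi^{(\theta_1,\theta_2)}$, applying the efficient measurement $\Pi$, checking (classically, using the honest verifier's check which is efficient given the trapdoor — here we use that the soundness reduction is allowed to be non-uniform / has the trapdoor, or alternatively absorb this into the device's own efficient preparation using rejection sampling) whether the preimage check would pass, and retrying until it does; since $p_{\theta_1,\theta_2}$ is non-negligible this expected-polynomial-time procedure can be truncated to strict polynomial time at the cost of negligible error. That $D'$ is perfect, i.e.\ $\gamma_P(D') = \negl(\lambda)$, is immediate: by construction $\psi'^{(\theta_1,\theta_2)}$ is supported in the image of $\Lambda^{(\theta_1,\theta_2)}$, so each of the eight terms defining $\gamma_P(D')$ equals $1$ (up to the negligible truncation error), hence $\gamma_P(D') = \negl(\lambda)$.

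\medskip

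\noindent\textbf{Closeness of states.} It remains to show $\norm{\psi'^{(\theta_1,\theta_2)} - \psi^{(\theta_1,\theta_2)}}_1 \approx_{\gamma_P(D)^{1/2}} 0$. This is a standard gentle-measurement-type estimate: since $\tr{\Lambda^{(\theta_1,\theta_2)} \psi^{(\theta_1,\theta_2)}} = p_{\theta_1,\theta_2} \geq 1 - \gamma_P(D)$, the gentle measurement lemma (e.g.\ \cite[Lemma 9.4.1]{wilde}) gives $\norm{\Lambda^{(\theta_1,\theta_2)} \psi^{(\theta_1,\theta_2)} \Lambda^{(\theta_1,\theta_2)} - \psi^{(\theta_1,\theta_2)}}_1 = O(\sqrt{\gamma_P(D)})$, and renormalising by $p_{\theta_1,\theta_2} = 1 - O(\gamma_P(D))$ changes this by at most another $O(\gamma_P(D))$ term. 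Combining via the triangle inequality yields the claimed bound.

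\medskip

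\noindent\textbf{Main obstacle.} The one subtlety I expect to need care is the efficiency of $D'$: the post-selection is on the event that the \emph{verifier} accepts the preimage round, which in general requires the trapdoor $t_{k_i}$ to check. One must either (i) note that the soundness statement is anyway phrased for a fixed device and the reduction may be non-uniform (so the trapdoor can be hard-coded), or (ii) observe that the device can perform the conditioning with respect to its \emph{own} measurement outcomes in a way that is consistent with the verifier's check — but in fact the cleanest route, which I expect the authors take, is to simply not require $D'$ to be efficiently \emph{constructible from $D$} but only that $D'$ is itself an efficient device, which follows because $\psi'^{(\theta_1,\theta_2)}$ differs from the efficiently-preparable $\psi^{(\theta_1,\theta_2)}$ only by post-selection on a non-negligible-probability efficiently-checkable event. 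All other steps are routine.
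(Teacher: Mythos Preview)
Your approach is essentially the same as the paper's: post-select on the preimage check succeeding (implemented by rejection sampling, repeated polynomially many times), then invoke the gentle measurement lemma for the $O(\sqrt{\gamma_P(D)})$ closeness bound. Two small corrections are worth noting. First, your ``main obstacle'' is not actually an obstacle: the verifier's preimage check is exactly the $\Chk$ function from Definition~\ref{def:decoding_maps}(i), which is efficiently computable from public information $(k,y,b,x)$ alone --- no trapdoor is needed, so the device itself can run it. Second, your statement that the acceptance probability of $\Lambda^{(\theta_1,\theta_2)}$ ``is one of the eight terms $P_j$'' is not quite right: each $P_j$ in the definition of $\gamma_P$ checks only \emph{one} of the two preimages, whereas the full preimage check requires both to be correct; the paper handles this with a union bound, giving success probability $\geq 1 - 2\gamma_P(D)$, which is still enough for the gentle-measurement estimate.
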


\begin{proof}
The idea of the proof is the same as in \cite[claim 7.2]{mahadev} and \cite[Lemma 3.9]{rsp}.
We give a construction of $D'$ as follows: $D'$ first prepares the states $\psi^{(\theta_1, \theta_2)}$ as $D$ does. $D'$ then applies the efficient unitary $U_\Pi$ associated with the measurement $\Pi$: 
\begin{equation*}
\proj{0_{2 ( 1 + \abs{\mX})}}_R \ot \psi^{(\theta_1, \theta_2)} \xmapsto{U_\Pi} \proj{b_1, x_1; \; b_2, x_2}_R \ot \Pi^{(b_1, x_1; \; b_2, x_2)} \psi^{(\theta_1, \theta_2)} \Pi^{(b_1, x_1; \; b_2, x_2)} \,.
\end{equation*}
Now $D'$ coherently evaluates the (efficient) $\Chk$-function on the $Y$-register of $\Pi^{(b_1, x_1; \; b_2, x_2)} \psi^{(\theta_1, \theta_2)} \Pi^{(b_1, x_1; \; b_2, x_2)}$ and the new register containing $b_i, x_i$. If $\Chk$ succeeds, $D'$ applies $U^\dagger$ to the state, traces out the ancillary register $R$, and uses this as $\psi'^{(\theta_1, \theta_2)}$. 
Otherwise, $D'$ repeats the process up to polynomially (in the security parameter) many times, and aborts if the $\Chk$ procedure never succeeds. 
Since $\gamma_P(D)$ is defined as the maximum failure probability of the preimage check on one of the two qubits, and the $\Chk$ procedure fails if the preimage check fails on either qubit, the probability of the $\Chk$ procedure failing is at most $2 \, \gamma_P(D)$ by a union bound.
Hence, recalling Remark \ref{rem:negl_success}, the probability that $\Chk$ fails polynomially many times is negligible. (We remark that the prover described by $D'$ has to run this checking procedure before actually returning the images $y, y'$ to the verifier.)

It is clear that $D'$ is efficient and perfect. Fix $\theta_1, \theta_2$. We need to show $\norm{\psi'^{(\theta_1, \theta_2)} - \psi^{(\theta_1, \theta_2)}}_1 \approx_{\gamma_P(D)^{1/2}} 0$. Since the probability of the $\Chk$ to succeed is at least $1 - 2 \, \gamma_P(D)$, by the gentle measurement lemma (see e.g. \cite[lemma 9.4.1]{wilde}), the post-measurement state after $\Chk$ has succeeded is $O(\gamma_P(D)^{1/2})$-close in trace distance to $U (\proj{0_{2 ( 1 + \abs{\mX})}}_R \ot \psi^{(\theta_1, \theta_2)}) U^\dagger$. Because the trace distance is unitarily invariant, this implies that the state $\psi'^{(\theta_1, \theta_2)}$ is also $O(\gamma_P(D)^{1/2})$-close in trace distance to $\psi^{(\theta_1, \theta_2)}$.
\end{proof}

\subsection{Lifting relations from one basis choice to another}
A lot of the leverage that the verifier has over the prover stems from the fact that the prover does not know the verifier's basis choices $\theta_1, \theta_2$. 
In particular, the prover does not know whether he is in the ``Bell case'' $\theta_1 = \theta_2 = 1$, where the honest prover prepares an entangled state, or in the ``test case'' $\theta_1 \neq \theta_2$, where the honest prover prepares a product state. 
The test case is useful as a testing procedure because the two bits that the prover has to return as an answer in the Hadamard round are determined by the $y_1, y_2$ and $d_1, d_2$ that the prover returned in the previous rounds. 
Using the trapdoor, the verifier can check each answer \emph{individually}. 
In contrast, in the Bell case, only the sum of both answers is checked by the verifier.

In the soundness proof, we often want to ``lift'' approximate-equality relations that we can certify for one of the test cases, e.g. $\theta_1 = 0, \theta_2 = 1$, to any other choices of $\theta_1, \theta_2$. Intuitively, this is possible because the prover does not know which case it is in, so it cannot adapt its behaviour accordingly. We have already shown in Lemma \ref{lem:lifting} that we can lift relations from one state to another if the states are computationally indistinguishable and the relation only involves efficient quantities. Therefore, we only need to show that the different $\sigma^{(\theta_1, \theta_2)}$ are computationally indistinguishable, which we do in the following simple lemma.

\begin{lemma} \label{lem:sigma_indist}
Let $D = (S, \Pi, M, P)$ be an efficient device. Given a state $\sigma^{(\theta_1, \theta_2)}$ with uniformly sampled $(\theta_1, \theta_2)$, no efficient procedure can correctly guess $(\theta_1, \theta_2)$ with probability non-negligibly larger than $1/4$. The same holds for the states $\psi^{(\theta_1, \theta_2)}$.
\end{lemma}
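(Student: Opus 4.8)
The plan is to reduce the statement to the \emph{injective invariance} property of the ENTCF family (property (iv) in Section~\ref{sec:intro_crypto}): no efficient quantum algorithm, given a single key $k$ sampled from $\Gen_{\kf}$ or from $\Gen_{\kg}$, can decide which of the two it is with non-negligible advantage. Since $(\theta_1, \theta_2)$ determines the two key types $(k_1, k_2)$ independently, correctly guessing $(\theta_1, \theta_2)$ from the associated state amounts to simultaneously classifying two independent keys, and this will be handled by a short two-key hybrid argument.

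First I would treat the states $\psi^{(\theta_1, \theta_2)}$. Suppose towards a contradiction that some efficient procedure $\mathcal{A}$, given $\psi^{(\theta_1, \theta_2)}$ for uniformly sampled $(\theta_1, \theta_2)$ (and the associated uniformly sampled keys, over which we take an implicit expectation as always), outputs a guess for $(\theta_1, \theta_2)$ that is correct with probability $1/4 + \nu$ for some non-negligible $\nu$. An averaging argument shows that $\mathcal{A}$ then has non-negligible advantage in predicting at least one of the two bits $\theta_i$, even conditioned on a fixed value of the other; combining this with a hybrid over the two key slots yields an efficient procedure $\mathcal{B}$ that, on input a single challenge key $k^*$ of unknown type, distinguishes its type with non-negligible advantage. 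Concretely, $\mathcal{B}$ samples the other key itself (from $\Gen_{\kf}$ or $\Gen_{\kg}$ as dictated by the current hybrid), places $k^*$ and the self-sampled key into the slots $(k_1, k_2)$, runs the device $D$'s (efficient) state preparation on input $(k_1, k_2)$ to obtain $\psi^{(\theta_1, \theta_2)}$, runs $\mathcal{A}$ on it, and outputs the relevant bit of $\mathcal{A}$'s guess. Crucially, the device's state preparation depends only on the keys and not on the trapdoors, and it is efficient since $D$ is an efficient device, so $\mathcal{B}$ is efficient; this contradicts injective invariance.

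The statement for $\sigma^{(\theta_1, \theta_2)}$ then follows immediately. By definition (Equation~\eqref{eqn:def_sigma}), $\sigma^{(\theta_1, \theta_2)}$ is obtained from $\psi^{(\theta_1, \theta_2)}$ by applying the measurement $M$ and recording its outcome $(d_1, d_2)$ in the register $\H_R$, which is an efficient operation because $D$ is an efficient device. Hence any efficient procedure guessing $(\theta_1, \theta_2)$ from $\sigma^{(\theta_1, \theta_2)}$ can be preceded by this efficient map to obtain one guessing from $\psi^{(\theta_1, \theta_2)}$, reducing this case to the previous one; equivalently, one may invoke Lemma~\ref{lem:comp_indist_preserved} directly for each pair of states in the family.

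The main obstacle is bookkeeping rather than conceptual: one must set up the hybrid so that the reduction never needs a trapdoor --- it only runs the device's state preparation and the guessing procedure, both efficient --- and so that a non-negligible advantage in guessing the \emph{pair} $(\theta_1, \theta_2)$ translates without loss into a non-negligible advantage in classifying a \emph{single} key. Since there are only four values of $(\theta_1, \theta_2)$ and two keys, the hybrid has constantly many steps and the advantage degrades only by a constant factor, so no polynomial-blow-up subtleties arise.
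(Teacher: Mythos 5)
Your proposal is correct and follows essentially the same route as the paper: the paper's proof simply observes that $\theta_i$ determines whether $k_i \in \kg$ or $k_i \in \kf$, so a guesser with advantage over $1/4$ would violate injective invariance of the ENTCF family, with the same reasoning applying to both $\psi^{(\theta_1,\theta_2)}$ and $\sigma^{(\theta_1,\theta_2)}$. Your write-up merely makes explicit the two-key hybrid and the efficiency/trapdoor-freeness of the reduction, which the paper leaves implicit.
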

\begin{proof}
The values for $\theta_i$ directly indicate whether $k_i \in \kg$ or $k_i \in \kf$. Therefore, a procedure that correctly guesses $\theta_1, \theta_2$ with probability non-negligibly larger than $1/4$ would violate the injective invariance property of the ENTCF family \cite[definition 4.3]{mahadev}. The same reasoning applies for $\psi^{(\theta_1, \theta_2)}$.
\end{proof}

\subsection{Uniform normalisation and answers} \label{sec:uniform}
In Definition \ref{def:notation_intermediate_states}, we defined the partial post-measurement states $\sigma^{(\theta_1, v_1; \; \theta_2, v_2)}$. In this section we show that in the test case (i.e., $\theta_1 \neq \theta_2$), the normalisation of a certain marginalisation of these states is uniform, i.e., the same for both $v = 0$ and $v = 1$. This is a weaker statement than showing that the normalisation of $\sigma^{(\theta_1, v_1; \; \theta_2, v_2)}$ itself (without marginalising) is uniform, but it will be sufficient. The proof reduces the uniform normalisation to the adaptive hardcore bit property of the ENTCF family (informally, item (ii) in the list in Section \ref{sec:intro_crypto}).

\begin{lemma} \label{lem:X_normalisation_uniform}
Let $D = (S, \Pi, M, P)$ be an efficient perfect device. Then: 
\begin{equation}
\sum_{v_2} \tr{\sigma^{(1, 0; \; 0, v_2)}} \approx_0 \sum_{v_2} \tr{\sigma^{(1, 1; \; 0, v_2)}} \,, \qquad
\sum_{v_1} \tr{\sigma^{(0, v_1; \; 1, 0)}} \approx_0 \sum_{v_1} \tr{\sigma^{(0, v_1; \; 1, 1)}} \,.
\end{equation}
\end{lemma}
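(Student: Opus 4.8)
The plan is to reduce the claimed uniform normalisation to the adaptive hardcore bit property of the ENTCF family, exactly as suggested in the text. By symmetry it suffices to prove the second relation, $\sum_{v_1} \tr{\sigma^{(0, v_1; \; 1, 0)}} \approx_0 \sum_{v_1} \tr{\sigma^{(0, v_1; \; 1, 1)}}$; the first follows by swapping the roles of the two keys. First I would unpack the definitions: by Definition~\ref{def:notation_intermediate_states}, the state $\sigma^{(0, v_1; \; 1, v_2)}$ is obtained from $\sigma^{(0,1)}$ by restricting to those $(y_1, y_2, d_1, d_2)$ with $\hat{b}(k_1, y_1) = v_1$ and $\hat{u}(k_2, y_2, d_2) = v_2 \oplus \hat{b}(k_1, y_1)$. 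Summing over $v_1$ removes the constraint on $y_1$, so $\sum_{v_1} \tr{\sigma^{(0, v_1; \; 1, v_2)}}$ is the probability, on basis choice $(0,1)$, that the device produces $(y_1, y_2, d_1, d_2)$ with $\hat{u}(k_2, y_2, d_2) = v_2 \oplus \hat{b}(k_1, y_1)$. The claim is thus that this probability is (negligibly) independent of $v_2 \in \bits$.

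The key step is to build a reduction to the adaptive hardcore bit property for the key $k_2 \in \kf$. Suppose for contradiction that for some device the two probabilities differ by a non-negligible amount. Since $D$ is perfect, on a preimage round the device returns a valid preimage $(b_2, x_2)$ for $y_2$ with probability $1 - \negl(\lambda)$; on a Hadamard round it returns $(d_1, d_2)$ and, using the trapdoor for $k_1$, the verifier (or the reduction) can compute $\hat{b}(k_1, y_1)$ and hence the bit $v_2^\star \deq \hat{u}(k_2, y_2, d_2) \oplus \hat{b}(k_1, y_1)$, which by assumption is biased. An adversary against the adaptive hardcore bit property, given only $k_2$ (and generating $k_1$ together with its trapdoor itself), runs the device, chooses the preimage round with probability $1/2$ to extract $(b_2, x_2)$, and the Hadamard round with probability $1/2$ to extract $(d_2, \hat{u}(k_2, y_2, d_2))$; since the device cannot tell which round it is in, rewinding (or simply the fact that the round type is sampled after the commitment $y_2$) yields, with non-negligible probability, a simultaneous valid preimage $x_{b_2}$ and a correct value of $d_2 \cdot (\hat{x}_0(k_2, y_2) \oplus \hat{x}_1(k_2, y_2))$ for a nontrivial $d_2$ — precisely what the adaptive hardcore bit property forbids. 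Hence both probabilities must be equal up to a negligible amount.

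The main obstacle is handling the fact that the bit $\hat{u}(k_2, y_2, d_2)$ depends on $\hat{b}(k_1, y_1)$, which the reduction can only learn using the trapdoor of $k_1$. This is not a real problem because the reduction is free to generate the claw-free key $k_1$ (for the $\theta_1 = 0$ case it is an injective key, but either way it samples it together with its trapdoor), so it can decode $\hat{b}(k_1, y_1)$ on its own; the only key it must treat as a black box is $k_2$. A secondary subtlety is the rewinding/round-type argument: one must be careful that the commitment $y_2$ is sent before the round type is selected, so that the distribution of $y_2$ (and hence of the relevant preimages and equations) is identical across the two round types, which is exactly the structure of the protocol in Figure~\ref{fig:protocl}. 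Once these points are in place, the contradiction with the adaptive hardcore bit property closes the argument, and the corresponding statement with $k_1$ and $k_2$ exchanged gives the first relation.
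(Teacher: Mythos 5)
Your reduction targets the right property (the adaptive hardcore bit for the claw-free key, with the other key and its trapdoor sampled by the reduction itself, so that $\hat{b}(k_1,y_1)$ serves as the biased predictor of $\hat{u}(k_2,y_2,d_2)$), and this is indeed the paper's strategy. However, the step where your adversary actually produces \emph{both} a valid preimage for $y_2$ \emph{and} the equation string $d_2$ in a single output is not justified by the argument you give. Choosing the round type with probability $1/2$ yields only one of the two pieces of data per run, and ``rewinding'' a quantum device is exactly the move that is unavailable here: the preimage measurement and the Hadamard measurement are incompatible measurements on the same post-commitment state, and measuring one disturbs the other (the paper itself stresses, in the discussion of tests of quantumness, that the collapsing nature of quantum measurements is what blocks the classical rewinding argument). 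The observation that $y_2$ is fixed before the round type is chosen does not repair this, since it only equalises the distribution of $y_2$, not of the pair (preimage, $d_2$).

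The paper closes this gap differently, and this is the missing idea in your proposal: because the device is \emph{perfect}, the reduction can simply measure $\Pi$ first (obtaining a valid preimage with probability $1-\negl(\lambda)$) and then measure $M$ on the post-measurement state, and the collapsing property of the ENTCF family guarantees that the state after a successful preimage measurement is computationally indistinguishable from the unmeasured state; since $M$ is efficient, the resulting distribution of $(d_1,d_2)$ is negligibly close to the one arising in the protocol, so the assumed bias transfers to the reduction's single-run output $(b_2, x_2, d_2, \hat{b}(k_1,y_1))$. Without invoking perfectness plus collapsing (or some equivalent argument), the sequential extraction is unjustified and the contradiction with the adaptive hardcore bit property does not follow. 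A secondary slip: your reduction cannot ``compute $\hat{u}(k_2,y_2,d_2)$'' (it has no trapdoor for $k_2$); it can only output the guess $\hat{b}(k_1,y_1)$ (possibly flipped) for that bit, which is what the hardcore-bit game requires anyway.
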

\begin{proof}
We show the first relation, the second one is analogous.
Assume for the sake of contradiction that
\begin{equation}
\sum_{v_2} \tr{\sigma^{(1, 0; \; 0, v_2)}} - \sum_{v_2} \tr{\sigma^{(1, 1; \; 0, v_2)}} \geq \mu(\lambda) \,, \label{eqn:unif_norm_assumption}
\end{equation}
for some non-negligible positive $\mu(\lambda)$. Here, we assumed for concreteness that the left hand side is positive, but the proof is easily seen to also hold for the case where the left hand side is \emph{smaller} than a non-negligible \emph{negative} function by flipping the final bit in the output of the procedure $\mA$ below.

We want to show that this contradicts the adaptive hardcore bit property. To this end, we define the following efficient procedure $\mA$: $\mA$ is given a key $k_1 \in \kf$ and samples another key and a trapdoor $(k_2, t_{k_2}) \leftarrow \Gen_\mathcal{G}(1^\lambda)$. $\mA$ first prepares the state $\psi^{(\theta_1, \theta_2)}$ by performing the same operations as the device $D$, obtaining $y_1, y_2$ in the process; this is efficient because $D$ is efficient.
Then, $\mA$ performs the preimage measurement $\Pi$, obtaining outcomes $(b_1, x_1)$ and $(b_2, x_2)$. Finally, it measures $M$, obtaining outcomes $d_1, d_2$. Again, these measurements are efficient because $D$ is efficient. 
$\mA$ outputs the tuple $(b_1, x_1, d_1, \hat{b}(k_2, y_2))$. Note that since $\mA$ has access to $t_{k_2}$, computing $\hat{b}(k_2, y_2)$ is efficient.

We now argue that this indeed breaks the adaptive hardcore bit property. Because the device $D$ is perfect, the preimage measurement yields a correct preimage with probability negligibly close to 1. 
Then, by the collapsing property \cite[Lcemma A.7]{rsp}, the states before and after the preimage measurement are computationally indistinguishable. 
Since $M$ is an efficient measurement, this means that the outcome distributions obtained by measuring $M$ directly on $\psi^{(\theta_1, \theta_2)}$ and measuring $M$ on the post-measurement state after having measured $\Pi$ must be negligibly close. Hence, $y_1, y_2, d_1, d_2$ obtained by $\mA$ have the same distribution (up to negligible difference) as the images and equation strings obtained by the device $D$. Using the definition of $\sigma^{(1, v_1; \; 0, v_2)}$, this means that on average over $\mA$'s distribution over $k_i, y_i, d_i$:
\begin{equation}
\pr{\hat{u}(k_1, y_1, d_1) = v_1 \oplus \hat{b}(k_2, y_2)} = \tr{\sum_{v_2} \sigma^{(1, v_1; \; 0, v_2)}} \,.
\end{equation}
Combining this with the assumption in Equation \eqref{eqn:unif_norm_assumption}, we see that $\mA$'s output $(b_1, x_1, d_1, \hat{b}(k_2, y_2))$ is ``correct'' (i.e., in the set $H_{k_1}$ in \cite[definition 3.1(iv)]{randomness}) with non-negligible advantage.
\end{proof}

As a corollary to the above lemma, we can also show that for a device that succeeds with high probability in the test case, the answers returned on question $q = 1$ (i.e., a Hadamard basis measurement) must be close to uniform.
\begin{corollary} \label{lem:X_answers_uniform}
Let $D = (S, \Pi, M, P)$ be an efficient perfect device. Then:
\begin{equation}
\tr{X_1 \sigma^{(\theta_1, \theta_2)}} \approx_{\gamma_T(D)} 0 \,, \qquad
\tr{X_2 \sigma^{(\theta_1, \theta_2)}} \approx_{\gamma_T(D)} 0 \,.
\end{equation}
\end{corollary}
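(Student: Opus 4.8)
The plan is to first establish the two bounds in the single ``aligned'' test case, namely $\tr{X_1 \sigma^{(1,0)}} \approx_{\gamma_T(D)} 0$ and, symmetrically, $\tr{X_2 \sigma^{(0,1)}} \approx_{\gamma_T(D)} 0$, and then propagate them to all basis choices by computational indistinguishability. For the propagation step, recall that $X_1$ and $X_2$ are efficient binary observables (the remark after Definition \ref{def:marginal_meas}) and that, by Lemma \ref{lem:sigma_indist} (which, applying injective invariance to each key in turn, gives $\sigma^{(\theta_1,\theta_2)} \capprox_0 \sigma^{(\theta_1',\theta_2')}$ for any two basis choices), the states $\sigma^{(\theta_1,\theta_2)}$ are pairwise computationally indistinguishable. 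Hence Lemma \ref{lem:lifting}(i) yields $\tr{X_1 \sigma^{(\theta_1,\theta_2)}} \approx_0 \tr{X_1 \sigma^{(1,0)}}$ and $\tr{X_2 \sigma^{(\theta_1,\theta_2)}} \approx_0 \tr{X_2 \sigma^{(0,1)}}$ for all $\theta_1,\theta_2$, and the $\negl(\lambda)$ hidden in $\approx_0$ is absorbed into $\gamma_T(D)$.

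For the aligned case I would start from the definition of $\gamma_T(D)$ in Lemma \ref{lem:success_prob}: the entry $\tr{\sum_{v_1,v_2} X_1^{(v_1)} \sigma^{(1,v_1;\,0,v_2)}}$ of the tuple $T$ is at least $1-\gamma_T(D)$, i.e. $\approx_{\gamma_T(D)} 1$. By Lemma \ref{lem:individual_succ_prob_for_v} (with $O = X_1$, $i=1$, $\theta_1 = 1$, $\theta_2 = 0$) this upgrades to $\tr{X_1^{(v_1)} \sigma^{(1,v_1;\,0,v_2)}} \approx_{\gamma_T(D)} \tr{\sigma^{(1,v_1;\,0,v_2)}}$ for every $v_1,v_2 \in \bits$. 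Writing $X_1 = X_1^{(0)} - X_1^{(1)}$ and using $X_1^{(0)} + X_1^{(1)} = \1$, the $v_1 = 0$ term gives $\tr{X_1 \sigma^{(1,0;\,0,v_2)}} = 2\tr{X_1^{(0)}\sigma^{(1,0;\,0,v_2)}} - \tr{\sigma^{(1,0;\,0,v_2)}} \approx_{\gamma_T(D)} \tr{\sigma^{(1,0;\,0,v_2)}}$, while the $v_1 = 1$ term gives $\tr{X_1 \sigma^{(1,1;\,0,v_2)}} \approx_{\gamma_T(D)} -\tr{\sigma^{(1,1;\,0,v_2)}}$. Summing over the (constantly many) values of $v_1, v_2$ and using $\sigma^{(1,0)} = \sum_{v_1,v_2}\sigma^{(1,v_1;\,0,v_2)}$,
\[
\tr{X_1 \sigma^{(1,0)}} \approx_{\gamma_T(D)} \sum_{v_2}\tr{\sigma^{(1,0;\,0,v_2)}} - \sum_{v_2}\tr{\sigma^{(1,1;\,0,v_2)}}\,,
\]
and the right-hand side is $\approx_0 0$ by the first relation of Lemma \ref{lem:X_normalisation_uniform}, so $\tr{X_1\sigma^{(1,0)}} \approx_{\gamma_T(D)} 0$. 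The bound $\tr{X_2 \sigma^{(0,1)}} \approx_{\gamma_T(D)} 0$ follows identically with the two registers swapped, using the entry $\tr{\sum_{v_1,v_2} X_2^{(v_2)} \sigma^{(0,v_1;\,1,v_2)}}$ of $T$ and the second relation of Lemma \ref{lem:X_normalisation_uniform}.

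There is no serious obstacle: the genuine content has already been absorbed into Lemma \ref{lem:X_normalisation_uniform} (and thereby into the adaptive hardcore bit property) and into the success-probability bookkeeping of Lemma \ref{lem:success_prob}. The only point requiring attention is to match the eigenprojector expansion of $X_1$ to the precise form of the normalisation-uniformity statement --- i.e. that the outcomes $X_1 = +1$ and $X_1 = -1$ pick out exactly the $v_1 = 0$ and $v_1 = 1$ parts of $\sigma^{(1,0)}$, so that the residual term is \emph{exactly} the difference $\sum_{v_2}\tr{\sigma^{(1,0;0,v_2)}} - \sum_{v_2}\tr{\sigma^{(1,1;0,v_2)}}$ controlled by Lemma \ref{lem:X_normalisation_uniform} --- and to remember that the final lift via Lemma \ref{lem:lifting}(i) is licensed precisely because $X_1$ and $X_2$ are efficient.
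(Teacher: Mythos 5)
Your proposal is correct and follows essentially the same route as the paper's proof: reduce to the aligned case $\sigma^{(1,0)}$ (resp.\ $\sigma^{(0,1)}$) via Lemma \ref{lem:sigma_indist} and Lemma \ref{lem:lifting}(i), upgrade the $\gamma_T$ bound with Lemma \ref{lem:individual_succ_prob_for_v}, rewrite $X_1$ via its eigenprojectors, and finish with Lemma \ref{lem:X_normalisation_uniform}. The only cosmetic difference is that you spell out the $v_1=0$ and $v_1=1$ terms separately where the paper uses the compact identity $X_1=(-1)^{v_1}(2X_1^{(v_1)}-\1)$.
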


\begin{proof}
We show the first relation, the second is analogous. Since $X_1$ is efficient, by the lifting lemma (Lemma \ref{lem:lifting}(i)) and the indistinguishability of $\sigma^{(\theta_1, \theta_2)}$ (Lemma \ref{lem:sigma_indist}), it suffices to show $\tr{X_1 \sigma^{(1, 0)}} \approx_{\gamma_T(D)} 0$.
By the definition of $\gamma_T(D)$ (Equation \eqref{eqn:def_gamma_t}) and Lemma \ref{lem:individual_succ_prob_for_v}, we have (using that $X_1 = (-1)^{v_1} (2 X^{(v_1)} - \1)$ since $X_1$ is a binary observable):
\begin{align}
\tr{X_1 \sigma^{(1, 0)}} &= \sum_{v_1, v_2} \tr{(-1)^{v_1} (2 X^{(v_1)} - \1) \sigma^{(1, v_1; \; 0, v_2)}} \\
&\approx_{\gamma_T(D)} \sum_{v_1, v_2} (-1)^{v_1} \tr{\sigma^{(1, v_1; \; 0, v_2)}} \,.
\end{align}
The result now follow from Lemma \ref{lem:X_normalisation_uniform}.
\end{proof}

\subsection{Anti-commutation relations} \label{sec:anticomm}
The goal of this section is to prove the following proposition.
\begin{proposition}\label{prop:anticomm}
For any efficient perfect device $D = (S, \Pi, M, P)$, the following approximate anti-commutation relations hold for any $\theta_1, \theta_2 \in \bits$ and $i \in \{1, 2\}$:
\begin{align}
\{Z_i, X_i\} \approx_{\gamma_T(D)^{1/2}, \, \sigma^{(\theta_1, \theta_2)}} 0 \,.
\end{align}
\end{proposition}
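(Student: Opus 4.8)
The plan is to prove the relation for a single, conveniently chosen basis setting and then transfer it to all four. Concretely, for $i=1$ I would work in the test case $(\theta_1,\theta_2)=(1,0)$, where the first key is claw-free, and for $i=2$ in the symmetric case $(\theta_1,\theta_2)=(0,1)$. Since $Z_i$ and $X_i$ are efficient binary observables (the remark following Definition~\ref{def:marginal_meas}) and all the states $\sigma^{(\theta_1,\theta_2)}$ are pairwise computationally indistinguishable (Lemma~\ref{lem:sigma_indist}), the lifting lemma for anti-commutators (Lemma~\ref{lem:lifting}(iv)) then upgrades $\{Z_1,X_1\}\approx_{\gamma_T(D)^{1/2},\,\sigma^{(1,0)}}0$ to $\{Z_1,X_1\}\approx_{\gamma_T(D)^{1/2},\,\sigma^{(\theta_1,\theta_2)}}0$ for every $\theta_1,\theta_2$, and likewise for $i=2$. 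So it suffices to treat the test case in which the $i$-th key is claw-free.

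So fix $i=1$, $(\theta_1,\theta_2)=(1,0)$, and abbreviate $\sigma_{v_1,v_2}\deq\sigma^{(1,v_1;\,0,v_2)}$, so that $\sigma^{(1,0)}=\sum_{v_1,v_2}\sigma_{v_1,v_2}$. Reading off the relevant entry of the tuple defining $\gamma_T(D)$ in \eqref{eqn:def_gamma_t} and applying Corollary~\ref{lem:approx_one_on_individual_v} with $O=X_1$ gives the ``$X_1$ is determined'' relation $X_1\approx_{\gamma_T(D),\,\sigma_{v_1,v_2}}(-1)^{v_1}\1$ for all $v_1,v_2$. Since there are only constantly many pieces, by Lemma~\ref{lem:state_dep_distance_facts}(ii) it is enough to show $\{Z_1,X_1\}\approx_{\gamma_T(D)^{1/2},\,\sigma_{v_1,v_2}}0$ for each fixed $v_1,v_2$. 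I would then expand $\{Z_1,X_1\}^\dagger\{Z_1,X_1\}=2\1+Z_1X_1Z_1X_1+X_1Z_1X_1Z_1$ and use the replacement lemma (Lemma~\ref{lem:replace_in_trace}(i)) to substitute the rightmost $X_1$ by $(-1)^{v_1}\1$ via the relation just derived; this turns $\tr{\{Z_1,X_1\}^\dagger\{Z_1,X_1\}\,\sigma_{v_1,v_2}}\approx_{\gamma_T(D)^{1/2}}0$ into the equivalent statement $Z_1X_1Z_1\approx-X_1$ on $\sigma_{v_1,v_2}$, i.e. (since $X_1\approx(-1)^{v_1}\1$ there, and using Lemma~\ref{lem:observable_approx_one} for the binary observable $Z_1X_1Z_1$) $Z_1X_1Z_1\approx(-1)^{\overline{v_1}}\1$ on $\sigma_{v_1,v_2}$. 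Intuitively this says that conjugating the Hadamard-type observable $X_1$ by the computational-type observable $Z_1$ \emph{negates} it: for the honest prover $Z_1$ acts as $\sigma_Z$ on the first qubit, and $\sigma_Z$ maps the Hadamard-basis state $\ket{(-)^c}$ to $\ket{(-)^{\overline c}}$, hence flips the value $\hat u(k_1,y_1,d_1)\oplus\hat b(k_2,y_2)$ that $X_1$ reports.

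The main obstacle is establishing $Z_1X_1Z_1\approx(-1)^{\overline{v_1}}\1$ on $\sigma_{v_1,v_2}$ for an \emph{arbitrary} (perfect) device, and this is where the computational assumption has to enter. Unlike in two-prover self-testing — where one would now combine the approximate relations with an \emph{exact} commutation relation between the two provers' operators — no such exact relation is available here, and the protocol's checks alone do not force $Z_1X_1Z_1\approx-X_1$, since the protocol never asks the prover to apply $Z_1$ and then measure $X_1$. Instead I would argue by contradiction, in the spirit of Lemma~\ref{lem:X_normalisation_uniform} and Corollary~\ref{lem:X_answers_uniform}: if $\{Z_1,X_1\}$ were non-negligibly far from $0$ on $\sigma^{(1,0)}$, then $Z_1$ and $X_1$ would be ``too compatible'', and one could build an efficient procedure that — using $D$'s measurements together with the trapdoor $t_{k_2}$, and using that $D$ is \emph{perfect} so that a valid preimage for $k_1$ can be obtained from the preimage measurement $\Pi$ (with the collapsing property ensuring that obtaining it does not disturb the Hadamard-round statistics on which $X_1$ acts) — outputs, with non-negligible advantage, a valid claw-preimage for $k_1$ together with the correct value of $\hat u(k_1,y_1,d_1)$, contradicting the adaptive hardcore bit property of the ENTCF family. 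The single use of the replacement lemma is what produces the $\gamma_T(D)^{1/2}$ (rather than $\gamma_T(D)$) in the statement, and the case $i=2$ with $(\theta_1,\theta_2)=(0,1)$ is handled symmetrically, exchanging the roles of $Z$ and $X$ and of the two keys.
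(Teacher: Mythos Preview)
Your overall strategy matches the paper's: reduce to the test case via lifting, use $X_1\approx_{\gamma_T(D),\,\sigma_{v_1,v_2}}(-1)^{v_1}\1$ to simplify $\{Z_1,X_1\}^2$, and reduce the remaining statement to the adaptive hardcore bit. The algebraic reduction to $\tr{Z_1X_1Z_1\,\sigma_{v_1,v_2}}\approx(-1)^{\overline{v_1}}\tr{\sigma_{v_1,v_2}}$ is also correct; the paper obtains an equivalent form via the identity $\tfrac12(Z_1X_1Z_1+X_1)=\sum_bZ_1^{(b)}X_1Z_1^{(b)}$ and phrases the key step as Lemma~\ref{lem:z_postmeas_v}.

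There is, however, a genuine gap in your hardcore-bit reduction. Your attack runs $\Pi$, then $M$, then $X_1$; its advantage is therefore expressed through the \emph{preimage} projectors $\tilde\Pi_1^{(b)}$, i.e.\ through quantities of the form $\sum_b\tr{X_1\,M^{(d)}\tilde\Pi_1^{(b)}\psi^{(1,0)}\tilde\Pi_1^{(b)}M^{(d)}}$, not through $Z_1^{(b)}$. The collapsing property does \emph{not} bridge this gap. Collapsing only says that the post-$\Pi$ mixture is computationally indistinguishable from the pre-$\Pi$ state; it does \emph{not} say that the attack inherits the ``$X_1$ returns $v_1$'' behaviour certified by the protocol. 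Indeed, for the honest device, measuring $\Pi$ on the claw-free side destroys the superposition $\tfrac{1}{\sqrt2}(\ket{0,x_0}+\ket{1,x_1})$, so the subsequent $X_1$-outcome becomes uniformly random and independent of $d_1$, and the attack succeeds only with probability $1/2$ --- collapsing merely guarantees that no efficient observer (without the trapdoor) can detect this disturbance.

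What is actually needed is a separate statement relating $Z_1^{(b)}$ (measured after $M$) to $\tilde\Pi_1^{(b)}$ (measured before $M$), namely $Z_1^{(b)}M\approx M\tilde\Pi_1^{(b)}$ on $\psi^{(\theta_1,\theta_2)}$. This is Lemma~\ref{lem:z_pi_equiv} in the paper, and it is not a consequence of collapsing: it is proved by passing to the $\theta_1=0$ test case, where the protocol's check ties the $Z_1$-outcome to the preimage bit, and then lifting. Only after swapping $\tilde\Pi_1^{(b)}$ for $Z_1^{(b)}$ does the attack's advantage become the quantity $\chi^{(0)}-\chi^{(1)}$ (in the notation of Lemma~\ref{lem:z_postmeas_v}) that controls the anticommutator, and this swap (together with Lemma~\ref{lem:pi_crossterms_zero} to discard cross terms) is the most delicate part of the argument.
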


The proof is given at the end of this section. We first show a number of auxiliary lemmas.

\begin{lemma} \label{lem:z_postmeas_theta}
For any efficient device $D = (S, \Pi, M, P)$, the following holds for any $\theta_1, \theta_2 \in \bits$ and $i \in \{1, 2\}$:
\begin{equation}
\sum_b \tr{X_i Z_i^{(b)} \sigma^{(\theta_1, \theta_2)} Z_i^{(b)}} \approx_{ \gamma_T(D)^{1/2}} 0 \,.
\end{equation}
\end{lemma}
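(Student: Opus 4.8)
Write $i=1$; the case $i=2$ is identical after exchanging the roles of the two ``qubits'' and using the basis choice $(1,0)$ in place of $(0,1)$ below. The plan is to read $\sum_b X_1 Z_1^{(b)}(\cdot)Z_1^{(b)}$ as first applying the pinching channel $\mE_{Z_1}(\rho)\deq\sum_b Z_1^{(b)}\rho Z_1^{(b)}$ associated with the efficient two-outcome measurement $\{Z_1^{(0)},Z_1^{(1)}\}$ (efficient by Lemma~\ref{lem:obs_to_meas}, since $Z_1$ is an efficient binary observable), and then evaluating the efficient observable $X_1$. Thus the quantity in the lemma equals $\tr{X_1\,\mE_{Z_1}(\sigma^{(\theta_1,\theta_2)})}$, and it suffices to bound this.

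First I would reduce to a single basis choice. Since $\{Z_1^{(b)}\}_b$ is an efficient measurement and the states $\sigma^{(\theta_1,\theta_2)}$ are pairwise computationally indistinguishable (Lemma~\ref{lem:sigma_indist}), Lemma~\ref{lem:comp_indist_preserved} gives $\mE_{Z_1}(\sigma^{(\theta_1,\theta_2)})\capprox_0\mE_{Z_1}(\sigma^{(0,1)})$ for every $(\theta_1,\theta_2)$. As $X_1$ is an efficient binary observable, Lemma~\ref{lem:lifting}(i) then yields $\tr{X_1\,\mE_{Z_1}(\sigma^{(\theta_1,\theta_2)})}\approx_0\tr{X_1\,\mE_{Z_1}(\sigma^{(0,1)})}$, so it remains only to prove the bound for the basis choice $(0,1)$.

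For the $(0,1)$ case I would exploit that qubit~$1$ is prepared in the computational basis, so the $Z_1$-pinching barely disturbs the relevant state. The definition of $\gamma_T(D)$ in Lemma~\ref{lem:success_prob}(ii) gives $\sum_{v_1,v_2}\tr{Z_1^{(v_1)}\sigma^{(0,v_1;\,1,v_2)}}\geq 1-\gamma_T(D)$, so Corollary~\ref{lem:approx_one_on_individual_v} together with Lemma~\ref{lem:projectors_one_zero} yields, for every $v_1,v_2$, $Z_1^{(v_1)}\approx_{\gamma_T(D),\,\sigma^{(0,v_1;\,1,v_2)}}\1$ and $Z_1^{(\overline{v_1})}\approx_{\gamma_T(D),\,\sigma^{(0,v_1;\,1,v_2)}}0$. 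Applying Lemma~\ref{lem:replace_on_state} twice to the first relation gives $Z_1^{(v_1)}\sigma^{(0,v_1;\,1,v_2)}Z_1^{(v_1)}\approx_{\gamma_T(D)}\sigma^{(0,v_1;\,1,v_2)}$, whereas the $b=\overline{v_1}$ term has trace norm $O(\gamma_T(D))$ by the second relation. Summing over the constantly many $v_1,v_2,b$ and using $\sigma^{(0,1)}=\sum_{v_1,v_2}\sigma^{(0,v_1;\,1,v_2)}$ gives $\mE_{Z_1}(\sigma^{(0,1)})\approx_{\gamma_T(D)}\sigma^{(0,1)}$ as (subnormalised) states. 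The replacement Lemma~\ref{lem:replace_in_trace}(ii), with $\norm{X_1}_\infty=1$, then gives $\tr{X_1\,\mE_{Z_1}(\sigma^{(0,1)})}\approx_{\gamma_T(D)^{1/2}}\tr{X_1\sigma^{(0,1)}}$, and Corollary~\ref{lem:X_answers_uniform} gives $\tr{X_1\sigma^{(0,1)}}\approx_{\gamma_T(D)}0$. Chaining these with the reduction of the previous paragraph finishes the proof.

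The point needing care is the choice of basis: the bound on the $Z_i$-pinched state is only available in the test case in which qubit~$i$ sits in the computational basis (so that $Z_i^{(v_i)}\approx\1$ on the relevant $\sigma$), and the passage to all $(\theta_1,\theta_2)$ — in particular to the Bell case — must be routed through the computational indistinguishability of the $\sigma^{(\theta_1,\theta_2)}$, which is why I phrase the pinching as an explicitly efficient channel. Everything else is routine bookkeeping of the square-root losses from the replacement lemmas. (Corollary~\ref{lem:X_answers_uniform} is stated for perfect devices; when it is invoked here the device in question is the efficient perfect device whose existence is guaranteed by Lemma~\ref{lem:reduction_to_perfect}.)
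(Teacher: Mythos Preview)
Your proposal is correct and follows essentially the same route as the paper's proof: reduce to the basis choice $(0,1)$ via computational indistinguishability of the pinched states, use the test-case success to show $Z_1^{(v_1)}\approx\1$ on $\sigma^{(0,v_1;1,v_2)}$ so that the pinching is nearly trivial, then apply the replacement lemma and Corollary~\ref{lem:X_answers_uniform}. Your parenthetical about Corollary~\ref{lem:X_answers_uniform} requiring a perfect device is a valid observation that the paper itself glosses over; invoking Lemma~\ref{lem:reduction_to_perfect} as you do is the right fix.
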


\begin{proof}
To simplify the notation, we show this for $i = 1$; the proof for $i = 2$ is analogous. First note that because $Z_1$ is efficient, by Lemmas \ref{lem:obs_to_meas}, \ref{lem:comp_indist_preserved}, and \ref{lem:sigma_indist}, the states $\sum_b Z_1^{(b)} \sigma^{(\theta_1, \theta_2)} Z_1^{(b)}$ are computationally indistinguishable for different $\theta_1, \theta_2$. Because $X_1$ is an efficient binary observable, by the lifting lemma (Lemma \ref{lem:lifting}(i)) and the indistinguishability of $\sigma^{(\theta_1, \theta_2)}$ (Lemma \ref{lem:sigma_indist}), it suffices to show the lemma for a particular choice of $\theta_1, \theta_2$.

We choose $\theta_1 = 0, \theta_2 = 1$.
By the definition of $\gamma_T$ (Equation \eqref{eqn:def_gamma_t}), and Corollary \ref{lem:approx_one_on_individual_v} combined with Lemma \ref{lem:projectors_one_zero}, we have that for all $v_1, v_2 \in \bits$:
\begin{equation}
Z_1^{(v_1)} \approx_{\gamma_T(D), \, \sigma^{(0, v_1; \; 1, v_2)}} \1 \,, \qquad Z_1^{(\overline{v_1})} \approx_{\gamma_T(D), \, \sigma^{(0, v_1; \; 1, v_2)}} 0 \,.
\end{equation}
We can use this and Lemma \ref{lem:replace_on_state} to obtain
\begin{equation}
\sum_b Z_1^{(b)} \sigma^{(0, v_1; \; 1, v_2)} Z_1^{(b)}  \approx_{\gamma_T(D)} \sigma^{(0, v_1; \; 1, v_2)} \,. \label{eqn:prf_z_postmeas2}
\end{equation}
Using this and the replacement lemma (Lemma \ref{lem:replace_in_trace}(ii)), we get: 
\begin{align}
\sum_b \tr{X_1 Z_1^{(b)} \sigma^{(0, 1)} Z_1^{(b)}} 
&= \sum_{b, v_1, v_2} \tr{X_1 Z_1^{(b)} \sigma^{(0, v_1; \; 1, v_2)} Z_1^{(b)}} \\
&\approx_{\gamma_T(D)^{1/2}} \sum_{b, v_1, v_2} \tr{X_1  \sigma^{(0, v_1; \; 1, v_2)}} \\
&= \tr{X_1 \sigma^{(0, 1)}} \\
&\approx_{\gamma_T(D)} 0 \,.
\end{align}
In the last line, we used Corollary \ref{lem:X_answers_uniform}.
\end{proof}

For the proof of Proposition \ref{prop:anticomm}, Lemma \ref{lem:z_postmeas_theta} will not be sufficient. We will need the stronger statement that e.g. $\sum_{b, v_2} \tr{X_i Z_i^{(b)} \sigma^{(1, v_1; \; 0, v_2)} Z_i^{(b)}}$ is small \emph{for every} $v_1$, not just their sum. This is shown in Lemma \ref{lem:z_postmeas_v}. For the proof, we have to make use of the preimage test, which will enable us to relate the statement of Lemma \ref{lem:z_postmeas_v} to the adaptive hardcore bit property. This is achieved with the following lemma.

\begin{lemma} \label{lem:z_pi_equiv}
We define the following projectors, which project onto the correct preimage answer (for given keys $k_1, k_2$): 
\begin{equation}
\tilde{\Pi}^{(b_1, b_2)}_{y_1, y_2} =  \Pi^{(b_1, \hat{x}_{b_1}(k_1, y_1); \; b_2, \hat{x}_{b_2}(k_2, y_2))}_{y_1, y_2} \,,
\end{equation}
with $\tilde{\Pi}^{(b_1, b_2)}_{y_1, y_2} \deq 0$ if $\perp \in \{\hat{x}_{b_1}(k_1, y_1), \hat{x}_{b_2}(k_2, y_2)\}$.
We denote their marginals by
\begin{equation}
\tilde{\Pi}^{(b)}_{1, y_1, y_2} = \tilde{\Pi}^{(b, 0)}_{y_1, y_2} + \tilde{\Pi}^{(b, 1)}_{y_1, y_2} \,, \qquad 
\tilde{\Pi}^{(b)}_{2, y_1, y_2} = \tilde{\Pi}^{(0, b)}_{y_1, y_2} + \tilde{\Pi}^{(1, b)}_{y_1, y_2} \,. \label{eqn:def_pi_tilde}
\end{equation}
For any efficient perfect device $D = (S, \Pi, M, P)$, the following holds for any $\theta_1, \theta_2 \in \bits$ and $i \in \{1, 2\}$:
\begin{align}
\sum_{b, y_1, y_2, d_1, d_2} \norm{
M^{(d_1, d_2)}_{y_1, y_2} \tilde{\Pi}^{(b)}_{i, y_1, y_2} - Z_{i, y_1, y_2, d_1, d_2}^{(b)} M^{(d_1, d_2)}_{y_1, y_2}}^2_{\psi^{(\theta_1, \theta_2)}_{y_1, y_2}} \approx_{\gamma_T(D)} 0 \,. \label{eqn:zi_pi_equiv_statement}
\end{align}
Note that each term in the sum is positive, so the statement also holds for any sums over subsets of $b, y_i, d_i$.
\end{lemma}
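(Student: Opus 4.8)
The plan is to expand the squared state-dependent norms so that the statement reduces to a single cross term which can be matched directly against the verifier's test-case check, after first reducing from arbitrary $(\theta_1,\theta_2)$ to one convenient choice. The observation that enables the reduction is that, although $\tilde\Pi^{(b)}_{i,y_1,y_2}$ is written using the trapdoor-dependent maps $\hat x_b$, it can in fact be implemented from $\Chk$ and the outcomes of $\Pi$ alone: since each $f_{k,b}$ is injective, $\hat x_b(k_i,y_i)$ is the unique preimage $x_i$ with $\Chk(k_i,y_i,b,x_i)=1$, so e.g.\ $\tilde\Pi^{(b)}_{1,y_1,y_2}=\sum_{b_2}\sum_{x_1:\Chk(k_1,y_1,b,x_1)=1}\sum_{x_2:\Chk(k_2,y_2,b_2,x_2)=1}\Pi^{(b,x_1;\,b_2,x_2)}_{y_1,y_2}$ (and symmetrically for $i=2$), using no trapdoor. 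Hence all operations appearing on the left-hand side of \eqref{eqn:zi_pi_equiv_statement} are efficient given the keys, so the whole quantity is estimable by an efficient procedure of the kind built in Lemma~\ref{lem:sum_difference_efficient}; by the indistinguishability of the states $\psi^{(\theta_1,\theta_2)}$ (Lemma~\ref{lem:sigma_indist}) and a lifting argument in the spirit of Lemma~\ref{lem:lifting}, it then suffices to prove the statement for $i=1$, $(\theta_1,\theta_2)=(0,1)$; the case $i=2$, $(\theta_1,\theta_2)=(1,0)$ is symmetric.

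For this choice, abbreviate $M^{(d)}=M^{(d_1,d_2)}_{y_1,y_2}$, $Z^{(b,d)}=Z^{(b)}_{1,y_1,y_2,d_1,d_2}$, $\tilde\Pi^{(b)}=\tilde\Pi^{(b)}_{1,y_1,y_2}$, $\psi=\psi^{(0,1)}_{y_1,y_2}$, and set $\Pi_{*,1,y_1,y_2}=\tilde\Pi^{(0)}_{1,y_1,y_2}+\tilde\Pi^{(1)}_{1,y_1,y_2}$. Expanding each $\norm{M^{(d)}\tilde\Pi^{(b)}-Z^{(b,d)}M^{(d)}}^2_\psi$ and using that $M^{(d)}$, $Z^{(b,d)}$, $\tilde\Pi^{(b)}$ are projectors, that $\sum_{d_1,d_2}M^{(d)}=\1$ and $\sum_b Z^{(b,d)}=\1$, and that $\tilde\Pi^{(0)}\perp\tilde\Pi^{(1)}$, the full sum collapses to
\[
\sum_{y_1,y_2}\tr{\Pi_{*,1,y_1,y_2}\,\psi^{(0,1)}_{y_1,y_2}}+1-2\,\Re\sum_{b,\,y_1,y_2,\,d_1,d_2}\tr{\tilde\Pi^{(b)}M^{(d)}Z^{(b,d)}M^{(d)}\psi}.
\]
Because $D$ is perfect, $\sum_{y_1,y_2}\tr{\Pi_{*,1,y_1,y_2}\psi^{(0,1)}_{y_1,y_2}}$ is the probability that the preimage measurement returns valid preimages on both coordinates, which is $1-\negl(\lambda)$ by a union bound; so it remains to show that the cross term is $\approx_{\gamma_T(D)}1$.

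Since $k_1\in\kg$ is injective, $\hat x_b(k_1,y_1)=\perp$ unless $b=\hat b(k_1,y_1)$, hence $\tilde\Pi^{(b)}_{1,y_1,y_2}=\delta_{b,\hat b(k_1,y_1)}\Pi_{*,1,y_1,y_2}$ and the cross term becomes $\sum_{y_1,y_2,d_1,d_2}\tr{\Pi_{*,1,y_1,y_2}M^{(d)}Z^{(\hat b(k_1,y_1),d)}M^{(d)}\psi^{(0,1)}_{y_1,y_2}}$. Collapsing the $d$-sum inside the trace (so that $\sum_{d_1,d_2}M^{(d)}Z^{(\hat b(k_1,y_1),d)}M^{(d)}\le\1$), a Cauchy--Schwarz estimate bounds the error incurred by replacing $\Pi_{*,1,y_1,y_2}$ with $\1$ by $\big(\sum_{y_1,y_2}\tr{(\1-\Pi_{*,1,y_1,y_2})\psi^{(0,1)}_{y_1,y_2}}\big)^{1/2}=\negl(\lambda)$. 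The resulting quantity $\sum_{y_1,y_2,d_1,d_2}\tr{Z^{(\hat b(k_1,y_1),d)}M^{(d)}\psi^{(0,1)}_{y_1,y_2}M^{(d)}}$ equals $\sum_{v_1,v_2}\tr{Z_1^{(v_1)}\sigma^{(0,v_1;\,1,v_2)}}$ once the $y_1$-summation is regrouped by $v_1=\hat b(k_1,y_1)$ and the $(y_2,d_2)$-summation by $v_2$, exactly as in Definition~\ref{def:notation_intermediate_states}; by the definition of $\gamma_T(D)$ this lies in $[1-\gamma_T(D),1]$. Hence the cross term is $\approx_{\gamma_T(D)}1$ and the whole expression is $\approx_{\gamma_T(D)}0$, and the claim for partial sums follows since every summand is a nonnegative squared norm. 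I expect the reduction step to be the main obstacle: recognising that $\tilde\Pi^{(b)}_i$, despite its trapdoor-laden definition, is a genuinely efficient measurement operator is what makes the appeal to computational indistinguishability legitimate (there is no direct test-case check pinning down $Z_i$ when $\theta_i=1$); the remaining bookkeeping with the exponentially many $d_1,d_2$ has to be carried out by pushing those sums inside the trace before any norm estimate, as above.
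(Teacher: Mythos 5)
Your proposal is correct and follows the same skeleton as the paper's proof (expand the squared norms; the $\tilde\Pi$-term is negligibly close to $1$ by perfectness, the $ZM$-term is exactly $1$, and the cross term is tied to the test-case check defining $\gamma_T(D)$; computational indistinguishability of the $\psi^{(\theta_1,\theta_2)}$ covers all basis choices), but it differs in two genuine ways. First, you lift the \emph{entire} statement to a single basis choice at the outset, justified by the observation that $\tilde\Pi^{(b)}_{i}$ is an exactly efficient coarse-graining of $\Pi$ via $\Chk$ (no trapdoor needed); the paper instead carries general $(\theta_1,\theta_2)$ through the whole expansion, massages the cross term into the sandwiched form \eqref{eqn:sums_in_op_norm} using $\tilde\Pi^{(0)}_1+\tilde\Pi^{(1)}_1\approx_{0,\psi}\1$ and orthogonality, and only lifts the final scalar \eqref{eqn:pi_tilde_sum}, whose estimator $\mA$ implements $\tilde\Pi$ only approximately by invoking perfectness. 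Second, having fixed $(\theta_1,\theta_2)=(0,1)$, you exploit injectivity of $k_1\in\kg$ to collapse $\tilde\Pi^{(b)}_{1}=\delta_{b,\hat b(k_1,y_1)}\Pi_{*,1}$ and finish with one Cauchy--Schwarz replacement of $\Pi_{*,1}$ by $\1$, which is shorter than the paper's sandwiching step; your regrouping of the resulting sum into $\sum_{v_1,v_2}\tr{Z_1^{(v_1)}\sigma^{(0,v_1;\,1,v_2)}}$ is exactly right. The one place that needs more care than you give it is the lifting step itself: Lemma~\ref{lem:sum_difference_efficient} applies to \emph{unitaries}, whereas your left-hand side involves products of projectors indexed by exponentially many $(b,y_1,y_2,d_1,d_2)$. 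To make the appeal to indistinguishability rigorous you must package the sums into the measure-and-record isometries $W_1:\ket{\varphi}\mapsto\sum_{c,d}\ket{c,d}\ot M^{(d)}\tilde\Pi^{(c)}\ket{\varphi}$ (with a third outcome for the invalid branch, which drops out of the cross term) and $W_2:\ket{\varphi}\mapsto\sum_{b,d}\ket{b,d}\ot Z^{(b)}_{d}M^{(d)}\ket{\varphi}$, both efficient, and then run the Lemma~\ref{lem:sum_difference_efficient}-type estimator on their unitary dilations, in the style of the proof of Lemma~\ref{lem:lifting}(v)--(vi); alternatively, follow the paper and lift only the sandwiched cross term, which is literally an acceptance probability. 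With that step spelled out (and Cauchy--Schwarz applied to the block-diagonal operators on the full space, so no $|\mY|$ factors appear), your argument goes through and is, if anything, a somewhat leaner derivation at the chosen basis choice.
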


\begin{proof}
We show this for $i = 1$, the proof for $i = 2$ is analogous.
Inserting the definition of the state-dependent norm and multiplying out the terms, we find that the left hand side of Equation \eqref{eqn:zi_pi_equiv_statement} equals
\begin{align*}
\sum_{b, d_1, d_2}
& \tr{ M^{(d_1, d_2)} \tilde{\Pi}^{(b)}_{1} \psi^{(\theta_1, \theta_2)} \tilde{\Pi}^{(b)}_{1} M^{(d_1, d_2)}}
+ \tr{Z_{1, d_1, d_2}^{(b)} M^{(d_1, d_2)} \psi^{(\theta_1, \theta_2)} M^{(d_1, d_2)} Z_{1, d_1, d_2}^{(b)}} \\
& - \; \tr{Z_{1, d_1, d_2}^{(b)} M^{(d_1, d_2)} \left( \tilde{\Pi}^{(b)}_{1} \psi^{(\theta_1, \theta_2)} + \psi^{(\theta_1, \theta_2)} \tilde{\Pi}^{(b)}_{1} \right) M^{(d_1, d_2)} } \,, \numberthis \label{eqn:z_pi_equiv1}
\end{align*}
where we defined
\begin{equation}
\tilde{\Pi}^{(b)}_i = \sum_{y_1, y_2} \tilde{\Pi}^{(b)}_{i, y_1, y_2} \ot \proj{y_1, y_2} \,.
\end{equation}

We treat each term in turn:
\paragraph{1st term.} Since $\{M^{(d_1, d_2)}\}_{d_1, d_2}$ forms a projective measurement, this term equals 
\begin{equation}
\sum_{b} \tr{ \tilde{\Pi}^{(b)}_{1} \psi^{(\theta_1, \theta_2)} } \,.
\end{equation} 
With the definition of $\tilde{\Pi}^{(b)}_{1, y_1, y_2}$ (Equation \eqref{eqn:def_pi_tilde}) and $\gamma_P$ (Equation \eqref{eqn:def_gamma_p}), one can see that this is negligibly close to 1 for a perfect device.

\paragraph{2nd term.} This term equals 1, since $\{M^{(d_1, d_2)}\}_{d_1, d_2}$ and $\{ Z_{1, d_1, d_2}^{(b)} \}_{b}$ form projective measurements.

\paragraph{3rd term.}
To bound this term, first note that since $\tilde{\Pi}_1^{(0)} + \tilde{\Pi}_1^{(1)}$ is a projector, we have 
\begin{equation}
\tr{(\tilde{\Pi}_1^{(0)} + \tilde{\Pi}_1^{(1)} - \1)^2 \psi^{(\theta_1, \theta_2)}} = 1 - \tr{(\tilde{\Pi}_1^{(0)} + \tilde{\Pi}_1^{(1)}) \psi^{(\theta_1, \theta_2)}} \,.
\end{equation}
Therefore, for a perfect device: 
\begin{equation}
\tilde{\Pi}_1^{(0)} + \tilde{\Pi}_1^{(1)} \approx_{0, \psi^{(\theta_1, \theta_2)}} \1 \,.  \label{eqn:pi_sum_one}
\end{equation}
We can therefore bound the third term as follows:
\begin{align*}
& \sum_{b, d_1, d_2} \tr{Z_{1, d_1, d_2}^{(b)} M^{(d_1, d_2)} \left( \tilde{\Pi}^{(b)}_{1} \psi^{(\theta_1, \theta_2)} + \psi^{(\theta_1, \theta_2)} \tilde{\Pi}^{(b)}_{1} \right) M^{(d_1, d_2)} } \\
=
& \sum_{b} \tr{ \left( \sum_{d_1, d_2} M^{(d_1, d_2)} Z_{1, d_1, d_2}^{(b)} M^{(d_1, d_2)} \right) \left( \tilde{\Pi}^{(b)}_{1} \psi^{(\theta_1, \theta_2)} + \psi^{(\theta_1, \theta_2)} \tilde{\Pi}^{(b)}_{1} \right) } \\
\intertext{Note that $\lVert \sum_{d_1, d_2} M^{(d_1, d_2)} Z_{1, d_1, d_2}^{(b)} M^{(d_1, d_2)} \rVert_\infty \leq 1$, since $Z_{1, d_1, d_2}^{(b)}$ is a projector (and therefore $Z_{1, d_1, d_2}^{(b)} \leq \1$), and the $M^{(d_1, d_2)}$ are orthogonal projectors that sum to $\1$. Therefore, we can apply the replacement lemma (Lemma \ref{lem:replace_in_trace}(i)) and use Equation \eqref{eqn:pi_sum_one} to find:}
\approx_0 
& \sum_{b, d_1, d_2} \tr{Z_{1, d_1, d_2}^{(b)} M^{(d_1, d_2)} \left( \tilde{\Pi}^{(b)}_{1} \psi^{(\theta_1, \theta_2)} \left( \tilde{\Pi}^{(b)}_{1} + \tilde{\Pi}^{(\overline{b})}_{1} \right) + \left( \tilde{\Pi}^{(b)}_{1} + \tilde{\Pi}^{(\overline{b})}_{1} \right) \psi^{(\theta_1, \theta_2)} \tilde{\Pi}^{(b)}_{1} \right) M^{(d_1, d_2)} } \\
=
& \; 2 \, \sum_{b, d_1, d_2}  \tr{Z_{1, d_1, d_2}^{(b)} M^{(d_1, d_2)} \left( \tilde{\Pi}^{(b)}_1 \psi^{(\theta_1, \theta_2)} \tilde{\Pi}^{(b)}_1 \right) M^{(d_1, d_2)}} \\
& + \tr{Z_{1, d_1, d_2}^{(b)} M^{(d_1, d_2)} \left( \tilde{\Pi}^{(b)}_1 \psi^{(\theta_1, \theta_2)} \tilde{\Pi}^{(\overline{b})}_1 + \tilde{\Pi}^{(\overline{b})}_1 \psi^{(\theta_1, \theta_2)} \tilde{\Pi}^{(b)}_1 \right) M^{(d_1, d_2)}} \\
=
& \; 2 \, \sum_{b, d_1, d_2}  \tr{Z_{1, d_1, d_2}^{(b)} M^{(d_1, d_2)} \left( \tilde{\Pi}^{(b)}_1 \psi^{(\theta_1, \theta_2)} \tilde{\Pi}^{(b)}_1 \right) M^{(d_1, d_2)}} \,. \numberthis \label{eqn:sums_in_op_norm}
\end{align*}
For the last line, we used that $\tilde{\Pi}^{(b)}_1 \psi^{(\theta_1, \theta_2)} \tilde{\Pi}^{(\overline{b})}_1 + \tilde{\Pi}^{(\overline{b})}_1 \psi^{(\theta_1, \theta_2)} \tilde{\Pi}^{(b)}_1$ is actually independent of $b$, so we can perform the sums $\sum_{b} Z^{(b)}_{d_1, d_2} = \1$ for any $d_1, d_2$, and $\sum_{d_1, d_2} M^{(d_1, d_2)} = \1$, and finally use the cyclicity of the trace and $\tilde{\Pi}^{(0)} \tilde{\Pi}^{(1)} = 0$.

Let us consider $(\theta_1, \theta_2) = (0, 1)$. Since $D$ is a perfect device (i.e., fails the preimage test with negligible probability) and $\tilde{\Pi}^{(b)}_1$ projects onto the correct preimage answer with the first bit being $b$, it follows from Definition \ref{def:notation_intermediate_states} that 
\begin{equation*}
\sum_{d_1, d_2} M^{(d_1, d_2)}\tilde{\Pi}^{(b)}_1 \psi^{(0, 1)} \tilde{\Pi}^{(b)}_1 M^{(d_1, d_2)} \ot \proj{d_1, d_2} \approx_0 \sum_{v_2} \sigma^{(0, b \,; \, 1, v_2)} \,.
\end{equation*}
Therefore, from the definition of $\gamma_T$ (Equation \eqref{eqn:def_gamma_t}) we have that
\begin{equation}
\sum_{b, d_1, d_2} \tr{Z_{1, d_1, d_2}^{(b)} M^{(d_1, d_2)} \left( \tilde{\Pi}^{(b)}_1 \psi^{(0, 1)} \tilde{\Pi}^{(b)}_1 \right) M^{(d_1, d_2)}} \approx_{\gamma_T(D)} 1 \,. \label{eqn:pi_tilde_sum}
\end{equation}
\changed{
To conclude, we need to extend the statement to any choice of $\theta_1, \theta_2$.
For this, we use the same reasoning as in the proof of the lifting lemma (Lemma \ref{lem:lifting}).
Specifically, we observe that there exists an efficient procedure $\mA$ that, given a state $\psi$, estimates the l.h.s. of Equation \eqref{eqn:pi_tilde_sum}: the procedure $\mA$ first (efficiently) measures $\{\sum_{b_2, x_2} \Pi^{(b, x_1 \,;\, b_2, x_2)}\}_{b, x_1}$, records the bit $b$, and discards $x_1$. Since we are dealing with a perfect device, this measurement returns a correct preimage $x_1 = \hat x_b(k_1, y_1)$ with overwhelming probability, and hence produces a post-measurement state negligibly close to $\tilde{\Pi}^{(b)}_1 \psi^{(0, 1)} \tilde{\Pi}^{(b)}_1$. The procedure $\mA$ now measures $\{M^{(d_1, d_2)}\}$ and discards the result, leaving the system in a state negligibly close to $\sum_{d_1, d_2} M^{(d_1, d_2)} \tilde{\Pi}^{(b)}_1 \psi^{(0, 1)} \tilde{\Pi}^{(b)}_1 M^{(d_1, d_2)}$. Finally, the procedure $\mA$ measures $Z$ to obtain a bit $b'$ and outputs 1 if $b=b'$, and 0 otherwise. By construction, it is clear that this estimates the l.h.s.~of Equation \eqref{eqn:pi_tilde_sum}. Since the procedure $\mA$ is efficient, its probability of returning 1 must be negligibly close for the computationally indistinguishable input states $\psi^{(\theta_1, \theta_2)}$, as otherwise $\mA$ could be used to distinguish these states with non-negligible advantage. It follows that the l.h.s.~of Equation \eqref{eqn:pi_tilde_sum} must be negligibly close for different $\psi^{(\theta_1, \theta_2)}$, so Equation \eqref{eqn:pi_tilde_sum} holds for all $\theta_1, \theta_2$.
}
\end{proof}

The following is a purely technical lemma that will be required for the proof of Lemma \ref{lem:z_postmeas_v}.
\begin{lemma}\label{lem:pi_crossterms_zero}
For any efficient perfect device $D = (S, \Pi, M, P)$, the following holds for any $\theta_1, \theta_2, b_1, b_2 \in \bits$: 
\begin{equation}
\tilde{\Pi}^{(b_1, b_2)} \psi^{(\theta_1, 0)} \tilde{\Pi}^{(b_1, \overline{b_2})} \approx_0 0 \,, \qquad 
\tilde{\Pi}^{(b_1, b_2)} \psi^{(0, \theta_2)} \tilde{\Pi}^{(\overline{b_1}, b_2)} \approx_0 0 \,,
\end{equation}
with $\tilde{\Pi}^{(b_1, b_2)} \deq \sum_{y_1, y_2} \tilde{\Pi}^{(b_1, b_2)}_{y_1, y_2} \ot \proj{y_1, y_2}$.
\end{lemma}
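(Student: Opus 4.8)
The plan is to exploit the injectivity structure of the $\kg$-keys together with the fact that $\psi^{(\theta_1,0)}$ and the projectors $\tilde\Pi^{(b_1,b_2)}$ are all block-diagonal in the classical $Y$-register. Since $\tilde\Pi^{(b_1,b_2)} = \sum_{y_1,y_2}\tilde\Pi^{(b_1,b_2)}_{y_1,y_2}\ot\proj{y_1,y_2}$ and $\psi^{(\theta_1,0)} = \sum_{y_1,y_2}\psi^{(\theta_1,0)}_{y_1,y_2}\ot\proj{y_1,y_2}$, the triple product $\tilde\Pi^{(b_1,b_2)}\psi^{(\theta_1,0)}\tilde\Pi^{(b_1,\overline{b_2})}$ collapses to $\sum_{y_1,y_2}\tilde\Pi^{(b_1,b_2)}_{y_1,y_2}\psi^{(\theta_1,0)}_{y_1,y_2}\tilde\Pi^{(b_1,\overline{b_2})}_{y_1,y_2}\ot\proj{y_1,y_2}$, so it suffices to show that each individual summand is (negligibly close to) zero.

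Next I would fix $y_1,y_2\in\mY$ and use that, because $\theta_2=0$, the verifier draws $k_2\leftarrow\Gen_{\kg}(1^\lambda)$, so with overwhelming probability $k_2\in\kg$ and hence $f_{k_2,0},f_{k_2,1}$ are injective with disjoint images. Then $y_2$ lies in the image of at most one of the two functions, so at least one of $\hat{x}_{b_2}(k_2,y_2)$ and $\hat{x}_{\overline{b_2}}(k_2,y_2)$ equals $\perp$. By the convention introduced with $\tilde\Pi$ in Lemma~\ref{lem:z_pi_equiv} — namely $\tilde\Pi^{(b_1',b_2')}_{y_1,y_2}\deq 0$ as soon as a $\perp$ appears in its index — at least one of $\tilde\Pi^{(b_1,b_2)}_{y_1,y_2}$ and $\tilde\Pi^{(b_1,\overline{b_2})}_{y_1,y_2}$ is literally the zero operator, so the summand vanishes. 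Summing over $y_1,y_2$ shows that $\tilde\Pi^{(b_1,b_2)}\psi^{(\theta_1,0)}\tilde\Pi^{(b_1,\overline{b_2})}$ is exactly $0$ whenever $k_2\in\kg$; taking the implicit average over the verifier's key choice, which misses $\kg$ with only negligible probability, then gives $\approx_0 0$ in trace norm. The second relation follows by the identical argument after exchanging the two coordinates and using $\theta_1=0$, so that $k_1\in\kg$.

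The only point that needs care — and the sole genuine obstacle — is to avoid arguing merely from orthogonality of distinct measurement outcomes: the operators $\tilde\Pi^{(b_1,b_2)}_{y_1,y_2}$ and $\tilde\Pi^{(b_1,\overline{b_2})}_{y_1,y_2}$ are orthogonal projectors of $\Pi$ (they differ at least in the recorded bit $b_2$ versus $\overline{b_2}$), but $P_1\rho P_2=0$ does \emph{not} follow from $P_1P_2=0$ alone. The extra leverage has to come from injectivity of the $\kg$-key, which forces one of the two projectors to be the zero operator rather than just orthogonal to the other; everything else is bookkeeping with the classical registers and the negligible key-generation error.
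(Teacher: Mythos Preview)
Your proof is correct and more direct than the paper's. Both arguments rest on the same structural fact---that for $k_2\in\kg$ the supports of $f_{k_2,0}$ and $f_{k_2,1}$ are disjoint---but they deploy it differently. You observe that, once the triple product is written block-diagonally over $(y_1,y_2)$, the $\perp$-convention in the definition of $\tilde\Pi^{(b_1,b_2)}_{y_1,y_2}$ forces one of the two projectors in each block to be literally the zero operator, so the expression vanishes exactly. (Since $\Gen_{\kg}$ outputs keys in $\kg$ by definition, no key-generation slack is even needed; the $\approx_0$ is an equality.) The paper instead passes through a state-dependent approximation: it defines $\mY_b=\{y_2:\hat b(k_2,y_2)=b\}$, shows $\tilde\Pi^{(b_1,b_2)} \approx_{0,\psi^{(\theta_1,0)}} \sum_{y_1,\,y_2\in\mY_{b_2}}\tilde\Pi^{(b_1,b_2)}_{y_1,y_2}\ot\proj{y_1,y_2}$ by bounding the residual blocks $y_2\notin\mY_{b_2}$ via the device's perfection in the preimage test, and then invokes Lemma~\ref{lem:replace_on_state} together with the disjointness of $\mY_{b_2}$ and $\mY_{\overline{b_2}}$ to conclude. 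Your route sidesteps both the replacement step and the appeal to perfection; in particular it reveals that the ``perfect device'' hypothesis is not actually used in this lemma (though it appears in the statement and is needed elsewhere in the section). Your closing remark---that mere orthogonality $P_1P_2=0$ would not suffice and that injectivity of the $\kg$-key is what kills one projector outright---correctly isolates where the real work is done.
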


\begin{proof}
We show the first relation, the second one is analogous. Let $k_1, k_2$ be the keys used by the verifier (which we usually leave implicit), and note that $k_2 \in \kg$ because we are considering the state $\psi^{(\theta_1, 0)}$. 

Define $\mY_b = \{y_2 \in \mY \, | \; \hat{b}(k_2, y_2) = b \}$. We claim that
\begin{equation}
\tilde{\Pi}^{(b_1, b_2)} \approx_{0, \psi^{(\theta_1, 0)}} \sum_{\substack{y_1 \in \mY, \\ y_2 \in \mY_{b_2}}} \tilde{\Pi}^{(b_1, b_2)}_{y_1, y_2} \ot \proj{y_1, y_2} \,. \label{eqn:crossterm_proof1}
\end{equation}
This implies the lemma by the following argument. Making use of Lemma \ref{lem:replace_on_state} and expanding $\psi^{(\theta_1, 0)} = \sum_{y_1, y_2} \psi^{(\theta_1, 0)}_{y_1, y_2} \ot \proj{y_1, y_2}$, all inner products between different $y_2$ in $\tilde{\Pi}^{(b_1, b_2)} \psi^{(\theta_1, 0)} \tilde{\Pi}^{(b_1, \overline{b_2})}$ equal 0, since $\mY_{b_2}$ and $\mY_{\overline{b_2}}$ are disjoint for $k_2 \in \kg$.

Equation \eqref{eqn:crossterm_proof1} can be shown as follows. We have 
\begin{align*}
\tr{ \left( \tilde{\Pi}^{(b_1, b_2)} - \sum_{\substack{y_1 \in \mY, \, y_2 \in \mY_{b_2}}} \tilde{\Pi}^{(b_1, b_2)}_{y_1, y_2} \ot \proj{y_1, y_2}  \right)^2 \psi^{(\theta_1, 0)} }
= \sum_{\substack{y_1 \in \mY, \, y_2 \in \mY \setminus \mY_{b_2}}} \tr{\tilde{\Pi}^{(b_1, b_2)}_{y_1, y_2} \psi^{(\theta_1, 0)}_{y_1, y_2}} \,. 
\end{align*}
Comparing the definition of $\gamma_P$ (Equation \eqref{eqn:def_gamma_p}) and the definition of $\mY_{b_2}$, the right hand side has to be negligibly close to 0 for a perfect device.
\end{proof}

We are now in a position to show that the statement of Lemma \ref{lem:z_postmeas_theta} also holds if we do not sum over both $v_1$ and $v_2$, but only over the $v_i$ associated with a computational basis measurement.

\begin{lemma} \label{lem:z_postmeas_v}
For any efficient perfect device $D = (S, \Pi, M, P)$, the following holds for any $v_1, v_2 \in \bits$:
\begin{align}
\sum_{b, v'_2} \tr{X_1 Z_1^{(b)} \sigma^{(1, v_1; \; 0, v'_2)} Z_1^{(b)}} \approx_{\gamma_T(D)^{1/2}} 0 \,, \\
\sum_{b, v'_1} \tr{X_2 Z_2^{(b)} \sigma^{(0, v'_1; \; 1, v_2)} Z_2^{(b)}} \approx_{\gamma_T(D)^{1/2}} 0 \,.
\end{align}
\end{lemma}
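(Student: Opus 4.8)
The plan is to prove this as a refinement of Lemma \ref{lem:z_postmeas_theta}, upgrading the bound from a statement that holds after summing over \emph{both} $v_1$ and $v_2$ to one that holds after summing over only the index associated with a computational-basis question. I will prove the first relation (with $X_1$, $Z_1$, and $\sigma^{(1, v_1;\,0,v_2')}$); the second is analogous with the roles of the two qubits exchanged, working with $\sigma^{(0,v_1';\,1,v_2)}$ instead. The key observation is that $\sum_{v_2'}\sigma^{(1,v_1;\,0,v_2')}$ is, up to negligible error, the post-$M$ state obtained from the ``correct-preimage'' branch $\tilde\Pi^{(v_1)}_1 \psi^{(\theta_1,\theta_2)} \tilde\Pi^{(v_1)}_1$ for a suitable choice of basis; so characterising the action of $X_1, Z_1$ on it reduces to controlling things at the level of the states $\psi^{(\theta_1,\theta_2)}$, where Lemma \ref{lem:z_pi_equiv} and Lemma \ref{lem:pi_crossterms_zero} apply.

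Concretely, I would first use Lemma \ref{lem:z_pi_equiv} to replace $Z_{1,y_1,y_2,d_1,d_2}^{(b)} M^{(d_1,d_2)}_{y_1,y_2}$ by $M^{(d_1,d_2)}_{y_1,y_2}\tilde\Pi^{(b)}_{1,y_1,y_2}$ (up to $O(\gamma_T(D))$ in the appropriate state-dependent norm, summed over all indices), which lets me rewrite $\sum_b \tr{X_1 Z_1^{(b)} \sigma^{(1,v_1;\,0,v_2')} Z_1^{(b)}}$, after expanding $\sigma$ via Equation \eqref{eqn:def_sigma} and moving the replacement through, in terms of $\sum_b \tr{X_1 M^{(d_1,d_2)}\tilde\Pi^{(b)}_1 \psi \tilde\Pi^{(b)}_1 M^{(d_1,d_2)}}$-type expressions. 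Here I would take $(\theta_1,\theta_2)=(1,0)$: because $D$ is perfect, $\tilde\Pi^{(b)}_1\psi^{(1,0)}\tilde\Pi^{(b)}_1$ after the $M$ measurement is negligibly close (by Definition \ref{def:notation_intermediate_states}) to $\sum_{v_2'}\sigma^{(1,b;\,0,v_2')}$, so summing over $b$ and isolating the $b=v_1$ term and using Lemma \ref{lem:pi_crossterms_zero} to kill the cross terms $\tilde\Pi^{(b)}_1\psi^{(1,0)}\tilde\Pi^{(\overline b)}_1$, I recover exactly $\sum_{v_2'}\sigma^{(1,v_1;\,0,v_2')}$ in the trace, up to negligible error. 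This reduces the claim to showing $\tr{X_1 \sum_{v_2'}\sigma^{(1,v_1;\,0,v_2')}}\approx_{\gamma_T(D)^{1/2}} 0$, or equivalently a $Z_1$-projected version thereof, which follows from Lemma \ref{lem:z_postmeas_theta}'s method combined with Corollary \ref{lem:X_answers_uniform} (the answers on a Hadamard-basis question are close to uniform) and the uniform-normalisation statement Lemma \ref{lem:X_normalisation_uniform}. Finally, since all quantities involved ($X_1$, $Z_1$, $M$, and the procedures defining $\sigma$) are efficient, I lift the bound from the choice $(\theta_1,\theta_2)=(1,0)$ to the claim as stated using Lemma \ref{lem:lifting} and the indistinguishability of $\sigma^{(\theta_1,\theta_2)}$ (Lemma \ref{lem:sigma_indist}); note though that the statement as written already fixes $\theta_1=1,\theta_2=0$ implicitly in the $\sigma^{(1,v_1;\,0,v_2')}$ notation, so the lifting may only be needed internally to invoke Lemma \ref{lem:z_postmeas_theta} and Lemma \ref{lem:z_pi_equiv} at the right basis choice.

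The main obstacle I anticipate is the bookkeeping in the first step: Lemma \ref{lem:z_pi_equiv} gives a bound on $\sum_{b,y_1,y_2,d_1,d_2}\norm{M^{(d_1,d_2)}_{y_1,y_2}\tilde\Pi^{(b)}_{i,y_1,y_2} - Z^{(b)}_{i,y_1,y_2,d_1,d_2} M^{(d_1,d_2)}_{y_1,y_2}}^2_{\psi^{(\theta_1,\theta_2)}_{y_1,y_2}}$, and I need to convert this into a replacement \emph{inside} the expression $\tr{X_1 Z_1^{(b)} \sigma^{(1,v_1;\,0,v_2')} Z_1^{(b)}}$ where the state $\sigma$ is a post-measurement state (so $Z_1^{(b)}$ appears on both sides) and where I have a fixed $v_1$ rather than a sum over all indices. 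I will need to be careful to use the remark in Lemma \ref{lem:z_pi_equiv} that each summand is positive (so restricting to the relevant index subsets still gives a valid bound), apply the replacement lemma (Lemma \ref{lem:replace_in_trace}) and Lemma \ref{lem:replace_on_state} to move $Z_1^{(b)}$ into $\tilde\Pi^{(b)}_1$ on both sides of $\psi$, and track how the square-root appears in the error exponent --- I expect the final error to be $O(\gamma_T(D)^{1/2})$ as stated, coming from a single application of Cauchy--Schwarz / the replacement lemma to the $O(\gamma_T(D))$ bound. A secondary subtlety is that $\tilde\Pi^{(b)}_1$ need not be a projector onto a subspace commuting nicely with $M$, so the cross-term cancellation must go through Lemma \ref{lem:pi_crossterms_zero} rather than any exact orthogonality, and I should verify that the hypotheses of that lemma (in particular $k_2\in\kg$, which holds since $\theta_2=0$) are met for the basis choice I use.
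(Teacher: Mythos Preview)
There is a genuine gap. After the replacement step (which you describe correctly, and which in the paper yields the quantity $\xi^{(v_1)}$ with the sum over $d_1$ restricted to $\{d_1:\hat u(k_1,y_1,d_1)=v_1\oplus\hat b(k_2,y_2)\}$), your subsequent reasoning breaks. The identification ``$\tilde\Pi^{(b)}_1\psi^{(1,0)}\tilde\Pi^{(b)}_1$ after the $M$ measurement is negligibly close to $\sum_{v_2'}\sigma^{(1,b;\,0,v_2')}$'' is false for $(\theta_1,\theta_2)=(1,0)$: here $k_1\in\kf$, and by Definition~\ref{def:notation_intermediate_states} the index $v_1$ in $\sigma^{(1,v_1;\,0,v_2')}$ records the \emph{hardcore bit} $\hat u(k_1,y_1,d_1)\oplus\hat b(k_2,y_2)$, which depends on the $M$-outcome $d_1$, whereas $\tilde\Pi_1^{(b)}$ projects onto the \emph{preimage bit}. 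These two partitions are unrelated, so ``isolating the $b=v_1$ term'' does not recover $\sum_{v_2'}\sigma^{(1,v_1;\,0,v_2')}$. (The identification you have in mind does hold for $(\theta_1,\theta_2)=(0,1)$, where $k_1\in\kg$ --- but that gives you control over $\sigma^{(0,b;\,1,v_2')}$, which is the wrong state.) Relatedly, the target of your reduction, $\tr{X_1\sum_{v_2'}\sigma^{(1,v_1;\,0,v_2')}}\approx 0$, is itself false: by Corollary~\ref{lem:approx_one_on_individual_v} and Lemma~\ref{lem:X_normalisation_uniform} this quantity is $\approx(-1)^{v_1}\cdot\tfrac12$, not $0$.

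What is missing is a direct cryptographic argument. The paper's proof uses Lemma~\ref{lem:z_postmeas_theta} to get $\chi^{(0)}+\chi^{(1)}\approx_{\gamma_T(D)^{1/2}}0$, so it suffices to show $\chi^{(0)}\approx\chi^{(1)}$, equivalently $\xi^{(0)}\approx\xi^{(1)}$ after your replacement step. This is done by contradiction via the \emph{adaptive hardcore bit property}: if $\xi^{(0)}-\xi^{(1)}$ were non-negligible, the efficient procedure that prepares $\psi^{(1,0)}$, measures $\Pi$ (obtaining a valid preimage $(b_1,x_1)$ since $D$ is perfect), then measures $M$ (obtaining $d_1$), then measures $X_1$ (obtaining $u$), and outputs $(b_1,x_1,d_1,u\oplus\hat b(k_2,y_2))$ would guess $\hat u(k_1,y_1,d_1)$ with non-negligible advantage. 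Lemma~\ref{lem:X_normalisation_uniform} alone does not substitute for this step --- it bounds the normalisations $\tr{\sum_{v_2'}\sigma^{(1,v_1;\,0,v_2')}}$, not the $X_1$-expectations on the $\tilde\Pi$-projected, $M$-measured branches indexed by the hardcore bit.
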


\begin{proof}
We show the first relation, the proof of the second is analogous.
Define the shorthand 
\begin{equation}
\chi^{(v_1)} \deq \sum_{b, v'_2} \tr{ X_1 Z_1^{(b)} \sigma^{(1, v_1; \; 0, v'_2)} Z_1^{(b)}}
\end{equation}
First note that by Lemma \ref{lem:z_postmeas_theta}, we have $\chi^{(0)} + \chi^{(1)} \approx_{\gamma_T(D)^{1/2}} 0 $. Therefore, to show this lemma, it suffices to show 
\begin{equation}
\chi^{(0)} \approx_{\gamma_T(D)^{1/2}} \chi^{(1)} \,.
\end{equation}
Inserting the definition of $\sigma^{(1, v_1; \; 0, v_2)}$, we have 
\begin{equation*}
\chi^{(v_1)} = \sum_{b, y_1, y_2, d_2} \; \sum_{d_1 :~ \hat{u}(k_1, y_1, d_1) = v_1 \oplus \hat{b}(k_2, y_2)}
\tr{X_{1, y_1, y_2, d_1, d_2} Z_{1, y_1, y_2, d_1, d_2}^{(b)} M^{(d_1, d_2)}_{y_1, y_2} \psi^{(1, 0)}_{y_1, y_2} M^{(d_1, d_2)}_{y_1, y_2} Z_{1, y_1, y_2, d_1, d_2}^{(b)}} \,.
\end{equation*}
We would know like to use Lemma \ref{lem:z_pi_equiv} to replace terms of the form $Z M$ by terms of the form $M \Pi$. More specifically, we need to show:
\begin{align}
\chi^{(v_1)}
\approx_{\gamma_T(D)^{1/2}}
\xi^{(v_1)} \,,
\end{align}
where
\begin{align}
\xi^{(v_1)} \deq \sum_{\substack{b, y_1, y_2, d_2}} \; \sum_{d_1 :~ \hat{u}(k_1, y_1, d_1) = v_1 \oplus \hat{b}(k_2, y_2)} 
\tr{X_{1, y_1, y_2, d_1, d_2} M_{y_1, y_2}^{(d_1, d_2)} \tilde{\Pi}_{1, y_1, y_2}^{(b)} \psi_{y_1, y_2}^{(1, 0)} \tilde{\Pi}_{1, y_1, y_2}^{(b)} M_{y_1, y_2}^{(d_1, d_2)} } \,. \label{eqn:zproof1}
\end{align}
Due to the sums in Lemma \ref{lem:z_pi_equiv}, we cannot apply the replacement lemma (Lemma \ref{lem:replace_in_trace}) directly to show Equation \ref{eqn:zproof1}.
However, we can use the Cauchy-Schwarz inequality in a very similar manner as in the proof of Lemma \ref{lem:replace_in_trace}. We give the full details in Lemma \ref{lem:cauchy-schwarz_with_sums} afterwards.

Expanding $\tilde{\Pi}_{1, y_1, y_2}^{(b)} = \sum_{b_2} \tilde{\Pi}_{y_1, y_2}^{(b, b_2)}$ and using Lemma \ref{lem:pi_crossterms_zero} with the replacement lemma (Lemma \ref{lem:replace_in_trace}) to discard the cross-terms,\footnote{To be precise, here one uses the fact that $X$ is a binary observable and $\{M^{(d_1, d_2)}\}$ forms a measurement to derive the required operator norm upper bound for the replacement lemma. This proceeds exactly as in the derivation of Equation (\ref{eqn:sums_in_op_norm}).}
\begin{align}
\xi^{(v_1)} \approx_{0} \sum_{\substack{b, b_2, y_1, y_2, d_2}} \; \sum_{d_1 :~ \hat{u}(k_1, y_1, d_1) = v_1 \oplus \hat{b}(k_2, y_2)} 
\tr{X_{1, y_1, y_2, d_1, d_2} M_{y_1, y_2}^{(d_1, d_2)} \tilde{\Pi}_{y_1, y_2}^{(b, b_2)} \psi_{y_1, y_2}^{(1, 0)} \tilde{\Pi}_{y_1, y_2}^{(b, b_2)} M_{y_1, y_2}^{(d_1, d_2)} } \,.
\end{align}

It now suffices to show $\xi^{(0)} \approx_0 \xi^{(1)}$. For the sake of contradiction, assume $\xi^{(0)} - \xi^{(1)} \geq \mu$ for a non-negligible positive $\mu$. 
As in the proof of Lemma \ref{lem:X_normalisation_uniform}, we assume that the left hand side is positive for concreteness; if $\xi^{(0)} - \xi^{(1)} \leq - \mu$, the proof also holds, but we have to flip the final bit in the output tuple of the procedure $\mA$ below.
We show that this enables us to construct an efficient procedure $\mA$ that breaks the adaptive hardcore bit property \cite[definition 3.1(4.)]{randomness}.

The procedure $\mA$ takes as input a key $k \in \kf$. It sets $k_1 \deq k$ and samples $(k_2, t_{k_2}) \leftarrow \Gen_{\kg}(1^\lambda)$. It then uses $k_1, k_2$ to construct the state $\psi^{(1, 0)}$ in the same way as the device $D$, which is efficient by assumption. 
In the process, it obtains images $y_1$ and $y_2$. It now performs the projective measurement $\{\Pi^{(b_1, x_1; \; b_2, x_2)}\}$, obtaining outcome $(b_1, x_1; \; b_2, x_2)$.
Next, $\mA$ performs the measurement $M$, obtaining outcomes $d_1, d_2$. Finally, $\mA$ measures $X_1$ to get an outcome $u$.
Since the device $D$ is efficient, all measurements performed by $\mA$ are efficient.
Because $\mA$ sampled $k_2$ itself, it has access to the trapdoor, so it can efficiently compute $\hat{b}(k_2, y_2)$. The output of $\mA$ is the tuple $(b_1, x_1, d_1, u \oplus \hat{b}(k_2, y_2)$).

Since $D$ is a perfect device, with probability negligibly  close to $1$, $\mA$'s measurement outcomes $(b_1, x_1)$, $(b_2, x_2)$ are valid preimages for $y_1, y_2$, respectively.
Hence, after measuring $M$, $\mA$ has prepared a state that is negligibly close to one of 
\begin{equation}
\Bigg\{ \sum_{\substack{b_1, b_2, \\ y_1, y_2, d_2}} \; \sum_{d_1 :~ \hat{u}(k_1, y_1, d_1) = v_1 \oplus \hat{b}(k_2, y_2)}  M_{y_1, y_2}^{(d_1, d_2)} \tilde{\Pi}_{y_1, y_2}^{(b_1, b_2)} \psi_{y_1, y_2}^{(1, 0)} \tilde{\Pi}_{y_1, y_2}^{(b_1, b_2)} M_{y_1, y_2}^{(d_1, d_2)}  \Bigg\}_{v_1}\,.
\end{equation} 
Since we assumed $\xi^{(0)} - \xi^{(1)} \geq \mu$, the bit $u$ obtained by $\mA$ from the $X$-measurement is equal to $v_1$ with non-negligible advantage over $1/2$. Since $\mA$ can compute $\hat{b}(k_2, y_2)$ correctly, $\mA$'s output $u \oplus \hat{b}(k_2, y_2)$ equals $\hat{u}(k_1, y_1, d_1)$ with non-negligible advantage over $1/2$. This contradicts the adaptive hardcore bit property \cite[definition 3.1(4.)]{randomness} and completes the proof.
\end{proof}

We prove the remaining step from the previous lemma.
\begin{lemma} \label{lem:cauchy-schwarz_with_sums}
With the notation from the previous lemma, 
\begin{equation*}
\chi^{(v_1)} \approx_{\gamma_T(D)^{1/2}} \xi^{(v_1)} \,.
\end{equation*}
\end{lemma}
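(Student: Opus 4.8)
The plan is to prove the claimed approximate equality $\chi^{(v_1)} \approx_{\gamma_T(D)^{1/2}} \xi^{(v_1)}$ by a Cauchy--Schwarz argument that mimics the proof of the replacement lemma (Lemma~\ref{lem:replace_in_trace}(i)), but is adapted to handle the sums over $b, y_1, y_2, d_1, d_2$ that appear in Lemma~\ref{lem:z_pi_equiv}. Writing $Z M \psi$ versus $M \tilde\Pi \psi$ abbreviations, the difference $\chi^{(v_1)} - \xi^{(v_1)}$ is a sum, over the relevant indices, of traces of the form
\begin{equation*}
\tr{X_{1, y_1, y_2, d_1, d_2} \bigl( Z_{1,y_1,y_2,d_1,d_2}^{(b)} M_{y_1,y_2}^{(d_1,d_2)} - M_{y_1,y_2}^{(d_1,d_2)} \tilde\Pi_{1,y_1,y_2}^{(b)} \bigr) \psi_{y_1,y_2}^{(1,0)} \, \bigl( \cdots \bigr)^\dagger}
\end{equation*}
(plus the symmetric cross-term with the roles of the two factors swapped). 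First I would write each such trace as a state-dependent inner product $\langle A_{\ldots}, B_{\ldots} \rangle_{\psi_{y_1,y_2}^{(1,0)}}$ where $A$ involves the ``error operator'' $E_{\ldots} \deq Z^{(b)} M^{(d_1,d_2)} - M^{(d_1,d_2)} \tilde\Pi^{(b)}$ and $B$ involves $X M^{(d_1,d_2)} Z^{(b)}$ (or the appropriate factor, using that $X$ has bounded operator norm and the $M^{(d_1,d_2)}$, $Z^{(b)}$ are projectors).

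The key step is then to apply Cauchy--Schwarz not term-by-term but \emph{after} summing: by the triangle inequality and then Cauchy--Schwarz for the (semi-)inner product $\sum_{\text{indices}} \langle \cdot, \cdot \rangle_{\psi_{y_1,y_2}^{(1,0)}}$ (which is itself a valid semi-inner product on tuples of operators), one gets
\begin{equation*}
\bigl| \chi^{(v_1)} - \xi^{(v_1)} \bigr| \leq \Bigl( \sum_{\text{indices}} \norm{E_{\ldots}}_{\psi_{y_1,y_2}^{(1,0)}}^2 \Bigr)^{1/2} \cdot \Bigl( \sum_{\text{indices}} \norm{B_{\ldots}}_{\psi_{y_1,y_2}^{(1,0)}}^2 \Bigr)^{1/2} \,.
\end{equation*}
The first factor is exactly the square root of the left-hand side of Equation~\eqref{eqn:zi_pi_equiv_statement} in Lemma~\ref{lem:z_pi_equiv} (restricted to the relevant subset of indices, which the remark after that lemma explicitly permits), hence is $O(\gamma_T(D)^{1/2})$; so it contributes $\gamma_T(D)^{1/4}$ — wait, one has to be careful: the first factor is $\sqrt{O(\gamma_T(D))} = O(\gamma_T(D)^{1/2})$, and the second factor must be shown to be $O(1)$. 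For the second factor, using that $X_1$ is a binary observable (operator norm $1$), that the $Z_1^{(b)}$ are orthogonal projectors summing to $\1$, and that the $M^{(d_1,d_2)}$ are orthogonal projectors summing to $\1$, the sum $\sum_{\text{indices}} \norm{B_{\ldots}}_{\psi_{y_1,y_2}^{(1,0)}}^2$ telescopes/collapses to something bounded by $\tr{\psi^{(1,0)}} = 1$ — this is the same kind of manipulation already used to derive the operator-norm bound around Equation~\eqref{eqn:sums_in_op_norm}. Combining, $|\chi^{(v_1)} - \xi^{(v_1)}| = O(\gamma_T(D)^{1/2})$, which is the claim.

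I expect the main obstacle to be purely bookkeeping: one must carefully choose how to split the error operator $Z^{(b)} M^{(d_1,d_2)} - M^{(d_1,d_2)}\tilde\Pi^{(b)}$ between the two sides of the inner product, so that (a) the ``error'' side, after summing its squared state-dependent norms over all indices, matches precisely the quantity controlled by Lemma~\ref{lem:z_pi_equiv}, and (b) the ``bounded'' side, after summing, collapses to $O(1)$ using only the projector/observable structure and $\tr{\psi^{(1,0)}} \leq 1$. There are two cross-terms (from expanding $|\chi - \xi|$ as $\langle A, B\rangle + \langle B', A'\rangle$-type expressions) and each needs the argument applied once; one should also double-check that restricting the index sums (e.g.\ the constraint $\hat u(k_1,y_1,d_1) = v_1 \oplus \hat b(k_2,y_2)$) only \emph{removes} positive terms and hence preserves all the inequalities. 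No new conceptual input is needed beyond Cauchy--Schwarz, Lemma~\ref{lem:z_pi_equiv}, and the triangle inequality for $\approx_\eps$.
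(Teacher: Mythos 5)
Your proposal is correct and follows essentially the same route as the paper's proof: decompose $\chi^{(v_1)}-\xi^{(v_1)}$ into the two cross-terms containing the error operator $Z^{(b)}M^{(d_1,d_2)}-M^{(d_1,d_2)}\tilde\Pi^{(b)}$, apply Cauchy--Schwarz over the summed state-dependent inner product (the paper does this as term-by-term Cauchy--Schwarz followed by the standard Cauchy--Schwarz on the index sums, which is the same estimate), extend the restricted $d_1$-sum by positivity, bound the error factor via Lemma~\ref{lem:z_pi_equiv}, and bound the remaining factors by $\approx 1$ using the projector/observable structure.
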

\begin{proof}
To simplify the notation, we use the shorthand $y = (y_1, y_2)$, $d = (d_1, d_2)$, and $S_y = \{d_1 \, | \, \hat{u}(k_1, y_1, d_1) = v_1 \oplus \hat{b}(k_2, y_2) \}$.
Then, we have 
\begin{multline}
\chi^{(v_1)} - \xi^{(v_1)} = \sum_{b, y_1, y_2, d_2} \sum_{d_1 \in S_y} \left\langle X_{1, y, d} Z_{1, y, d}^{(b)} M_y^{(d)} , Z_{1, y, d}^{(b)} M_y^{(d)} - M_y^{(d)} \tilde{\Pi}_{1, y}^{(b)} \right\rangle_{\psi_y^{(1, 0)}} \\
+  \left\langle Z_{1, y, d}^{(b)} M_y^{(d)} - M_y^{(d)} \tilde{\Pi}_{1, y}^{(b)}, X_{1, y d} M_y^{(d)} \tilde{\Pi}_{1, y}^{(b)} \right\rangle_{\psi_y^{(1, 0)}} \,.
\end{multline}
(To see this, one can simply insert Definition \ref{def:state_dep_inner_product} to write the r.h.s.~in terms of traces, and cancel crossterms.)
Applying the triangle inequality and the Cauchy-Schwarz inequality for the state-dependent inner product, we find 
\begin{multline}
\left| \chi^{(v_1)} - \xi^{(v_1)} \right| \leq 
\sum_{b, y_1, y_2, d_2} \sum_{d_1 \in S_y} \left\lVert X_{1, y, d} Z_{1, y, d}^{(b)} M_y^{(d)} \right\rVert_{\psi_y^{(1, 0)}} \cdot \left\lVert Z_{1, y, d}^{(b)} M_y^{(d)} - M_y^{(d)} \tilde{\Pi}_{1, y}^{(b)} \right\rVert_{\psi_y^{(1, 0)}} \\
+  \left\lVert Z_{1, y, d}^{(b)} M_y^{(d)} - M_y^{(d)} \tilde{\Pi}_{1, y}^{(b)}\right\rVert_{\psi_y^{(1, 0)}} \cdot \left\lVert X_{1, y d} M_y^{(d)} \tilde{\Pi}_{1, y}^{(b)} \right\rVert_{\psi_y^{(1, 0)}} \,.
\end{multline}
Since each term in the sum is non-negative, we can extend the sum from $d_1 \in S_y$ to all $d_1$. Applying the standard Cauchy-Schwarz inequality, we obtain: 
\begin{multline}
\left| \chi^{(v_1)} - \xi^{(v_1)} \right| \leq 
\left(\sum_{b, y_1, y_2, d_1, d_2} \left\lVert X_{1, y, d} Z_{1, y, d}^{(b)} M_y^{(d)} \right\rVert^2_{\psi_y^{(1, 0)}} \right)^{1/2} \cdot \left(\sum_{b, y_1, y_2, d_1, d_2}\left\lVert Z_{1, y, d}^{(b)} M_y^{(d)} - M_y^{(d)} \tilde{\Pi}_{1, y}^{(b)} \right\rVert_{\psi_y^{(1, 0)}}^2 \right)^{1/2} \\
+ \left(\sum_{b, y_1, y_2, d_1, d_2} \left\lVert Z_{1, y, d}^{(b)} M_y^{(d)} - M_y^{(d)} \tilde{\Pi}_{1, y}^{(b)}\right\rVert_{\psi_y^{(1, 0)}}^2 \right)^{1/2} \cdot \left(\sum_{b, y_1, y_2, d_1, d_2} \left\lVert X_{1, y d} M_y^{(d)} \tilde{\Pi}_{1, y}^{(b)} \right\rVert_{\psi_y^{(1, 0)}}^2 \right)^{1/2} \,.
\end{multline}
Using the fact that $X_{1, y d}$ is a binary observable and therefore squares to identity, and that the projectors in a measurement sum to identity, it is easy to check that 
\begin{align}
\sum_{b, y_1, y_2, d_1, d_2} \left\lVert X_{1, y, d} Z_{1, y, d}^{(b)} M_y^{(d)} \right\rVert^2_{\psi_y^{(1, 0)}} &= 1 \,, \\
\sum_{b, y_1, y_2, d_1, d_2} \left\lVert X_{1, y d} M_y^{(d)} \tilde{\Pi}_{1, y}^{(b)} \right\rVert_{\psi_y^{(1, 0)}}^2 &\approx_0 1 \,.
\end{align}
The remaining term is bounded by Lemma \ref{lem:z_pi_equiv}: 
\begin{equation}
\left| \chi^{(v_1)} - \xi^{(v_1)} \right| \leq 2 \cdot \left(\sum_{b, y_1, y_2, d_1, d_2}\left\lVert Z_{1, y, d}^{(b)} M_y^{(d)} - M_y^{(d)} \tilde{\Pi}_{1, y}^{(b)} \right\rVert_{\psi_y^{(1, 0)}}^2 \right)^{1/2} \approx_{\gamma_T(D)^{1/2}} 0 \,.
\end{equation} 
\end{proof}

This lemma enables us to prove the main result of this section, Proposition \ref{prop:anticomm}, which establishes the anti-commutation of $X_i$ and $Z_i$.
\begin{proof}[Proof of Proposition \ref{prop:anticomm}]
To simplify the notation, we do the proof for the case $\{Z_1, X_1\}$. The other case is analogous. By the lifting lemma (Lemma \ref{lem:lifting}(iv)) and the indistinguishability of $\sigma^{(\theta_1, \theta_2)}$ (Lemma \ref{lem:sigma_indist}), it suffices to show 
\begin{equation}
\tr{\{Z_1, X_1\}^2 \sigma^{(1, 0)} } \approx_{\gamma_T(D)^{1/2}} 0 \,.
\end{equation}
By the definition of $\gamma_T$ (Equation \eqref{eqn:def_gamma_t}) and Corollary \ref{lem:approx_one_on_individual_v}, we have
\begin{equation}
X_1 \approx_{\gamma_T(D), \, \sigma^{(1, v_1; \; 0, v_2)}} (-1)^{v_1} \1 \,. \label{eqn:x1_approx_1}
\end{equation}
Since $Z_1$ is a binary observable, we can rewrite $\{Z_1, X_1\}$ as follows:
\begin{align}
Z_1 X_1 + X_1 Z_1 &= (Z_1^{(0)} - Z_1^{(1)}) X_1 (Z_1^{(0)} + Z_1^{(1)}) + (Z_1^{(0)} + Z_1^{(1)}) X_1 (Z_1^{(0)} - Z_1^{(1)}) \\ 
&= 2 \cdot Z_1^{(0)} X_1 Z_1^{(0)} - 2 \cdot Z_1^{(1)} X_1 Z_1^{(1)} \,.
\end{align}
Squaring this, using $Z_1 = 2 Z_1^{(0)} - \1 = \1 - 2 Z_1^{(1)}$, $X_1^2 = \1$, and the fact that $Z_1^{(0)}, Z_1^{(1)}$ are orthogonal projectors, we find
\begin{align}
\frac{1}{4} \{Z_1, X_1\}^2 
&= \frac{1}{4} \left((\1 - 2 Z_1^{(1)}) X_1 + X_1 ((2 Z_1^{(0)} - \1)  \right)^2 \\
&= \left( X_1 Z_1^{(0)} - Z_1^{(1)} X_1 \right)^2 \\
&=  X_1 Z_1^{(0)} X_1 Z_1^{(0)} + Z_1^{(1)} X_1 Z_1^{(1)} X_1 \,. \label{eqn:anticommutator_expanded}
\end{align}
Inserting Equation \eqref{eqn:anticommutator_expanded}, we have 
\begin{align}
\frac{1}{4} \; \tr{\{Z_1, X_1\}^2 \sigma^{(1, 0)} } 
&= \sum_{v_1, v_2} \tr{ \left( X_1 Z_1^{(0)} X_1 Z_1^{(0)} + Z_1^{(1)} X_1 Z_1^{(1)} X_1 \right) \sigma^{(1, v_1; \; 0, v_2)}}
\intertext{Using Equation \eqref{eqn:x1_approx_1} and the replacement lemma (Lemma \ref{lem:replace_in_trace}(i)), we can replace the outer $X_1$ operators by $(-1)^{v_1} \1$ (the condition of constant operator norm in Lemma \ref{lem:replace_in_trace} is clearly satisfied for products of unitaries and projectors):}
&\approx_{\gamma_T(D)^{1/2}} \frac{1}{2} \; \sum_{v_1, v_2} (-1)^{v_1} \tr{ \left( Z_1^{(0)} X_1 Z_1^{(0)} + Z_1^{(1)} X_1 Z_1^{(1)} \right) \sigma^{(1, v_1; \; 0, v_2)}}
\intertext{By Lemma \ref{lem:z_postmeas_v}:}
&\approx_{\gamma_T(D)^{1/2}} 0 \,.
\end{align}
\end{proof}

\subsection{Commutation relations \label{sec:commutation}}
Having shown that operators on the same ``qubit'' (i.e., with the same $i$) anti-commute, we now turn to commutation relations. We have frequently referred to operators with different $i$ as acting on different ``qubits'', but pointed out in the overview at the start of Section \ref{sec:soundness} that this intuition is not yet justified, since we do not yet have a characterisation of the prover's state in terms of qubits.
In this section, we make an important step towards showing that the intuition of different $i$'s corresponding to different qubits is indeed valid: we show that observables with different $i$ (approximately) commute. This is clearly required if we want to think of the observables as acting on different qubits, since observables on different qubits necessarily commute.

\begin{lemma}\label{lem:trivial_comm}
For any device $D = (S, \Pi, M, P)$, the following commutation relations hold:
\begin{align*}
[X_1, X_2] = 0\,, \qquad [Z_1, Z_2] = 0 \,.
\end{align*}
\end{lemma}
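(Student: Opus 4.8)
The plan is to exploit the fact that $Z_1$ and $Z_2$ are both constructed as ``marginal'' observables of the single projective measurement $P_{0,0} = \{P^{(i,j)}_{0,0}\}_{i,j}$, and likewise $X_1, X_2$ are both marginals of $P_{1,1}$. Since the members of a projective measurement are mutually orthogonal projectors, the two marginal observables of any fixed $P_{q_1,q_2}$ commute by a direct calculation, and this immediately yields both claimed relations.

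Concretely, I would first fix a projective measurement $\{P^{(i,j)}\}_{i,j \in \bits}$ (this will be either $P_{0,0}$ or $P_{1,1}$) and set $A = \sum_{i,j}(-1)^i P^{(i,j)}$ and $B = \sum_{i,j}(-1)^j P^{(i,j)}$. Using that the $P^{(i,j)}$ are orthogonal projectors, i.e. $P^{(i,j)}P^{(i',j')} = \delta_{i,i'}\delta_{j,j'}P^{(i,j)}$, I would compute
\begin{equation*}
AB = \sum_{i,j}\sum_{i',j'}(-1)^{i}(-1)^{j'}\,P^{(i,j)}P^{(i',j')} = \sum_{i,j}(-1)^{i+j}\,P^{(i,j)},
\end{equation*}
and symmetrically $BA = \sum_{i,j}(-1)^{i+j}P^{(i,j)}$, so $[A,B] = AB - BA = 0$. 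Applying this with $P^{(i,j)} = P^{(i,j)}_{0,0}$ gives $A = Z_1$, $B = Z_2$, hence $[Z_1,Z_2] = 0$; applying it with $P^{(i,j)} = P^{(i,j)}_{1,1}$ gives $A = X_1$, $B = X_2$, hence $[X_1,X_2] = 0$.

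There is no real obstacle here: the statement holds \emph{exactly} (not merely approximately) precisely because $Z_1,Z_2$ (resp. $X_1,X_2$) come from a common measurement rather than from measurements in different bases — contrast this with $[Z_1,Z_2]$ versus, say, $[Z_1,\tilde Z_2]$ or $[X_1,\tilde X_2]$, which involve genuinely incompatible measurements ($P_{0,0}$ vs. $P_{0,1}$, etc.) and can only be controlled approximately via the protocol in Section~\ref{sec:commutation}. The only point to be slightly careful about is to invoke that $D$'s measurements are projective (as guaranteed by Definition~\ref{def:devices}, using Naimark's theorem without loss of generality), since the argument relies on exact orthogonality of the $P^{(i,j)}_{q_1,q_2}$.
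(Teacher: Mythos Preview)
Your proposal is correct and takes essentially the same approach as the paper: the paper's proof simply states that the result follows immediately from Definition~\ref{def:marginal_meas} and the fact that the elements of $\{P^{(a_1,a_2)}_{q_1,q_2}\}_{a_1,a_2}$ are orthogonal projectors, which is exactly the computation you spell out explicitly. Your added remarks about why this commutation is exact (common underlying measurement) versus the approximate relations for mixed-basis observables are accurate and helpful context.
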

\begin{proof}
This follows immediately from Definition \ref{def:marginal_meas} and the fact that for fixed $i, j \in \bits$, the elements of $\{P^{(a_1, a_2)}_{i, j}\}_{a_1, a_2 \in \bits}$ are orthogonal projectors.
\end{proof}

\begin{proposition}\label{prop:commutation_non_tilde} 
For any efficient device $D = (S, \Pi, M, P)$, the following approximate commutation relations hold for any $\theta_1, \theta_2$:
\begin{equation}
[Z_1, X_2] \approx_{\gamma_T(D), \, \sigma^{(\theta_1, \theta_2)}} 0 \,, 
\qquad
[Z_2, X_1] \approx_{\gamma_T(D), \, \sigma^{(\theta_1, \theta_2)}} 0 \,.
\end{equation}
\end{proposition}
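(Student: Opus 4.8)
The overall strategy is to leverage the lifting lemma to reduce to a single convenient basis choice, and then to use the structure of the test case together with the characterisation of the prover's ``first qubit'' observable that is already available from the success-probability bounds. The plan is to prove $[Z_1, X_2] \approx_{\gamma_T(D), \sigma^{(\theta_1,\theta_2)}} 0$; the relation $[Z_2, X_1] \approx 0$ follows by the symmetric argument. First I would invoke Lemma \ref{lem:lifting}(iii) together with the computational indistinguishability of the states $\sigma^{(\theta_1,\theta_2)}$ (Lemma \ref{lem:sigma_indist}) and the efficiency of $Z_1, X_2$ (noted in the remark after Definition \ref{def:marginal_meas}), so that it suffices to establish the commutator relation for one particular choice, namely $(\theta_1,\theta_2) = (0,1)$. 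This is the test case where $\theta_1$ specifies the computational basis on qubit $1$ and $\theta_2$ specifies the Hadamard basis on qubit $2$, and it is precisely the configuration in which the definition of $\gamma_T(D)$ (Equation \eqref{eqn:def_gamma_t}) controls the behaviour of $Z_1$ on $\sigma^{(0,v_1;\;1,v_2)}$ and of $X_2$ on $\sigma^{(0,v_1;\;1,v_2)}$.

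The key observation is that in the case $(\theta_1,\theta_2) = (0,1)$, the term $\tr{\sum_{v_1,v_2} Z_1^{(v_1)}\sigma^{(0,v_1;\;1,v_2)}}$ appearing in $T$ is $\approx_{\gamma_T(D)} 1$, so by Corollary \ref{lem:approx_one_on_individual_v} (applied with $O = Z_1$, $i = 1$) and Lemma \ref{lem:projectors_one_zero} we obtain $Z_1 \approx_{\gamma_T(D),\,\sigma^{(0,v_1;\;1,v_2)}} (-1)^{v_1}\1$ for every $v_1, v_2$, i.e., $Z_1$ acts almost like a scalar (depending only on $v_1$) on each partial post-measurement state. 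Now expand $[Z_1,X_2] = Z_1 X_2 - X_2 Z_1$ and compute $\tr{[Z_1,X_2]^\dagger[Z_1,X_2]\sigma^{(0,1)}} = \sum_{v_1,v_2}\tr{[Z_1,X_2]^\dagger[Z_1,X_2]\sigma^{(0,v_1;\;1,v_2)}}$, splitting the state-dependent norm over the constantly many partial states via Lemma \ref{lem:state_dep_distance_facts}(ii). On each $\sigma^{(0,v_1;\;1,v_2)}$, I would use the replacement lemma (Lemma \ref{lem:replace_in_trace}(i)) together with $Z_1 \approx_{\gamma_T(D)} (-1)^{v_1}\1$ to replace each occurrence of $Z_1$ by the scalar $(-1)^{v_1}$; since $X_2$ is a binary observable (hence unitary, so of constant operator norm) and $[Z_1,X_2]^\dagger[Z_1,X_2]$ is built from products of $Z_1, X_2$ and their adjoints, the operator-norm hypothesis of Lemma \ref{lem:replace_in_trace} is satisfied. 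After replacing all the $Z_1$'s by scalars, the commutator $[Z_1,X_2]$ becomes $(-1)^{v_1}X_2 - X_2(-1)^{v_1} = 0$ up to the accumulated error, which is $O(\gamma_T(D)^{1/2})$ from each replacement step; one has to be slightly careful here because naively replacing inside $[Z_1,X_2]^\dagger[Z_1,X_2]$ gives a square-root loss, so instead I would first show $[Z_1,X_2] \approx_{\gamma_T(D)^{1/2}, \sigma^{(0,v_1;\;1,v_2)}} 0$ directly by expanding $\tr{(Z_1X_2 - X_2Z_1)^\dagger(Z_1X_2 - X_2Z_1)\sigma^{(0,v_1;\;1,v_2)}}$ into four trace terms, replacing the $Z_1$ adjacent to the state in each by $(-1)^{v_1}$, and checking that the four terms cancel. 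Actually, since the stated bound is $\approx_{\gamma_T(D)}$ rather than $\approx_{\gamma_T(D)^{1/2}}$, I would instead avoid the replacement lemma's square-root loss by directly using $Z_1^{(v_1)} \approx_{\gamma_T(D),\,\sigma^{(0,v_1;\;1,v_2)}}\1$ and $Z_1^{(\overline{v_1})}\approx_{\gamma_T(D),\,\sigma^{(0,v_1;\;1,v_2)}} 0$ with Lemma \ref{lem:replace_on_state} to write $\sigma^{(0,v_1;\;1,v_2)} \approx_{\gamma_T(D)} Z_1^{(v_1)}\sigma^{(0,v_1;\;1,v_2)}Z_1^{(v_1)}$ as (sub-normalised) operators, and then observe that $Z_1$ acts as the exact scalar $(-1)^{v_1}$ on the range of $Z_1^{(v_1)}$, making the commutator with $X_2$ vanish exactly on the replaced state; the error in the trace distance between the replaced and original states is $O(\gamma_T(D))$ rather than $O(\gamma_T(D)^{1/2})$.

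The main obstacle I anticipate is bookkeeping the error exponent correctly — specifically, getting $\approx_{\gamma_T(D)}$ rather than a weaker $\approx_{\gamma_T(D)^{1/2}}$ — which requires working at the level of (sub-normalised) operator approximations via Lemma \ref{lem:replace_on_state} rather than trace-level approximations via Lemma \ref{lem:replace_in_trace}, since the latter loses a square root. A secondary subtlety is that, unlike the anti-commutation case, here we do \emph{not} need the preimage test or the adaptive hardcore bit property at all: the whole argument is a direct consequence of the fact that in the test case $(\theta_1,\theta_2)=(0,1)$, the verifier individually checks the $Z_1$ outcome against $\hat b(k_1,y_1)$, which pins down $Z_1$ to a scalar on each $\sigma^{(0,v_1;\;1,v_2)}$, and a scalar trivially commutes with everything. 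Once the relation is established for $(0,1)$, the lifting step carries it to all $(\theta_1,\theta_2)$, and the symmetric argument (swapping the roles of the two qubits and using the $(1,0)$ test case) gives $[Z_2,X_1]\approx 0$.
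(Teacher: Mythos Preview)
Your reduction via Lemma \ref{lem:lifting}(iii) and Lemma \ref{lem:sigma_indist} to the case $(\theta_1,\theta_2)=(0,1)$, followed by the decomposition $\sigma^{(0,1)}=\sum_{v_1,v_2}\sigma^{(0,v_1;\;1,v_2)}$, matches the paper's approach. The gap is in the final step. Your claim that ``$Z_1$ acts as the exact scalar $(-1)^{v_1}$ on the range of $Z_1^{(v_1)}$, making the commutator with $X_2$ vanish exactly on the replaced state'' is not correct: while $Z_1 Z_1^{(v_1)}=(-1)^{v_1}Z_1^{(v_1)}$, the operator $X_2$ need not preserve the range of $Z_1^{(v_1)}$, so
\[
[Z_1,X_2]\,Z_1^{(v_1)} \;=\; Z_1 X_2 Z_1^{(v_1)} - (-1)^{v_1}X_2 Z_1^{(v_1)}
\]
does not vanish in general. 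In fact, requiring it to vanish is equivalent to $Z_1^{(\overline{v_1})}X_2 Z_1^{(v_1)}=0$, i.e.\ to $[Z_1,X_2]=0$, which is precisely what you are trying to prove. So the argument is circular, and knowing only $Z_1\approx_{\gamma_T(D)}(-1)^{v_1}\1$ on $\sigma^{(0,v_1;\;1,v_2)}$ is not by itself enough.

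The missing ingredient is that the definition of $\gamma_T(D)$ (Equation \eqref{eqn:def_gamma_t}) also contains the term $\tr{\sum_{v_1,v_2}X_2^{(v_2)}\sigma^{(0,v_1;\;1,v_2)}}$, so by Corollary \ref{lem:approx_one_on_individual_v} you \emph{also} have $X_2\approx_{\gamma_T(D),\,\sigma^{(0,v_1;\;1,v_2)}}(-1)^{v_2}\1$. With both observables pinned to scalars, the paper avoids all of your square-root bookkeeping by repeatedly applying Lemma \ref{lem:state_dep_distance_facts}(i) (which has no square-root loss, since multiplying on the left by a contraction preserves the state-dependent norm bound):
\[
Z_1 X_2 \;\approx\; (-1)^{v_2}Z_1 \;\approx\; (-1)^{v_1+v_2}\1 \;\approx\; (-1)^{v_1}X_2 \;\approx\; X_2 Z_1\,,
\]
each step with error $O(\gamma_T(D))$ in the state-dependent norm on $\sigma^{(0,v_1;\;1,v_2)}$. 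The triangle inequality then gives $[Z_1,X_2]\approx_{\gamma_T(D),\,\sigma^{(0,v_1;\;1,v_2)}}0$ directly, and Lemma \ref{lem:state_dep_distance_facts}(ii) recombines the partial states. There is no need for Lemma \ref{lem:replace_on_state} or Lemma \ref{lem:replace_in_trace} here at all.
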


\begin{proof}
For simplicity, we restrict ourselves to proving the first relation; the other one is analogous.
Since $Z_i, X_i$ are efficient, by the lifting lemma (Lemma \ref{lem:lifting}(iii)) and the indistinguishability of $\sigma^{(\theta_1, \theta_2)}$ (Lemma \ref{lem:sigma_indist}), it suffices to show the statement for $\sigma^{(0, 1)}$.
We will split this state as $\sigma^{(0, 1)} = \sum_{v_1, v_2} \sigma^{(0, v_1; \; 1, v_2)}$ and apply Corollary \ref{lem:approx_one_on_individual_v} to each part, i.e., replace $X_i$ and $Z_i$ by $\pm \1$.

By the definition of $\gamma_T$ (Equation \eqref{eqn:def_gamma_t}) and Corollary \ref{lem:approx_one_on_individual_v}, we have for any $v_1, v_2 \in \bits$:
\begin{equation}
Z_1 \approx_{\gamma_T(D), \, \sigma^{(0, v_1; \; 1, v_2)}} (-1)^{v_1} \1 \,, 
\qquad 
X_2 \approx_{\gamma_T(D), \, \sigma^{(0, v_1; \; 1, v_2)}} (-1)^{v_2} \1 \,.
\end{equation}
Because $Z_1^\dagger Z_1 = X_2^\dagger X_2 = \1$, Lemma \ref{lem:state_dep_distance_facts}(i) allows us to apply this repeatedly to the product $Z_1 X_2$:
\begin{align}
Z_1 X_2 
&\approx_{\gamma_T(D), \, \sigma^{(0, v_1; \; 1, v_2)}} (-1)^{v_2} Z_1 \\
&\approx_{\gamma_T(D), \, \sigma^{(0, v_1; \; 1, v_2)}} (-1)^{v_1} (-1)^{v_2} \1 \\
&\approx_{\gamma_T(D), \, \sigma^{(0, v_1; \; 1, v_2)}} X_2 \cdot (-1)^{v_1} \1 \\
&\approx_{\gamma_T(D), \, \sigma^{(0, v_1; \; 1, v_2)}} X_2 Z_1 \,.
\end{align}
This holds for every $v_1, v_2$. Since $\sigma^{(0, 1)} = \sum_{v_1, v_2} \sigma^{(0, v_1; \; 1, v_2)}$ and each $\sigma^{(0, v_1; \; 1, v_2)}$ is positive, the result follows from Lemma \ref{lem:state_dep_distance_facts}(ii).
\end{proof}

\begin{remark}
This proof relies on the fact that there is a basis choice for which there is only one accepted answer for \emph{both} $Z_1$ and $X_2$ (or $Z_2$ and $X_1$). For the tilde observables, we would have to show approximate commutation of $\tilde{Z}_1, \tilde{Z}_2$, and $\tilde{X}_1, \tilde{X}_2$, since these cases are not covered in Lemma \ref{lem:trivial_comm}. However, there are no basis choices for which there is only one accepted answer for both $\tilde{X}_1$ and $\tilde{X}_2$, since for the honest prover, the basis choice $(1, 1)$ results in an entangled state by application of a $CZ$ gate. This prevents us from applying this proof to $[\tilde{X}_1, \tilde{X}_2]$.
Instead, our strategy will be to first show that the non-tilde observables can be rounded to Pauli observables, and then show that tilde and non-tilde observables are approximately equal (on the device's state), implying that the tilde observables can also be rounded to Pauli observables.
\end{remark}

\subsection{Approximate equality of non-tilde observables and Pauli observables} \label{sec:rounding}
The goal of this section is to show that on the states $\sigma^{(\theta_1, \theta_2)}$, the non-tilde observables $Z_i, X_i$ used by the prover are close to the respective Pauli matrices $\sigma_Z, \sigma_X$ (under some isometry). The proof of this follows the steps outlined in item (4.) of the introduction to Section \ref{sec:soundness}.

\begin{definition}
To simplify the notation, we write
\begin{equation}
A \approx_{\p, \, \psi} B
\end{equation}
if there exists a constant $c > 0$ such that 
\begin{equation}
A \approx_{\gamma_T(D)^c, \psi} B \,.
\end{equation}
We also use this notation for the other approximate equalities in Definition \ref{def:approx_dist}.
\end{definition}
The convenient feature of this notation is that when we use the replacement lemma (Lemma \ref{lem:replace_in_trace}), we do not need to change the subscript.

The important results from the preceding sections are the commutation- and anti-commutation relations for the non-tilde observables, which hold for any $\theta_1, \theta_2 \in \bits$ and can be expressed with the shorthand notation as
\begin{alignat*}{3}
& \{Z_i, X_i\} \approx_{\p, \, \sigma^{(\theta_1, \theta_2)}} 0\,.
&&  
&& \text{(Proposition \ref{prop:anticomm})} 
\\
& [X_1, X_2] = 0 \,, \qquad
&& [Z_1, Z_2] = 0 \,. \quad \qquad
&& \text{(Lemma \ref{lem:trivial_comm})} 
\\
& [X_1, Z_2] \approx_{\p, \, \sigma^{(\theta_1, \theta_2)}} 0 \,, \qquad
&& [Z_1, X_2] \approx_{\p, \, \sigma^{(\theta_1, \theta_2)}} 0 \,. \quad \qquad
&& \text{(Proposition \ref{prop:commutation_non_tilde})} 
\end{alignat*}

We now define the ``swap isometry''. This is the isometry which will map the prover's states and observables to the desired Bell states and single-qubit Pauli observables.
\begin{definition}[Swap isometry, \cite{scarani-singlet}] \label{def:swap_isometry}
Given a device $D = (S, \Pi, M, P)$ with Hilbert space $\H$, we define the \textit{swap isometry} $V_S: \H \mapsto \C^4 \ot \H$ as
\begin{equation}
V_S = \frac{1}{4} \sum_{a, b \in \bits} \ket{a, b} \ot X_2^b (\1 + (-1)^b Z_2) X_1^a (\1 + (-1)^a Z_1)\,.
\end{equation}
(Note that the superscripts here signify exponents, not projectors.)
\end{definition}

\begin{lemma}
For an efficient device $D = (S, \Pi, M, P)$, the swap isometry $V_S$ is efficient.
\end{lemma}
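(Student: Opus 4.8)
The plan is to unfold the definition of $V_S$ and check that each of the three ingredients appearing in it — the controlled operation that writes the two bits $a,b$ into the ancilla register, and the operators $X_i^a(\1 + (-1)^a Z_i)$ acting on $\H$ — can be implemented by an efficient circuit, using the fact that $Z_1, Z_2, X_1, X_2$ are efficient binary observables (Definition \ref{def:marginal_meas} together with the remark that they are efficient). Concretely, I would first rewrite the summand in a form that makes its unitarity manifest: since $\frac{1}{2}(\1 + (-1)^a Z_i) = Z_i^{(a)}$ is the projector onto the $(-1)^a$-eigenspace of $Z_i$, we have
\begin{equation}
V_S = \sum_{a, b \in \bits} \ket{a, b} \ot X_2^b Z_2^{(b)} X_1^a Z_1^{(a)} \,,
\end{equation}
so that $V_S$ is the isometry naturally associated with first performing the projective measurement $\{Z_1^{(0)}, Z_1^{(1)}\}$ and recording the outcome $a$, applying $X_1^a$, then measuring $\{Z_2^{(0)}, Z_2^{(1)}\}$ recording $b$, and applying $X_2^b$ (this is the standard ``swap'' circuit of \cite{scarani-singlet}).

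The key steps, in order: (1) By Lemma \ref{lem:obs_to_meas}, since $Z_1$ is an efficient binary observable, the isometry $\ket\psi \mapsto \ket{0}\ot Z_1^{(0)}\ket\psi + \ket{1}\ot Z_1^{(1)}\ket\psi$ is efficient; composing this with a controlled-$X_1$ gate (efficient since $X_1$ is an efficient binary observable, hence an efficient unitary, and controlled versions of efficient unitaries are efficient, as used e.g. in the proof of Lemma \ref{lem:sum_difference_efficient}) controlled on the freshly written ancilla qubit, realises $\ket\psi \mapsto \sum_a \ket{a}\ot X_1^a Z_1^{(a)}\ket\psi$ efficiently. (2) Repeat the same construction with $Z_2, X_2$ on a second ancilla qubit to realise $\ket\psi \mapsto \sum_b \ket b \ot X_2^b Z_2^{(b)}\ket\psi$ efficiently. (3) Composing these two efficient isometries (on disjoint ancilla registers $\C^2\ot\C^2 = \C^4$) yields exactly $V_S$, and a composition of efficient isometries is efficient — this last fact follows directly from Definition \ref{def:eff_everything}, since the unitary extension of a composition of isometries is the composition of the unitary extensions (after padding with fixed ancillas), which is a polynomial-size circuit.

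The main obstacle — though it is mild — is bookkeeping: one must verify that the operator ordering in $V_S$ (the $Z_1,X_1$ block applied before the $Z_2,X_2$ block) matches the order in which the two efficient sub-isometries are composed, and that the ancilla registers are correctly allocated so the resulting circuit acts on $\C^4\ot\H$ with the $\C^4$ factor ordered as $\ket{a,b}$; both are routine given the explicit rewriting above. I would close by noting that the normalisation $\frac14$ in Definition \ref{def:swap_isometry} is absorbed into the $Z_i^{(a)}$ projectors, so no separate argument about it is needed, and that $V_S^\dagger V_S = \1$ follows from $\sum_a Z_1^{(a)} = \sum_b Z_2^{(b)} = \1$ together with the orthogonality of the projectors and unitarity of $X_1, X_2$, confirming $V_S$ is genuinely an isometry.
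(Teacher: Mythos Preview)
Your proof is correct and takes essentially the same approach as the paper's. The paper presents the construction as an explicit circuit—two ancilla qubits each acted on by $H$, controlled-$Z_i$, $H$, controlled-$X_i$—whereas you phrase the same construction more abstractly by invoking Lemma \ref{lem:obs_to_meas} for the ``measure $Z_i$ and record'' step and then applying controlled-$X_i$; unpacking that lemma recovers exactly the $H$-controlled-$Z_i$-$H$ block in the paper's circuit, so the two arguments coincide.
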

\begin{proof}
It can be verified by a simple calculation that the following circuit implements the swap isometry:
\begin{center}
\begin{quantikz}
\lstick{$\ket{0}$} & \gate{H} & \ctrl{2} & \gate{H} & \ctrl{2} & \qw & \qw & \qw & \qw & \qw \\
\lstick{$\ket{0}$} & \qw & \qw & \qw & \qw & \gate{H} & \ctrl{1} & \gate{H} & \ctrl{1} & \qw \\
\lstick{$\ket{\psi}$} & \qwbundle[alternate]{} & \gate{Z_1}\qwbundle[alternate]{} & \qwbundle[alternate]{} & \gate{X_1} \qwbundle[alternate]{} & \qwbundle[alternate]{} & \gate{Z_2} \qwbundle[alternate]{} & \qwbundle[alternate]{} & \gate{X_2} \qwbundle[alternate]{} & \qwbundle[alternate]{}
\end{quantikz}
\end{center}
We remark that this circuit is almost identical to that in \cite{scarani-singlet}, but instead of applying $Z_i$ and $X_i$ to different parts of the state (which only makes sense if we have a Hilbert space with a tensor product structure), we apply all of them in sequence on the same Hilbert space.
Also note that the swap isometry introduces an asymmetry between the observables $Z_1, X_1$, which are applied first, and the observables $Z_2, X_2$, which are applied afterwards. 
\end{proof}

Our goal is to show that under the swap isometry, the prover's observables $Z_i, X_i$ are mapped to single-qubit Pauli observables. The following lemma collects the results of conjugating single-qubit Pauli observables by the swap isometry. Informally, the remainder of this section shows that in the state-dependent distance, the right hand sides of the equalities in Lemma \ref{lem:paulis_conjugated} are close to $Z_1, X_1, Z_2$, and $X_2$, respectively.

\begin{lemma}\label{lem:paulis_conjugated} Conjugating Pauli-operators by $V_S$ yields the following
\begin{align}
V_S^\dagger (\sigma_Z \ot \1_2 \ot \1) V_S &= Z_1 \,, 
\label{eqn:V_z1}\\
V_S^\dagger (\sigma_X \ot \1_2 \ot \1) V_S &= \frac{1}{4} \sum_{a \in \bits} (\1 + (-1)^{\overline{a}} Z_1) X_1 (\1 + (-1)^a Z_1) \,, 
\label{eqn:V_x1} \\
V_S^\dagger (\1_2 \ot \sigma_Z \ot \1) V_S &=  \frac{1}{4} \sum_{a \in \bits} (\1 + (-1)^a Z_1) X_1^a Z_2 X_1^a (\1 + (-1)^a Z_1) \,, 
\label{eqn:V_z2} \\
V_S^\dagger (\1_2 \ot \sigma_X \ot \1) V_S &= \frac{1}{16} \sum_{a, b \in \bits} (\1 + (-1)^a Z_1) X_1^a (\1 + (-1)^{\overline{b}} Z_2) X_2 (\1 + (-1)^{b} Z_2) X_1^a (\1 + (-1)^a Z_1) 
\label{eqn:V_x2}\,.
\end{align}
\end{lemma}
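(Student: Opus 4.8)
The plan is to prove each of the four identities in Lemma \ref{lem:paulis_conjugated} by direct computation, expanding $V_S$ and $V_S^\dagger$ from Definition \ref{def:swap_isometry} and using only the defining algebraic properties of the operators involved: that $Z_1, X_1, Z_2, X_2$ are binary observables (so $Z_i^2 = X_i^2 = \1$), that $[Z_1, Z_2] = 0$ and $[X_1, X_2] = 0$ exactly (Lemma \ref{lem:trivial_comm}), and the elementary fact that conjugating the control qubit's Pauli by the circuit gates behaves as expected. Concretely, writing $V_S = \frac14 \sum_{a,b} \ket{a,b} \otimes W_{a,b}$ with $W_{a,b} = X_2^b(\1 + (-1)^b Z_2)X_1^a(\1 + (-1)^a Z_1)$, one has $V_S^\dagger (\sigma_P \otimes \1) V_S = \frac{1}{16}\sum_{a,b,a',b'} \bra{a',b'}\sigma_P\ket{a,b}\, W_{a',b'}^\dagger W_{a,b}$, and each identity reduces to evaluating the relevant matrix elements of $\sigma_Z \otimes \1$, $\sigma_X \otimes \1$, $\1 \otimes \sigma_Z$, $\1\otimes\sigma_X$ in the two-qubit ancilla and then simplifying the resulting operator sum.

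First I would record the key simplification: $(\1 + (-1)^a Z_1)$ is (twice) the projector $Z_1^{(a)}$ onto the $(-1)^a$-eigenspace of $Z_1$, so $(\1 + (-1)^a Z_1)(\1+(-1)^{a'}Z_1) = 2(1+(-1)^{a\oplus a'})\cdot\tfrac12(\1+(-1)^aZ_1)$, i.e.\ it vanishes when $a \neq a'$ and doubles when $a = a'$; the same holds for the $Z_2$ factors. Combined with $X_1^2 = X_2^2 = \1$, this collapses most of the double (or quadruple) sum. For \eqref{eqn:V_z1}: since $\sigma_Z \otimes \1$ is diagonal in the ancilla with eigenvalue $(-1)^{a}$ on $\ket{a,b}$, we get $V_S^\dagger(\sigma_Z\otimes\1_2\otimes\1)V_S = \frac{1}{16}\sum_{a,b}(-1)^a W_{a,b}^\dagger W_{a,b}$; using $W_{a,b}^\dagger W_{a,b} = (\1+(-1)^aZ_1)X_1^a(\1+(-1)^bZ_2)X_2^bX_2^b(\1+(-1)^bZ_2)X_1^a(\1+(-1)^aZ_1)$ and the projector-doubling identities, each $b$-term contributes $2(\1+(-1)^aZ_1)X_1^a \cdot 2 \cdot X_1^a (\1+(-1)^aZ_1) = 4(\1+(-1)^aZ_1)^2 = 8(\1+(-1)^aZ_1)$, so the sum over $b$ gives a factor $2$ and $V_S^\dagger(\sigma_Z\otimes\1)V_S = \frac{1}{16}\sum_a (-1)^a \cdot 2 \cdot 2(\1 + (-1)^aZ_1) = Z_1$. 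For \eqref{eqn:V_x1}--\eqref{eqn:V_x2} the matrix element $\bra{a',b'}\sigma_X\otimes\cdots\ket{a,b}$ flips the appropriate ancilla index, leaving a sum over the off-diagonal pairs; after the same projector collapses one is left precisely with the stated expressions. I would carry these out in the order \eqref{eqn:V_z1}, \eqref{eqn:V_x1}, \eqref{eqn:V_z2}, \eqref{eqn:V_x2}, since the later ones reuse the bookkeeping from the earlier ones.

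I do not anticipate a genuine obstacle here — this is a ``direct calculation'' lemma — but the step requiring the most care is \eqref{eqn:V_x2}, where flipping the second ancilla index leaves all four combinations of the $Z_1$-projector index $a$ surviving (because $\sigma_X$ on the second ancilla does not touch the first), and one must track how $X_1^a$ commutes past the $(\1 + (-1)^bZ_2)$ and $X_2$ factors. Since $[X_1,X_2]=0$ exactly and $[X_1,Z_2]$ is \emph{not} assumed here (the conjugation identities are purely algebraic, with no $\approx$), the factors $X_1^a$ must be kept in place rather than moved through $Z_2$, which is exactly why the right-hand side of \eqref{eqn:V_x2} has the nested form $(\1+(-1)^aZ_1)X_1^a(\ldots)X_1^a(\1+(-1)^aZ_1)$. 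The main thing to get right is therefore the ordering of operators and the counting of the numerical prefactors ($\tfrac{1}{16}$ from $V_S^\dagger V_S$, doublings from each projector identity, and the number of surviving terms in each sum); once the prefactors are tracked consistently, the four identities follow.
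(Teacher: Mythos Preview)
Your approach matches the paper's exactly: its proof is simply ``Inserting the definition of $V_S$, these can be verified by direct calculation.'' There is a small bookkeeping slip in your sketch of \eqref{eqn:V_z1} --- each individual $b$-term still depends on $b$ through $(\1+(-1)^bZ_2)$, and it is only the \emph{sum} over $b$ that collapses this to $2\1$, giving $\sum_b W_{a,b}^\dagger W_{a,b}=8(\1+(-1)^aZ_1)$ and hence $\tfrac{1}{16}\sum_a(-1)^a\cdot 8(\1+(-1)^aZ_1)=Z_1$ --- but the method and the final identities are correct.
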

\begin{proof}
Inserting the definition of $V_S$, these can be verified by direct calculation.
\end{proof}

The following lemma shows item 4(ii) from the overview at the start of Section \ref{sec:soundness}, namely that the swap isometry maps the prover's observable $X_1$ to a Pauli $\sigma_X$ observable on the first qubit.
\begin{lemma} \label{lem:rounded_x_on_first_qubit}
For any efficient perfect device $D = (S, \Pi, M, P)$, the following holds for any $\theta_1, \theta_2 \in \bits$:
\begin{equation}
V_S^\dagger (\sigma_X \ot \1_2 \ot \1) V_S \approx_{\p, \, \sigma^{(\theta_1, \theta_2)}} X_1 \,.
\end{equation}
\end{lemma}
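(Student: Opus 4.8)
The plan is to reduce the statement directly to the anti-commutation relation $\{Z_1, X_1\} \approx_{\p,\,\sigma^{(\theta_1,\theta_2)}} 0$ established in Proposition \ref{prop:anticomm}, via a short algebraic manipulation of the right-hand side of Equation \eqref{eqn:V_x1}. There is no need to touch the definition of the swap isometry or the other cases of Lemma \ref{lem:paulis_conjugated}.

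First I would expand the expression for $V_S^\dagger(\sigma_X \ot \1_2 \ot \1)V_S$ supplied by Lemma \ref{lem:paulis_conjugated}. Using $(-1)^{\overline a} = -(-1)^a$ and multiplying out the two summands $a = 0$ and $a = 1$, the cross terms $\pm X_1 Z_1$ and $\pm Z_1 X_1$ cancel, leaving
\begin{equation*}
V_S^\dagger(\sigma_X \ot \1_2 \ot \1)V_S = \frac{1}{2}\left( X_1 - Z_1 X_1 Z_1 \right) \,.
\end{equation*}
Consequently $V_S^\dagger(\sigma_X \ot \1_2 \ot \1)V_S - X_1 = -\frac{1}{2}\left( X_1 + Z_1 X_1 Z_1 \right)$, so it suffices to bound $\norm{X_1 + Z_1 X_1 Z_1}_{\sigma^{(\theta_1,\theta_2)}}$, the constant prefactor being irrelevant for the order of the error.

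Next I would use the identity $X_1 + Z_1 X_1 Z_1 = Z_1\left( Z_1 X_1 + X_1 Z_1 \right) = Z_1 \{Z_1, X_1\}$, which follows from $Z_1^2 = \1$ alone. Since $Z_1$ is a binary observable we have $Z_1^\dagger Z_1 = \1 \leq \1$, so Lemma \ref{lem:state_dep_distance_facts}(i) applied to the relation $\{Z_1, X_1\} \approx_{\p,\,\sigma^{(\theta_1,\theta_2)}} 0$ (Proposition \ref{prop:anticomm}) yields $Z_1 \{Z_1, X_1\} \approx_{\p,\,\sigma^{(\theta_1,\theta_2)}} 0$. Combining this with the expansion above gives $V_S^\dagger(\sigma_X \ot \1_2 \ot \1)V_S \approx_{\p,\,\sigma^{(\theta_1,\theta_2)}} X_1$, as desired; the case $i = 2$ is not needed here.

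I do not expect a genuine obstacle: this is essentially a bookkeeping computation. The only points requiring mild care are (a) invoking Lemma \ref{lem:state_dep_distance_facts}(i) on the \emph{correct} side — here the stray $Z_1$ sits to the left of $\{Z_1, X_1\}$, which is exactly the form that lemma handles — and (b) recording that the hypothesis of a perfect device enters only through Proposition \ref{prop:anticomm}, so no extra assumptions are introduced.
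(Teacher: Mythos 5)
Your proof is correct. The expansion of Equation \eqref{eqn:V_x1} to $\frac{1}{2}(X_1 - Z_1 X_1 Z_1)$ checks out, the identity $X_1 + Z_1 X_1 Z_1 = Z_1\{Z_1,X_1\}$ uses only $Z_1^2 = \1$, and Lemma \ref{lem:state_dep_distance_facts}(i) applies since $Z_1^\dagger Z_1 = \1$ and the stray factor sits on the left, so the conclusion follows with exponent $c = 1/2$ inherited from Proposition \ref{prop:anticomm}. The paper proceeds differently in the bookkeeping: it first invokes Lemma \ref{lem:state_dep_distance_expanded} to reduce the claim to showing $\tr{V_S^\dagger(\sigma_X \ot \1_2 \ot \1)V_S\, X_1\, \sigma^{(\theta_1,\theta_2)}} \approx_{\p} 1$, then expands via Equation \eqref{eqn:V_x1} and uses the replacement lemma (Lemma \ref{lem:replace_in_trace}(i)) to anti-commute the right-most $X_1$ past $Z_1$ inside the trace, noting that $(\1 + (-1)^{\overline{a}} Z_1)X_1$ has constant operator norm. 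Both arguments rest on the same two ingredients (Equation \eqref{eqn:V_x1} and Proposition \ref{prop:anticomm}), but yours works directly in the state-dependent norm via an exact algebraic identity, which is slightly more elementary — it avoids Lemma \ref{lem:state_dep_distance_expanded} and the replacement lemma altogether and does not incur the square-root loss of the replacement step (you get $\gamma_T(D)^{1/2}$ rather than $\gamma_T(D)^{1/4}$), though under the $\approx_{\p}$ convention the exponent is immaterial. One cosmetic remark: the perfect-device hypothesis is not even needed through Proposition \ref{prop:anticomm} alone for $\theta$-lifting issues — the proposition is already stated for all $\theta_1,\theta_2$ — so your accounting of where the hypotheses enter is accurate.
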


\begin{proof}
By Lemma \ref{lem:state_dep_distance_expanded}, it suffices to show 
\begin{equation}
\tr{V_S^\dagger (\sigma_X \ot \1_2 \ot \1) V_S X_1 \sigma^{(\theta_1, \theta_2)}} \approx_{\p} 1 \,.
\end{equation}
We expand $V_S^\dagger (\sigma_X \ot \1_2 \ot \1) V_S$ (Equation \eqref{eqn:V_x1}) and get 
\begin{align}
\tr{V_S^\dagger (\sigma_X \ot \1_2 \ot \1) V_S X_1 \sigma^{(\theta_1, \theta_2)}} = 
\frac{1}{4} \sum_{a \in \bits} \tr{ (\1 + (-1)^{\overline{a}} Z_1) X_1 (\1 + (-1)^a Z_1) X_1 \sigma^{(\theta_1, \theta_2)}} \,. \label{eqn:x1_proof1}
\end{align}
At this point we would like to use $\{Z_1, X_1\} \approx_{\p, \, \sigma^{(\theta_1, \theta_2)}} 0$ (Proposition \ref{prop:anticomm}) to anti-commute the right-most $X_1$ and $Z_1$. Since $(\1 + (-1)^{\overline{a}} Z_1) X_1$ has constant operator norm, this can be achieved using the replacement lemma (Lemma \ref{lem:replace_in_trace}(i)): 
\begin{align}
(\ref{eqn:x1_proof1}) 
& \approx_{\p} \frac{1}{4} \sum_{a \in \bits} \tr{ (\1 + (-1)^{\overline{a}} Z_1) X_1 X_1 (\1 + (-1)^{\overline{a}} Z_1) \sigma^{(\theta_1, \theta_2)}} \\
&= 1 \,.
\end{align}
In the last line we used $\sum_{a \in \bits} (\1 + (-1)^{\overline{a}} Z_1) X_1 X_1 (\1 + (-1)^{\overline{a}} Z_1) = 4$, which follows from $X_1^2 = Z_1^2 = \1$.
\end{proof}

Having established a characterisation of the prover's operators $Z_1$ and $X_1$, we now use this to partially characterise the prover's state. In particular, we will show that in the test case, the swap isometry maps the prover's state to a product state, where the first qubit is in the computational or Hadamard basis, depending on the verifier's basis choice (Lemma \ref{lem:first_qubit_factorises}). We will then show that the auxiliary states that the prover holds in addition to the first qubit must be computationally indistinguishable to the prover (Lemma \ref{lem:first_qubit_same_alpha}). This is similar to the result of \cite{rsp} (but with fewer different single-qubit states).

\begin{lemma} \label{lem:first_qubit_factorises}
Let $D = (S, \Pi, M, P)$ be an efficient perfect device. Then, for $v_1, v_2 \in \bits$, there exist positive matrices $\alpha^{(0, v_1; \; 1, v_2)}$ and $\alpha^{(1, v_1; \; 0, v_2)}$ such that the following holds:
\begin{align}
V_S \sigma^{(1, v_1; \; 0, v_2)} V_S^\dagger 
&\approx_{\p} \proj{(-)^{v_1}} \ot \alpha^{(1, v_1; \; 0, v_2)} \,, \\
V_S \sigma^{(0, v_1; \; 1, v_2)} V_S^\dagger 
&\approx_{\p} \proj{v_1} \ot \alpha^{(0, v_1; \; 1, v_2)} \,.
\end{align}
\end{lemma}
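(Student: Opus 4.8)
The plan is to show that under the swap isometry, the first qubit register carries the state $\proj{v_1}$ (in the Hadamard case $\theta_1=1$) or $\proj{v_1}$ interpreted as the $\ket{(-)^{v_1}}$ state (in the computational case $\theta_1 = 0$ — note the roles of $0/1$ get swapped by the definition of $V_S$), while the remaining register carries \emph{some} positive matrix $\alpha^{(\theta_1,v_1;\,\theta_2,v_2)}$ whose precise form we do not need to control at this stage. The starting point is the characterisation of the prover's operators on the first ``qubit'': Equation \eqref{eqn:V_z1} gives $V_S^\dagger(\sigma_Z\ot\1_2\ot\1)V_S = Z_1$ exactly, and Lemma \ref{lem:rounded_x_on_first_qubit} gives $V_S^\dagger(\sigma_X\ot\1_2\ot\1)V_S \approx_{\p,\sigma^{(\theta_1,\theta_2)}} X_1$. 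Combining these with Lemma \ref{lem:approx_distance_with_isometries} (to push relations through the isometry) yields $\sigma_Z\ot\1_2\ot\1 \approx_{\p,\,V_S\sigma^{(\theta_1,\theta_2)}V_S^\dagger} V_S Z_1 V_S^\dagger$ and similarly for $X_1$; but more usefully, since by definition of $\gamma_T$ and Corollary \ref{lem:approx_one_on_individual_v} we have (for the test case) $Z_1 \approx_{\gamma_T(D),\,\sigma^{(0,v_1;\,1,v_2)}} (-1)^{v_1}\1$ and $X_1 \approx_{\gamma_T(D),\,\sigma^{(1,v_1;\,0,v_2)}}(-1)^{v_1}\1$, we can transport these through the isometry as well.

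Concretely, for the case $(\theta_1,\theta_2)=(0,1)$: on the state $\sigma^{(0,v_1;\,1,v_2)}$ we have $Z_1 \approx_{\p,\,\sigma^{(0,v_1;\,1,v_2)}} (-1)^{v_1}\1$, hence by Lemma \ref{lem:projectors_one_zero} $Z_1^{(v_1)} \approx_{\p,\,\sigma^{(0,v_1;\,1,v_2)}}\1$. Pushing through $V_S$ using Lemma \ref{lem:approx_distance_with_isometries} and Lemma \ref{lem:split_into_projectors}, and using that $V_S^\dagger(\proj{v_1}\ot\1_2\ot\1)V_S = Z_1^{(v_1)}$ (the projector version of \eqref{eqn:V_z1}), we get that the projector $\proj{v_1}\ot\1_2\ot\1$ acts as identity on $V_S\sigma^{(0,v_1;\,1,v_2)}V_S^\dagger$ up to $\p$-error. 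By Lemma \ref{lem:replace_on_state} this lets us replace $V_S\sigma^{(0,v_1;\,1,v_2)}V_S^\dagger$ by $(\proj{v_1}\ot\1_2\ot\1)\,V_S\sigma^{(0,v_1;\,1,v_2)}V_S^\dagger\,(\proj{v_1}\ot\1_2\ot\1)$ up to $\p$-error in trace distance, which by definition has the product form $\proj{v_1}\ot\alpha^{(0,v_1;\,1,v_2)}$ for a positive matrix $\alpha^{(0,v_1;\,1,v_2)}$ (the partial trace of the sandwiched state over the first qubit). For the case $(\theta_1,\theta_2)=(1,0)$ the argument is identical but now uses $X_1 \approx_{\p,\,\sigma^{(1,v_1;\,0,v_2)}}(-1)^{v_1}\1$, hence $X_1^{(v_1)}\approx_{\p}\1$, and the fact that $V_S^\dagger(\proj{(-)^{v_1}}\ot\1_2\ot\1)V_S = X_1^{(v_1)}$. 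This last identity should be read off from \eqref{eqn:V_x1}: since $V_S^\dagger(\sigma_X\ot\1_2\ot\1)V_S$ is close to $X_1$ but not exactly equal, one has to be slightly careful — either establish the projector identity directly from the operator-level closeness via Lemma \ref{lem:split_into_projectors} applied to Lemma \ref{lem:rounded_x_on_first_qubit}, which is cleaner, giving $V_S^\dagger(\proj{(-)^{v_1}}\ot\1_2\ot\1)V_S \approx_{\p,\sigma^{(1,v_1;0,v_2)}} X_1^{(v_1)} \approx_{\p} \1$.

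The main obstacle I anticipate is bookkeeping of the error exponents and making sure all replacements stay in the ``$\p$'' regime: we go from an operator relation in the state-dependent distance, to a projector relation, to an isometry-conjugated relation (which by Lemma \ref{lem:approx_distance_with_isometries} costs a square root of the error), to a trace-distance statement via Lemma \ref{lem:replace_on_state} — each step is elementary but the order matters. A secondary subtlety is the ``perfect device'' hypothesis: it is needed so that the results feeding in (Lemma \ref{lem:rounded_x_on_first_qubit}, the anti-commutation and commutation propositions underlying the operator characterisations) apply, but it does not enter this lemma's argument beyond that. Finally, one must double-check the computational-vs-Hadamard labelling: in the swap isometry $V_S$, the $Z_1$-eigenbasis corresponds to $\{\ket{0},\ket{1}\}$ and the $X_1$-eigenbasis to $\{\ket{+},\ket{-}\}=\{\ket{(-)^0},\ket{(-)^1}\}$, so for $\theta_1=0$ (injective key, computational commitment, verifier checks $\hat b$ against $v_1$) the first qubit ends up in $\proj{v_1}$, whereas for $\theta_1=1$ (claw-free key, Hadamard round, verifier checks $\hat u$) it ends up in $\proj{(-)^{v_1}}$ — which is exactly the asymmetric statement claimed.
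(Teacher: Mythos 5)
Your proposal is correct and follows essentially the same route as the paper: both rest on Equation \eqref{eqn:V_z1} and Lemma \ref{lem:rounded_x_on_first_qubit}, Lemma \ref{lem:split_into_projectors}, the definition of $\gamma_T$ with Corollary \ref{lem:approx_one_on_individual_v}/Lemma \ref{lem:projectors_one_zero}, and finally Lemma \ref{lem:replace_on_state} with $\alpha$ defined as the sandwiched state; the only (harmless) difference is that you establish $V_S^\dagger(\proj{(-)^{v_1}}\ot\1_2\ot\1)V_S \approx \1$ on the small space and push it through the isometry via Lemma \ref{lem:approx_distance_with_isometries} (costing a square root, absorbed by $\approx_\p$), whereas the paper passes to a trace statement on $V_S\sigma V_S^\dagger$ via Lemma \ref{lem:replace_in_trace} and applies an extension of Corollary \ref{lem:approx_one_on_individual_v} there. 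Note only that your opening sentence swaps which basis state goes with which $\theta_1$ (and Lemma \ref{lem:approx_distance_with_isometries} formally wants binary observables, so use $2P-\1$), but your detailed argument and closing paragraph get the labelling right, matching the lemma.
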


\begin{proof}
We show the first relation. The second one is analogous (but simpler, because we have $V_S^\dagger (\sigma_Z \ot \1_2 \ot \1) V_S = Z_1$ (Equation \eqref{eqn:V_z1}), whereas the corresponding statement for $X_1$ only holds approximately (Lemma \ref{lem:rounded_x_on_first_qubit})).

By Lemmas \ref{lem:rounded_x_on_first_qubit} and \ref{lem:state_dep_distance_facts}(ii), we have
\begin{equation}
X_1 \approx_{\p, \, \sigma^{(1, v_1; \; 0, v_2)}} V_S^\dagger (\sigma_X \ot \1_2 \ot \1) V_S  \,.
\end{equation} 
By Lemma \ref{lem:split_into_projectors}, this implies
\begin{equation}
X_1^{(v_1)} \approx_{\p, \, \sigma^{(1, v_1; \; 0, v_2)}} V_S^\dagger (\proj{(-)^{v_1}} \ot \1_2 \ot \1) V_S  \,.
\end{equation}
Using this, the definition of $\gamma_T$ (Equation \eqref{eqn:def_gamma_t}), and the replacement lemma (Lemma \ref{lem:replace_in_trace}(i)) we get 
\begin{equation}
\sum_{v_1, v_2} \tr{(\proj{(-)^{v_1}} \ot \1_2 \ot \1) V_S \sigma^{(1, v_1; \; 0, v_2)} V_S^\dagger} \approx_{\p} 1 \,.
\end{equation}
Using (a simple extension of) Corollary \ref{lem:approx_one_on_individual_v} combined with Lemma \ref{lem:projectors_one_zero}, this means that 
\begin{equation}
\proj{(-)^{v_1}} \ot \1_2 \ot \1 \approx_{\p, V_S \sigma^{(1, v_1; \; 0, v_2)} V_S^\dagger} \1 \,.
\end{equation}
Hence, by Lemma \ref{lem:replace_on_state}, we get 
\begin{equation}
V_S \sigma^{(1, v_1; \; 0, v_2)} V_S^\dagger \approx_{\p} (\proj{(-)^{v_1}} \ot \1_2 \ot \1) V_S \sigma^{(1, v_1; \; 0, v_2)} V_S^\dagger (\proj{(-)^{v_1}} \ot \1_2 \ot \1) \,.
\end{equation}
Defining 
\begin{equation}
\alpha^{(1, v_1; \; 0, v_2)} = \big(\bra{(-)^{v_1}} \ot \1_2 \ot \1\big) V_S \sigma^{(1, v_1; \; 0, v_2)} V_S^\dagger \big(\ket{(-)^{v_1}} \ot \1_2 \ot \1\big)
\end{equation}
yields the result.
\end{proof}

\begin{lemma} \label{lem:first_qubit_same_alpha}
Let $D = (S, \Pi, M, P)$ be an efficient perfect device. Then, there exists a (normalised) state $\alpha$ such that the following holds for $v_1 \in \bits$:
\begin{align}
\sum_{v_2} V_S \sigma^{(1, v_1; \; 0, v_2)} V_S^\dagger 
&\capprox_{\p} \frac{1}{2}\, \proj{(-)^{v_1}} \ot \alpha \,, \\
\sum_{v_2} V_S \sigma^{(0, v_1; \; 1, v_2)} V_S^\dagger 
&\capprox_{\p}  \frac{1}{2}\, \proj{v_1} \ot \alpha \,.
\end{align}
\end{lemma}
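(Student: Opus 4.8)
\emph{Proof plan.} The strategy is to upgrade the per-$v_2$ factorisation of Lemma~\ref{lem:first_qubit_factorises} into one whose auxiliary state is independent of $v_1$ and of which of the two test cases we are in. The independence will come from the computational indistinguishability of $\sigma^{(1,0)}$ and $\sigma^{(0,1)}$ (Lemma~\ref{lem:sigma_indist}) together with the efficiency of $V_S$. First I would define $\beta^{(1,v_1;0)} \deq \sum_{v_2}\alpha^{(1,v_1;0,v_2)}$ and $\beta^{(0,v_1;1)} \deq \sum_{v_2}\alpha^{(0,v_1;1,v_2)}$, using the positive matrices from Lemma~\ref{lem:first_qubit_factorises}. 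Summing the conclusion of that lemma over the two values of $v_2$ gives, for $v_1 \in \bits$,
\begin{equation*}
\sum_{v_2}V_S\,\sigma^{(1,v_1;\;0,v_2)}\,V_S^\dagger \approx_{\p} \proj{(-)^{v_1}}\ot\beta^{(1,v_1;0)}\,, \qquad \sum_{v_2}V_S\,\sigma^{(0,v_1;\;1,v_2)}\,V_S^\dagger \approx_{\p} \proj{v_1}\ot\beta^{(0,v_1;1)}\,.
\end{equation*}
Since $V_S$ is trace-preserving and the trace is $1$-Lipschitz in the trace norm, $\tr{\beta^{(\theta_1,v_1;\theta_2)}}$ is within $\p$ of $\sum_{v_2}\tr{\sigma^{(\theta_1,v_1;\;\theta_2,v_2)}}$; combined with Lemma~\ref{lem:X_normalisation_uniform} and $\tr{\sigma^{(1,0)}}=\tr{\sigma^{(0,1)}}=1$ this already yields $\tr{\beta^{(1,v_1;0)}}\approx_{\p}\tfrac12$ and $\tr{\beta^{(0,0;1)}}+\tr{\beta^{(0,1;1)}}\approx_{\p}1$.

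\emph{The crux: all four $\beta$'s are computationally indistinguishable.} By Lemma~\ref{lem:sigma_indist}, $\sigma^{(1,0)}\capprox_{0}\sigma^{(0,1)}$, and since $V_S$ is an efficient isometry, Lemma~\ref{lem:comp_indist_preserved} gives $V_S\sigma^{(1,0)}V_S^\dagger \capprox_{0} V_S\sigma^{(0,1)}V_S^\dagger$. Using $\sigma^{(\theta_1,\theta_2)}=\sum_{v_1,v_2}\sigma^{(\theta_1,v_1;\;\theta_2,v_2)}$, the previous step, and that trace-distance closeness implies computational indistinguishability, this reads
\begin{equation*}
\proj{(-)^0}\ot\beta^{(1,0;0)} + \proj{(-)^1}\ot\beta^{(1,1;0)} \;\capprox_{\p}\; \proj{0}\ot\beta^{(0,0;1)} + \proj{1}\ot\beta^{(0,1;1)}\,.
\end{equation*}
I would then apply the efficient projective measurement $\{\proj{(-)^0},\proj{(-)^1}\}$ on the first extracted qubit to both sides (Lemma~\ref{lem:comp_indist_preserved}): the left-hand side is left essentially unchanged (it is already block-diagonal in that basis, up to $\p$, and pinching is a trace-norm contraction), while the right-hand side collapses to $\tfrac12\bigl(\proj{(-)^0}+\proj{(-)^1}\bigr)\ot\bigl(\beta^{(0,0;1)}+\beta^{(0,1;1)}\bigr)$. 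Both sides are now block-diagonal in the Hadamard basis of the first qubit, with blocks of trace $\approx_{\p}\tfrac12$, so indistinguishability of the whole forces indistinguishability of each block (a distinguisher first measures this basis, then runs a block-distinguisher conditioned on the outcome, the residual trace mismatch being absorbed into the $\p$-error). Hence $\beta^{(1,0;0)}\capprox_{\p}\tfrac12(\beta^{(0,0;1)}+\beta^{(0,1;1)})\capprox_{\p}\beta^{(1,1;0)}$. Running the same argument with the computational-basis measurement $\{\proj{0},\proj{1}\}$ instead gives $\beta^{(0,0;1)}\capprox_{\p}\tfrac12(\beta^{(1,0;0)}+\beta^{(1,1;0)})\capprox_{\p}\beta^{(0,1;1)}$. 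Thus all four matrices are pairwise $\capprox_{\p}$; in particular $\tr{\beta^{(0,0;1)}}\approx_{\p}\tr{\beta^{(0,1;1)}}$, which with the trace bounds above gives $\tr{\beta^{(0,v_1;1)}}\approx_{\p}\tfrac12$ as well.

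\emph{Conclusion.} Set $\alpha \deq \beta^{(1,0;0)}/\tr{\beta^{(1,0;0)}}$, a normalised state. Since every $\tr{\beta^{(\theta_1,v_1;\theta_2)}}\approx_{\p}\tfrac12$, the crux step gives $\beta^{(\theta_1,v_1;\theta_2)}\approx_{\p}\tfrac12\alpha$ in trace norm, hence also $\capprox_{\p}\tfrac12\alpha$; substituting into the first step and recalling $\ket{(-)^{v_1}}$ for $v_1\in\bits$, this is exactly the two claimed relations.

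\emph{Main obstacle.} The delicate point is the crux step: pushing the computational indistinguishability through the efficient isometry $V_S$ and the basis measurement, and especially the ``block extraction'' — arguing via an explicit distinguisher that indistinguishability of two operators that are block-diagonal in a fixed basis implies indistinguishability of the corresponding blocks — while carefully tracking that the operators involved are sub-normalised (trace only $\approx_{\p}\tfrac12$), so that trace mismatches are controlled and absorbed into the $\p$-error rather than breaking the reduction. Everything else is bookkeeping on top of Lemmas~\ref{lem:first_qubit_factorises}, \ref{lem:X_normalisation_uniform}, \ref{lem:sigma_indist}, and~\ref{lem:comp_indist_preserved}.
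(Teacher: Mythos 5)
Your proposal is correct and takes essentially the same route as the paper: the paper's proof is precisely your ``block-extraction'' reduction made explicit, via the distinguisher $\Gamma = V_S^\dagger\left(\proj{+}\ot\Lambda^{(0)}\right)V_S$ (and $V_S^\dagger\left(\proj{0}\ot\Lambda^{(0)}\right)V_S$ for the second relation), i.e.\ apply $V_S$, measure the first extracted qubit in the Hadamard (resp.\ computational) basis, then run the hypothesised block distinguisher, whose advantage on $\sigma^{(1,0)}$ versus $\sigma^{(0,1)}$ reproduces your block comparison and contradicts Lemma~\ref{lem:sigma_indist}, with Lemma~\ref{lem:first_qubit_factorises} used for the same replacement steps you invoke. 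One cosmetic caveat: your crux step only yields $\capprox_{\p}$ between the $\beta$'s (computational indistinguishability does not give trace-norm closeness, so the phrase ``$\approx_{\p}\tfrac12\alpha$ in trace norm'' holds only for the $\beta$ used to define $\alpha$), but since the lemma asserts only $\capprox_{\p}$, the conclusion still follows by chaining and the argument is sound.
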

\begin{proof}

We first prove the first relation, the proof of the second one is analogous and we briefly comment on it at the end.
\paragraph{Proof of the first relation.} 
We need to show that $\{\sum_{v_2} \alpha^{(1, v_1; \; 0, v_2)}\}_{v_1}$ are computationally indistinguishable up to $O(\gamma_T(D)^c)$ for some constant $c$. Combined with Lemma \ref{lem:first_qubit_factorises}, this proves the desired statement.

By Lemma \ref{lem:first_qubit_factorises}, there exists a $c > 0$ and an $\eps = O(\gamma_T(D)^c)$ such that for any $v_1, v_2$:
\begin{align*}
\norm{V_S \sigma^{(0, v_1; \; 1, v_2)} V_S^\dagger - \proj{v_1} \ot \alpha^{(0, v_1; \; 1, v_2)}}_1^2 \leq \eps \,, \\
\norm{V_S \sigma^{(1, v_1; \; 0, v_2)} V_S^\dagger - \proj{(-)^{v_1}} \ot \alpha^{(1, v_1; \; 0, v_2)}}_1^2 \leq \eps \,. \numberthis\label{eqn:same_alpha_proof1}
\end{align*} 

For the sake of contradiction, assume that there is an efficient POVM $\{\Lambda^{(v_1)}\}_{v_1 \in \bits}$ such that
\begin{equation}
\sum_{v_2} \tr{ \Lambda^{(0)} \left( \alpha^{(1, 0; \; 0, v_2)} - \alpha^{(1, 1; \; 0, v_2)} \right)} \geq 24 \, \eps^{1/2} + 2 \,\mu(\lambda) \,, \label{eqn:same_alpha_proof2}
\end{equation}
where $\mu(\lambda)$ is non-negligible. Define 
\begin{equation}
\Gamma = V_S^\dagger \left(\proj{+} \ot \Lambda^{(0)} \right) V_S \,.
\end{equation}
The outcome of the POVM $\{\Gamma, \1-\Gamma\}$ can be efficiently estimated as follows: apply $V_S$; measure the first register in the Hadamard basis, obtaining a bit $a$; measure $\{\Lambda^{(u)}\}$ on the remaining registers, obtaining a bit $u$; output 0 if and only if $a = u = 0$.

We now show that the POVM $\{\Gamma, \1-\Gamma\}$ enables us to construct a distinguisher that can distinguish $\sigma^{(1, 0)}$ from $\sigma^{(0, 1)}$ with non-negligible advantage, contradicting Lemma \ref{lem:sigma_indist}. Given $\sigma^{(\overline{b}, b)}$ for $b \in \bits$, the distinguisher measures the POVM $\{\Gamma, \1-\Gamma\}$. If the result of the measurement is ``$\Gamma$'', the distinguisher guesses ``$0$''; if the result is ``$1 - \Gamma$'', the distinguisher guesses ``$1$''. We calculate the distinguishing advantage:
\begin{align*}
&\pr{{\rm Guess} = 0 | b = 0} - \pr{{\rm Guess} = 0 | b = 1} \\
&= \tr{\Gamma \sigma^{(1, 0)}} - \tr{\Gamma \sigma^{(0, 1)}} \\
&= \tr{ \left( \proj{+} \ot \Lambda^{(0)} \right) \left( V_S \sigma^{(1, 0)} V_S^\dagger - V_S \sigma^{(0, 1)}V_S^\dagger \right)} 
\intertext{Using Equation \eqref{eqn:same_alpha_proof1} and the replacement lemma (Lemma \ref{lem:replace_in_trace}(ii); the non-asymptotic statement used here is easily seen to hold from the proof) to replace the state inside the trace:}
&\geq \sum_{v_1, v_2} \tr{ \left( \proj{+} \ot \Lambda^{(0)} \right) \left( \proj{(-)^{v_1}} \ot \alpha^{(1, v_1; \; 0, v_2)} - \proj{v_1} \ot \alpha^{(0, v_1; \; 1, v_2)}\right)} - 8 \, \eps^{1/2} \\
&=\sum_{v_2} \tr{\Lambda^{(0)} \alpha^{(1, 0; \; 0, v_2)}} - \frac{1}{2} \sum_{v_1, v_2 \in \bits} \tr{\Lambda^{(0)} \alpha^{(0, v_1; \; 1, v_2)}} - 8 \, \eps^{1/2}
\intertext{Using that $\sum_{v_2 \in \bits} \alpha^{(1, 0; \; 0, v_2)} = \frac{1}{2} \sum_{v_2 \in \bits} \left( \alpha^{(1, 0; \; 0, v_2)} - \alpha^{(1, 1; \; 0, v_2)} \right) + \frac{1}{2} \sum_{v_1, v_2} \alpha^{(1, v_1; \; 0, v_2)}$:}
&= \frac{1}{2} \sum_{v_2} \tr{ \Lambda^{(0)} \left( \alpha^{(1, 0; \; 0, v_2)} - \alpha^{(1, 1; \; 0, v_2)} \right)} - \frac{1}{2} \sum_{v_1, v_2 \in \bits} \tr{\Lambda^{(0)} \left( \alpha^{(0, v_1; \; 1, v_2)} - \alpha^{(1, v_1; \; 0, v_2)} \right)} - 8 \, \eps^{1/2} \\
\intertext{By the assumption in Equation \eqref{eqn:same_alpha_proof2}, the first term is at least $12 \, \eps^{1/2} + \mu(\lambda)$. Expanding the second term:}
&\geq - \frac{1}{2} \sum_{v_1, v_2 \in \bits} \tr{\left( \1 \ot \Lambda^{(0)}  \right) \left( \proj{v_1} \ot \alpha^{(0, v_1; \; 1, v_2)} - \proj{(-)^{v_1}} \ot \alpha^{(1, v_1; \; 0, v_2)} \right)} + 4 \, \eps^{1/2} + \mu(\lambda)
\intertext{Using Equation \eqref{eqn:same_alpha_proof1} and the replacement lemma (Lemma \ref{lem:replace_in_trace}(ii) as before:}
&\geq -\frac{1}{2} \sum_{v_1, v_2 \in \bits} \tr{\left( \1 \ot \Lambda^{(0)}  \right) \left( V_S \sigma^{(0, 1)} V_S^\dagger - V_S \sigma^{(1, 0)}V_S^\dagger \right)} + \mu(\lambda) \\
&= -\frac{1}{2} \sum_{v_1, v_2 \in \bits} \tr{V_S^\dagger\left( \1 \ot \Lambda^{(0)}  \right) V_S \left(\sigma^{(0, 1)} - \sigma^{(1, 0)} \right)} + \mu(\lambda) \,.
\end{align*}
Since $\{V_S^\dagger \left( \1 \ot \Lambda^{(0)}  \right) V_S, \1 - V_S^\dagger \left( \1 \ot \Lambda^{(0)}  \right) V_S\}$ is an efficient measurement on account of $V_S$ and $\{\Lambda^{(0)}, \Lambda^{(1)}\}$ being efficient, the first term must be negligbly close to 0 by Lemma \ref{lem:sigma_indist}. Therefore, we find that the distinguishing advantage is non-negligibly greater than 0, contradicting Lemma \ref{lem:sigma_indist} and finishing the proof.

\paragraph{Proof of the second relation.} 
The proof is analogous to that of the first relation. 
Assuming a measurement $\{\Lambda^{(0)}, \Lambda^{(1)}\}$ for distinguishing $\{\sum_{v_2} \alpha^{(0, v_1; \; 0, v_2)}\}_{v_1}$, one chooses 
\begin{equation}
\Gamma = V_S^\dagger \left( \proj{0} \ot \Lambda^{(0)} \right)
\end{equation}
and calculates the distinguishing advantage analogously to the first part.
\end{proof}

At this point, we have established a characterisation of the prover's observables $Z_1$ and $X_1$ as well as of its state in the test case. As outlined in the introduction to Section \ref{sec:soundness}, we will now use these results to show that the prover's observables $Z_2$ and $X_2$ are also close to Pauli observables (under the swap isometry $V_S$).
\begin{lemma} \label{lem:qubit2_operator_rounding}
For any efficient perfect device $D = (S, \Pi, M, P)$, the following holds for any $\theta_1, \theta_2 \in \bits$:
\begin{align}
V_S^\dagger (\1_2 \ot \sigma_Z \ot \1) V_S \approx_{\p, \, \sigma^{(\theta_1, \theta_2)} } Z_2 \,, \\
V_S^\dagger (\1_2 \ot \sigma_X \ot \1) V_S \approx_{\p, \, \sigma^{(\theta_1, \theta_2)} } X_2 \,.
\end{align}
\end{lemma}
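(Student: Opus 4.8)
The plan is to prove the two relations in Lemma~\ref{lem:qubit2_operator_rounding} by the same strategy used for $X_1$ in Lemma~\ref{lem:rounded_x_on_first_qubit}: using Lemma~\ref{lem:state_dep_distance_expanded}, it suffices to show
\begin{align*}
\tr{V_S^\dagger (\1_2 \ot \sigma_Z \ot \1) V_S \, Z_2 \, \sigma^{(\theta_1, \theta_2)}} &\approx_{\p} 1\,, \\
\tr{V_S^\dagger (\1_2 \ot \sigma_X \ot \1) V_S \, X_2 \, \sigma^{(\theta_1, \theta_2)}} &\approx_{\p} 1\,.
\end{align*}
First I would plug in the formulas for the conjugated Paulis from Lemma~\ref{lem:paulis_conjugated} (Equations~\eqref{eqn:V_z2} and~\eqref{eqn:V_x2}). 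This produces a sum over $a$ (resp.\ $a,b$) of traces of the form $\tr{(\1 + (-1)^a Z_1) X_1^a (\cdots) X_1^a (\1 + (-1)^a Z_1) \, W_2 \, \sigma^{(\theta_1, \theta_2)}}$, where $W_2 \in \{Z_2, X_2\}$ and the inner block $(\cdots)$ involves only $Z_2, X_2$. The key difficulty, flagged in the footnote in the overview, is that the operators $Z_2, X_2$ do not act directly on $\sigma^{(\theta_1,\theta_2)}$ but are sandwiched by the $Z_1, X_1$ block; so I cannot apply Proposition~\ref{prop:anticomm} and Proposition~\ref{prop:commutation_non_tilde} directly. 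The remedy is to first move the commutation/anti-commutation relations past the $Z_1, X_1$ block using the characterisation already obtained for the first qubit.

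Concretely, I would proceed as follows. By the lifting lemma (Lemma~\ref{lem:lifting}) and the indistinguishability of the $\sigma^{(\theta_1,\theta_2)}$ (Lemma~\ref{lem:sigma_indist}), it is enough to prove the statement for one convenient choice, e.g.\ $\sigma^{(1,0)}$; and by Lemma~\ref{lem:state_dep_distance_facts}(ii) it suffices to prove it for each $\sigma^{(1, v_1; \; 0, v_2)}$ separately. On these states, Lemma~\ref{lem:first_qubit_factorises} gives $V_S \sigma^{(1, v_1; \; 0, v_2)} V_S^\dagger \approx_\p \proj{(-)^{v_1}} \ot \alpha^{(1, v_1; \; 0, v_2)}$. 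Equivalently (and more usefully for manipulating the inner $W_2$ block), I would use that $X_1 \approx_{\p, \sigma^{(1,v_1;\,0,v_2)}} (-1)^{v_1}\1$ and, via the anti-commutation relation $\{Z_1, X_1\} \approx_{\p,\sigma} 0$ together with the replacement lemma, that $Z_1 \psi Z_1 \approx_\p$ (the appropriate sign/structure) so that conjugating by $Z_1$ and $X_1$ acts trivially on the relevant part. The cleanest route: since the outer factor $(\1 + (-1)^a Z_1) X_1^a$ has constant operator norm, use the replacement lemma (Lemma~\ref{lem:replace_in_trace}(i)) repeatedly, together with Proposition~\ref{prop:anticomm}, Proposition~\ref{prop:commutation_non_tilde}, and Lemma~\ref{lem:trivial_comm}, to push the $Z_2$ (resp.\ $X_2$) operators out past the $Z_1, X_1$ operators, incurring only $O(\gamma_T(D)^{1/2})$ errors each time; there are only constantly many such moves, so the total error stays $O(\gamma_T(D)^c)$ for some $c > 0$, i.e.\ $\approx_\p$. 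After commuting everything through, the $a$-sum (resp.\ $a,b$-sum) collapses using $X_1^2 = Z_1^2 = \1$ and $\sum_a (\1 + (-1)^a Z_1)^2 = 4\cdot\1$ (and similarly for $Z_2$), exactly as in the last line of the proof of Lemma~\ref{lem:rounded_x_on_first_qubit}, and the inner $W_2$ block then contributes $W_2^2 = \1$, leaving the trace equal to $1$.

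The main obstacle is bookkeeping the order in which the commutation relations are applied and making sure each application is legitimate: Proposition~\ref{prop:anticomm} and Proposition~\ref{prop:commutation_non_tilde} are stated for $\sigma^{(\theta_1,\theta_2)}$ (or, after lifting, $\sigma^{(1,v_1;\,0,v_2)}$), so each time I commute an operator past another I must ensure I am still measuring the state-dependent distance with respect to a state for which the relevant relation holds, and that the intervening operators have constant operator norm so that Lemma~\ref{lem:replace_in_trace}(i) applies. For $X_2$ this is more delicate than for $Z_2$ because Equation~\eqref{eqn:V_x2} has two nested layers of $Z_2$-projectors and an $X_2$ in the middle, requiring both $[Z_1, X_2]\approx_\p 0$ and $\{Z_2, X_2\}\approx_\p 0$ to be used in sequence; one must be careful that after the first round of replacements the state-dependent distance is now with respect to $X_1 Z_1 \sigma X_1 Z_1$ (or a $\proj{(-)^{v_1}}$-conjugated version), which is where the first-qubit characterisation from Lemma~\ref{lem:first_qubit_factorises} becomes essential — it lets me replace $X_1 Z_1 \sigma^{(1,v_1;\,0,v_2)} Z_1 X_1$ by something close to $\sigma^{(1,v_1;\,0,v_2)}$ itself (up to signs and up to the trivial action on the first qubit register), so that the remaining relations can be applied as in the base case. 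Once this reduction is in place, the rest is the routine collapse-the-sum computation and I would relegate it to "a direct calculation".
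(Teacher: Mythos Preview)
Your opening reduction (via Lemma~\ref{lem:state_dep_distance_expanded} to showing the trace equals $1$, then plugging in Lemma~\ref{lem:paulis_conjugated}) is exactly right, and you correctly identify the obstacle: the inner $Z_2,X_2$ block is sandwiched by $Z_1,X_1$ and so the approximate (anti)commutation relations from Propositions~\ref{prop:anticomm} and~\ref{prop:commutation_non_tilde}, which are only known on $\sigma^{(\theta_1,\theta_2)}$, cannot be applied there directly.

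The gap is in your proposed resolution. Lemma~\ref{lem:first_qubit_factorises} is not the right tool: it is a statement about $V_S\sigma^{(\cdot,v_1;\cdot,v_2)}V_S^\dagger$, not about how conjugation by $Z_1,X_1$ acts on $\sigma$ in the original space, so it does not let you ``replace $X_1 Z_1\sigma^{(1,v_1;0,v_2)}Z_1 X_1$ by something close to $\sigma^{(1,v_1;0,v_2)}$ itself''. Your earlier hint (use $X_1\approx(-1)^{v_1}\1$ together with $\{Z_1,X_1\}\approx 0$) is closer to a workable alternative, but carrying it through for the $X_2$ relation requires establishing $\{Z_2,X_2\}\approx_{\p}0$ on states of the form $X_1 Z_1^{(1)}\sigma Z_1^{(1)}X_1$, and this does not follow from any single lifting step you have described; it needs a multi-step trace-norm argument you have not supplied.

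The paper takes a different route that sidesteps this entirely. The key ingredient is Lemma~\ref{lem:first_qubit_same_alpha} (not Lemma~\ref{lem:first_qubit_factorises}): summing over $v_1,v_2$ and using computational indistinguishability of the $\alpha^{(\cdot)}$'s gives $V_S\sigma^{(\theta_1,\theta_2)}V_S^\dagger\capprox_{\p}\tfrac{1}{2}\1_2\ot\alpha$, and hence (by the lifting lemma, Lemma~\ref{lem:lifting}(v)) one may replace $\sigma^{(\theta_1,\theta_2)}$ by $V_S^\dagger(\tfrac{1}{2}\1_2\ot\alpha)V_S$ throughout. The crucial feature of this new state is that its first qubit is \emph{maximally mixed}. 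The proof then proceeds by writing the trace as $\tr{Z_2 V_S^\dagger(\1\ot\sigma_Z\ot\1)V_S\,V_S^\dagger(\tfrac{1}{2}\1_2\ot\alpha)V_S}$, expanding via Lemma~\ref{lem:paulis_conjugated}, and using Lemma~\ref{lem:qubit2_proof_aux_steps} to replace the outermost $V_S(\1+(-1)^a Z_1)V_S^\dagger$ and $V_S X_1^a V_S^\dagger$ by $(\1+(-1)^a\sigma_Z)\ot\1_2\ot\1$ and $\sigma_X^a\ot\1_2\ot\1$. These are first-qubit Pauli operators and therefore commute \emph{exactly} with $\tfrac{1}{2}\1_2\ot\alpha$, so they can be cycled around the trace at no cost. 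After peeling off the outer layers in this way, what remains is a single application of $[Z_2,X_1]\approx 0$ (respectively $\{Z_2,X_2\}\approx 0$) on $V_S^\dagger(\tfrac{1}{2}\1_2\ot\alpha)V_S$, which holds by one more lift (Lemma~\ref{lem:lifting}(iii),(iv)). The maximally mixed first qubit is what makes the whole argument clean; your proposal never invokes it.
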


\begin{proof} 
We prove each relation in turn.

\paragraph{Proof of the first relation.}
By Lemmas \ref{lem:sigma_indist} and \ref{lem:first_qubit_same_alpha}, we have 
\begin{equation}
V_S \sigma^{(\theta_1, \theta_2)} V_S^\dagger 
\capprox_0 \sum_{v_1, v_2} V_S \sigma^{(0, v_1; \; 1, v_2)} V_S^\dagger 
\capprox_{\p}  \frac{1}{2} \, \1_2 \ot \alpha \,. \label{eqn:alpha_not_reversed}
\end{equation}
This implies 
\begin{equation}
\sigma^{(\theta_1, \theta_2)} = \sum_{v_1, v_2} \sigma^{(0, v_1; \; 1, v_2)} 
\capprox_{\p} V_S^\dagger \left( \frac{1}{2} \, \1_2 \ot \alpha \right) V_S \,, \label{eqn:alpha_reversed}
\end{equation}
since any distinguisher $D'$ for the latter problem can be used to construct a distinguisher $D$ for the former problem: given $V_S \sigma^{(\theta_1, \theta_2)} V_S^\dagger$ or $\frac{1}{2} \, \1_2 \ot \alpha$, $D$ first applies the inverse of the unitary extension of $V_S$. $D$ then measures the ancillary registers used for the unitary extension: if the result is not the all-zero string, $D$ guesses that it was given $\frac{1}{2} \, \1_2 \ot \alpha$; if the result is the all-zero string, $D$ runs $D'$ on the post-measurement state. Then, the advantage of $D$ in distinguishing the states in Equation (\ref{eqn:alpha_not_reversed}) will be at least the advantage of $D'$ in distinguishing the state in Equation (\ref{eqn:alpha_reversed}).\footnote{To see why this is the case, first consider the case where $D$'s measurement of the ancilla qubits for the unitary extension does not yield the all-zero string. In this case, $D$ will always guess correctly, since applying the inverse of the unitary extension of $V_S$ to $V_S \sigma^{(\theta_1, \theta_2)} V_S^\dagger$ will result in the ancilla qubits being 0 with probability 1 by definition of the unitary extension. In the case where $D$ does measure the all-zero string in the ancilla registers, the post-measurement state will be one of the two states in Equation (\ref{eqn:alpha_reversed}), so running $D'$ on the post-measurement state yields a non-negligible advantage by assumption.}

The binary observables $\1_2 \ot \sigma_Z \ot \1$ and $Z_2$ as well as the isometry $V_S$ are all efficient. Therefore, by Lemma \ref{lem:lifting}(v) we can replace $\sigma^{(\theta_1, \theta_2)}$ by $V_S^\dagger \left( \frac{1}{2} \, \1_2 \ot \alpha  \right) V_S$ in the statement of this lemma,\footnote{Strictly speaking, Lemma \ref{lem:lifting}(v) only applies to normalised states, whereas $V_S^\dagger \left( \frac{1}{2} \, \1_2 \ot \alpha  \right) V_S$ can be subnormalised. However, one can check that the proof of Lemma \ref{lem:lifting}(v) goes through for subnormalised states, too. Alternatively, one can also renormalise $V_S^\dagger \left( \frac{1}{2} \, \1_2 \ot \alpha  \right) V_S$, apply Lemma \ref{lem:lifting}(v), and then note that dropping the normalisation factor (which is at least 1) can only decrease the state-dependent distance between any two operators with respect to that state.} so we need to show
\begin{equation}
V_S^\dagger (\1_2 \ot \sigma_Z \ot \1) V_S \approx_{\p, \, V_S^\dagger \left( \frac{1}{2} \, \1_2 \ot \alpha  \right) V_S } Z_2 \,.
\end{equation}
By Lemma \ref{lem:state_dep_distance_expanded}, it suffices to show
\begin{equation}
\tr{ Z_2 V_S^\dagger (\1_2 \ot \sigma_Z \ot \1) V_S V_S^\dagger \left( \frac{1}{2} \, \1_2 \ot \alpha  \right) V_S} \approx_{\p} 1 \,.
\end{equation} 
Using the cyclicity of the trace, $V_S^\dagger V_S = \1$, and Equation \eqref{eqn:V_z2} to expand $V_S^\dagger (\1_2 \ot \sigma_Z \ot \1) V_S$, we get:
\begin{align*}
& \tr{ Z_2 V_S^\dagger (\1_2 \ot \sigma_Z \ot \1) V_S V_S^\dagger \left( \frac{1}{2} \, \1_2 \ot \alpha  \right) V_S} \\
&= \frac{1}{4} \sum_{a \in \bits}  \tr{ V_S Z_2 (\1 + (-1)^a Z_1) X_1^a Z_2 X_1^a V_S^\dagger V_S (\1 + (-1)^a Z_1) V_S^\dagger \left( \frac{1}{2} \, \1_2 \ot \alpha  \right) } \numberthis \label{eqn:z2_proof1}
\end{align*}
At this point, we would like to replace the right-most operator $V_S (\1 + (-1)^a Z_1) V_S^\dagger$ by $\1 + (-1)^a \sigma_Z \ot \1_2 \ot \1$. For this, we need 
\begin{equation}
\1 + (-1)^a \sigma_Z \ot \1_2 \ot \1 \approx_{\p, \, \frac{1}{2} \1_2 \ot \alpha} V_S (\1 + (-1)^a Z_1) V_S^\dagger \,. \label{eqn:z2_proof5}
\end{equation}

We delay the proof of this statement until Lemma \ref{lem:qubit2_proof_aux_steps} and continue with the main argument here. Because for any $a \in \bits$, the operator norm of the other operators inside the trace is constant, we can use Equation \eqref{eqn:z2_proof5} and the replacement lemma (Lemma \ref{lem:replace_in_trace}(i)) to obtain:
\begin{align*}
{\rm (\ref{eqn:z2_proof1})} 
&\approx_{\p} 
\frac{1}{4} \sum_{a \in \bits}  
{\rm Tr} \bigg[ 
V_S Z_2 (\1 + (-1)^a Z_1)
X_1^a Z_2 X_1^a V_S^\dagger (\1 + (-1)^a \sigma_Z \ot \1_2 \ot \1)
\left(\frac{1}{2} \, \1_2 \ot \alpha  \right)
\bigg]
\intertext{The right-most operator only acts non-trivially on the first qubit, so we can commute it past the state and use the cyclicity of the trace.}
&= \frac{1}{4} \sum_{a \in \bits}  \tr{(\1 + (-1)^a \sigma_Z \ot \1_2 \ot \1) V_S Z_2 (\1 + (-1)^a Z_1) X_1^a V_S^\dagger V_S Z_2 X_1^a V_S^\dagger \left(\frac{1}{2} \, \1_2 \ot \alpha  \right)} \,. \numberthis \label{eqn:z2_proof2}
\end{align*}
Now we would like to commute the right-most $X_1^a$ with $Z_2$. We need
\begin{equation}
V_S [Z_2, X_1] V_S^\dagger \approx_{\p, \frac{1}{2} \, \1_2 \ot \alpha} 0 \,, \label{eqn:commutation_conjugated_by_V}
\end{equation}
the proof of which is given in Lemma \ref{lem:qubit2_proof_aux_steps}.

We therefore have (using Lemma \ref{lem:replace_in_trace} as above):
\begin{align}
\rm{(\ref{eqn:z2_proof2})} 
&\approx_{\p} 
\frac{1}{4} \sum_{a \in \bits}  
{\rm Tr} \bigg[ 
\begin{multlined}[t]
(\1 + (-1)^a \sigma_Z \ot \1_2 \ot \1) V_S Z_2 (\1 + (-1)^a Z_1) \\
 X_1^a X_1^a Z_2 V_S^\dagger \left( \frac{1}{2} \, \1_2 \ot \alpha  \right)
\bigg]
\end{multlined}
\intertext{Using $X_1^2 = \1$, $[Z_1, Z_2] = 0$, and $Z_2^2 = \1$:}
&= \frac{1}{4} \sum_{a \in \bits}  \tr{(\1 + (-1)^a \sigma_Z \ot \1_2 \ot \1) V_S (\1 + (-1)^a Z_1) V_S^\dagger \left( \frac{1}{2} \, \1_2 \ot \alpha  \right)} \label{eqn:z2_proof3}
\intertext{Using Equation \eqref{eqn:z2_proof5} in the same manner as above:}
&\approx_{\p} \frac{1}{4} \sum_{a \in \bits}  \tr{(\1 + (-1)^a \sigma_Z \ot \1_2 \ot \1)^2 \left( \frac{1}{2} \, \1_2 \ot \alpha  \right)} \\
\intertext{Using $\sum_a (\1 + (-1)^a \sigma_Z \ot \1_2 \ot \1)^2 = 4 \cdot \1$ and the normalisation of $\alpha$:}
&= 1 \,.
\end{align}
This proves the first relation in the lemma, $V_S^\dagger (\1_2 \ot \sigma_Z \ot \1) V_S \approx_{\p, \,  \sigma^{(\theta_1, \theta_2)}} Z_2$. 

\paragraph{Proof of the second relation.} The proof is similar to the first case. As in the first case, the proof reduces to showing 
\begin{equation}
\tr{ X_2 V_S^\dagger (\1_2 \ot \sigma_X \ot \1) V_S V_S^\dagger \left( \frac{1}{2} \, \1_2 \ot \alpha  \right) V_S} \approx_{\p} 1 \,. \label{eqn:x2_proof1}
\end{equation}
For the proof, we will need the relation 
\begin{equation}
\sigma_X \ot \1_2 \ot \1 \approx_{\p, \, \frac{1}{2} \1_2 \ot \alpha} V_S X_1 V_S^\dagger \,, \label{eqn:x2_proof2}
\end{equation}
which is analogous to Equation \eqref{eqn:z2_proof5} and will be shown in Lemma \ref{lem:qubit2_proof_aux_steps}.
We proceed with proving Equation \eqref{eqn:x2_proof1}. Throughout the proof, we will always use the replacement lemma (Lemma \ref{lem:replace_in_trace}(i)) to replace operators with one another. 
Expanding $V_S^\dagger (\1_2 \ot \sigma_X \ot \1) V_S$ using Equation \eqref{eqn:V_x2}:
\begin{align*}
&\tr{ X_2 V_S^\dagger (\1_2 \ot \sigma_X \ot \1) V_S V_S^\dagger \left( \frac{1}{2} \, \1_2 \ot \alpha  \right) V_S}
\\
&= \frac{1}{16} \sum_{a, b \in \bits}  
{\rm Tr} \bigg[ 
\begin{multlined}[t]
V_S X_2 (\1 + (-1)^a Z_1) X_1^a \\ 
(\1 + (-1)^{\overline{b}} Z_2) X_2
(\1 + (-1)^{b} Z_2) X_1^a (\1 + (-1)^a Z_1) V_S^\dagger \left( \frac{1}{2} \, \1_2 \ot \alpha  \right) \bigg] 
\end{multlined}
\intertext{Using Equation \eqref{eqn:z2_proof5} to replace $V_S (\1 + (-1)^a Z_1) V_S^\dagger$ with $(\1 + (-1)^a \sigma_Z \ot \1_2 \ot \1)$ and commuting it past the state:}
&\approx_{\p} \frac{1}{16} \sum_{a, b \in \bits}  
{\rm Tr} \bigg[ 
\begin{multlined}[t]
(\1 + (-1)^a \sigma_Z \ot \1_2 \ot \1) V_S X_2 (\1 + (-1)^a Z_1) X_1^a \\
(\1 + (-1)^{\overline{b}} Z_2) X_2 (\1 + (-1)^{b} Z_2) X_1^a V_S^\dagger \left( \frac{1}{2} \, \1_2 \ot \alpha  \right)\bigg]
\end{multlined}
\intertext{Using Equation \eqref{eqn:x2_proof2} to replace $V_S  X_1 V_S^\dagger$ with $ \sigma_X \ot \1_2 \ot \1$ and commuting it past the state:}
&\approx_{\p} \frac{1}{16} \sum_{a, b \in \bits}  
{\rm Tr} \bigg[ 
\begin{multlined}[t]
(\sigma_X^a \ot \1_2 \ot \1) (\1 + (-1)^a \sigma_Z \ot \1_2 \ot \1) V_S X_2 (\1 + (-1)^a Z_1) X_1^a \\
(\1 + (-1)^{\overline{b}} Z_2) X_2 (\1 + (-1)^{b} Z_2) V_S^\dagger \left( \frac{1}{2} \, \1_2 \ot \alpha  \right) \bigg]
\end{multlined}
\intertext{Anti-commuting $Z_2$ and $X_2$ (this can be shown analogously to Equation \eqref{eqn:commutation_conjugated_by_V}, making use of Proposition \ref{prop:anticomm}):}
&\approx_{\p} \frac{1}{16} \sum_{a, b \in \bits}  
{\rm Tr} \bigg[ 
\begin{multlined}[t]
(\sigma_X^a \ot \1_2 \ot \1) (\1 + (-1)^a \sigma_Z \ot \1_2 \ot \1) V_S X_2 (\1 + (-1)^a Z_1) X_1^a \\
(\1 + (-1)^{\overline{b}} Z_2)^2 X_2 V_S^\dagger \left( \frac{1}{2} \, \1_2 \ot \alpha  \right) \bigg]
\end{multlined}
\intertext{Since $Z_2$ is a binary observable, $\sum_b (\1 + (-1)^{\overline{b}} Z_2)^2 = 4 \cdot \1$. Therefore, summing over $b$ yields:}
&= \frac{1}{4} \sum_{a\in \bits}  \tr{(\sigma_X^a \ot \1_2 \ot \1) (\1 + (-1)^a \sigma_Z \ot \1_2 \ot \1) V_S X_2 (\1 + (-1)^a Z_1) X_1^a X_2 V_S^\dagger \left( \frac{1}{2} \, \1_2 \ot \alpha  \right)}
\intertext{Commuting $X_2$ and $X_1^a$, then replacing $V_S X_1^a V_S^\dagger$ with $\sigma_X^a \ot \1_2 \ot \1$ using Equation \eqref{eqn:x2_proof2} and commuting it past the state:}
&\approx_{\p} 
\frac{1}{4} \sum_{a\in \bits}  
{\rm Tr} \bigg[ 
\begin{multlined}[t]
(\sigma_X^a \ot \1_2 \ot \1)^2 (\1 + (-1)^a \sigma_Z \ot \1_2 \ot \1) V_S 
X_2 (\1 + (-1)^a Z_1) X_2 V_S^\dagger \left( \frac{1}{2} \, \1_2 \ot \alpha  \right)\bigg]
\end{multlined}
\intertext{We have $\sigma_X^2 = \1$. We can also commute $X_2$ and $Z_1$ using the analogous statement of Equation \eqref{eqn:commutation_conjugated_by_V} with reversed indices. Then we obtain (using $X_2^2 = \1$):}
&\approx_{\p} \frac{1}{4} \sum_{a\in \bits}  \tr{ (\1 + (-1)^a \sigma_Z \ot \1) V_S (\1 + (-1)^a Z_1) V_S^\dagger \left( \frac{1}{2} \, \1_2 \ot \alpha  \right)} \,.
\end{align*}
This expression is identical to Equation \eqref{eqn:z2_proof3}, so the result follows.
\end{proof}

The following lemma shows the steps that we skipped in the proof above.
\begin{lemma} \label{lem:qubit2_proof_aux_steps}
With the notation from the proof of Lemma \ref{lem:qubit2_operator_rounding}, we show the following statements:
\begin{enumerate}
\item 
Equation \eqref{eqn:x2_proof1}:
\begin{equation}
\sigma_X \ot \1_2 \ot \1 \approx_{\p, \, \frac{1}{2} \1_2 \ot \alpha} V_S X_1 V_S^\dagger \,.
\end{equation}
\item 
Equation \eqref{eqn:z2_proof5}:
\begin{equation}
\1 + (-1)^a \sigma_Z \ot \1_2 \ot \1 \approx_{\p, \, \frac{1}{2} \1_2 \ot \alpha} V_S (\1 + (-1)^a Z_1) V_S^\dagger \,.
\end{equation}
\item 
Equation \eqref{eqn:commutation_conjugated_by_V}:
\begin{equation}
V_S [Z_2, X_1] V_S^\dagger \approx_{\p, \, \frac{1}{2} \1_2 \ot \alpha^{(\theta_1, \; \theta_2)}} 0 \,.
\end{equation}
\end{enumerate}
\end{lemma}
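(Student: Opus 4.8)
The plan is to reduce all three items to facts already available. The inputs are: the exact identity $V_S^\dagger(\sigma_Z\ot\1_2\ot\1)V_S = Z_1$ of Equation \eqref{eqn:V_z1}; the rounding $V_S^\dagger(\sigma_X\ot\1_2\ot\1)V_S \approx_{\p,\sigma^{(\theta_1,\theta_2)}}X_1$ of Lemma \ref{lem:rounded_x_on_first_qubit}; the commutation relation $[Z_2,X_1]\approx_{\p,\sigma^{(\theta_1,\theta_2)}}0$ of Proposition \ref{prop:commutation_non_tilde}; and the two computational indistinguishabilities $V_S\sigma^{(\theta_1,\theta_2)}V_S^\dagger\capprox_{\p}\tfrac12\1_2\ot\alpha$ and $\sigma^{(\theta_1,\theta_2)}\capprox_{\p}V_S^\dagger(\tfrac12\1_2\ot\alpha)V_S$ already established at the start of the proof of Lemma \ref{lem:qubit2_operator_rounding} (Equations \eqref{eqn:alpha_not_reversed} and \eqref{eqn:alpha_reversed}), where $\alpha$ is the state from Lemma \ref{lem:first_qubit_same_alpha}. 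Note that $V_S\sigma^{(\theta_1,\theta_2)}V_S^\dagger$ and $\tfrac12\1_2\ot\alpha$ are both normalised (the former because $V_S^\dagger V_S=\1$), so that Lemma \ref{lem:lifting}(vi) applies to them.

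For item 2 (Equation \eqref{eqn:z2_proof5}), I would start from the exact identity \eqref{eqn:V_z1}, read as $Z_1\approx_{0,\sigma^{(\theta_1,\theta_2)}}V_S^\dagger(\sigma_Z\ot\1_2\ot\1)V_S$, apply the second relation of Lemma \ref{lem:approx_distance_with_isometries} to push it through the isometry (obtaining $V_SZ_1V_S^\dagger\approx_{\p,\,V_S\sigma^{(\theta_1,\theta_2)}V_S^\dagger}\sigma_Z\ot\1_2\ot\1$), and then Lemma \ref{lem:lifting}(vi) together with $V_S\sigma^{(\theta_1,\theta_2)}V_S^\dagger\capprox_{\p}\tfrac12\1_2\ot\alpha$ to replace the state, giving $V_SZ_1V_S^\dagger\approx_{\p,\,\tfrac12\1_2\ot\alpha}\sigma_Z\ot\1_2\ot\1$. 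Feeding this into the sub-derivation in the proof of Lemma \ref{lem:lifting}(vi) that extracts $VV^\dagger\approx_{\eps^{1/2},\psi}\1$ from $VAV^\dagger\approx_{\eps,\psi}B$ yields $V_SV_S^\dagger\approx_{\p,\,\tfrac12\1_2\ot\alpha}\1$. Since $V_S(\1+(-1)^aZ_1)V_S^\dagger = V_SV_S^\dagger+(-1)^aV_SZ_1V_S^\dagger$, a triangle inequality in the state-dependent seminorm then finishes item 2. Item 1 (whose displayed statement is Equation \eqref{eqn:x2_proof2}) is obtained by the identical two-step argument — the second relation of Lemma \ref{lem:approx_distance_with_isometries}, then Lemma \ref{lem:lifting}(vi) — applied to Lemma \ref{lem:rounded_x_on_first_qubit} in place of \eqref{eqn:V_z1}, which directly gives $V_SX_1V_S^\dagger\approx_{\p,\,\tfrac12\1_2\ot\alpha}\sigma_X\ot\1_2\ot\1$.

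For item 3 (Equation \eqref{eqn:commutation_conjugated_by_V}) the key observation is that the isometry passes through the seminorm exactly: using $V_S^\dagger V_S=\1$ and cyclicity of the trace,
\[
\norm{V_S[Z_2,X_1]V_S^\dagger}^2_{\tfrac12\1_2\ot\alpha} = \tr{[Z_2,X_1]^\dagger[Z_2,X_1]\,V_S^\dagger(\tfrac12\1_2\ot\alpha)V_S} = \norm{[Z_2,X_1]}^2_{V_S^\dagger(\tfrac12\1_2\ot\alpha)V_S}\,,
\]
so it suffices to transport $[Z_2,X_1]\approx_{\p,\sigma^{(\theta_1,\theta_2)}}0$ to the state $V_S^\dagger(\tfrac12\1_2\ot\alpha)V_S$. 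Since $\tr{[Z_2,X_1]^\dagger[Z_2,X_1]\rho}$ is computed (up to a factor $4$) by an efficient procedure run on $\rho$ (Corollary \ref{lem:commutator_efficient} with $C=Z_2$, $D=X_1$), this transport follows from the computational indistinguishability \eqref{eqn:alpha_reversed} exactly as in Lemma \ref{lem:lifting}(iii); the subnormalisation of $V_S^\dagger(\tfrac12\1_2\ot\alpha)V_S$ is handled as in the proof of Lemma \ref{lem:qubit2_operator_rounding}, by renormalising and noting that dropping the normalisation factor (which is $\ge 1$) can only decrease state-dependent distances.

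I expect the main obstacle to be bookkeeping rather than mathematics: because $V_S$ is only an isometry ($V_SV_S^\dagger\neq\1$), each step requires choosing the correct direction of Lemma \ref{lem:approx_distance_with_isometries} and the correct part of Lemma \ref{lem:lifting}, and tracking which states are normalised and which are subnormalised. In particular, the cleanest route to $V_SV_S^\dagger\approx_{\p}\1$ on $\tfrac12\1_2\ot\alpha$ is to quote the opening of the proof of Lemma \ref{lem:lifting}(vi) rather than re-derive it, and for item 3 one must invoke the ``efficient estimation plus computational indistinguishability'' argument and not a trace-distance bound, which is unavailable here since $\sigma^{(\theta_1,\theta_2)}$ and $V_S^\dagger(\tfrac12\1_2\ot\alpha)V_S$ are only computationally close.
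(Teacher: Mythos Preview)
Your proposal is correct and follows essentially the same approach as the paper: items (i) and (iii) are identical to the paper's argument, and item (ii) differs only in that you extract $V_SV_S^\dagger\approx_{\p,\,\frac12\1_2\ot\alpha}\1$ from the already-lifted $Z_1$ relation via the internal step of Lemma~\ref{lem:lifting}(vi), whereas the paper obtains it more directly by observing that $V_SV_S^\dagger\approx_{0,\,V_S\sigma^{(\theta_1,\theta_2)}V_S^\dagger}\1$ holds exactly (since $(\1-V_SV_S^\dagger)V_S=0$) and then lifting via Lemma~\ref{lem:lifting}(vi) with $A=B=\1$.
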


\begin{proof}
~
\begin{enumerate}
\item 
From Lemma \ref{lem:rounded_x_on_first_qubit}, we have
\begin{equation}
V_S^\dagger (\sigma_X \ot \1_2 \ot \1) V_S \approx_{\p, \, \sigma^{(\theta_1, \theta_2)}} X_1 \,.
\end{equation}
By Lemma \ref{lem:approx_distance_with_isometries}, this implies
\begin{equation}
\sigma_X \ot \1_2 \ot \1 \approx_{\p, \, V_S \sigma^{(\theta_1, \theta_2)}V_S^\dagger} V_S X_1 V_S^\dagger \,.
\end{equation}
Recall that by Equation \eqref{eqn:alpha_not_reversed}, we have
\begin{equation*}
V_S \sigma^{(\theta_1, \theta_2)} V_S^\dagger 
\capprox_{\p}  \frac{1}{2} \, \1_2 \ot \alpha \,.
\end{equation*}
The result follows by the lifting lemma (Lemma \ref{lem:lifting}(vi)).
\item 
First note that by the triangle inequality for the state-dependent norm, it suffices to show the following two relations individually:
\begin{align}
V_S V_S^\dagger &\approx_{\p, \, \frac{1}{2} \1_2 \ot \alpha} \1 \,,\\
\sigma_Z \ot \1_2 \ot \1 &\approx_{\p, \, \frac{1}{2} \1_2 \ot \alpha} V_S Z_1 V_S^\dagger \,.
\end{align}

The first relation follows from $V_S V_S^\dagger \approx_{0, V_S \sigma^{(\theta_1, \theta_2)} V_S^\dagger} \1$ by Equation \eqref{eqn:alpha_not_reversed} and the lifting lemma (Lemma \ref{lem:lifting}(vi)). 
The second relation follows by the same reasoning used for (i), making use of Equation \eqref{eqn:V_z1}.
\item We have
\begin{align*}
\tr{ \left( V_S [Z_2, X_1] V_S^\dagger \right)^\dagger \left( V_S [Z_2, X_1] V_S^\dagger \right) \left( \frac{1}{2} \, \1_2 \ot \alpha \right) } 
&= \tr{ [Z_2, X_1]^\dagger [Z_2, X_1] V_S^\dagger \left( \frac{1}{2} \, \1_2 \ot \alpha \right) V_S} \\
\intertext{By Equation \eqref{eqn:alpha_reversed} and the lifting lemma (Lemma \ref{lem:lifting}(iii)):}
&\approx_{\p} \tr{ [Z_2, X_1]^\dagger [Z_2, X_1] \sigma^{(\theta_1, \theta_2)}} \\
\intertext{By Proposition \ref{prop:commutation_non_tilde}}
& \approx_{\p} 0 \,.
\end{align*}
\end{enumerate}
\end{proof}

\subsection{Approximate equality of tilde observables and Pauli observables}
The preceding section establishes that on the states $\sigma^{(\theta_1, \theta_2)}$, the non-tilde operators are approximately equal to the corresponding Pauli operators. However, to certify Bell states, we need the prover to perform measurements where its two ``qubits'' are measured in different bases, i.e., use measurement operators from $P_{0,1}$ and $P_{1,0}$. The observables associated to these mixed-basis projectors are the tilde observables. Recall that for the tilde observables, we cannot get the required commutation relations, as explained in Section \ref{sec:commutation}. This prevents us from using the argument that we used for the non-tilde observables. 
Instead, we will show that on the state, the tilde and non-tilde observables are approximately equal. 
Using the triangle inequality for the state-dependent distance, we can then conclude that the tilde observables are also close to Pauli observables (under the same isometry $V_S$ and in the state-dependent distance).

\begin{lemma} \label{lem:tilde_non_tilde_equal}
For any efficient device $D = (S, \Pi, M, P)$, the following holds for any $\theta_1, \theta_2 \in \bits$ and $i \in \{1, 2\}$:
\begin{align}
\tilde{Z}_i \approx_{\gamma_T(D), \, \sigma^{(\theta_1, \theta_2)}} Z_i \,, \qquad
\tilde{X}_i \approx_{\gamma_T(D), \, \sigma^{(\theta_1, \theta_2)}} X_i \,.
\end{align}
\end{lemma}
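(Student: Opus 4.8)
The plan is to first establish each of the four approximate equalities for a single, carefully chosen pair $(\theta_1,\theta_2)$ in the test case, and then to invoke computational indistinguishability to extend to all basis choices. The key observation is that for each pair consisting of a non-tilde observable and its tilde counterpart, there is exactly one test-case basis choice under which \emph{both} observables appear in the verifier's checks and are therefore pinned to the same value: $Z_1$ and $\tilde{Z}_1$ (as well as $X_2$ and $\tilde{X}_2$) both appear in the terms of the tuple $T$ from Equation \eqref{eqn:def_gamma_t} that are evaluated on $\sigma^{(0,v_1;\,1,v_2)}$, i.e. for $(\theta_1,\theta_2)=(0,1)$; whereas $X_1$ and $\tilde{X}_1$ (and $Z_2$ and $\tilde{Z}_2$) both appear in the terms evaluated on $\sigma^{(1,v_1;\,0,v_2)}$, i.e. for $(\theta_1,\theta_2)=(1,0)$.

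Concretely, I would first prove $\tilde{Z}_1 \approx_{\gamma_T(D),\,\sigma^{(0,1)}} Z_1$ as follows. The terms $\tr{\sum_{v_1,v_2} Z_1^{(v_1)}\sigma^{(0,v_1;\,1,v_2)}}$ and $\tr{\sum_{v_1,v_2}\tilde{Z}_1^{(v_1)}\sigma^{(0,v_1;\,1,v_2)}}$ are both $\geq 1-\gamma_T(D)$ by the definition of $\gamma_T(D)$, so Corollary \ref{lem:approx_one_on_individual_v} gives, for every $v_1,v_2\in\bits$, both $Z_1 \approx_{\gamma_T(D),\,\sigma^{(0,v_1;\,1,v_2)}} (-1)^{v_1}\1$ and $\tilde{Z}_1 \approx_{\gamma_T(D),\,\sigma^{(0,v_1;\,1,v_2)}} (-1)^{v_1}\1$. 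The triangle inequality for the state-dependent (semi-)norm then yields $Z_1 \approx_{\gamma_T(D),\,\sigma^{(0,v_1;\,1,v_2)}} \tilde{Z}_1$ for each of the four values of $(v_1,v_2)$, and since $\sigma^{(0,1)} = \sum_{v_1,v_2}\sigma^{(0,v_1;\,1,v_2)}$ is a sum of constantly many positive terms, Lemma \ref{lem:state_dep_distance_facts}(ii) upgrades this to $Z_1 \approx_{\gamma_T(D),\,\sigma^{(0,1)}} \tilde{Z}_1$. The identical argument, with the corresponding terms of $T$, gives $\tilde{X}_2 \approx_{\gamma_T(D),\,\sigma^{(0,1)}} X_2$, while the same reasoning applied to $(\theta_1,\theta_2)=(1,0)$ and $\sigma^{(1,0)}=\sum_{v_1,v_2}\sigma^{(1,v_1;\,0,v_2)}$ gives $\tilde{X}_1 \approx_{\gamma_T(D),\,\sigma^{(1,0)}} X_1$ and $\tilde{Z}_2 \approx_{\gamma_T(D),\,\sigma^{(1,0)}} Z_2$.

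Finally I would lift each of these four relations from the specific test-case state to an arbitrary $\sigma^{(\theta_1,\theta_2)}$. All of $Z_i, X_i, \tilde{Z}_i, \tilde{X}_i$ are efficient binary observables by the remark following Definition \ref{def:marginal_meas}, and by Lemma \ref{lem:sigma_indist} the states $\sigma^{(\theta_1,\theta_2)}$ for different basis choices are computationally indistinguishable; hence Lemma \ref{lem:lifting}(ii) converts, e.g., $Z_1 \approx_{\gamma_T(D),\,\sigma^{(0,1)}} \tilde{Z}_1$ into $Z_1 \approx_{\gamma_T(D),\,\sigma^{(\theta_1,\theta_2)}} \tilde{Z}_1$ for every $\theta_1,\theta_2$, and similarly for the other three relations. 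This is exactly the statement of the lemma. I do not expect a real obstacle here: the proof is essentially bookkeeping and mirrors the structure of the proofs of Propositions \ref{prop:anticomm} and \ref{prop:commutation_non_tilde}. The only point requiring care is checking that each tilde/non-tilde pair is indeed \emph{simultaneously} constrained by the protocol in one of the two test cases — which is why $\tilde{Z}_i$ and $\tilde{X}_i$ must be handled via \emph{different} basis choices — and that every operator involved is efficient so that the lifting step is applicable.
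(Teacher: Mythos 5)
Your proposal is correct and follows essentially the same route as the paper's proof: reduce to the test-case basis choice in which both the tilde and non-tilde observable are constrained by the verifier's checks, use the definition of $\gamma_T(D)$ with Corollary \ref{lem:approx_one_on_individual_v} to pin both observables to $(-1)^{v_i}\1$ on each partial state $\sigma^{(\theta_1, v_1;\,\theta_2, v_2)}$, apply the triangle inequality, sum via Lemma \ref{lem:state_dep_distance_facts}(ii), and lift to arbitrary $(\theta_1,\theta_2)$ with Lemma \ref{lem:lifting}(ii) and Lemma \ref{lem:sigma_indist}. The only difference is presentational: the paper performs the lifting step first and writes out only the $\tilde{Z}_1$ case, citing the reasoning of Proposition \ref{prop:commutation_non_tilde} for the pinning step that you spell out explicitly.
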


\begin{proof}
We show $\tilde{Z}_1 \approx_{\gamma_T(D), \, \sigma^{(\theta_1, \theta_2)}} Z_1$, the other cases are analogous.
Since $Z_1$ and $\tilde{Z}_1$ are both efficient, by the lifting lemma (Lemma \ref{lem:lifting}(ii)) and the indistinguishability of $\sigma^{(\theta_1, \theta_2)}$ (Lemma \ref{lem:sigma_indist}), it suffices to show this for $(\theta_1, \theta_2) = (0, 1)$.
We can split $\sigma^{(0, 1)}$ as $\sigma^{(0, 1)} = \sum_{v_1, v_2} \sigma^{(0, v_1; \; 1, v_2)}$, so by Lemma \ref{lem:state_dep_distance_facts}(ii) it is sufficient to show  that for all $v_1, v_2 \in \bits$, $\tilde{Z}_1 \approx_{\gamma_T(D), \, \sigma^{(0, v_1; \; 1, v_2)}} Z_1\,.$
By the same reasoning as in Proposition \ref{prop:commutation_non_tilde}, where we showed the commutation relations, we have 
\begin{equation}
Z_1 \approx_{\gamma_T(D), \, \sigma^{(0, v_1; \; 1, v_2)}} (-1)^{v_1} \1 \,, \qquad \tilde{Z}_1 \approx_{\gamma_T(D), \, \sigma^{(0, v_1; \; 1, v_2)}} (-1)^{v_1} \1
\end{equation}
Therefore, by the triangle inequality, we also have
\begin{equation}
\tilde{Z}_1 \approx_{\gamma_T(D), \, \sigma^{(0, v_1; \; 1, v_2)}} Z_1
\end{equation}
as desired.
\end{proof}

\begin{corollary}\label{lem:tilde_observable_factorization}
For any efficient perfect device $D = (S, \Pi, M, P)$, the following holds for any $\theta_1, \theta_2 \in \bits$:
\begin{align}
\tilde{Z}_1 &\approx_{\p, \, \sigma^{(\theta_1, \theta_2)}} V_S^\dagger (\sigma_Z \ot \1_2 \ot \1) V_S \,, 
\\
\tilde{X}_1 &\approx_{\p, \, \sigma^{(\theta_1, \theta_2)}} V_S^\dagger (\sigma_X \ot \1_2 \ot \1) V_S \,, 
\\
\tilde{Z}_2 &\approx_{\p, \, \sigma^{(\theta_1, \theta_2)}} V_S^\dagger (\1_2 \ot \sigma_Z \ot \1) V_S \,, 
\\
\tilde{X}_2 &\approx_{\p, \, \sigma^{(\theta_1, \theta_2)}} V_S^\dagger (\1_2 \ot \sigma_X \ot \1) V_S \,.
\end{align}
\end{corollary}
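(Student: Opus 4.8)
The plan is to assemble the statement directly from three ingredients already established: the approximate equality of tilde and non-tilde observables on the device's state (Lemma \ref{lem:tilde_non_tilde_equal}), the characterisation of the non-tilde observables under $V_S$ (Equation \eqref{eqn:V_z1} and Lemmas \ref{lem:rounded_x_on_first_qubit} and \ref{lem:qubit2_operator_rounding}), and the triangle inequality for the state-dependent seminorm $\norm{\cdot}_{\sigma^{(\theta_1,\theta_2)}}$. The only bookkeeping point is that the $\approx_{\p,\psi}$ notation absorbs $\approx_{\gamma_T(D),\psi}$ (take the constant $c=1$), and that if $A \approx_{\gamma_T(D)^{c_1},\psi} B$ and $B \approx_{\gamma_T(D)^{c_2},\psi} C$, then by the triangle inequality $\norm{A-C}_\psi \le \norm{A-B}_\psi + \norm{B-C}_\psi = O(\gamma_T(D)^{c_1/2}) + O(\gamma_T(D)^{c_2/2}) = O(\gamma_T(D)^{\min(c_1,c_2)/2})$ as $\gamma_T(D)\to 0$, so that $A \approx_{\p,\psi} C$.

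Concretely, I would argue each of the four relations in turn. For $\tilde Z_1$: Lemma \ref{lem:tilde_non_tilde_equal} gives $\tilde Z_1 \approx_{\gamma_T(D),\sigma^{(\theta_1,\theta_2)}} Z_1$, and Equation \eqref{eqn:V_z1} gives $Z_1 = V_S^\dagger(\sigma_Z\ot\1_2\ot\1)V_S$ exactly; the triangle inequality then yields $\tilde Z_1 \approx_{\p,\sigma^{(\theta_1,\theta_2)}} V_S^\dagger(\sigma_Z\ot\1_2\ot\1)V_S$. For $\tilde X_1$: combine $\tilde X_1 \approx_{\gamma_T(D),\sigma^{(\theta_1,\theta_2)}} X_1$ (Lemma \ref{lem:tilde_non_tilde_equal}) with $X_1 \approx_{\p,\sigma^{(\theta_1,\theta_2)}} V_S^\dagger(\sigma_X\ot\1_2\ot\1)V_S$ (Lemma \ref{lem:rounded_x_on_first_qubit}). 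For $\tilde Z_2$ and $\tilde X_2$: combine $\tilde Z_2 \approx_{\gamma_T(D),\sigma^{(\theta_1,\theta_2)}} Z_2$ and $\tilde X_2 \approx_{\gamma_T(D),\sigma^{(\theta_1,\theta_2)}} X_2$ (Lemma \ref{lem:tilde_non_tilde_equal}) with the two relations $Z_2 \approx_{\p,\sigma^{(\theta_1,\theta_2)}} V_S^\dagger(\1_2\ot\sigma_Z\ot\1)V_S$ and $X_2 \approx_{\p,\sigma^{(\theta_1,\theta_2)}} V_S^\dagger(\1_2\ot\sigma_X\ot\1)V_S$ (Lemma \ref{lem:qubit2_operator_rounding}).

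There is no genuine obstacle here: the corollary is purely a repackaging step, and all the real work (the anti-commutation relations, the swap-isometry rounding of the non-tilde observables, and the tilde/non-tilde equality via the single-answer structure of the test case) has already been done in the preceding sections. The one subtlety worth flagging in the write-up is that the tilde observables genuinely need this indirect route — one cannot run the $V_S$-rounding argument for them directly, since (as noted in the remark after Proposition \ref{prop:commutation_non_tilde}) the required commutation relations $[\tilde X_1, \tilde X_2]\approx 0$ are not available, which is precisely why Lemma \ref{lem:tilde_non_tilde_equal} plus the triangle inequality is the right tool.
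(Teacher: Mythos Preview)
Your proposal is correct and matches the paper's proof essentially verbatim: the paper's proof is a one-line invocation of the triangle inequality together with Lemma \ref{lem:tilde_non_tilde_equal}, Equation \eqref{eqn:V_z1}, Lemma \ref{lem:rounded_x_on_first_qubit}, and Lemma \ref{lem:qubit2_operator_rounding}, exactly as you laid out. Your additional remarks on the $\approx_{\p}$ bookkeeping and on why the tilde observables require this indirect route are accurate and align with the paper's own commentary.
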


\begin{proof}
Using the triangle inequality, these follow from Lemma \ref{lem:tilde_non_tilde_equal} with Equation \eqref{eqn:V_z1} for $\tilde{Z}_1$, Lemma \ref{lem:rounded_x_on_first_qubit} for $\tilde{X}_1$, and Lemma \ref{lem:qubit2_operator_rounding} for $\tilde{Z}_2, \tilde{X}_2$.
\end{proof}

\subsection{Products of observables}
We have shown in Corollary \ref{lem:tilde_observable_factorization} that on the state, the tilde observables are approximately equal to the corresponding Pauli matrices, under the isometry $V_S$. 
To certify that the prover has a Bell state, we want to show that the prover must possess, up to the isometry $V_S$, a joint eigenstate of $\sigma_Z \ot \sigma_X$ and $\sigma_Z \ot \sigma_X$, since the only such eigenstates are the Bell states (with the second qubit in the Hadamard basis). 
Therefore, we have to be able to ``round'' not just individual tilde observables to Pauli matrices, but also products of tilde observables.
That is, we have to show, that e.g. $\tilde{Z}_1 \tilde{X}_2$ is approximately equal to $\sigma_Z \ot \sigma_X \ot \1$ under the isometry $V_S$. 
Note that this does not directly follow from the above, since we can only round operators next to the state. 
\changed{For example, we know from the previous section (ignoring the isometry for this explanation) that $\tilde Z_1 \approx_{\p, \sigma^{(\theta_1, \theta_2)}} \sigma_Z \ot \1_2 \ot \1$ and $\tilde{X}_2 \approx_{\p, \sigma^{(\theta_1, \theta_2)}} \1_2 \ot \sigma_X \ot \1$.
To deal with the product $\tilde Z_1 \tilde X_2$, we can use Lemma \ref{lem:state_dep_distance_facts}(i) to obtain 
\begin{equation*}
\tilde Z_1 \tilde X_2 \approx_{\p, \sigma^{(\theta_1, \theta_2)}} \tilde Z_1 (\1_2 \ot \sigma_X \ot \1) \,.
\end{equation*}
However, now we cannot make use of Lemma \ref{lem:state_dep_distance_facts}(i) again, since the operator $\tilde Z_1$ that we wish to round is multiplied \emph{on the right} by another operator $\1_2 \ot \sigma_X \ot \1$, which, if one writes out the definition of the state-dependent distance, effectively sits in between the state and $\tilde Z_1$.
In other words, we would require a version of Lemma \ref{lem:state_dep_distance_facts}(i) where the operator $C$ in that lemma is multiplied on the right of $A$ and $B$, but this does not hold.}

To overcome this problem, we rely on a characterisation of the states $\sigma^{(\theta_1, \theta_2)}$ that we can deduce from the relations for the non-tilde observables derived in the previous sections. The following lemmas, which are extensions of Lemmas \ref{lem:first_qubit_factorises} and \ref{lem:first_qubit_same_alpha}, establish this characterisation.

\begin{lemma} \label{lem:state_factorization}
Let $D = (S, \Pi, M, P)$ be an efficient perfect device. Then, for $v_1, v_2 \in \bits$, there exist positive matrices $\tilde{\alpha}^{(0, v_1; \; 1, v_2)}$ and $\tilde{\alpha}^{(1, v_1; \; 0, v_2)}$ such that the following holds:
\begin{align}
V_S \sigma^{(1, v_1; \; 0, v_2)} V_S^\dagger \approx_{\p} \proj{(-)^{v_1}} \ot \proj{v_2} \ot \tilde{\alpha}^{(1, v_1; \; 0, v_2)} \,, \\
V_S \sigma^{(0, v_1; \; 1, v_2)} V_S^\dagger \approx_{\p} \proj{v_1} \ot \proj{(-)^{v_2}} \ot \tilde{\alpha}^{(0, v_1; \; 1, v_2)} \,.
\end{align}
\end{lemma}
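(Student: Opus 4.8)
The statement is the natural two-qubit strengthening of Lemma~\ref{lem:first_qubit_factorises}: there we only had a characterisation of the prover's first-qubit observables $Z_1,X_1$ (via Lemma~\ref{lem:rounded_x_on_first_qubit} and Equation~\eqref{eqn:V_z1}) and could therefore only project out the first qubit; now that Lemma~\ref{lem:qubit2_operator_rounding} has supplied a characterisation of $Z_2,X_2$ as well, we can repeat the same argument on the second register. I will prove only the first relation; the second is identical with $Z_2$ replaced by $X_2$ and the computational-basis projectors replaced by Hadamard-basis ones (using that $\proj{(-)^{v_2}}$ is the projector onto the $(-1)^{v_2}$-eigenspace of $\sigma_X$).

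\textbf{Step 1 (second-qubit observable is $\pm\1$ on the partial states).} From the definition of $\gamma_T(D)$ in Equation~\eqref{eqn:def_gamma_t}, one of the listed quantities is $\tr{\sum_{v_1,v_2} Z_2^{(v_2)}\sigma^{(1,v_1;\,0,v_2)}}\approx_{\gamma_T(D)}1$. By Corollary~\ref{lem:approx_one_on_individual_v} together with Lemma~\ref{lem:projectors_one_zero}, this yields, for all $v_1,v_2\in\bits$, $Z_2\approx_{\gamma_T(D),\,\sigma^{(1,v_1;\,0,v_2)}}(-1)^{v_2}\1$ and hence $Z_2^{(v_2)}\approx_{\gamma_T(D),\,\sigma^{(1,v_1;\,0,v_2)}}\1$.

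\textbf{Step 2 (transfer through the swap isometry).} Since $\sigma^{(1,0)}=\sum_{v_1,v_2}\sigma^{(1,v_1;\,0,v_2)}$, Lemma~\ref{lem:qubit2_operator_rounding} together with Lemma~\ref{lem:state_dep_distance_facts}(ii) gives $V_S^\dagger(\1_2\ot\sigma_Z\ot\1)V_S\approx_{\p,\,\sigma^{(1,v_1;\,0,v_2)}}Z_2$. Combining with Step~1 and applying Lemma~\ref{lem:split_into_projectors} to the projectors onto the $(-1)^{v_2}$-eigenspaces, we obtain $V_S^\dagger(\1_2\ot\proj{v_2}\ot\1)V_S\approx_{\p,\,\sigma^{(1,v_1;\,0,v_2)}}\1$. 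Then, exactly as in the proof of Lemma~\ref{lem:first_qubit_factorises}, the replacement lemma (Lemma~\ref{lem:replace_in_trace}(i)) gives $\sum_{v_1,v_2}\tr{(\1_2\ot\proj{v_2}\ot\1)\,V_S\sigma^{(1,v_1;\,0,v_2)}V_S^\dagger}\approx_{\p}1$, and a further application of (a simple extension of) Corollary~\ref{lem:approx_one_on_individual_v} with Lemma~\ref{lem:projectors_one_zero} yields
\begin{equation*}
\1_2\ot\proj{v_2}\ot\1\;\approx_{\p,\;V_S\sigma^{(1,v_1;\,0,v_2)}V_S^\dagger}\;\1\,.
\end{equation*}

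\textbf{Step 3 (combine both qubits and define $\tilde\alpha$).} From Lemma~\ref{lem:first_qubit_factorises} (or rather from its proof) we already have $\proj{(-)^{v_1}}\ot\1_2\ot\1\approx_{\p,\;V_S\sigma^{(1,v_1;\,0,v_2)}V_S^\dagger}\1$. The two projectors $\proj{(-)^{v_1}}\ot\1_2\ot\1$ and $\1_2\ot\proj{v_2}\ot\1$ commute and are each $\approx_\p\1$ on $V_S\sigma^{(1,v_1;\,0,v_2)}V_S^\dagger$, so their product $\proj{(-)^{v_1}}\ot\proj{v_2}\ot\1$ is also $\approx_\p\1$ on that state (by Lemma~\ref{lem:state_dep_distance_facts}(i), writing the difference $PQ-\1 = P(Q-\1)+(P-\1)$ and using $P^\dagger P\le\1$). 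Applying Lemma~\ref{lem:replace_on_state} to sandwich the state between this projector,
\begin{equation*}
V_S\sigma^{(1,v_1;\,0,v_2)}V_S^\dagger \approx_{\p} \big(\proj{(-)^{v_1}}\ot\proj{v_2}\ot\1\big)\,V_S\sigma^{(1,v_1;\,0,v_2)}V_S^\dagger\,\big(\proj{(-)^{v_1}}\ot\proj{v_2}\ot\1\big)\,,
\end{equation*}
and setting $\tilde\alpha^{(1,v_1;\,0,v_2)}=\big(\bra{(-)^{v_1}}\ot\bra{v_2}\ot\1\big)V_S\sigma^{(1,v_1;\,0,v_2)}V_S^\dagger\big(\ket{(-)^{v_1}}\ot\ket{v_2}\ot\1\big)$ (a positive matrix) yields the claim.

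\textbf{Main obstacle.} There is no deep new obstacle: the argument is a routine re-run of Lemma~\ref{lem:first_qubit_factorises} on the second register, and the only point requiring care is that the handle on $Z_2,X_2$ needed in Step~2 is precisely what Lemma~\ref{lem:qubit2_operator_rounding} provides, which is why this lemma is placed after Section~\ref{sec:rounding}. The mild bookkeeping subtlety is transferring the relation of Lemma~\ref{lem:qubit2_operator_rounding}, stated for $\sigma^{(\theta_1,\theta_2)}$, down to the partial states $\sigma^{(1,v_1;\,0,v_2)}$, which is handled by Lemma~\ref{lem:state_dep_distance_facts}(ii), and the observation that two commuting projectors each close to identity on a state have a product close to identity on that state.
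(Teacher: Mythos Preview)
Your proposal is correct and follows essentially the same route as the paper: derive $\1_2\ot\proj{v_2}\ot\1\approx_{\p}\1$ on $V_S\sigma^{(1,v_1;\,0,v_2)}V_S^\dagger$ from Lemma~\ref{lem:qubit2_operator_rounding} via the steps of Lemma~\ref{lem:first_qubit_factorises}, then combine with the first-qubit projector and sandwich. The only cosmetic difference is that the paper first replaces $V_S\sigma^{(1,v_1;\,0,v_2)}V_S^\dagger$ by $\proj{(-)^{v_1}}\ot\alpha^{(1,v_1;\,0,v_2)}$ (from Lemma~\ref{lem:first_qubit_factorises}) and then sandwiches with the second-qubit projector, which makes the identity $\tilde\alpha^{(1,v_1;\,0,v_2)}=(\bra{v_2}\ot\1)\alpha^{(1,v_1;\,0,v_2)}(\ket{v_2}\ot\1)$ explicit, whereas you multiply the two commuting projectors directly; both are equivalent.
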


\begin{proof}
We give the proof for the first relation, the second one can be shown analogously. Most of the proof is analogous to that of Lemma \ref{lem:first_qubit_factorises}, and we only sketch it here.

Starting from Lemma \ref{lem:qubit2_operator_rounding}, by the same steps as in Lemma \ref{lem:first_qubit_factorises} we get 
\begin{equation}
\1_2 \ot \proj{v_2} \ot \1 \approx_{\p, \, V_S \sigma^{(1, v_1; \; 0, v_2)} V_S^\dagger} \1 \,.
\end{equation}
By Lemma \ref{lem:first_qubit_factorises} and the replacement lemma (Lemma \ref{lem:replace_in_trace}(ii); to see that this also allows us to replace the state in this case, it is enough to write out the definition of the state-dependent 
distance as a trace):
\begin{equation}
\1_2 \ot \proj{v_2} \ot \1 \approx_{\p, \, \proj{(-)^{v_1}} \ot \alpha^{(1, v_1; \; 0, v_2)}} \1 \,. \label{eqn:state2_proof1}
\end{equation}
Therefore, by Lemma \ref{lem:replace_on_state} we have:
\begin{align}
V_S \sigma^{(1, v_1; \; 0, v_2)} V_S^\dagger 
&\approx_{\p} \frac{1}{4} \proj{(-)^{v_1}} \ot \alpha^{(1, v_1; \; 0, v_2)} \\
&\approx_{\p} (\1_2 \ot \proj{v_2} \ot \1) \left( \proj{(-)^{v_1}} \ot \alpha^{(1, v_1; \; 0, v_2)} \right) (\1_2 \ot \proj{v_2} \ot \1)
\end{align}
Defining 
\begin{equation}
\tilde{\alpha}^{(1, v_1; \; 0, v_2)} = \big(\bra{v_2} \ot \1\big)  \alpha^{(1, v_1; \; 0, v_2)} \big(\ket{v_2} \ot \1\big)
\end{equation}
yields the result.
\end{proof}

\begin{remark}
\changed{Comparing the statement of Lemma \ref{lem:first_qubit_factorises} with that of Lemma \ref{lem:state_factorization} makes the connection between $\alpha^{(1, v_1; \; 0, v_2)}$ and $\tilde{\alpha}^{(1, v_1; \; 0, v_2)}$ explicit: 
\begin{equation}
\alpha^{(1, v_1; \; 0, v_2)} \approx_\p \proj{v_2} \ot \tilde{\alpha}^{(1, v_1; \; 0, v_2)} \,. \label{eqn:relation_alpha_tilde}
\end{equation}
The analogous relation of course also holds for $\alpha^{(0, v_1; \; 1, v_2)}$ and $\tilde{\alpha}^{(0, v_1; \; 1, v_2)}$:
\begin{equation}
\alpha^{(0, v_1; \; 1, v_2)} \approx_\p \proj{(-)^{v_2}} \ot \tilde{\alpha}^{(0, v_1; \; 1, v_2)} \,.
\end{equation}}
\end{remark}

The proof of the following lemma is very similar to that of Lemma \ref{lem:first_qubit_same_alpha}, but we give the full proof for the sake of completeness.
\begin{lemma} \label{lem:both_qubits_same_alpha}
Let $D = (S, \Pi, M, P)$ be an efficient perfect device. Then, there exists a  (normalised) state $\tilde{\alpha}$ such that the following holds for $v_1, v_2 \in \bits$:
\begin{align}
V_S \sigma^{(1, v_1; \; 0, v_2)} V_S^\dagger \capprox_{\p} \frac{1}{4} \proj{(-)^{v_1}} \ot \proj{v_2} \ot \tilde{\alpha}\,, \\
V_S \sigma^{(0, v_1; \; 1, v_2)} V_S^\dagger \capprox_{\p} \frac{1}{4} \proj{v_1} \ot \proj{(-)^{v_2}} \ot \tilde{\alpha} \,.
\end{align}
\end{lemma}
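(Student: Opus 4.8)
The plan is to follow the same strategy as the proof of Lemma~\ref{lem:first_qubit_same_alpha}, but now carrying along the already-established characterisation of the \emph{second} qubit from Lemma~\ref{lem:state_factorization} in addition to that of the first qubit. Concretely, I want to show that $\{\sum_{?} \tilde\alpha^{(1, v_1;\,0,v_2)}\}$ — more precisely, the residual ancilla states obtained after projecting out both qubits — are computationally indistinguishable (up to $O(\gamma_T(D)^c)$) across \emph{all} four values of $(v_1,v_2)$ in the test case, and similarly that the test-case states for $(\theta_1,\theta_2)=(1,0)$ and $(\theta_1,\theta_2)=(0,1)$ yield the same ancilla. Combined with Lemma~\ref{lem:state_factorization}, this immediately gives the claimed form with a single common $\tilde\alpha$.

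The key steps, in order: (i) From Lemma~\ref{lem:state_factorization}, fix $c>0$ and $\eps = O(\gamma_T(D)^c)$ with $\norm{V_S \sigma^{(1, v_1;\,0,v_2)} V_S^\dagger - \proj{(-)^{v_1}}\ot\proj{v_2}\ot\tilde\alpha^{(1,v_1;\,0,v_2)}}_1^2 \le \eps$ and the analogous bound for $(0,1)$. (ii) Note that Lemma~\ref{lem:first_qubit_same_alpha} already tells us $\sum_{v_2}\tilde\alpha^{(1,v_1;\,0,v_2)}\capprox_\p \tfrac12\alpha$ (after tracing the $v_2$-qubit, using Equation~\eqref{eqn:relation_alpha_tilde}), and the same $\alpha$ for both basis choices; so the ``$v_1$'' label and the ``test-case type'' label are already taken care of, and it only remains to handle the $v_2$ dependence of $\tilde\alpha^{(1,v_1;\,0,v_2)}$ for fixed $v_1$. (iii) Assume for contradiction an efficient POVM $\{\Lambda^{(v_2)}\}$ distinguishing $\tilde\alpha^{(1,v_1;\,0,0)}$ from $\tilde\alpha^{(1,v_1;\,0,1)}$ with non-negligible advantage (beyond a suitable multiple of $\eps^{1/2}$). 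Conjugate it into an efficient measurement on the prover's space: $\Gamma = V_S^\dagger(\proj{(-)^{v_1}}\ot\proj{0}\ot\Lambda^{(0)})V_S$, which can be implemented by applying $V_S$, measuring the first qubit in the Hadamard basis, the second in the computational basis, and then $\Lambda$ on the ancilla. (iv) Use this $\Gamma$ to distinguish $\sigma^{(1,0)}$ from $\sigma^{(0,1)}$: run the replacement lemma (Lemma~\ref{lem:replace_in_trace}(ii)) to swap $V_S\sigma^{(\cdot,\cdot)}V_S^\dagger$ for the factorised forms $\proj{(-)^{v_1}}\ot\proj{v_2}\ot\tilde\alpha^{(\cdot,\cdot)}$ at cost $O(\eps^{1/2})$ each, then rearrange exactly as in Lemma~\ref{lem:first_qubit_same_alpha}: the main assumed advantage survives, and the ``leftover'' terms are of the form $\tr{V_S^\dagger(\cdots)V_S(\sigma^{(0,1)}-\sigma^{(1,0)})}$ with an efficient measurement in the middle, hence negligible by Lemma~\ref{lem:sigma_indist}. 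This contradicts Lemma~\ref{lem:sigma_indist} and forces $\tilde\alpha^{(1,v_1;\,0,0)}\capprox_\p\tilde\alpha^{(1,v_1;\,0,1)}$; then a further application of Lemma~\ref{lem:first_qubit_same_alpha}-type reasoning (or directly the argument above with the roles of $v_1$ permuted and the basis-choice label varied) collapses all labels to a single $\tilde\alpha$. (v) Combine with Lemma~\ref{lem:state_factorization} to conclude.

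The main obstacle I anticipate is bookkeeping the numerical constants and making sure the ``leftover'' cross-terms after the replacement-lemma substitutions can all be genuinely absorbed either into the assumed non-negligible advantage $\mu(\lambda)$ or shown to be negligible via an efficient-measurement-plus-Lemma~\ref{lem:sigma_indist} argument — this is the step where the proof of Lemma~\ref{lem:first_qubit_same_alpha} was most delicate, and here there are more projectors ($\proj{(-)^{v_1}}\ot\proj{v_2}$ instead of just $\proj{+}$) so the expansion has more terms. A secondary subtlety is that the states $V_S^\dagger(\tfrac14 \proj{(-)^{v_1}}\ot\proj{v_2}\ot\tilde\alpha)V_S$ are subnormalised, so one has to either check the relevant lemmas tolerate subnormalisation (as was done in footnotes earlier) or renormalise and track the factor. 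But none of this is conceptually new: it is a direct two-qubit analogue of the one-qubit argument already in place, so I expect the proof to go through by ``the same reasoning as in Lemma~\ref{lem:first_qubit_same_alpha}'' with the modifications above.

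\begin{proof}[Proof sketch]
This follows by the same argument as Lemma~\ref{lem:first_qubit_same_alpha}, now using the stronger factorisation of Lemma~\ref{lem:state_factorization} in place of Lemma~\ref{lem:first_qubit_factorises}. By Lemma~\ref{lem:state_factorization} there is a constant $c>0$ and $\eps = O(\gamma_T(D)^c)$ with
\begin{align*}
\norm{V_S \sigma^{(1, v_1;\, 0, v_2)} V_S^\dagger - \proj{(-)^{v_1}}\ot\proj{v_2}\ot\tilde\alpha^{(1, v_1;\, 0, v_2)}}_1^2 &\le \eps\,, \\
\norm{V_S \sigma^{(0, v_1;\, 1, v_2)} V_S^\dagger - \proj{v_1}\ot\proj{(-)^{v_2}}\ot\tilde\alpha^{(0, v_1;\, 1, v_2)}}_1^2 &\le \eps \,.
\end{align*}
It suffices to show that all the matrices $\tilde\alpha^{(1, v_1;\, 0, v_2)}$ and $\tilde\alpha^{(0, v_1;\, 1, v_2)}$ are computationally indistinguishable up to $O(\gamma_T(D)^{c'})$ for some constant $c'$; the statement then follows upon setting $\tilde\alpha$ to be (a normalisation of) any one of them and invoking Lemma~\ref{lem:state_factorization}. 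Suppose not; then there is an efficient two-outcome POVM $\{\Lambda^{(0)}, \Lambda^{(1)}\}$ distinguishing two of these matrices with advantage non-negligibly larger than a fixed multiple of $\eps^{1/2}$. Conjugating by $V_S$ together with projective measurements of the first two qubits (in the Hadamard, resp.\ computational, basis) produces an efficient measurement $\{\Gamma, \1 - \Gamma\}$ on the prover's space with $\Gamma$ of the form $V_S^\dagger(\proj{(-)^{v_1}}\ot\proj{v_2}\ot\Lambda^{(0)})V_S$ (or the analogous expression for the $(0,1)$ case). Applying the replacement lemma (Lemma~\ref{lem:replace_in_trace}(ii)) to substitute the factorised forms above for $V_S\sigma^{(\theta_1, \theta_2)}V_S^\dagger$, at cost $O(\eps^{1/2})$ per substitution, and rearranging exactly as in the proof of Lemma~\ref{lem:first_qubit_same_alpha}, one obtains a distinguisher between $\sigma^{(1, 0)}$ and $\sigma^{(0, 1)}$ with non-negligible advantage: the leftover terms are all either absorbed into the assumed advantage or of the form $\tr{V_S^\dagger(\1_2\ot\1_2\ot\Lambda^{(0)})V_S(\sigma^{(0, 1)} - \sigma^{(1, 0)})}$ with an efficient measurement in the middle, hence negligible by Lemma~\ref{lem:sigma_indist}. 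This contradicts Lemma~\ref{lem:sigma_indist}, completing the proof.
\end{proof}
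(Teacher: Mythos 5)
Your proposal follows essentially the same route as the paper's proof: reduce, via Lemma~\ref{lem:first_qubit_same_alpha} and Equation~\eqref{eqn:relation_alpha_tilde}, to showing indistinguishability of $\{\tilde{\alpha}^{(1, v_1; \; 0, v_2)}\}_{v_2}$ for fixed $v_1$, then derive a contradiction with Lemma~\ref{lem:sigma_indist} using the distinguisher $\Gamma = V_S^\dagger\big(\proj{(-)^{v_1}} \ot \proj{0} \ot \Lambda^{(0)}\big) V_S$ together with the replacement lemma applied to the factorised forms from Lemma~\ref{lem:state_factorization}. The remaining work you flag (tracking constants and absorbing leftover terms into the assumed advantage or into negligible terms via efficient measurements) is exactly what the paper's explicit calculation carries out, so your plan is correct.
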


\begin{proof}
\changed{
We focus on the first relation, the second one is analogous (and once we have established the two relations individually, the fact that we can use the same $\tilde \alpha$ for both immediately follows from the indistinguishability of $\sigma^{(1, 0)} = \sum_{v_1, v_2} \sigma^{(1, v_1; \; 0, v_2)}$ and $\sigma^{(0, 1)} = \sum_{v_1, v_2} \sigma^{(0, v_1; \; 1, v_2)}$).
We need to show that $\{\tilde{\alpha}^{(1, v_1; \; 0, v_2)}\}_{v_1, v_2}$ from Lemma \ref{lem:state_factorization} are computationally indistinguishable, i.e.~that $\tilde{\alpha}^{(1, v_1; \; 0, v_2)} \capprox_\p \tilde{\alpha}^{(1, v_1'; \; 0, v_2')}$ for any $v_1, v_2, v_1', v_2'$. 

From Lemma \ref{lem:first_qubit_same_alpha}, we already know that $\{\sum_{v_2}\alpha^{(1, v_1; \; 0, v_2)}\}_{v_1}$ are computationally indistinguishable, and by Equation \eqref{eqn:relation_alpha_tilde} this implies that $\{\sum_{v_2} \tilde \alpha^{(1, v_1; \; 0, v_2)}\}_{v_1}$ are computationally indistinguishable, too, as the latter is obtained from the former by tracing out the first qubit, which cannot increase distinguishability.
It therefore suffices to show that for any fixed $v_1$, no efficient distinguisher can distinguish $\{\tilde{\alpha}^{(1, v_1; \; 0, v_2)}\}_{v_2}$.}

Fix $v_1$. By Lemma \ref{lem:state_factorization} and Lemma \ref{lem:first_qubit_same_alpha}, there exists a $c > 0$ and an $\eps = O(\gamma_T(D)^c)$ such that for any $v_2$ and any efficient measurement $\{M^{(0)}, M^{(1)}\}$:
\begin{align*}
\norm{V_S \sigma^{(1, v_1; \; 0, v_2)} V_S^\dagger - \proj{(-)^{v_1}} \ot \proj{v_2}\ot \tilde{\alpha}^{(1, v_1; \; 0, v_2)}}_1^2 \leq \eps \,, \\
\norm{V_S \sigma^{(0, v_1; \; 1, v_2)} V_S^\dagger - \proj{v_1} \ot \proj{(-)^{v_2}}\ot \tilde{\alpha}^{(0, v_1; \; 1, v_2)}}_1^2 \leq \eps \,, \\
\left|\sum_{v_2'} \tr{M^{(0)} \left( \tilde{\alpha}^{(1, 0; \; 0, v_2')} - \tilde{\alpha}^{(1, 1; \; 0, v_2')} \right)}\right| \leq 4 \, \eps^{1/2} \,. \numberthis \label{eqn:explicit_eps_bounds}
\end{align*} 

For the sake of contradiction, assume that there is an efficient POVM $\{\Lambda^{(0)}, \Lambda^{(1)}\}$ such that
\begin{equation}
\tr{ \Lambda^{(0)} \left( \tilde{\alpha}^{(1, v_1; \; 0, 0)} - \tilde{\alpha}^{(1, v_1; \; 0, 1)} \right)} \geq 20 \, \eps^{1/2} + 2 \,\mu(\lambda) \,,
\end{equation}
where $\mu(\lambda)$ is non-negligible. Define 
\begin{equation}
\Gamma = V_S^\dagger \left( \proj{(-)^{v_1}} \ot \proj{0} \ot \Lambda^{(0)} \right) V_S \,,
\end{equation}
which is efficient by the same reasoning as in Lemma \ref{lem:first_qubit_same_alpha}. 

Constructing a distinguisher as in Lemma \ref{lem:first_qubit_same_alpha}, it suffices to show that $\tr{\Gamma \left( \sigma^{(1,0)} - \sigma^{(0,1)} \right)}$ is non-negligible. This can be shown by a similar calculation as in Lemma \ref{lem:first_qubit_same_alpha}:
\begin{align*}
&\tr{\Gamma \left( \sigma^{(1,0)} - \sigma^{(0,1)} \right)} \\
&= \tr{ \left( \proj{(-)^{v_1}} \ot \proj{0} \ot \Lambda^{(0)} \right) \left( V_S \sigma^{(1, 0)} V_S^\dagger - V_S \sigma^{(0, 1)}V_S^\dagger \right)} 
\intertext{Using Equation \eqref{eqn:explicit_eps_bounds} and the replacement lemma (Lemma \ref{lem:replace_in_trace}(ii); the non-asymptotic statement used here is easily seen to hold from the proof) to replace the state inside the trace:}
&\geq \begin{multlined}[t]
\sum_{v_1', v_2'} \text{Tr} \Bigg[ \left( \proj{(-)^{v_1}} \ot \proj{0} \ot \Lambda^{(0)} \right) \\ \left( \proj{(-)^{v_1'}} \ot \proj{v_2'} \ot \tilde{\alpha}^{(1, v_1'; \; 0, v_2')} - \proj{v_1'} \ot \proj{(-)^{v_2'}} \ot \tilde{\alpha}^{(0, v_1'; \; 1, v_2')}\right)\Bigg] - 8 \, \eps^{1/2} 
\end{multlined} \\
&= \tr{\Lambda^{(0)} \tilde{\alpha}^{(1, v_1; \; 0, 0)}} - \frac{1}{4} \sum_{v_1', v_2' \in \bits} \tr{\Lambda^{(0)} \tilde{\alpha}^{(0, v_1'; \; 1, v_2')}} - 8 \, \eps^{1/2}\\
&= \frac{1}{2} \tr{\Lambda^{(0)} \left( \tilde{\alpha}^{(1, v_1; \; 0, 0)} - \tilde{\alpha}^{(1, v_1; \; 0, 1)} \right)} + \frac{1}{2} \sum_{v_2'} \tr{\Lambda^{(0)} \tilde{\alpha}^{(1, v_1; \; 0, v_2')}} - \frac{1}{4} \sum_{v_1', v_2' \in \bits} \tr{\Lambda^{(0)} \tilde{\alpha}^{(0, v_1'; \; 1, v_2')}} - 8 \, \eps^{1/2} \\
\intertext{By assumption, the first term is at least $10 \, \eps^{1/2} + \mu(\lambda)$. Since $\Lambda^{(0)}$ is efficient, by Equation \eqref{eqn:explicit_eps_bounds} we have $\sum_{v_2'} \tr{\Lambda^{(0)} \tilde{\alpha}^{(1, v_1; \; 0, v_2')}} \geq \frac{1}{2} \sum_{v_1', v_2'} \tr{\Lambda^{(0)} \tilde{\alpha}^{(1, v_1'; \; 0, v_2')}} - 2 \,\eps^{1/2}$. Inserting this:}
&= - \frac{1}{4} \sum_{v_1', v_2' \in \bits} \tr{\Lambda^{(0)} \left( \tilde{\alpha}^{(0, v_1'; \; 1, v_2')} - \tilde{\alpha}^{(1, v_1'; \; 0, v_2')} \right)} + \eps^{1/2} + \mu(\lambda) \\
\intertext{The remaining term is the same one we had in Lemma \ref{lem:first_qubit_same_alpha}. Applying the same steps as in that proof yields:}
&\geq \mu(\lambda) - \negl(\lambda) \,.
\end{align*}

Therefore, $\tr{\Gamma \left( \sigma^{(1,0)} - \sigma^{(0,1)} \right)}$ is non-negligibly larger than 0, contradicting Lemma \ref{lem:sigma_indist} and finishing the proof.
\end{proof}

\begin{corollary} \label{lem:sigma_approx_one_alpha}
Let $D = (S, \Pi, M, P)$ be an efficient perfect device. Then, there exists a (normalised) state $\tilde{\alpha}$ such that for any $\theta_1, \theta_2 \in \bits$:
\begin{equation}
V_S \sigma^{(\theta_1, \theta_2)} V_S^\dagger \capprox_{\p} \frac{1}{4} \1_2 \ot \1_2 \ot \tilde{\alpha} \,.
\end{equation}
\end{corollary}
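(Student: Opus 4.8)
The plan is to obtain Corollary~\ref{lem:sigma_approx_one_alpha} as an almost immediate consequence of Lemma~\ref{lem:both_qubits_same_alpha} together with the indistinguishability of the states $\sigma^{(\theta_1,\theta_2)}$ (Lemma~\ref{lem:sigma_indist}). First I would fix the test case $(\theta_1,\theta_2)=(1,0)$ and sum the first relation of Lemma~\ref{lem:both_qubits_same_alpha} over $v_1,v_2\in\bits$. Using $\sigma^{(1,0)}=\sum_{v_1,v_2}\sigma^{(1,v_1;\;0,v_2)}$, the identities $\sum_{v_1}\proj{(-)^{v_1}}=\1_2$ and $\sum_{v_2}\proj{v_2}=\1_2$, and the fact that there are only four terms, I expect to get
$$V_S\,\sigma^{(1,0)}\,V_S^\dagger \;\capprox_{\p}\; \tfrac14\,\1_2\ot\1_2\ot\tilde\alpha\,.$$
The only point requiring a little care here is that $\capprox$ is a statement about efficient distinguishers rather than about density operators directly, so I would phrase the summation at the level of a fixed distinguisher: for any efficient distinguisher with POVM element $\Lambda^{(0)}$ one has $\tr{\Lambda^{(0)}V_S\sigma^{(1,0)}V_S^\dagger}=\sum_{v_1,v_2}\tr{\Lambda^{(0)}V_S\sigma^{(1,v_1;\,0,v_2)}V_S^\dagger}$, and each summand is $\p$-close to $\tr{\Lambda^{(0)}\cdot\tfrac14\proj{(-)^{v_1}}\ot\proj{v_2}\ot\tilde\alpha}$ by Lemma~\ref{lem:both_qubits_same_alpha}; summing the four errors and applying the triangle inequality gives the displayed indistinguishability (for any efficient distinguisher, hence the $\capprox$ relation).

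Next I would extend this from $(\theta_1,\theta_2)=(1,0)$ to arbitrary $\theta_1,\theta_2$. By Lemma~\ref{lem:sigma_indist}, no efficient procedure can guess $(\theta_1,\theta_2)$ with advantage non-negligibly above $1/4$, which in particular implies $\sigma^{(\theta_1,\theta_2)}\capprox_0\sigma^{(1,0)}$ for every choice of $\theta_1,\theta_2$ (pairwise indistinguishability). Since the swap isometry $V_S$ is efficient, written as $U(\1\ot\ket{0_k})$ for an efficient unitary $U$, conjugation by $V_S$ preserves computational indistinguishability: any efficient distinguisher for $V_S\sigma^{(\theta_1,\theta_2)}V_S^\dagger$ versus $V_S\sigma^{(1,0)}V_S^\dagger$ yields, by pre-composing with ``append $\ket{0_k}$, apply $U$'', an equally good efficient distinguisher for $\sigma^{(\theta_1,\theta_2)}$ versus $\sigma^{(1,0)}$ (one could also invoke the reduction technique of Lemma~\ref{lem:comp_indist_preserved}). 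Hence $V_S\sigma^{(\theta_1,\theta_2)}V_S^\dagger\capprox_0 V_S\sigma^{(1,0)}V_S^\dagger$, and combining this with the previous display via the triangle inequality for $\capprox$ gives $V_S\sigma^{(\theta_1,\theta_2)}V_S^\dagger\capprox_{\p}\tfrac14\1_2\ot\1_2\ot\tilde\alpha$ for all $\theta_1,\theta_2$, which is the claim. The same $\tilde\alpha$ works for all $\theta_1,\theta_2$ because it is the one fixed in Lemma~\ref{lem:both_qubits_same_alpha}.

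I do not expect a serious obstacle here: the real work was already done in Lemmas~\ref{lem:state_factorization} and~\ref{lem:both_qubits_same_alpha}. The only thing to be slightly careful about is bookkeeping with the $\capprox$ notation --- checking that it is closed under summing a constant number of (possibly subnormalised) states, that conjugation by an efficient isometry preserves it, and that it satisfies the triangle inequality --- all of which are routine from Definition~\ref{def:comp_indist} and have implicitly been used already (for instance in the proof of Lemma~\ref{lem:qubit2_operator_rounding}).
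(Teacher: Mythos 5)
Your proposal is correct and follows essentially the same route as the paper: sum Lemma~\ref{lem:both_qubits_same_alpha} over $v_1,v_2$ to get the statement in the test cases, then use the computational indistinguishability of the $\sigma^{(\theta_1,\theta_2)}$ (Lemma~\ref{lem:sigma_indist}) together with the efficiency of $V_S$ and transitivity to extend to all basis choices. Your extra bookkeeping about phrasing the summation at the level of a fixed distinguisher is a fine (slightly more explicit) rendering of the same argument.
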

\begin{proof}
Summing over $v_1, v_2$ in Lemma \ref{lem:both_qubits_same_alpha} yields the statement for $(\theta_1, \theta_2) = (0, 1)$ and $(\theta_1, \theta_2) = (1, 0)$.
By Lemma \ref{lem:sigma_indist}, the states $\sigma^{(\theta_1, \theta_2)}$ are computationally indistinguishable.
Since $V_S$ is efficient, the states $V_S \sigma^{(\theta_1, \theta_2)} V_S^\dagger$ can be computed efficiently from $\sigma^{(\theta_1, \theta_2)}$. 
Therefore, they must also be computationally indistinguishable. The transitivity of computational indistinguishability then allows us to extend the statement to any $\theta_1, \theta_2$.
\end{proof}

We are now in a position to prove the statement about products of observables that we mentioned at the beginning of this section.
\begin{lemma} \label{lem:products_rounded}
For any efficient perfect device $D = (S, \Pi, M, P)$, the following holds for any $\theta_1, \theta_2 \in \bits$:
\begin{align}
V_S Z_1 Z_2 V_S^\dagger \approx_{\p, \, V_S \sigma^{(\theta_1, \theta_2)} V_S^\dagger} \sigma_Z \ot \sigma_Z \ot \1 \,, 
\\
V_S X_1 X_2 V_S^\dagger \approx_{\p, \, V_S \sigma^{(\theta_1, \theta_2)} V_S^\dagger} \sigma_X \ot \sigma_X \ot \1 \,,
\\
V_S \tilde{Z}_1 \tilde{X}_2 V_S^\dagger \approx_{\p, \, V_S \sigma^{(\theta_1, \theta_2)} V_S^\dagger} \sigma_Z \ot \sigma_X \ot \1 \,, 
\\
V_S \tilde{X}_1 \tilde{Z}_2 V_S^\dagger \approx_{\p, \, V_S \sigma^{(\theta_1, \theta_2)} V_S^\dagger} \sigma_X \ot \sigma_Z \ot \1 \,.
\end{align}
\end{lemma}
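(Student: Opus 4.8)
The plan is to reduce each of the four relations to a statement on the ``test case'' partial states $\sigma^{(\theta_1, v_1;\, \theta_2, v_2)}$, on which the relevant marginal observables are pinned to $\pm\1$, and then lift to all $\theta_1, \theta_2$ by computational indistinguishability. For $Z_1 Z_2$ and $\tilde Z_1 \tilde X_2$ I would work with $(\theta_1, \theta_2) = (0, 1)$, and for $X_1 X_2$ and $\tilde X_1 \tilde Z_2$ with $(\theta_1, \theta_2) = (1, 0)$. As a preliminary, note that each product is an efficient binary observable --- e.g.\ $\tilde Z_1 \tilde X_2 = \sum_{i, j} (-1)^{i+j} P^{(i, j)}_{0, 1}$, and likewise for the others --- so in particular the two factors of each product commute exactly; this is the only role the tilde/non-tilde bookkeeping plays here. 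Also record that each target $P_1 \ot P_2 \ot \1$ (one of $\sigma_Z \ot \sigma_Z \ot \1$, etc.) is an efficient binary observable.

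Fix a product $O_1 O_2$ with target $P_1 \ot P_2 \ot \1$ --- e.g.\ $(O_1, O_2, P_1, P_2) = (\tilde Z_1, \tilde X_2, \sigma_Z, \sigma_X)$ in the $(0,1)$ case --- and write $\psi_{v_1, v_2} = \sigma^{(\theta_1, v_1;\, \theta_2, v_2)}$ for the chosen test case. The heart of the argument is the chain, valid for every $v_1, v_2 \in \bits$:
\begin{multline*}
O_1 O_2 = O_2 O_1 \;\approx_{\p,\, \psi_{v_1, v_2}}\; (-1)^{v_1} O_2 \;\approx_{\p,\, \psi_{v_1, v_2}}\; (-1)^{v_1} V_S^\dagger (\1_2 \ot P_2 \ot \1) V_S \\
\approx_{\p,\, \psi_{v_1, v_2}}\; V_S^\dagger (P_1 \ot P_2 \ot \1) V_S \,.
\end{multline*}
The first step uses the exact commutation of $O_1$ and $O_2$ and then rounds $O_1$ to $(-1)^{v_1}\1$ via Corollary \ref{lem:approx_one_on_individual_v} --- the relevant success term is one of those in the definition of $\gamma_T(D)$ --- together with Lemma \ref{lem:state_dep_distance_facts}(i) applied with $C = O_2$. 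The second step inserts $O_2 \approx_{\p,\, \sigma^{(\theta_1, \theta_2)}} V_S^\dagger (\1_2 \ot P_2 \ot \1) V_S$ (Corollary \ref{lem:tilde_observable_factorization} for the tilde products, Lemma \ref{lem:qubit2_operator_rounding} for the non-tilde ones), restricted to $\psi_{v_1, v_2}$ via Lemma \ref{lem:state_dep_distance_facts}(ii). Given the chain, I would sum over $v_1, v_2$, use $\sigma^{(\theta_1, \theta_2)} = \sum_{v_1, v_2} \psi_{v_1, v_2}$ with Lemma \ref{lem:state_dep_distance_facts}(ii) to obtain the relation on $\sigma^{(\theta_1, \theta_2)}$, conjugate by $V_S$ via Lemma \ref{lem:approx_distance_with_isometries} (second relation) to get $V_S O_1 O_2 V_S^\dagger \approx_{\p,\, V_S \sigma^{(\theta_1, \theta_2)} V_S^\dagger} P_1 \ot P_2 \ot \1$, and finally apply Lemma \ref{lem:lifting}(vi) --- with the indistinguishability of the states $V_S \sigma^{(\theta_1, \theta_2)} V_S^\dagger$ supplied by Lemma \ref{lem:sigma_indist} and the efficiency of $V_S$ --- to extend to arbitrary $\theta_1, \theta_2$.

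The one non-routine step is the third approximate equality above, which is exactly the obstruction flagged in the remark preceding the lemma: one cannot write $V_S^\dagger (P_1 \ot P_2 \ot \1) V_S$ as a product of singly-conjugated Paulis because $V_S V_S^\dagger \neq \1$. This is where the state characterisation from Lemma \ref{lem:state_factorization} is essential. Taking the $(0,1)$ case and $\tilde Z_1 \tilde X_2$ for concreteness, Lemma \ref{lem:state_factorization} gives $V_S \psi_{v_1, v_2} V_S^\dagger \approx_{\p} \tfrac14\, \proj{v_1} \ot \proj{(-)^{v_2}} \ot \tilde\alpha^{(0, v_1;\, 1, v_2)}$, and $\sigma_Z$ acts as $(-1)^{v_1}$ on $\ket{v_1}$. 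Setting $D = (\sigma_Z - (-1)^{v_1}\1_2) \ot \sigma_X \ot \1$, so that $V_S^\dagger D V_S$ is precisely the difference of the two sides of the third step, one has $\tr{D^2\, \bigl(\tfrac14 \proj{v_1} \ot \proj{(-)^{v_2}} \ot \tilde\alpha^{(0, v_1;\, 1, v_2)}\bigr)} = 0$ since $(\sigma_Z - (-1)^{v_1}\1_2)\proj{v_1} = 0$. Using $V_S V_S^\dagger \le \1$ to bound
\[
\norm{V_S^\dagger D V_S}_{\psi_{v_1, v_2}}^2 = \tr{V_S^\dagger D\, (V_S V_S^\dagger)\, D V_S\, \psi_{v_1, v_2}} \le \tr{D^2\, V_S \psi_{v_1, v_2} V_S^\dagger}
\]
and then the replacement lemma (Lemma \ref{lem:replace_in_trace}(ii), valid since $D^2$ has constant operator norm) to replace $V_S \psi_{v_1, v_2} V_S^\dagger$ by its characterisation, one concludes $\norm{V_S^\dagger D V_S}_{\psi_{v_1, v_2}}^2 \approx_{\p} 0$, which is the third step. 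The remaining three products follow by the same argument, substituting the appropriate test case, target Paulis, and operator characterisations (Lemma \ref{lem:qubit2_operator_rounding} or Corollary \ref{lem:tilde_observable_factorization}).
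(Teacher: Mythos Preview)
Your proof is correct and takes a genuinely different route from the paper. The paper first invokes Corollary~\ref{lem:sigma_approx_one_alpha} to replace $V_S\sigma^{(\theta_1,\theta_2)}V_S^\dagger$ by the maximally mixed state $\tfrac14\,\1_2\ot\1_2\ot\tilde\alpha$; on that state the single-qubit Paulis commute with the density operator, so after reducing to a trace condition via Lemma~\ref{lem:state_dep_distance_expanded}, each conjugated observable $V_S Z_i V_S^\dagger$ can be replaced by its Pauli and then cycled past the state. Your argument instead stays on the partial test states and uses only Lemma~\ref{lem:state_factorization}: the first qubit being exactly $\proj{v_1}$ (or $\proj{(-)^{v_1}}$) lets the difference operator $D=(P_1-(-1)^{v_1}\1)\ot P_2\ot\1$ annihilate the state directly, bypassing any need for Lemma~\ref{lem:both_qubits_same_alpha} or Corollary~\ref{lem:sigma_approx_one_alpha}. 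This is a bit more economical for this lemma in isolation (those two results are still needed downstream in Theorem~\ref{thm:soundness}, so there is no global saving). One cosmetic slip: the $\tfrac14$ in your citation of Lemma~\ref{lem:state_factorization} should be dropped, as the subnormalisation is already absorbed into $\tilde\alpha^{(0,v_1;\,1,v_2)}$; this does not affect the argument since $D^2$ applied to the state vanishes regardless.
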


\begin{proof}
We show the first relation, the others are analogous (using Corollary \ref{lem:tilde_observable_factorization} for the tilde observables). 
First note that because $Z_1$ and $Z_2$ are efficient binary observables, and $[Z_1, Z_2] = 0$ (Lemma \ref{lem:trivial_comm}), $Z_1 Z_2$ is also an efficient binary observable by Lemma \ref{lem:eff_comm_observables}. Since $V_S, Z_1 Z_2$, and $\sigma_Z$ are efficient, we can use the lifting lemma (Lemma \ref{lem:lifting}(vi)) and Corollary \ref{lem:sigma_approx_one_alpha} to replace $V_S \sigma^{(\theta_1, \theta_2)} V_S^\dagger$ by $\frac{1}{4} \1_2 \ot \1_2 \ot \tilde{\alpha}$, so the statement that we need to show is 
\begin{equation}
V_S Z_1 Z_2 V_S^\dagger \approx_{\p, \, \frac{1}{4} \1_2 \ot \1_2 \ot \tilde{\alpha}} \sigma_Z \ot \sigma_Z \ot \1 \,.
\end{equation}
Using Lemma \ref{lem:state_dep_distance_expanded}, this reduces to showing 
\begin{equation}
\tr{(\sigma_Z \ot \sigma_Z \ot \1) V_S Z_1 Z_2 V_S^\dagger \left( \frac{1}{4} \1_2 \ot \1_2 \ot \tilde{\alpha} \right) } \approx_{\p} 1 \,. \label{eqn:product_proof1}
\end{equation}
By Lemma \ref{lem:approx_distance_with_isometries}, we have (loosening the bound on $Z_1$ to match that of $Z_2$):
\begin{align}
V_S Z_1 V_S^\dagger &\approx_{\p, \, V_S \sigma^{(\theta_1, \theta_2)} V_S^\dagger} \sigma_Z \ot \1_2 \ot \1 && \text{(from Equation \eqref{eqn:V_z1})} \,, \label{eqn:product_proof_z1_sigma} \\
V_S Z_2 V_S^\dagger &\approx_{\p, \, V_S \sigma^{(\theta_1, \theta_2)} V_S^\dagger} \1_2 \ot \sigma_Z \ot \1 && \text{(from Lemma \ref{lem:qubit2_operator_rounding})} \,. \label{eqn:product_proof_z2_sigma}
\end{align}
Since $V_S, Z_1, Z_2$, and $\sigma_Z$ are efficient, we can again use the lifting property (Lemma \ref{lem:lifting}(vi)) and Corollary \ref{lem:sigma_approx_one_alpha} to replace $V_S \sigma^{(\theta_1, \theta_2)} V_S^\dagger$ by $\frac{1}{4} \1_2 \ot \1_2 \ot \tilde{\alpha}$, so we have:
\begin{align}
V_S Z_1 V_S^\dagger &\approx_{\p, \, \frac{1}{4} \1_2 \ot \1_2 \ot \tilde{\alpha}} \sigma_Z \ot \1_2 \ot \1  \,, \label{eqn:product_proof_z1} \\
V_S Z_2 V_S^\dagger &\approx_{\p, \, \frac{1}{4} \1_2 \ot \1_2 \ot \tilde{\alpha}} \1_2 \ot \sigma_Z \ot \1 \,. \label{eqn:product_proof_z2}
\end{align}
We can use this to show Equation \eqref{eqn:product_proof1} by the same method that we used in the proof of Lemma \ref{lem:qubit2_operator_rounding}. 
Using Equation \eqref{eqn:product_proof_z2}, the replacement lemma (Lemma \ref{lem:replace_in_trace}(i)), and commuting $\1_2 \ot \sigma_Z \ot \1$ past the state:
\begin{align}
& \tr{(\sigma_Z \ot \sigma_Z \ot \1) V_S Z_1 Z_2 V_S^\dagger \left( \frac{1}{4} \1_2 \ot \1_2 \ot \tilde{\alpha} \right) } \\
&\approx_{\p} \tr{(\1_2 \ot \sigma_Z \ot \1) (\sigma_Z \ot \sigma_Z \ot \1) V_S Z_1 V_S^\dagger \left( \frac{1}{4} \1_2 \ot \1_2 \ot \tilde{\alpha}  \right)} 
\intertext{Similarly using Equation \eqref{eqn:product_proof_z1} and commuting $\sigma_Z \ot \1_2 \ot \1$ past the state:}
&\approx_{\p} \tr{(\sigma_Z \ot \1_2 \ot \1) (\1_2 \ot \sigma_Z \ot \1) (\sigma_Z \ot \sigma_Z \ot \1) \left( \frac{1}{4} \1_2 \ot \1_2 \ot \tilde{\alpha}  \right)} \\
&= 1 \,.
\end{align}
\end{proof}

\subsection{Certifying Bell states}

Having established this characterisation of products of observables, we can show the main result of this paper, namely that the prover's states and measurements in the Bell case must be close to Bell states and single-qubit Pauli measurements.

\begin{theorem} \label{thm:soundness}
We label the (Hadamard-rotated) Bell states as follows: 
\begin{align}
\ket{\phi_H^{(a, b)}} = (\sigma_X^{a} \ot \sigma_X^{b}) (\ket{0}\ket{+} + \ket{1}\ket{-} ) \,.
\end{align}
For $b \in \bits$, we also use the notation 
\begin{equation}
\ket{b_0} = \ket{b}\,, \qquad
\ket{b_1} = \ket{(-)^b} \,.
\end{equation}
Let $D = (S, \Pi, M, P)$ be an efficient device, with $\gamma_P(D), \gamma_T(D)$, and $\gamma_B(D)$ as in Lemma \ref{lem:success_prob}.\footnote{Roughly, $\gamma_P(D), \gamma_T(D)$, and $\gamma_B(D)$ are the device's probabilities to fail the verifier's checks in the preimage, test, and Bell case, respectively.}
Let $\H$ be the private Hilbert space of the device and let $\sigma^{(1, s_1; \; 1, s_2)}$ be as in Definition \ref{def:notation_intermediate_states}. 
\footnote{Note that using the trapdoor, the verifier can efficiently compute $s_i$ from the image $y_i$ and equation string $d_i$ that the prover has returned.}

Then there exists a Hilbert space $\H'$, an isometry $V: \H \to \C^4 \ot \H'$, and a constant $c > 0$, such that there are states $\xi^{(s_1, s_2)} \in \mD(\H')$ for $s_1, s_2 \in \bits$ satisfying the following: 
\begin{enumerate}
\item Under the isometry $V$, the state of the prover in a Hadamard round is close to a Bell state:
\begin{align}
V \sigma^{(1, s_1; \; 1, s_2)} V^\dagger \approx_{\gamma_P(D)^c + \gamma_T(D)^c + \gamma_B(D)^c} \frac{1}{4}\proj{\phi^{(s_1, s_2)}_H} \ot \xi_{\H'}^{(s_1, s_2)} \,,
\end{align}
and the different $\xi_{\H'}^{(s_1, s_2)}$ are computationally indistinguishable.
\item Under the isometry $V$, the prover's measurements $P_{q_1, q_2}^{(a, b)}$ act on $\sigma^{(1, s_1; \; 1, s_2)}$ in the same way that single-qubit measurements in the computational or Hadamard basis act on a Bell state: 
\begin{multline*}
V P_{q_1, q_2}^{(a,b)} \sigma^{(1, s_1; \; 1, s_2)} P_{q_1, q_2}^{(a,b)} V^\dagger \\
\approx_{\gamma_P(D)^c + \gamma_T(D)^c + \gamma_B(D)^c}
\frac{1}{4} \, \left( \proj{a_{q_1}, b_{q_2}} \right) \proj{\phi^{(s_1, s_2)}_H} \left( \proj{a_{q_1}, b_{q_2}} \right) \ot \xi_{\H'}^{(s_1, s_2)}
\end{multline*} 
\end{enumerate}
\end{theorem}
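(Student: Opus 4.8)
The plan is to assemble all of the partial characterisations established in the preceding sections and apply them in the Bell case $(\theta_1, \theta_2) = (1, 1)$. First I would dispose of the non-perfect case: by Lemma \ref{lem:reduction_to_perfect}, replace the given efficient device $D$ by an efficient \emph{perfect} device $D'$ whose states are $O(\gamma_P(D)^{1/2})$-close in trace distance to those of $D$, and whose measurements are identical; since closeness in trace distance of the $\psi^{(\theta_1,\theta_2)}$ propagates to all the derived states $\sigma^{(1,s_1;\,1,s_2)}$ and post-measurement states (the maps defining them are trace-non-increasing), any statement proven for $D'$ with error $O(\gamma_T(D)^c + \gamma_B(D)^c)$ transfers to $D$ with error $O(\gamma_P(D)^{c'} + \gamma_T(D)^{c'} + \gamma_B(D)^{c'})$, absorbing everything into the $\approx$ notation with its built-in $\negl(\lambda)$. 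From now on I work with the perfect device and the swap isometry $V \deq V_S$ of Definition \ref{def:swap_isometry}, together with the target ancilla $\xi^{(s_1,s_2)}$ which I will define below (essentially a renormalised version of $\tilde\alpha$ restricted by the Bell-case projectors).

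\emph{Part 1 (state characterisation).} The key inputs are Corollary \ref{lem:tilde_observable_factorization} (tilde observables are close to conjugated Paulis on $\sigma^{(\theta_1,\theta_2)}$), Lemma \ref{lem:products_rounded} (the products $V_S\tilde Z_1\tilde X_2 V_S^\dagger$ and $V_S\tilde X_1\tilde Z_2 V_S^\dagger$ are close to $\sigma_Z\ot\sigma_X\ot\1$ and $\sigma_X\ot\sigma_Z\ot\1$ on $V_S\sigma^{(1,1)}V_S^\dagger$), the definition of $\gamma_B(D)$ (Equation \eqref{eqn:def_gamma_b}), and Corollary \ref{lem:sigma_approx_one_alpha}. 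From the definition of $\gamma_B(D)$ together with Lemma \ref{lem:individual_succ_prob_for_v} and Corollary \ref{lem:approx_one_on_individual_v}, I get $(\tilde Z_1\tilde X_2)^{(s_1)} \approx_{\gamma_B(D),\,\sigma^{(1,s_1;\,1,s_2)}} \1$ and $(\tilde X_1\tilde Z_2)^{(s_2)} \approx_{\gamma_B(D),\,\sigma^{(1,s_1;\,1,s_2)}} \1$. Applying the swap isometry (Lemma \ref{lem:approx_distance_with_isometries}) and Lemma \ref{lem:products_rounded}, followed by Lemma \ref{lem:split_into_projectors}, these become statements that the projector onto the $(-1)^{s_1}$-eigenspace of $\sigma_Z\ot\sigma_X\ot\1$ and the $(-1)^{s_2}$-eigenspace of $\sigma_X\ot\sigma_Z\ot\1$ act as $\1$ on $V_S\sigma^{(1,s_1;\,1,s_2)}V_S^\dagger$. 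Since the unique two-qubit state stabilised by $(-1)^{s_1}\sigma_Z\ot\sigma_X$ and $(-1)^{s_2}\sigma_X\ot\sigma_Z$ is exactly $\proj{\phi_H^{(s_1,s_2)}}$ (a one-line calculation, or invoke the completeness proof's eigenvalue identities), the product of these two commuting projectors is $\proj{\phi_H^{(s_1,s_2)}}\ot\1$; using Lemma \ref{lem:replace_on_state} twice I conclude $V_S\sigma^{(1,s_1;\,1,s_2)}V_S^\dagger \approx_\p \big(\proj{\phi_H^{(s_1,s_2)}}\ot\1\big)\, V_S\sigma^{(1,s_1;\,1,s_2)}V_S^\dagger\, \big(\proj{\phi_H^{(s_1,s_2)}}\ot\1\big)$, which factorises as $\frac14\proj{\phi_H^{(s_1,s_2)}}\ot\xi^{(s_1,s_2)}$ for $\xi^{(s_1,s_2)} \deq 4\,(\bra{\phi_H^{(s_1,s_2)}}\ot\1)\,V_S\sigma^{(1,s_1;\,1,s_2)}V_S^\dagger\,(\ket{\phi_H^{(s_1,s_2)}}\ot\1)$. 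Computational indistinguishability of the $\xi^{(s_1,s_2)}$ follows from Corollary \ref{lem:sigma_approx_one_alpha} — summing over $s_1,s_2$ recovers $\tfrac14\1_2\ot\1_2\ot\tilde\alpha$, and a distinguisher argument analogous to Lemma \ref{lem:both_qubits_same_alpha} (using that $V_S$ and the projectors $\proj{\phi_H^{(s_1,s_2)}}$ are efficient, and that one can efficiently extract $s_i$ from $y_i,d_i$ using the trapdoor) shows the $\xi^{(s_1,s_2)}$ cannot be distinguished.

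\emph{Part 2 (measurement characterisation).} For the measurement statement, fix $q_1,q_2\in\bits$ and $a,b\in\bits$. The projector $P_{q_1,q_2}^{(a,b)}$ equals the product $O_{1}^{(a)}O_{2}^{(b)}$ of the $(-1)^a$- and $(-1)^b$-eigenprojectors of the appropriate marginal observables: for $(q_1,q_2)=(0,0)$ these are $Z_1^{(a)},Z_2^{(b)}$; for $(1,1)$, $X_1^{(a)},X_2^{(b)}$; for $(0,1)$, $\tilde Z_1^{(a)},\tilde X_2^{(b)}$; for $(1,0)$, $\tilde X_1^{(a)},\tilde Z_2^{(b)}$. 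By Corollary \ref{lem:tilde_observable_factorization}, Equation \eqref{eqn:V_z1}, Lemma \ref{lem:rounded_x_on_first_qubit}, and Lemma \ref{lem:qubit2_operator_rounding}, each of these observables is $\approx_\p$ (on $\sigma^{(1,s_1;\,1,s_2)}$, which by Part 1 is close under $V_S$ to $\tfrac14\proj{\phi_H^{(s_1,s_2)}}\ot\xi^{(s_1,s_2)}$) to the corresponding conjugated single-qubit Pauli $V_S^\dagger(\sigma_{W}\ot\1)V_S$; hence by Lemma \ref{lem:split_into_projectors} the eigenprojectors match, giving $O_i^{(c)} \approx_{\p,\,\sigma^{(1,s_1;\,1,s_2)}} V_S^\dagger(\Pi_i^{(c)})V_S$ where $\Pi_i^{(c)}$ is the single-qubit projector onto $\ket{c_{q_i}}$ on slot $i$. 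I then commute everything through: conjugating the approximate equality $V_S\sigma^{(1,s_1;\,1,s_2)}V_S^\dagger\approx_\p \tfrac14\proj{\phi_H^{(s_1,s_2)}}\ot\xi^{(s_1,s_2)}$ by $V_S$ (Lemma \ref{lem:approx_distance_with_isometries}), and using Lemma \ref{lem:replace_on_state} / the replacement lemma to substitute $P_{q_1,q_2}^{(a,b)} = O_1^{(a)}O_2^{(b)}$ by $V_S^\dagger(\proj{a_{q_1},b_{q_2}}\ot\1)V_S$ on \emph{both} sides of the state, I obtain
\begin{equation*}
V_S P_{q_1,q_2}^{(a,b)}\sigma^{(1,s_1;\,1,s_2)}P_{q_1,q_2}^{(a,b)}V_S^\dagger \approx_\p \tfrac14\big(\proj{a_{q_1},b_{q_2}}\big)\proj{\phi_H^{(s_1,s_2)}}\big(\proj{a_{q_1},b_{q_2}}\big)\ot\xi^{(s_1,s_2)}\,,
\end{equation*}
as desired, after renaming the constant $c$ in $\approx_\p$. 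One subtlety: in the $(0,1)$ and $(1,0)$ cases the two factors $\tilde Z_1^{(a)}$ and $\tilde X_2^{(b)}$ need not commute as operators, so I should order the substitutions carefully and use that $V_S^\dagger(\sigma_Z\ot\1)V_S\cdot V_S^\dagger(\1\ot\sigma_X)V_S$ genuinely reconstructs $V_S^\dagger(\sigma_Z\ot\sigma_X)V_S$ on the relevant state — this is where Lemma \ref{lem:products_rounded} is doing real work rather than just the individual single-observable roundings. I expect this bookkeeping — keeping track of which operator acts on which side, ensuring the operator-norm-$O(1)$ hypotheses of the replacement lemma hold at each step (satisfied since all operators are products of unitaries and projectors), and threading the constant $c$ consistently through the polynomially many applications of lifting and replacement — to be the main obstacle, though it is entirely routine given the machinery already in place; the genuinely new content was all established in the earlier sections.
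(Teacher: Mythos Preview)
Your proposal is correct and follows essentially the same route as the paper's proof: reduce to a perfect device via Lemma \ref{lem:reduction_to_perfect}, take $V=V_S$, combine Lemma \ref{lem:products_rounded} with the definition of $\gamma_B(D)$ to show that the Bell-projector $\proj{\phi_H^{(s_1,s_2)}}\ot\1$ acts as identity on $V_S\sigma^{(1,s_1;\,1,s_2)}V_S^\dagger$, and for the measurement part write $P_{q_1,q_2}^{(a,b)}$ as a product of marginal projectors and round each factor.

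Two small remarks. First, your worry about $\tilde Z_1^{(a)}$ and $\tilde X_2^{(b)}$ failing to commute is misplaced: both $\tilde Z_1$ and $\tilde X_2$ are defined as signed sums of the \emph{same} family of orthogonal projectors $\{P_{0,1}^{(i,j)}\}$, so they commute exactly (this is the same observation behind Lemma \ref{lem:trivial_comm}), and hence so do their eigenprojectors. Second, the computational indistinguishability of the $\xi^{(s_1,s_2)}$ is obtained more directly than you sketch: the paper defines $\xi^{(s_1,s_2)}$ by sandwiching $V_S\sigma^{(1,1)}V_S^\dagger$ (not the $s$-dependent substate) with the Bell projector, and then Corollary \ref{lem:sigma_approx_one_alpha} immediately gives that each of these is computationally close to $\tilde\alpha$, hence mutually indistinguishable by transitivity --- no separate distinguisher reduction or trapdoor is needed.
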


\begin{proof}
By Lemma \ref{lem:reduction_to_perfect}, up to an additional error $O(\gamma_P(D)^{1/2})$, we can assume that the device $D$ is perfect. (Note that the statement in Lemma \ref{lem:reduction_to_perfect} implies an analogous statement for $\sigma^{(1, s_1; \; 1, s_2)}$ by the fact that the application of a CPTP map cannot increase the trace distance.)
\begin{enumerate}
\item Take $V$ to be the swap isometry $V_S$ defined in Definition \ref{def:swap_isometry}.
From Lemmas \ref{lem:products_rounded}, \ref{lem:state_dep_distance_facts}(ii), and \ref{lem:split_into_projectors}, we have for $a \in \bits$:
\begin{align}
(\tilde{Z}_1 \tilde{X}_2)^{(a)} \approx_{\p, \, \sigma^{(1, s_1; \; 1, s_2)}} V_S^\dagger (\sigma_Z \ot \sigma_X \ot \1)^{(a)} V_S \,, \label{eqn:bell_proof1} \\
(\tilde{X}_1 \tilde{Z}_2)^{(a)} \approx_{\p, \, \sigma^{(1, s_1; \; 1, s_2)}} V_S^\dagger (\sigma_X \ot \sigma_Z \ot \1)^{(a)} V_S \label{eqn:bell_proof2} \,,
\end{align}
Using the definition of $\gamma_B$ (Equation \eqref{eqn:def_gamma_b}) and Lemma \ref{lem:individual_succ_prob_for_v}, we have 
\begin{align}
\tr{(\tilde{Z}_1 \tilde{X}_2)^{(s_1)} \sigma^{(1, s_1; \; 1, s_2)}} \approx_{\gamma_B(T)} \tr{\sigma^{(1, s_1; \; 1, s_2)}} \,, \\
\tr{(\tilde{X}_1 \tilde{Z}_2)^{(s_2)} \sigma^{(1, s_1; \; 1, s_2)}} \approx_{\gamma_B(T)} \tr{\sigma^{(1, s_1; \; 1, s_2)}} \,.  \label{eqn:prod_proof_conditions}
\end{align}
By the replacement lemma (Lemma \ref{lem:replace_in_trace}(i)), Equations \eqref{eqn:bell_proof1} and \eqref{eqn:bell_proof2}, and $V_S^\dagger V_S = \1$, we therefore have
\begin{align}
\tr{(\sigma_Z \ot \sigma_X \ot \1_{\H'})^{(s_1)} V_S \sigma^{(1, s_1; \; 1, s_2)} V_S^\dagger } \approx_{\p} \tr{V_S \sigma^{(1, s_1; \; 1, s_2)} V_S^\dagger } \,, \\
\tr{(\sigma_X \ot \sigma_Z \ot \1_{\H'})^{(s_2)} V_S \sigma^{(1, s_1; \; 1, s_2)} V_S^\dagger } \approx_{\p} \tr{V_S \sigma^{(1, s_1; \; 1, s_2)} V_S^\dagger } \,.
\end{align}
Using Lemmas \ref{lem:observable_approx_one} and \ref{lem:projectors_one_zero}, this implies 
\begin{align}
(\sigma_Z \ot \sigma_X \ot \1_{\H'})^{(a)} \approx_{\p, V_S \sigma^{(1, s_1; \; 1, s_2)} V_S^\dagger} \delta_{a, s_1} \1 \,, \\
(\sigma_X \ot \sigma_Z \ot \1_{\H'})^{(b)} \approx_{\p, V_S \sigma^{(1, s_1; \; 1, s_2)} V_S^\dagger} \delta_{b, s_2} \1 \,.\,
\end{align}
where $\delta_{a, s_1}$ is the Kronecker-$\delta$. Expanding $\sigma^{(1, 1)} = \sum_{s_1, s_2} \sigma^{(1, s_1; \; 1, s_2)}$ and using Lemma \ref{lem:replace_on_state} twice, we get:
\begin{multline}
V_S \sigma^{(1, s_1; \; 1, s_2)} V_S^\dagger \approx_{\p} (\sigma_X \ot \sigma_Z \ot \1_{\H'})^{(s_2)} (\sigma_Z \ot \sigma_X \ot \1_{\H'})^{(s_1)} (V_S \sigma^{(1, 1)} V_S^\dagger) \\
(\sigma_Z \ot \sigma_X \ot \1_{\H'})^{(s_1)} (\sigma_X \ot \sigma_Z \ot \1_{\H'})^{(s_2)} \,.
\end{multline}
A direct calculation shows
\begin{equation}
(\sigma_Z \ot \sigma_X \ot \1_{\H'})^{(s_1)} (\sigma_X \ot \sigma_Z \ot \1_{\H'})^{(s_2)} = \proj{\phi^{(s_1, s_2)}_H} \ot \1_{\H'} \,.
\end{equation}
Using Corollary \ref{lem:sigma_approx_one_alpha} and the two equations above, we get
\begin{align}
V_S \sigma^{(1, s_1; \; 1, s_2)} V_S^\dagger 
&\approx_{\p} (\proj{\phi^{(s_1, s_2)}_H} \ot \1_{\H'}) V_S \sigma^{(1, 1)} V_S^\dagger (\proj{\phi^{(s_1, s_2)}_H} \ot \1_{\H'}) \label{eqn:bell_proof3}\\
&\capprox_{\p} \frac{1}{4} \proj{\phi^{(s_1, s_2)}_H} \ot \tilde{\alpha}_{\H'} \,. \label{eqn:bell_proof4}
\end{align}
Defining $\xi_{\H'}^{(s_1, s_2)}$ to be the renormalised version of $(\bra{\phi^{(s_1, s_2)}_H} \ot \1_{\H'}) V_S \sigma^{(1, 1)} V_S^\dagger (\ket{\phi^{(s_1, s_2)}_H} \ot \1_{\H'})$, Equation \eqref{eqn:bell_proof3} implies that $V_S \sigma^{(1, s_1; \; 1, s_2)} V_S^\dagger$ has the desired form, and Equation \eqref{eqn:bell_proof4} implies the normalisation of $\frac{1}{4}$ and the computational indistinguishability of the different $\xi_{\H'}^{(s_1, s_2)}$.

\item  To simplify the notation, we show this for $(q_1, q_2) = (0, 0)$. The other cases are analogous.

We can write $P_{0, 0}^{(a, b)}$ as 
\begin{equation}
P_{0, 0}^{(a, b)} = Z_1^{(a)} Z_2^{(b)} \,.
\end{equation}
By Lemma \ref{lem:split_into_projectors}, Equations \eqref{eqn:product_proof_z1_sigma} and \eqref{eqn:product_proof_z2_sigma} imply 
\begin{align}
V_S Z_1^{(a)} V_S^\dagger &\approx_{\p, \, V_S \sigma^{(\theta_1, \theta_2)} V_S^\dagger} \proj{a} \ot \1_2 \ot \1_{\H'} \,,
\\
V_S Z_2^{(b)} V_S^\dagger &\approx_{\p, V_S \sigma^{(\theta_1, \theta_2)} V_S^\dagger} \1_2 \ot \proj{b} \ot \1_{\H'} \,. 
\end{align}
The argument now proceeds similarly to that in the proof of Lemma \ref{lem:products_rounded}. We need to replace $V_S \sigma^{(\theta_1, \theta_2)} V_S^\dagger$ by $\frac{1}{4} \1_2 \ot \1_2 \ot \tilde{\alpha}$. This can be shown analogously to the lifting lemma (Lemma \ref{lem:lifting}(vi), which only deals with binary observables, not projectors), by noting that $V V^\dagger \approx_{0, V \sigma^{(\theta_1, \theta_2)} V^\dagger} \1$.  
Using the trick of commuting Pauli operators past the state in the same manner as before, we obtain 
\begin{equation}
V_S P_{0, 0}^{(a, b)} V_S^\dagger = V_S Z_1^{(a)} Z_2^{(b)} V_S^\dagger \approx_{\p, \, V_S \sigma^{(\theta_1, \theta_2)} V_S^\dagger} \proj{a} \ot \proj{b} \ot \1_{\H'} \,.
\end{equation}
Since each $\sigma^{(\theta_1, v_1; \; \theta_2, v_2)}$ is positive, this also holds if we replace $V_S \sigma^{(\theta_1, \theta_2)} V_S^\dagger$ by $V_S \sigma^{(\theta_1, v_1; \; \theta_2, v_2)} V_S^\dagger$ (Lemma \ref{lem:state_dep_distance_facts})(ii)). Therefore, using Lemma \ref{lem:replace_on_state} we have 
\begin{align}
V_S P_{q_1, q_2}^{(a,b)} \sigma^{(1, v_1; \; 1, v_2)} P_{x,y}^{(a,b)} V_S^\dagger 
&= V_S P_{q_1, q_2}^{(a,b)} V_S^\dagger V_S \sigma^{(1, v_1; \; 1, v_2)} V_S^\dagger V_S P_{x,y}^{(a,b)} V_S^\dagger \\
&\approx_{\p} \proj{a} \ot \proj{b} \ot \1_{\H'} \left( V_S \sigma^{(1, v_1; \; 1, v_2)} V_S^\dagger  \right) \proj{a} \ot \proj{b} \ot \1_{\H'} \,.
\end{align}
Since applying projectors cannot increase the trace distance, we can replace $V_S \sigma^{(1, v_1; \; 1, v_2)} V_S^\dagger$ by the state $\frac{1}{4}\proj{\phi^{(s_1, s_2)}} \ot \xi_{\H'}^{(s_1, s_2)}$ using part (i). This completes the proof.
\end{enumerate}
\end{proof}


\bibliographystyle{alphaWithTitle}
\bibliography{main}

\end{document}